\documentclass[letterpaper,11pt]{article}
\usepackage{fullpage}
\usepackage[ansinew]{inputenc}
\usepackage{csquotes}
\usepackage{verbatim}
\usepackage{graphicx}
\usepackage{tabularx}
\usepackage{array}
\usepackage{delarray}
\usepackage{dcolumn}
\usepackage{float}
\usepackage{amsmath}
\usepackage{psfrag}
\usepackage{booktabs}
\usepackage{multirow,hhline}
\usepackage{amstext}
\usepackage{amssymb}
\usepackage{fancyhdr}
\usepackage{setspace}
\usepackage{amsthm}
\usepackage{makeidx}
\usepackage{tikz}
\usepackage{url}
\usepackage{adjustbox}
\usepackage{enumitem}
\usepackage{mathtools}
\usepackage{mathabx}
\usepackage{authblk}
\usepackage{caption}

\usepackage[mathscr]{euscript}

\DeclarePairedDelimiter\norm{\lVert}{\rVert}

\usepackage{hyperref}
\hypersetup{citecolor=true,colorlinks=true,allcolors=blue}

\usepackage{cleveref}
\crefname{appsec}{appendix}{appendices}
\crefname{ass}{assumption}{assumptions}
\Crefname{ass}{Assumption}{Assumptions}
\Crefname{rem}{Remark}{Remarks}
\Crefname{lem}{Lemma}{Lemmata}
\usepackage[
 bibencoding=utf8,
 maxbibnames=99, 
 minbibnames=1, 
 backend=biber,%
 sortlocale=en_US,%
 style=authoryear,%
 doi=true,%
 url=true,
 eprint=true%
]{biblatex}
\usetikzlibrary{arrows,calc}

\makeindex
\allowdisplaybreaks
\numberwithin{equation}{section}
\newtheorem{lem}{Lemma}[section]
\newtheorem{thm}{Theorem}[section]
\newtheorem{cor}{Corollary}[section]

\newtheorem{ass}{Assumption}

\newcommand{\convP}{\stackrel{p}{\longrightarrow}}
\newcommand{\convD}{\rightsquigarrow}

\newcommand{\N}{\mathcal{N}}
\newcommand{\eps}{\varepsilon}
\newcommand{\tr}{\text{tr}}
\renewcommand{\epsilon}{\varepsilon}

\DeclareMathOperator{\Bdiag}{Bdiag} 
\DeclareMathOperator{\bvec}{bvec} 
\DeclareMathOperator{\mkvec}{vec} 
\usepackage[english]{babel}
\newtheoremstyle{boldremark}
    {\dimexpr\topsep/2\relax} 
    {\dimexpr\topsep/2\relax} 
    {}          
    {}          
    {\bfseries} 
    {.}         
    {.5em}      
    {}          

\theoremstyle{boldremark}

\newtheorem{rem}{Remark}[section]

\newcommand{\hidefromtoc}{%
  \setcounter{oldtocdepth}{\value{tocdepth}}%
  \addtocontents{toc}{\protect\setcounter{tocdepth}{-10}}%
}
\newcommand{\unhidefromtoc}{%
  \addtocontents{toc}{\protect\setcounter{tocdepth}{\value{oldtocdepth}}}%
}
\newcounter{oldtocdepth}

\usepackage[nolist]{acronym}
\begin{acronym}
  \acro{RR}{Riesz representer}
  \acro{L1CO}{leave-one-cluster-out}
  \acro{L3CO}{leave-three-cluster-out}
  \acro{L2CO}{leave-two-cluster-out}
  \acro{CI}{confidence interval}
  \acro{IV}{instrumental variable}
  \acro{TSLS}{two-stage least squares}
  \acro{OLS}{ordinary least squares}
\end{acronym}

\makeatletter
\newcommand*{\rom}
[1]{\expandafter\@slowromancap\romannumeral#1@}
\makeatother

\title{Cluster-Robust Inference for Quadratic Forms\thanks{We are grateful to Ivan Canay, Dachuan Chen, Xu Cheng, Andrew Chesher, Liyu Dou, Qingliang Fan, Sheng Chao Ho, Jia Li, Didier Nibbering, Xiaoxia Shi, and all the participants at the SMU Econometrics Workshop for their valuable comments. Zhang acknowledges the financial support from the NSFC under grant No. 72133002. This research is supported by the Ministry of Education, Singapore under its MOE Academic Research Fund Tier 2 (Project ID\@: MOE-000767-00). Any opinions, findings and conclusions or recommendations expressed in this material are those of the authors and do not reflect the views of the Ministry of Education, Singapore.  Any and all errors are our own. Corresponding author: Wenjie Wang, Email: wang.wj@ntu.edu.sg.\vspace{1.3mm}}
        \\ \vspace{2mm}
}

\author[a]{Michal Koles\'{a}r}
\author[b]{Pengjin Min}
\author[c]{Wenjie Wang}
\author[b]{Yichong Zhang}
\affil[a]{\small \emph{Princeton University, United States}}
\affil[b]{\small \emph{Singapore Management University, Singapore}}
\affil[c]{\small \emph{Nanyang Technological University, Singapore}}
\date{\today}
\begin{document}

\maketitle

\def\baselinestretch{1.05}
\begin{abstract}
  This paper studies inference for quadratic forms of linear regression
  coefficients with clustered data and many covariates. Our framework covers
  three important special cases: instrumental variables regression with many
  instruments and controls, inference on variance components, and testing
  multiple restrictions in a linear regression. Na\"{\i}ve plug-in estimators
  are known to be biased. We study a leave-one-cluster-out estimator that is
  unbiased, and provide sufficient conditions for its asymptotic normality. For
  inference, we establish the consistency of a leave-three-cluster-out variance
  estimator under primitive conditions. In addition, we develop a novel
  leave-two-cluster-out variance estimator that is computationally simpler and
  guaranteed to be conservative under weaker conditions. Our analysis allows
  cluster sizes to diverge with the sample size, accommodates strong
  within-cluster dependence, and permits the dimension of the covariates to
  diverge with the sample size, potentially at the same rate.\bigskip

\noindent \textbf{Keywords:} Many instruments, many covariates, clustered data, cross-fit, judge design \bigskip

\noindent \textbf{JEL codes:} C12, C36, C55
\end{abstract}
\clearpage

\hidefromtoc%

\section{Introduction}

We study inference for quadratic forms of linear regression coefficients from two regressions with clustered data, given by
\begin{align}\label{eq:model_outcome}
    Y_{i, g} &= W_{i, g}' \gamma \;+\; U_{i, g},
    & \mathbb{E}[U_{i, g}] &= 0,\\
    \label{eq:model_treatment}
    X_{i, g} &= W_{i, g}' \pi \;+\; V_{i, g},
    & \mathbb{E}[V_{i, g}] &= 0,
\end{align}
where $g$ indexes clusters, $i$ indexes observations within a cluster, and
$W_{i, g}$ is a $d$-dimensional vector of covariates, which is treated as fixed. The
object of interest,
\begin{equation}\label{eq:mom}
    \theta \;=\; \pi' A_0 \gamma,
\end{equation}
combines the two sets of regression coefficients using a known nonrandom $d\times d$ matrix $A_0$. Two key features of this setting complicate inference. First, the covariates may be high-dimensional, with $d$ allowed to grow as fast as the sample size. Second, the error terms $(U_{i, g}, V_{i, g})$ may be heteroskedastic and exhibit arbitrary within-cluster correlation, with cluster sizes that may diverge with the sample size.

This framework covers several important applications: if we set $Y=X$, variance
components can be written as quadratic forms in regression coefficients
\parencite[e.g.,][]{KSS2020}; tests of many linear restrictions can likewise be
cast as a restriction on a quadratic form \parencite{AS23}. The framework also
covers inference in \ac{IV} regression models with many potentially weak
instruments or controls, and heterogeneous treatment effects---in this case
\cref{eq:model_outcome,eq:model_treatment} correspond to reduced-form and
first-stage equations, $W$ contains both IVs and controls, and the effect of $X$
on $Y$ may be heterogeneous.

In the IV setting, $W$ is commonly high-dimensional in the popular IV leniency
design, which leverages quasi-random assignment of judges or other
decision-makers who differ in their leniency---the propensity to grant treatment
\parencite[see][for recent surveys]{CFL2024,GHK25}. Since these leniency
measures are unknown, they must be estimated in a first-stage regression of
treatment on decision-maker indicators. Furthermore, since their assignment is
typically only random conditional on time and location, time-by-location fixed
effects need to be included. Commonly, these designs feature a large number of
decision-makers and fine-grained fixed effects, resulting in a high-dimensional
first-stage regression. More generally, to avoid restrictive first-stage
assumptions under nonrandom instrument assignment, ensuring causal
interpretation requires the inclusion of interactions between the instruments
and controls, as well as a flexible control specification
\parencite[e.g.,][]{BBMT22}. This can generate a high-dimensional first-stage
regression even if the number of baseline instruments and covariates is small.
Whenever cases are assigned in batches, this induces clustering in the data.

Similarly, variance component estimation commonly features high-dimensional
specifications. For instance, consider the popular two-way worker--firm fixed
effects model of \textcite{AKM99}, where $Y$ denotes log wages, $X=Y$, and $W$
contains both worker and firm dummies. In this setting, firms' contributions to
wages, measured by the variance component of firm effects, can be formulated as
a quadratic form in firm fixed effects, and the sorting of high-wage workers to
high-wage firms can be measured by a quadratic form involving worker and firm
fixed effects. Unless one has access to a long panel, the number of workers is
necessarily large relative to the sample size; often the firm dimension is large
as well. To allow for match-specific unobservables, it is desirable to cluster
at the worker--firm match level \parencite{kline2024}. Consequently, valid
inference must account for both the high dimensionality of the regressors and
the clustered dependence structure of the data.

Finally, tests of linear restrictions, such as testing whether a subset of the
$\gamma$ coefficients is zero, feature high-dimensional specifications in
several applications. For example, value-added models can be validated by
testing whether a set of regressors is excluded from the structural regression
model \parencite{ahpw24}. Likewise, tests for endogenous peer effects can be
cast as testing whether peer characteristics have zero coefficients
\parencite{JL26}. As discussed in \textcite{AS23}, testing for the presence of
heterogeneity amounts to testing whether a set of fixed effects is zero
\parencite[e.g.,][]{krw22,fgw16}. These zero restrictions can be
high-dimensional, and it is desirable to cluster the standard errors to allow
for unobserved common shocks: depending on the application, one may want to
cluster at the job level \parencite{krw22} or by classroom (in value-added or
peer-effects applications).

Although least squares estimators of linear regression coefficients are unbiased
under general conditions, their quadratic forms are biased (the bias is positive
by Jensen's inequality if $A_{0}$ is positive semidefinite). The bias scales
with the coefficient dimension, so that plugging in least squares estimates of
$\gamma$ and $\pi$ into \cref{eq:mom} yields an estimator with non-negligible
asymptotic bias if the coefficient dimension is large. In the IV setting, this
bias corresponds to the well-known many instrument bias of two-stage least
squares \parencite[e.g.,][]{bjb95,bekker94}. The bias can be purged by using a
\ac{L1CO} estimator proposed by \textcite{KSS2020}.

The contribution of this paper is threefold. First, we establish the asymptotic
normality of the \ac{L1CO} estimator. Second, we develop a \ac{L3CO}
cluster-robust variance estimator and derive conditions for its consistency.
Third, we introduce a novel \ac{L2CO} variance estimator that yields valid, but
conservative inference under a weaker set of assumptions. These results are
derived under primitive conditions that allow for growing cluster sizes, both
weak and strong within-cluster dependence, and only impose weak conditions on
the regressors, allowing their dimension $d$ to grow potentially as fast as the
sample size $n$. In particular, under suitable further regularity conditions,
the proposed inference based on the \ac{L3CO} variance estimator is valid
provided that the largest cluster size~\(n_G\) satisfies
\(n_G^{\max\left(\frac{2q}{q-1},3\right)} = o(n)\) and \(n_G^{5} = o(n)\) under
weak and strong within-cluster dependence, respectively, where \(2q\) denotes
the number of finite moments of the regression errors \((U_{i,g}, V_{i,g})\) for
some $q\geq 2$. In contrast, the rate requirements associated with the \ac{L2CO}
variance estimator can be relaxed to \(n_G^{\frac{2q}{q-1}} = o(n)\) under weak
within-cluster dependence. We also show that when $d$ is not very large and
cluster sizes are fixed, inference based on the \ac{L2CO} variance estimator is
exact.

These results generalize and unify existing results from two strands of
literature. The first strand studies inference in high-dimensional linear
regressions. \Textcite{KSS2020} propose the \ac{L1CO} estimator we study.
However, their formal results focus on the case with independent data, and their
variance estimator is based on sample splitting. While least-squares estimators
are unbiased for \emph{linear functions} of regression coefficients, estimating
their asymptotic variance boils down to estimating a quadratic form with a
particular $A_{0}$ matrix. The Eicker-Huber-White estimator is a plug-in
estimator of this quadratic form, and is thus biased in high-dimensional
settings. \Textcite{CJN18_ET,CJN18} show consistency of a Hadamard-type
estimator of the quadratic form, first proposed by \textcite{hrk69}, for the
case with independent data or clustered data with bounded cluster sizes.
\Textcite{CGJN22} extend its consistency to settings with diverging cluster
sizes. In this context, under independent sampling, \textcite{J22} establishes
the consistency of the \ac{L1CO} unbiased estimator we study, and
\textcite{AnNg26} generalize the consistency result to the case with clustering.

In contrast, this paper is concerned with inference on quadratic forms, not just their consistent estimation. The variance of the \ac{L1CO} estimator we study depends on products of second moments, and can be written as a quartic form.
As a result, unbiased variance estimation using leave-out methods requires leaving three clusters out. \Textcite{MSJ25} study estimation in linear models with
high-dimensional controls, clustered data, and weak exogeneity. The weak
exogeneity motivates formulating the estimation problem as a quadratic form
problem, and they use this representation to construct a just-identified
internal-IV estimator. They propose a jackknife variance estimator and show that
its expectation is conservative.

The \ac{L3CO} variance estimator we study is inspired by \textcite{AS23}, who
provide primitive sufficient conditions for its consistency in the special case
with independent data, and $Y=X$.
Extending these results to clustered data is substantially more involved, as it
requires handling multilevel sums of products of block matrices, where matrix
multiplication is non-commutative. We address this challenge through two key
technical innovations: (1) a new representation of the L3CO projection matrix,
and (2) a decomposition of sums of products of two L3CO projection matrices,
each leaving out different clusters. These innovations yield primitive
consistency conditions for the L3CO variance estimator in terms of the sample
size $n$, the largest cluster size $n_{G}$, the number of moments of the
residuals $2q$, the regressor dimension $d$, and the properties of $A_{0}$. The
\ac{L2CO} variance estimator that we consider is, to our knowledge, new.

The second strand concerns inference in linear IV models with many weak
instruments; see \textcite{MS24} for a recent survey. Among studies assuming
independent data, our results are most closely related to \textcite{Yap24}, who
shows that ignoring the additional variation generated by $V_{i, g}$ under
heterogeneous treatment effects can lead to underestimation of the asymptotic
variance, and proves consistency of the L3CO variance estimator we study under
high-level assumptions. The \ac{L1CO} unbiased estimator we study
can be thought of as a generalization of the UJIVE estimator in \textcite{K13}
to the clustered setting. \Textcite{BN24,EK2018} also study IV inference with
heterogeneous treatment effects, but focus on independent data.

We are only aware of a few IV studies that allow for clustering.
\Textcite{CNT23} study many-IV regression under clustering but impose strong
restrictions, including a correctly specified structural equation (homogeneous
treatment effects), few controls (of order $o(\sqrt{n})$, excluding cluster
fixed effects), bounded cluster sizes, and independence of errors within
clusters. \Textcite{FLM23} allow within-cluster dependence and propose a
cluster-jackknife estimator, but they do not accommodate heterogeneous treatment
effects, weak identification, or many controls, and they do not provide a
consistent cluster-robust variance estimator for high-dimensional settings.
\Textcite{L23} considers the Anderson--Rubin test in many-IV regression under
clustering, and \textcite{LW24} extend it to multidimensional clustering,
adapting bias corrections from \textcite{CNT23} but without formal
distributional theory when there are many
controls.


\section{Setup}\label{sec:setup}

We sample data from $G$ clusters indexed by $g\in[G]$, where $[G]$ is a
shorthand for $\{1, \dotsc, G\}$. Each cluster contains $n_g$ units indexed by
$i\in [n_g]$, such that the total sample size is given by
$n=\sum_{g \in [G]}n_g$. Without loss of generality, we assume the last cluster
is the largest, so that $n_{G}$ gives the largest cluster size. For each unit,
we observe a vector of covariates $W_{i, g} \in \Re^{d}$, and two scalar outcome
variables, $Y_{i, g} \in \Re$, and $X_{i, g} \in \Re$. We treat the covariates as
deterministic (or, equivalently, we condition on them), and assume the
regressions of $Y_{i, g}$ and $X_{i, g}$ onto $W_{i, g}$, defined in
\cref{eq:model_outcome,eq:model_treatment}, are linear, so that the error terms
$U_{i, g}$ and $V_{i, g}$ are mean zero. The matrix $W_g \in \Re^{n_g \times d}$
stacks the row vectors $\{W_{i, g}'\}_{i \in [n_g]}$, and the matrix $W$ stacks
the matrices $\{W_g\}_{g\in[G]}$. The matrices $Y$, $X$, $U$, $V$, as well as
their cluster-specific counterparts $Y_g$, $X_g$, $U_g$, $V_g$ are defined
analogously.

A natural estimator of the quadratic form $\theta$, defined in \cref{eq:mom}, is the plug-in estimator that replaces $\pi$ and $\gamma$ with the \ac{OLS} estimates $\hat{\pi}=(W' W)^{-1}W' X$ and $\hat{\gamma}=(W' W)^{-1}W' Y$,
\begin{equation}\label{eq:theta_pi}
    \hat{\theta}_{\rm PI} = \hat{\pi}' A_0 \hat{\gamma}
    =\sum_{g\in[G]} \hat{\pi}' A_0 (W' W)^{-1} W_g' Y_g .
\end{equation}
This plug-in estimator displays an overfitting bias given by
\begin{equation*}
    \mathbb{E} \hat{\theta}_{\rm PI}-\theta = \sum_{g \in [G]} \mathbb E V_{g}' A_{g, g} U_{g}, \qquad\text{where}\quad A=W(W' W)^{-1} A_0 (W' W)^{-1} W' \in \Re^{n \times n}.
\end{equation*}

The bias arises because the estimation error in $\hat{\pi}$ is correlated with $Y_{g}$. If we replace $\hat{\pi}$ in \cref{eq:theta_pi} with an \ac{OLS} estimator that leaves out cluster $g$, $\hat{\pi}_{-g}=(W'W-W_{g}'W_{g})^{-1}(W'X-W_{g}'X_{g})$, we obtain the leave-out estimator
\begin{equation*}
    \hat \theta_{\rm LO} =\sum_{g\in[G]} \hat{\pi}_{-g}' A_0 (W' W)^{-1} W_g' Y_g
\end{equation*}
that is unbiased by construction. This estimator was first proposed by \textcite{KSS2020} in the setting where $X=Y$, though their formal results focus on the setting with independent data. As discussed in \textcite{KSS2020}, the leave-out estimator can alternatively be thought of as a debiased version of the plug-in estimator. In particular, let \(P = W (W'W)^{-1} W'\) and \(M = I_n - P\) denote the projection and annihilator matrices, respectively. Let \(P_{g,h} \in \mathbb{R}^{n_g \times n_h}\) and \(M_{g,h} \in \mathbb{R}^{n_g \times n_h}\) denote the submatrices of \(P\) and \(M\) corresponding to rows and columns associated with clusters \(g\) and \(h\). We may then equivalently write
\begin{equation}\label{eq:theta_ub}
    \hat \theta_{\rm LO} = \hat{\theta}_{\rm PI}
-\sum_{g\in[G]} (M X)_g'  M_{g, g}^{-1} A_{g, g} Y_{g}=X' B Y,
\end{equation}
where, $B = A - M D$ and $D = \Bdiag(M_{g,g}^{-1}A_{g,g}) \in \Re^{n \times n}$ is the block diagonal matrix with blocks $M_{g, g}^{-1}A_{g, g}$ on its diagonal (so that the diagonal blocks $B_{g, g}$ are all zero). Here, $A_{g, g}$ is defined analogously to $M_{g, g}$, and $(MX)_g \in \Re^{n_g}$ denotes the subvector of $MX$ corresponding to cluster $g$. \Cref{eq:theta_ub} follows from the definition of $\hat \theta_{\rm LO}$ since, by the Woodbury identity,
 $\hat{\pi}_{-g}=\hat{\pi}-(W'W)^{-1}W_{g}'M_{g, g}^{-1}(MX)_{g}$. Since
\begin{equation*}
    \mathbb{E}\sum_{g\in[G]} (MX)_g' M_{g, g}^{-1} A_{g, g} Y_{g}=\mathbb{E}\sum_{g\in[G]} V_{g}' A_{g, g} U_g,
\end{equation*}
\cref{eq:theta_ub} implies that $\hat{\theta}_{LO}$ obtains by subtracting off a
bias estimate from the plug-in estimator.

The bias correction is not unique: any estimator of the form $X'(A-C)Y$, where
$C$ is a matrix such that $W'CW=0$ and $\Bdiag(C_{g,g})=\Bdiag(A_{g,g})$ will be unbiased. To
motivate the bias correction matrix $C_{\rm LO}=M\Bdiag(M_{g,g}^{-1}A_{g,g})$ we
consider, \Cref{lemma:rao} below generalizes the classic results in
\textcite{rao70} showing that $C_{\rm LO}$ has certain optimality properties. To
state the result, let $Q\ast S$ denote the Khatri-Rao product of two matrices
$Q,S$, so that the $(g, h)$ block of the matrix is given by the Kronecker
product $(Q\ast S)_{g, h}=Q_{gh}\otimes S_{gh}$. Under independent sampling, the
Khatri-Rao product reduces to the Hadamard product, $Q\ast S=Q\odot S$. Let
$\bvec(Q)=(\mkvec(Q_{1,1})',\dotsc,\mkvec(Q_{G,G}))'$ vectorize the diagonal
blocks of an $n\times n$ matrix $Q$.

\begin{lem}\label{lemma:rao}
  The bias-correction matrix $C$ that minimizes $\tr(C'C)$ subject to (i)
  $\Bdiag(C_{g,g})=\Bdiag(A_{g,g})$ (ii) $CW=0$, and (iii) $W'C=0$ is given by
  $C_{\rm KR}=M\Bdiag(\Lambda)M$, where $\bvec(\Lambda)$ solves
  $\bvec(A)=(M\ast M)\bvec(\Lambda)$. If property (ii) is not imposed, the
  solution is given by $C_{\rm LO}=M\Bdiag(M_{g,g}^{-1}A_{g,g})$, while if
  property (iii) is not imposed, it is given by $C_{\rm LO}'$.
\end{lem}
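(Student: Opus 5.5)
These are three convex quadratic programs with linear constraints, so I will solve each one through its Lagrangian/KKT conditions, or equivalently by orthogonal projection in the Frobenius inner product $\langle C,S\rangle=\tr(C'S)$. Each feasible set is an affine subspace. The minimum-norm element of an affine subspace is the unique feasible point orthogonal to the direction subspace, so in each case I will exhibit a feasible candidate and check that it lies in the orthogonal complement of the null-space directions.

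\emph{Case where (ii) is dropped.} The constraints are $\Bdiag(C_{g,g})=\Bdiag(A_{g,g})$ and $W'C=0$. The second says $C=MC$, i.e.\ every column of $C$ lies in $\operatorname{col}(M)$. The problem then separates by column block $h$. For each $h$ we minimize $\|C_{\cdot,h}\|_F^2$ subject to $C_{\cdot,h}=MZ_h$ for some $Z_h$ and $(MZ_h)_{h}=A_{h,h}$, where the subscript $h$ picks out the rows of cluster $h$. Writing $E_h$ for the row selector of cluster $h$, the constraint reads $E_hMZ_h=A_{h,h}$, and by idempotence $\|MZ_h\|^2=\|MZ_h\|^2$ of a vector in $\operatorname{col}(M)$. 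The minimum-norm solution in $\operatorname{col}(M)$ of the linear system $E_h x=A_{h,h}$ is obtained from $x=ME_h'\lambda$. Substituting gives $E_hME_h'\lambda=M_{h,h}\lambda=A_{h,h}$, hence $\lambda=M_{h,h}^{-1}A_{h,h}$ and $C_{\cdot,h}=M_{\cdot,h}M_{h,h}^{-1}A_{h,h}$. Stacking over $h$ yields $C_{\rm LO}=M\Bdiag(M_{g,g}^{-1}A_{g,g})$. I will justify optimality by noting that any other feasible column differs from this one by an element of $\operatorname{col}(M)\cap\ker E_h$. That element is orthogonal to $ME_h'\lambda$, because $\langle ME_h'\lambda,\delta\rangle=\lambda'E_hM\delta=\lambda'E_h\delta=0$. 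The case where (iii) is dropped follows by transposition: the transpose of $A$ has diagonal blocks $A_{g,g}'$, and $A$ is symmetric whenever $A_0$ is. I should state that $A_0$ is taken symmetric without loss of generality, since $\theta$ only depends on it through the quadratic form. This step uses invertibility of $M_{g,g}$, which I take as implicit in the definition of $\hat\theta_{\rm LO}$.

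\emph{Full problem.} Now (ii) and (iii) together say $C=MCM$. The feasible directions are $\{\Delta=M\Delta M:\ \Delta_{g,g}=0\ \forall g\}$. I claim the optimum has the form $C=M\Bdiag(\Lambda)M$. The orthogonal complement of this direction set within $\{MZM\}$ is $\{M\Bdiag(\Lambda)M\}$. To see this, compute $\langle M\Bdiag(\Lambda)M,\Delta\rangle=\tr(\Bdiag(\Lambda)'M\Delta M)=\tr(\Bdiag(\Lambda)'\Delta)=\sum_g\tr(\Lambda_g'\Delta_{g,g})=0$. The remaining task is to impose feasibility. We need $(M\Bdiag(\Lambda)M)_{g,g}=\sum_h M_{g,h}\Lambda_hM_{h,g}=A_{g,g}$. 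Vectorizing with $\mkvec(QXR)=(R'\otimes Q)\mkvec(X)$ and using symmetry $M_{h,g}'=M_{g,h}$ gives $\mkvec(A_{g,g})=\sum_h(M_{g,h}\otimes M_{g,h})\mkvec(\Lambda_h)$. That is exactly $\bvec(A)=(M\ast M)\bvec(\Lambda)$.

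The main obstacle is existence of a solution $\Lambda$, i.e.\ showing $\bvec(A)$ lies in the range of $M\ast M$, which is a symmetric PSD matrix (a principal submatrix of $M\otimes M$). Because $M\ast M$ is symmetric, its range equals the orthogonal complement of its kernel. Suppose $(M\ast M)\bvec(\Lambda)=0$. Then $0=\bvec(\Lambda)'(M\ast M)\bvec(\Lambda)=\|M\Bdiag(\Lambda)M\|_F^2$, so $M\Bdiag(\Lambda)M=0$. This in turn forces $\bvec(A)'\bvec(\Lambda)=\sum_g\tr(A_{g,g}'\Lambda_g)=\tr(A\Bdiag(\Lambda))$. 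Since $A=W(\cdot)W'$, we have $A=PAP$ and therefore $\tr(A\Bdiag(\Lambda))=\tr(A\,P\Bdiag(\Lambda)P)$. The remaining step is to show that $P\Bdiag(\Lambda)P$ vanishes, or at least is orthogonal to $A$, when $M\Bdiag(\Lambda)M=0$, and this is where I expect difficulty. The trace identity alone does not suffice. The fallback is to assume existence (for instance, invertibility of $M\ast M$, the clustered analogue of Rao's condition) and to note that when the solution is not unique, $M\Bdiag(\Lambda)M$ is still uniquely determined, because any two solutions differ by a kernel element, which maps to zero. Given existence, convexity and the orthogonality computation above establish optimality.
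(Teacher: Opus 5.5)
Your argument is correct and reaches the same characterizations as the paper, but by a primal/geometric route rather than a Lagrangian one.

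\emph{How the two proofs differ.} The paper writes the Lagrangian and reads off the stationarity condition $C=\Bdiag(\Lambda)+\tilde\Lambda W'+W\ddot\Lambda$. It then eliminates $\tilde\Lambda,\ddot\Lambda$ using (ii)--(iii) to get $C=M\Bdiag(\Lambda)M$, or $C=M\Bdiag(\Lambda)$ when (ii) is dropped, and finally imposes (i). You instead characterize the minimum-Frobenius-norm point of an affine subspace by orthogonality to its direction space. In the leave-out case you do this column block by column block. In the full case you use $\langle M\Bdiag(\Lambda)M,\Delta\rangle=\sum_g\tr(\Lambda_g'\Delta_{g,g})=0$.

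\emph{What each buys.} The paper's version is shorter. Yours has three advantages:
\begin{itemize}
\item it verifies sufficiency directly, rather than only deriving a stationarity condition;
\item it shows that $M\Bdiag(\Lambda)M$ is unique even when $\Lambda$ is not;
\item it makes transparent that the constraint set is nonempty if and only if $\bvec(A)=(M\ast M)\bvec(\Lambda)$ is solvable. This holds because the minimum-norm point must lie in the projection of the block-diagonal matrices onto $\{MZM\}$, which is $\{M\Bdiag(\Lambda)M\}$.
\end{itemize}
Your vectorization, with $(g,h)$ block $M_{g,h}\otimes M_{g,h}$, is the correct one and matches the Khatri--Rao definition.

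\emph{Existence cannot be proved, so do not try.} The step you flagged is false in general. Suppose $D=\Bdiag(\Lambda)$ satisfies $MDM=0$ but $PDP\neq 0$, and take $A_0=W'DW$, so that $A=PDP$. Then $\bvec(\Lambda)$ lies in the kernel of $M\ast M$, yet
\begin{equation*}
\bvec(A)'\bvec(\Lambda)=\tr(A'D)=\norm{PDP}_F^2>0 .
\end{equation*}
Hence $\bvec(A)\notin\operatorname{range}(M\ast M)$, and no $C$ satisfies (i)--(iii). A concrete instance: three singleton clusters with $W$ spanning $\iota_3^\perp$, so $M=\iota_3\iota_3'/3$ and $M_{ii}=1/3$, together with $D=\operatorname{diag}(1,-1,0)$. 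The first claim of the lemma must therefore be read as conditional on feasibility. That is what the paper's Lagrangian argument tacitly assumes, and its main text acknowledges it (``the estimator may not exist''). Your fallback is the correct reading, not a weakness.

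\emph{The symmetry justification in the (iii)-dropped case is wrong, though the hypothesis is needed.} When $X\neq Y$, $\theta=\pi'A_0\gamma$ is a bilinear form and changes if $A_0$ is symmetrized, so symmetry is not without loss of generality. Some symmetry is nevertheless genuinely required. Your transposition argument gives $\Bdiag(A_{g,g}M_{g,g}^{-1})M$ in general. Meanwhile $C_{\rm LO}'$ has diagonal blocks $A_{g,g}'$, so it violates (i) unless each $A_{g,g}$ is symmetric. State symmetry of $A_0$ as an explicit hypothesis for that claim; it holds in all of the paper's examples. The paper's proof does not treat this case at all.
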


To interpret this result, note that property (i) is necessary for unbiasedness,
while properties (ii) and (iii) ensure that the estimator is invariant to the
signals $W\gamma$ and $W\pi$, respectively, in the regressions
in~\cref{eq:model_outcome,eq:model_treatment}. If these conditions hold and the
errors $U,V$ are homoskedastic, then the variance of the bias correction is
proportional to the square the Frobenius norm, $\tr(C'C)$. Thus, the
bias-correction $C_{KR}$ is variance-minimizing among all bias-corrections that
are invariant to the signal in both regressions. The resulting estimator
$\hat{\theta}_{\rm KR}=X'(B-C_{\rm KR})Y$ (or its version under independent sampling) has been previously studied by, among
others, \textcite{hrk69,CJN18,CGJN22,CNT23,L23}. If the invariance requirement is
weakened, so that we only require the bias-correction to be invariant to the
signal in one of the regressions, the minimum norm unbiased estimator is given
by $C_{\rm LO}$ or $C_{\rm LO}'$, which yields the leave-out estimator in
\cref{eq:theta_ub} that this paper focuses on (using $C_{\rm LO}'$ yields a
numerically equivalent expression if $X=Y$, but not in general).

While the invariance properties of the estimator $\hat{\theta}_{KR}$ are clearly
desirable, they come with three limitations relative to the estimator
$\hat{\theta}_{LO}$. First, since
$\tr(C_{\rm LO}'C_{\rm LO})\leq \tr(C_{\rm KR}'C_{\rm KR})$, it can display
greater variability. Second, computing the estimator $\hat{\theta}_{KR}$ is not
feasible in large datasets or in datasets with large clusters, as it requires
solving a linear system defined by the matrix $M\ast M$, which has dimension
$\sum_{g} n_{g}^2$, and may not even be storable in memory. In contrast, the
leave-out estimator only requires computation of least squares estimators.
Third, the estimator may not exist. A sufficient condition for the existence of
$\hat{\theta}_{KR}$ is that the matrix $M\ast M$ be invertible. Under
independent sampling, the matrix $M\ast M$ corresponds to the Hadamard product
$M\odot M$ (so that $(M\ast M)_{ij}=M_{ij}^{2}$), and \textcite{HoHo75} show
that a sufficient condition for its invertibility is that $\min_i M_{ii} > 1/2$,
which holds if $d < n/2$ and the design is sufficiently balanced. While this is
only a sufficient condition, simulations reported in \textcite{J22} show that,
with independent sampling, the estimator $\hat{\theta}_{KR}$ often fails to
exist when the regression is high-dimensional. Under clustered sampling, general
sufficient conditions for the invertibility of the Khatri-Rao product are, to
our knowledge, unknown. For these reasons, we focus on the leave-out estimator.

To test a particular null hypothesis $\theta=\theta_0$ at the significance level
$\alpha$, we reject if the $t$-statistic based on the leave-out estimator
exceeds the usual critical value ${z}_{1-\alpha/2}$, the $1-\alpha/2$ quantile
of a standard normal distribution. That is, we reject if
\begin{equation*}
\left \vert \frac{\hat \theta_{\rm LO} - \theta_0}{\hat \omega_n} \right \vert \geq {z}_{1-\alpha/2},
\end{equation*}
where $\hat \omega_n^2$ is either a consistent or a conservative estimator of the variance of the estimator $\omega_n^2 = \mathbb V(X' B Y)$. We consider variance estimation in \Cref{sec:variance_estimation}; asymptotic validity of this test follows from the fact that $\hat{\theta}_{\rm LO}$ is asymptotically normal, as shown in \Cref{sec:asymtptic_normality}. Before giving these results, the remainder of this section fleshes out three important special cases that fit this general setup.

\subsection{Instrumental Variables Regression with Many Instruments and Controls}\label{sec:IV}

To see how our setup covers instrumental variables regressions, decompose the covariate vector $W_{i, g}=(\mathcal W_{i, g}', Z_{i, g}')'$ into a vector of controls $\mathcal{W}_{i, g}\in \Re^{d_w}$ and a vector of instruments $Z_{i, g}\in \Re^{d_z}$, with $d= d_w + d_z$. We allow both $d_w$ and $d_z$ to diverge to infinity, potentially as fast as $n$. Let $\mathcal{Y}_{i, g}$ denote an outcome variable of interest, with $X_{i, g}$ denoting the treatment. Letting
\begin{equation*}
\mathcal{Y}_{i, g} = W_{i, g}\pi_{Y} + e_{i, g}, \quad \mathbb E [e_{i, g}]=0
\end{equation*}
denote the reduced-form regression and \cref{eq:model_treatment} the first-stage regression, the \ac{TSLS} estimand is given by
\begin{equation}\label{eq:beta}
    \beta = \frac{\pi' W' P_{\tilde Z} W \pi_Y}{\pi' W' P_{\tilde Z} W\pi},
\end{equation}
where $P_{\tilde Z}=\tilde Z(\tilde Z' \tilde Z)^{-1}\tilde Z'$ is the
projection matrix of $\tilde Z$, $\tilde Z = M_{\mathcal W} Z$, and
$M_{\mathcal W} = I_n - \mathcal W (\mathcal W' \mathcal W)^{-1}\mathcal W'$ is
the annihilator matrix of $\mathcal W$. For testing a particular value $\beta_0$
of $\beta$, let $Y=\mathcal{Y}-X\beta_0$ denote the treatment-adjusted outcome,
so that \cref{eq:model_outcome} corresponds to the treatment-adjusted outcome
regression with $\gamma=\pi_Y -\pi\beta_0$ and
$U_{i, g}=e_{i, g}-V_{i, g}\beta_0$. Our setup allows for testing the null value
$\beta_0$ by testing whether $\pi'A_0 \gamma=0$, with
\begin{equation*}
 A_0=W' P_{\tilde Z} W.
\end{equation*}

Hypothesis tests on $\beta$ are of interest because if the effect of $X_{i, g}$
on $\mathcal{Y}_{i, g}$ is constant, and the instrument satisfies an exclusion
restriction, then $\beta$ identifies this constant treatment effect, provided
that the regression of $\mathcal{Y}_{i, g}-X_{i, g}\beta$ on
$\mathcal{W}_{i, g}$ is linear. Our setting allows for heterogeneous treatment
effects and, in this case, a causal interpretation of $\beta$ requires an
instrument monotonicity assumption, and the assumption that the reduced-form and
first-stage regressions are (approximately) correctly specified. We refer to
\textcite{K13,EK2018,BN24,Yap24,BBMT22} for further discussion and statement of
the precise conditions. The requirement that the reduced-form and first-stage
regressions be approximately linear gives one motivation for why the covariate
vector $W_{i, g}$ may be high-dimensional even in settings where a baseline set
of instruments $Q_{i, g}$ and covariates $C_{i, g}$ is low dimensional (as
discussed in the introduction, in leniency IV designs, these vectors may already
be high-dimensional). In particular suppose we set
$\mathcal W_{i, g} = \mathcal W(C_{i, g})$, and
$Z_{i, g} = Z(Q_{i, g}, C_{i, g})$ to correspond to technical transformations of
these baseline variables ensuring that the linear specifications
$W_{i, g}'\pi_Y$ and $W_{i, g}'\pi$ provide good approximations to the
reduced-form $\mathbb{E}(\mathcal Y_{i, g} \mid Q_{i, g}, C_{i, g})$ and first
stage $\mathbb{E}(X_{i, g} \mid Q_{i, g}, C_{i, g})$, respectively. Typically,
such technical transformations involve interactions and series expansions, which
can lead to high dimensionality. Our analysis remains valid if the model in
\cref{eq:model_outcome,eq:model_treatment} contains an asymptotically negligible
approximation error, as in, for example, \textcite{CJN18}. For notational
simplicity, we abstract from any such approximation errors and assume that the
linearity holds exactly.

The variance $\omega_n^2$ for $X' B Y$ and its estimator $\hat{\omega}_n^2$ implicitly depend on $\beta_0$ through the relation $Y = \mathcal{Y} - X \beta_0$, so the resulting $t$-statistic corresponds to the weak-identification-robust Lagrange multiplier statistic. Our regularity conditions require that either the concentration parameter $\pi' W' P_{\tilde{Z}} W \pi$ or the number of instruments $d_z$ diverges to infinity. This framework encompasses both the strong-identification case with a finite number of instruments and the strong- and weak-identification cases with many instruments, including situations with many weak IVs in which $\pi' W' P_{\tilde{Z}} W \pi / \sqrt{d_z}$ remains bounded.

If the instruments are collectively strong, so that $\pi' W' P_{\tilde{Z}} W \pi / \sqrt{d_z}$ diverges, the leave-out estimator of $\beta$,
\begin{equation*}
    \hat \beta = \frac{X' B \mathcal Y}{X' B X},
\end{equation*}
will be consistent, even in the presence of many instruments (this stands in
contrast with the inconsistency of TSLS, which can be thought of as a plug-in
estimator of $\beta$). In this case, we do not have to impose the null when
computing the variance of $X'BY$, and can instead base inference on the Wald
test, rejecting whenever the absolute value of the  $t$-statistic
\begin{equation*}
\frac{X' B X (\hat \beta - \beta)}{\hat \omega_n(\hat \beta)}=\frac{X' B (\mathcal Y - \beta X)}{\hat \omega_n(\hat \beta)}
\end{equation*}
exceeds $z_{1-\alpha/2}$, where $\hat \omega_n^2(\hat \beta)$ is just $\hat \omega_n^2$ with $\beta_0$ in $Y = \mathcal Y - X \beta_0$ replaced by $\hat \beta$.

The construction of our estimator $\hat{\beta}$ (and its corresponding weak-IV robust Wald test) can be thought of as a cluster-robust version of the UJIVE estimator studied by \textcite{K13} and further advocated by \textcite{GHK25}, which employs a leave-one-out technique to eliminate the overfitting bias of TSLS\@.

\begin{rem}
  Validity of the Wald test requires consistency of the variance estimator
  $\hat{\omega}_{n}^2(\hat{\beta})$ in the sense that
  $\hat \omega_n^2(\hat \beta)/\omega_n^2(\beta) \convP 1$, where
  $\omega_n^2(\beta) := Var \left(X' B (\mathcal Y - \beta X) \right)$. Our
  results below show that $\hat \omega_n^2(\beta) /\omega_n^2(\beta) \convP 1$.
  Since $\hat{\beta}$ is consistent, a stochastic equicontinuity argument can
  then be used to show consistency of the variance estimator
  $\hat \omega_n^2(\hat \beta)$, using the fact that
  $\hat \omega_n^2(\beta_0)/\omega_n^2(\beta)$ is Lipschitz continuous in
  $\beta_0$ over a neighborhood of $\beta$.
\end{rem}

\subsection{Variance and Covariance Components in Linear Regressions}\label{sec:var_comp}

Consider a two-way fixed effect model of log wage determination proposed by
\textcite{AKM99}, in which the log-wage $Y_{\ell, t}$ of a worker $\ell\in[L]$ in year
$t \in [T_{\ell}]$ is given by
\begin{equation}\label{eq:akm}
  Y_{\ell, t} =
  \alpha_{\ell} + \psi_{\mathcal J(\ell, t)} + \mathcal{W}_{\ell, t}'\delta +
  U_{\ell, t}.
\end{equation}
Here $\alpha_{\ell}$ is a worker fixed effect, $\mathcal J(\ell, t) \in [J]$
returns the identity of the firm that employs $\ell$ in year $t$, $\psi_{j}$ is
a firm fixed effect, $\mathcal{W}_{\ell, t}$ contains time-varying control
variables, and $U_{\ell, t}$ is a time-varying noise term. The error
$U_{\ell, t}$ is assumed to be mean zero, which imposes strict exogeneity:
workers and firms are allowed to match based on firm and worker fixed effects,
but not on time-varying factors influencing wages. As discussed in a recent
survey by \textcite{kline2024}, it is desirable to allow this noise to exhibit
arbitrary correlation within a worker-firm match to allow for match-specific
unobservables, so that $U_{\ell, t}$ and $U_{\ell, t'}$ may be correlated if
$J(\ell, t)=J(\ell, t')$. To map this to our setup, following \textcite[Section
4.2]{kline2024}, let $g$ index worker-firm matches, with $j(g)$ returning the
firm and $\ell(g)$ the worker that form the match. Then $n_{g}$ corresponds to
the duration of the match (so if a worker $\ell(g)$ spends 5 years at a firm
$j(g)$, say, then $n_{g}=5$). Then we may rewrite \cref{eq:akm} as
\begin{equation*}
  Y_{i, g} =  F_{j(g)}'\psi + D_{\ell(g)}'\alpha + \mathcal{W}_{i,g}'\delta + U_{i, g},
\end{equation*}
where $i\in[n_{g}]$ indexes years during the match, $D_{\ell}\in\Re^{L}$ is the
$\ell$-th basis vector, and $F_{j}\in\Re^{J-1}$ is the $j$-th basis vector (we
normalize the last firm's effect to zero). This maps to our general framework in
\cref{eq:model_outcome,eq:model_treatment} by setting $Y_{i, g}=X_{i, g}$,
$W_{i, g}=(F_{j(g)}, D_{\ell(g)}, \mathcal{W}_{i, g})$, and
$\gamma=\pi=(\psi', \alpha', \delta')'$.

The key parameters of interest in this model are the variances of firm and
worker effects, as well as their covariance. A person-year weighted variance of
the firm effects is given by
$\sigma^{2}_{\psi}=\frac{1}{n}\sum_{g\in[G]}n_{g}(\psi_{j(g)}-\bar{\psi})^{2}$,
where $\bar{\psi}=\frac{1}{n}\sum_{g\in[G]}n_{g} \psi_{j(g)}$, and measures the
direct contribution of firms to wage inequality. This can be written as a
quadratic form in \cref{eq:mom}, $\sigma_{\psi}^2 = \gamma'A_\psi \gamma$, with
the matrix $A_{\psi}$ given by
\begin{equation*}
  A_\psi = \frac{1}{n}
  \begin{pmatrix}
    (F - 1_{n}\bar{F}')'(F - 1_{n}\bar{F}') & 0 \\
    0 & 0
  \end{pmatrix},
\end{equation*}
where $\bar{F} = \frac{1}{n}\sum_{g \in [G]}n_{g}F_{g}$, and $1_{n}$ is a vector
of ones. Variance of worker effects, given by
$\sigma^{2}_{\alpha}=\frac{1}{n}\sum_{g\in[G]}n_{g}(\alpha_{\ell(g)}-\bar{\alpha})^{2}$,
maps to \cref{eq:mom} analogously.

The covariance between worker and firm effects is given by
$\sigma_{\alpha,\psi} = \frac{1}{n}\sum_{g \in [G]}n_{g}(\psi_{j(g)} - \bar\psi)
 \alpha_{\ell(g)}=\gamma' A_{\alpha,\psi} \gamma$, with
\begin{equation*}
    A_{\alpha,\psi} = \frac{1}{2 n }
    \begin{pmatrix}
        0 & (F - 1_{n}\bar{F}')'(D - 1_{n}\bar{D}') & 0 \\
        (D - 1_{n}\bar{D}')'(F - 1_{n}\bar{F}') & 0 & 0 \\
        0 & 0 & 0
    \end{pmatrix},
\end{equation*}
and $\bar{D} = \frac{1}{n} \sum_{g \in [G]}n_{g}{D}_{\ell(g)}$, so that
$\sigma_{\alpha, \psi}$ again takes the form of \cref{eq:mom}. The covariance
$\sigma_{\alpha, \psi}$ measures the contribution of systematic sorting of high
wage workers to high wage firms to wage
inequality.

While we focus on the \textcite{AKM99} model for concreteness, variance and
covariance components are of interest in numerous other settings that exhibit
potentially high-dimensional fixed effects as well. Examples include determining
the importance of neighborhoods for intergenerational mobility
\parencite{ChHe18i}, the importance of geography for healthcare utilization
\parencite{fgw16}, or the importance of classroom assignment in determining
student outcomes \parencite{chetty2011}. In some cases, covariance components
involve fixed effects estimated from different regressions (so that $Y\neq X$):
for instance, \textcite{chetty2011} are interested in the covariance between
classroom fixed effects in an earnings regression and that in a test score
regression.

\subsection{Testing Many Linear Restrictions}\label{sec:testing-many-linear}

Consider the linear regression~\eqref{eq:model_outcome}, so that
\( X_{i,t} = Y_{i,t} \) in \cref{eq:model_treatment}. As in \textcite{AS23}, we
are interested in testing the linear restriction
\begin{equation*}
    R \gamma = q,
\end{equation*}
where \( q \) may be high-dimensional. This restriction implies
\begin{equation*}
     \gamma' \left[ R' (R(W' W)^{-1} R')^{-1} R \right] \gamma
     = q' (R(W' W)^{-1} R')^{-1} q,
\end{equation*}
which fits into our framework with \( Y = X \), and
$A_0 = R' (R(W' W)^{-1} R')^{-1} R$, and the hypothesis of interest is that
$\theta = q' (R(W' W)^{-1} R')^{-1} q$. Note that the plug-in estimator
$X'A_{0}X$ corresponds to the numerator of the classic $F$-statistic under
homoskedasticity.

By letting $q=0$ and letting $R$ correspond to a matrix that selects a subset of
the coefficients $\gamma$, this setup nests testing for the presence of fixed
effects, which is how tests of heterogeneity can be cast
\parencite[e.g.,][]{krw22,fgw16}.

The setup also covers validation of value-added models. In particular, value
added models are typically estimated by a linear regression specification, where
$i$ indexes students within clusters $g$ (such as classrooms), and the vector of
regressors $\mathcal{W}^{0}_{i, g}=(\mathcal{W}_{i,g}',\mathcal{D}_{i, g}')'$,
consists of a set of school dummies $\mathcal{W}_{i, g}$ indicating which school
$i$ attends, and a set of controls $\mathcal{D}_{i, g}$ (that may include lagged
test scores). The outcome $Y_{i, g}=X_{i, g}$ denotes student test scores.
Provided that $\mathcal{W}^{0}_{i, g}$ is exogenous, the coefficients on the
school dummies (i.e., estimates of school fixed effects), can be interpreted as
school value added estimates. \Textcite{ahpw24} point out that the exogeneity
assumption is testable if we have a set of covariates $\mathcal{Z}_{i, g}$ that
affect school attendance, but not test scores directly (such as indicators for
winning a lottery to attend oversubscribed schools). \Textcite{ahpw24} then
develop a test of exogeneity by adapting the classic Sargan test by viewing
$\mathcal{Z}_{i, g}$ as a set of instruments. Their framework requires the
instrument and covariate dimension to be fixed, and the sampling to be
independent. However, the exogeneity assumption is equivalent to the assumption
that the coefficients on $\mathcal{Z}_{i,g}$ in the long regression with
$\mathcal{W}_{i, g}=(\mathcal{W}_{i,g}',\mathcal{D}_{i, g}',\mathcal{Z}_{i,
  g})'$ as the set of regressors all equal zero. This fits our framework by
letting $R$ denote the selector matrix that selects the coefficients on
$\mathcal{Z}_{i, g}$, and setting $q=0$. Doing so allows us to accommodate
high-dimensional instruments and many schools, as well as clustering.

This framework also nests testing for endogenous peer effects. In particular,
\textcite{JL26} point out that in a panel data setting, one can test for peer
effects without specifying the network structure by using an Anderson--Rubin
test in an \ac{IV} regression of own outcomes on outcomes of potential peers,
instrumenting with peer characteristics. However, implementing the test is
complicated by the presence of individual and time fixed effects. The
implementation in \textcite{JL26} assumes the regression errors are
homoskedastic and independent across both time and individuals. However, an
Anderson--Rubin test of a zero effect of endogenous variables is equivalent to
an $F$ test on the reduced form. Thus, the null hypothesis is equivalent to
testing whether coefficients on the instruments in a reduced-form regression of
own outcomes on controls (that may include individual and time effects) and
instruments equal zero. This again fits the above framework, if we set $R$ to be
a matrix that selects the regressor coefficients on the instruments. Doing so
allows us to accommodates both heteroskedasticity and cluster dependence, either
in the time dimension or in the cross-section (e.g., by classroom).

\section{Asymptotic Normality}\label{sec:asymtptic_normality}
This section shows that the estimator $\hat{\theta}_{\textrm{LO}}$ is
asymptotically normal. To state the assumptions needed for this result, let
$\lambda_{\max}(Q)$ and $\lambda_{\min}(Q)$ denote the largest and smallest
eigenvalues of a symmetric matrix $Q$, and for any matrix $Q$, let
$\norm{Q}_{op}=\lambda_{\max}(Q'Q)^{1/2}$, and $\norm{Q}_{F}=\tr(Q'Q)^{1/2}$.
Finally, $\norm{q}_{2}$ denotes the Euclidean norm of a vector $q$.

\begin{ass}\label{ass:dgp}
  \begin{enumerate}
  \item\label{item:ass_clust} \Cref{eq:model_outcome,eq:model_treatment} hold with $W$ nonrandom and
    $\{(U_{g},V_{g})\}_{g \in [G]}$ independent across $g \in [G]$.
  \item\label{item:ass_lo} $ \min_{g \in [G]} \lambda_{\min} \left( M_{g,g} \right) \geq c$ for
    some constant $c>0$.
  \item\label{item:ass_moments}
    $\max_{g \in [G]}\max_{i \in [n_{g}]} \left( \mathbb E U_{i,g}^{2q}+\mathbb E
      V_{i,g}^{2q} \right) \leq C$ for some constants $C<\infty$ and $q\geq 2$.
    In addition, there exists a constant $c>0$ and a sequence $u_{n}<\infty$ such that
    \begin{equation*}
      c \leq             \min_{g \in [G]}\lambda_{\min}\left( \Omega_{g}\right)
      \leq               \max_{g \in [G]}\lambda_{\max}\left( \Omega_{g}\right)  \leq u_n,
    \end{equation*}
    where
    \begin{equation*}
      \Omega_{g} = \begin{pmatrix}
        \Omega_{U,g} & \Omega_{U,V,g} \\
        \Omega_{U,V,g}' & \Omega_{V,g}
      \end{pmatrix} = \begin{pmatrix}
        \mathbb E U_{g} U_{g}' & \mathbb E U_{g} V_{g}' \\
        \mathbb E V_{g} U_{g}' & \mathbb E V_{g} V_{g}'
      \end{pmatrix}.
    \end{equation*}
\end{enumerate}
\end{ass}

\begin{rem}
  \Cref{ass:dgp}.\ref{item:ass_clust} describes the data-generating process for
  clustered observations. \Cref{ass:dgp}.\ref{item:ass_lo} ensures that the
  model remains estimable after leaving out any particular cluster; otherwise
  the estimator $\hat{\theta}_{\textrm{LO}}$ would not be well-defined. When the
  observations are independent, the condition is equivalent to
  $\min_{i \in [n]} M_{i, i} \geq c$ for some small constant $c>0$, which is
  necessary for a leave-observation out regression to be feasible, and standard
  in the literature \parencite[e.g.,][]{KSS2020,J22,AS23}. In the context of IV
  leniency designs or estimation of variance components, one can ensure this
  condition holds by ``pruning'' the sample and dropping observations associated
  with singleton decision-makers or firms \parencite[see, e.g.,][for
  details]{KSS2020}.

  Finally, \Cref{ass:dgp}.\ref{item:ass_moments} imposes mild restrictions on
  the moments and the within-cluster covariance structure of the error terms.
  The upper bound $u_{n}$ captures the strength of within-cluster dependence.
  While we can always take $u_{n}$ to equal a constant times the largest cluster
  size $n_{G}$, which is the best bound when the errors share a common
  cluster-specific component, a tighter upper bound $u_{n}$ obtains when the
  within-cluster errors do not exhibit such strong dependence. For example, in a
  panel data setting, where $i$ indexes the time periods for which we observe
  individuals indexed by $g$, the errors may satisfy weak dependence,
  $cov(U_{i,g},U_{i',g}) = \rho^{|i'-i|}$, in which case $u_n$ is bounded. A
  tighter upper bound $u_{n}$ and the existence of higher-order moments (higher
  $q$) allow us to weaken the conditions on cluster sizes (see \Cref{rem:rate1}
  below).
\end{rem}

\begin{ass}\label{ass:reg}
  Let $\Pi = W \pi $, $\Gamma = W \gamma$, $r_n = \operatorname{rank}(A)$, $h_n = \norm{(W'W)^{-1/2} A_0 (W'W)^{-1/2}}_{op}$,
  \begin{align*}
  H & = B' \Pi/h_n,& \tilde H& = B \Gamma/h_n, & \kappa_n& = \norm{B}_{F}^{2}/h_n^2, \\
    \lambda_n& = \max_{g \in [G]} \norm{P_{g,g}}_{op},&  \zeta_{H,n}& = \max_{g \in [G]}\norm{H_{g}}_2^2, &  \zeta_{\tilde H,n}& = \max_{g \in [G]}
      \norm{\tilde H_{g}}_2^2, \\
      \eta_{n}&  = \max \left\{1, \log^2 (r_n) + \log^2 (n) \lambda_n^2 \right\}&
     \mu_n^2 &= \norm{H}_2^2, &\text{and}\quad \tilde \mu_n^2 &= \norm{\tilde H}_2^2,
  \end{align*}

\begin{enumerate}
    \item\label{item:signal_bound} $\max_{g \in [G]}||\Pi_{g}||_2^2 + \max_{g \in [G]}||\Gamma_{g}||_2^2 \lesssim n_G$.
    \item\label{item:clt_rate} $G \to \infty$, and the following condition holds
    \begin{multline*}
      u_n^3 \eta_n (\mu_n^2 + \tilde \mu_n^2)+u_n^{\frac{3q-4}{q-1}}
      n_G^{\frac{q}{q-1}} \lambda_n \kappa_n +u_n^4 \eta_n \kappa_n+u_n^{\frac{2(q-2)}{q-1}} n_G^{\frac{2q}{q-1}} \lambda_n^2 \kappa_n\\
      +u_n^{\frac{q-2}{q-1}} n_G^{\frac{q}{q-1}} (\zeta_{H,n}\mu_n^2 + \zeta_{\tilde H,n} \tilde \mu_n^2 + \kappa_n)
      = o(       (\mu_n^2 + \tilde \mu_n^2 + \kappa_n)^2).
    \end{multline*}
\end{enumerate}
\end{ass}

\begin{rem}
A sufficient condition for \Cref{ass:reg}.\ref{item:signal_bound}
to hold is that the signal in \cref{eq:model_outcome,eq:model_treatment},
$\Pi_{i, g}$ and $\Gamma_{i, g}$, is bounded, which is mild.
\end{rem}

\begin{rem}\label{rem:kappa}
Across the three examples, many-IV regression, variance components, and many restrictions, the rank \(r_n\) of \(A\) equals the number of instruments, firms, and restrictions, respectively. In the first and third cases, \(\kappa_n \asymp r_n\). In the variance-components case, \textcite{KSS2020} show \(\kappa_n \ge \tfrac{1}{4}\, J\, \dot{\lambda}_J^{\,2}\) for a stochastic block model for the firm connectivity network,\footnote{Two firms are considered connected if at least one worker moves from one firm to the other.} where $J$ is the number of firms,  \(\dot{\lambda}_J\) is the smallest eigenvalue of \(E^{1/2}\mathcal{L}E^{1/2}\), \(\mathcal{L}\) is the normalized graph Laplacian of the employer mobility network, and \(E\) collects employer-specific churn rates. If the network's Cheeger constant is bounded away from zero, then \(\kappa_n \asymp J\).
\end{rem}

\begin{rem}\label{rem:lambda}
  The parameter \(\lambda_n\) represents the maximum leverage in the cluster
  setting, and we naturally have \(\lambda_n \le 1\). When the design matrix
  \(W\) is well balanced, we should have \(\lambda_n \lesssim d/n\). In the
  first two examples in \Cref{sec:setup}, \(r_n\) denotes, respectively, the
  number of instruments and the number of firms, and \(d\) equals \(r_n\) plus
  the number of controls. If the number of controls does not dominate \(r_n\),
  then \(d \lesssim r_n\), which, combined with the discussion in
  \Cref{rem:kappa}, implies $\lambda_n \lesssim \kappa_n/n$.
\end{rem}

\begin{rem}\label{rem:eta}
  To establish the asymptotic distribution of the quadratic form, we bound the
  operator norm of the upper-triangular part of \(A\), denoted \(\nabla(A)\).
  When \(A\) is a projection matrix, \textcite{Chao12} show that
  \(\lVert \nabla(A)\rVert_{\mathrm{op}} \lesssim r_n^{1/4}\), which may be
  loose when \(r_n \asymp n\). Instead, we provide a sharper bound,
  \begin{equation*}
      \norm{\nabla(A)}_{\mathrm{op}} \lesssim \log(r_n)\,\norm{A}_{\mathrm{op}},
  \end{equation*}
  for a general matrix \(A\), which motivates the definition of \(\eta_n\).

\end{rem}

\begin{rem}\label{rem:linear+quadratic}
  In the proof, we show that the estimation error
  \(\hat{\theta}_{\mathrm{LO}} - \theta\) is decomposed into linear and
  quadratic terms. The variability of the linear component is governed by
  \(\mu_n^2 + \tilde{\mu}_n^2\); in the many-IV setting, \(\mu_n^2\) coincides
  with the usual concentration parameter in IV regression (with \(Y\) the
  outcome and \(X\) the endogenous regressor), while \(\tilde{\mu}_n^2\) is the
  concentration parameter for the reverse IV regression of \(X\) on \(Y\). The
  variability of the quadratic component is captured by \(\kappa_n\), which is
  proportional to \(r_n\) in many cases (as discussed above).
  \Cref{ass:reg}.\ref{item:ass_moments} implies that at least one of
  \(\mu_n^2 + \tilde{\mu}_n^2\) and \(\kappa_n\) must diverge.

  The first condition \((\mu_n^2 + \tilde{\mu}_n^2 \to \infty)\) holds in our
  three leading examples when, respectively: (i) the concentration parameter
  diverges (many-IV), (ii) the total variation across firms diverges (variance
  components), and (iii) the total variation of \(W\) projected onto
  \((W'W)^{-1/2}R'\) diverges (many restrictions). If the number of IVs is
  fixed, then in the first example, this corresponds to strong identification.
  Indeed, it is well known that the usual Lagrange multiplier test under weak
  identification is not asymptotically normal \parencite{SS97}. The second
  condition \((\kappa_n \to \infty)\) holds when the number of instruments
  (many-IV), firms (variance components), or restrictions (many restrictions)
  diverges, i.e., \(r_n \to \infty\). This accommodates many-weak-IV settings in
  which no consistent test for \(\theta\) exists when
  \(\mu_n^2/\sqrt{\kappa_n}\) is bounded \parencite{MS22}. Moreover, because
  $\kappa_{n}$ is normalized by the operator norm $h_{n}$, this second condition
  also entails the Lindeberg-type condition in \textcite[Section~5]{KSS2020},
  yielding a Gaussian limit for \(\hat{\theta}_{\mathrm{LO}}\).

  Analyzing the limiting distribution of \(\hat{\theta}_{\mathrm{LO}}\) when the
  Lindeberg-type condition fails encounters the challenges highlighted by
  \textcite[Section~6]{KSS2020}, especially when \(d \asymp n\) (see
  \textcite{LWZ24b} for a valid bootstrap procedure when $d= o(n)$). This issue
  is further complicated by within-cluster dependence in our setting, and we
  leave a full treatment to future work.
\end{rem}

\begin{rem}\label{rem:rate1}
\Cref{ass:reg}.\ref{item:clt_rate} accommodates several scenarios provided that the regularization parameter satisfies $\lambda_n \lesssim \kappa_n / n$, and $r_n \asymp \kappa_n$, as discussed in \Cref{rem:kappa,rem:lambda}.

\textbf{Scenario 1.} Suppose that \(\kappa_n \asymp n\) and $\zeta_{H,n} + \zeta_{\tilde H,n} \lesssim n_G$. If the within-cluster dependence is weak (i.e., \(u_n \lesssim 1\)), then \Cref{ass:reg}.\ref{item:clt_rate} holds provided \(n_G^{\frac{2q}{q-1}} = o(n)\), regardless of the order of \(\mu_n^2 + \tilde{\mu}_n^2\). If the errors have moments of all orders, a sufficient condition is \(n_G^2 = o(n)\). Under strong within-cluster dependence (i.e., \(u_n \lesssim n_G\)), it suffices that \(n_G^4 \log^2 n = o(n)\), again irrespective of \(\mu_n^2 + \tilde{\mu}_n^2\). The same rate requirements apply under strong identification, in the sense that \(\mu_n^2 + \tilde{\mu}_n^2 \asymp n\), irrespective of the order of \(\kappa_n\).

\textbf{Scenario 2.} Suppose instead that \(\kappa_n \asymp G\), which holds by
construction if IVs are assigned at the cluster level or the number of firms is
less than the number of workers, and $H_g$ and $\tilde H_g$, are balanced across
clusters, i.e.,
\[
\frac{\zeta_{H,n}}{\mu_n^2} \lesssim \frac{1}{G}
\qquad \text{and} \qquad
\frac{\zeta_{\tilde H,n}}{\tilde \mu_n^2} \lesssim \frac{1}{G}.
\]
Then, under weak within-cluster dependence, \Cref{ass:reg}.\ref{item:clt_rate} holds if \(n_G^{\frac{2q}{q-1}} G = o(n^2)\) and \(n_G^{\frac{q}{q-1}} = o(G)\), regardless of the order of \(\mu_n^2 + \tilde{\mu}_n^2\). Under strong within-cluster dependence, a sufficient condition is \(n_G^{4} \log^2 n = o(G)\), again regardless of the order of \(\mu_n^2 + \tilde{\mu}_n^2\).

\textbf{Scenario 3.} If the clusters have a bounded size such that $n_G$ is bounded and $G \asymp n$, then \Cref{ass:reg}.\ref{item:clt_rate} holds as long as \(\mu_n^2 + \tilde{\mu}_n^2 + \kappa_n \to \infty \) and $\zeta_{H,n} + \zeta_{\tilde H,n} = o(\mu_n^2 + \tilde{\mu}_n^2 + \kappa_n)$. This setting includes independent observations as a special case with $n_G = 1$. Under this configuration, our condition is sufficient to guarantee the Lindeberg conditions imposed in Theorems~1 and~2 of \textcite{KSS2020}, which in turn imply the asymptotic normality of the proposed estimator.

\end{rem}

\begin{thm}\label{thm:clt}
Suppose \Cref{ass:dgp,ass:reg} hold. Then
\begin{equation*}
    \frac{\hat \theta_{\rm LO} - \theta}{ \omega_n} \convD \N(0,1),
\end{equation*}
where
\begin{multline*}
  \omega_n^2 =  \sum_{g,h \in [G]^2} tr\left( \Omega_{V,g} B_{g,h} \Omega_{U,h} B_{g,h}'\right)  + \sum_{g,h \in [G]^2}  tr\left( \Omega_{U,V,g} B_{g,h}  \Omega_{U,V,h} B_{h,g} \right)\\
  + \sum_{g \in [G]} H_g' \Omega_{U,g} H_g  +  \sum_{g \in [G]} \tilde H_g' \Omega_{V,g} \tilde H_g  + 2  \sum_{g \in [G]} H_g' \Omega_{U,V,g} \tilde H_g .
\end{multline*}
\end{thm}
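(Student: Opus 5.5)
We first decompose the estimation error. Since $X=\Pi+V$, $Y=\Gamma+U$, and $\hat\theta_{\rm LO}=X'BY$ with $\mathbb E X'BY=\theta$ (the diagonal blocks $B_{g,g}$ vanish and $W'MD=0$, so $\Pi'B\Gamma=\pi'W'AW\gamma=\theta$), we write
\begin{equation*}
  \hat\theta_{\rm LO}-\theta = \Pi'BU + V'B\Gamma + V'BU
  = h_n\Bigl(\sum_{g\in[G]} H_g'U_g + \sum_{g\in[G]}\tilde H_g'V_g\Bigr) + \sum_{g\neq h} V_g'B_{g,h}U_h .
\end{equation*}
The cross term has no diagonal blocks because $B_{g,g}=0$, so it is a degenerate $U$-statistic in the independent cluster vectors $(U_g,V_g)$. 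A direct computation using cluster independence gives exactly $\omega_n^2$. The linear part contributes the three $H,\tilde H$ terms, the quadratic part contributes the two trace terms (the second from pairing $(g,h)$ with $(h,g)$), and the linear--quadratic covariance vanishes by independence and the zero diagonal blocks. We also record lower bounds $\omega_n^2\gtrsim h_n^2(\mu_n^2+\tilde\mu_n^2+\kappa_n)$ from $\lambda_{\min}(\Omega_g)\ge c$; for the quadratic part this needs a little care with the cross trace term, for which we stack $(U,V)$ and use $\Omega_g\succeq cI$.

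Next we write the normalized statistic as a martingale sum over clusters in a fixed order, $\sum_{g} \xi_g$ with
\begin{equation*}
  \xi_g=\omega_n^{-1}\Bigl[h_n(H_g'U_g+\tilde H_g'V_g) + \sum_{h<g}\bigl(V_g'B_{g,h}U_h+V_h'B_{h,g}U_g\bigr)\Bigr].
\end{equation*}
We then apply the martingale CLT (e.g.\ Hall--Heyde), checking (a) a Lyapunov condition $\sum_g\mathbb E\xi_g^4\to0$ (or a $2q$-moment version using $q\ge2$), and (b) convergence of the conditional variance, $\sum_g\mathbb E[\xi_g^2\mid\mathcal F_{g-1}]\convP1$. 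For (a), the linear pieces are bounded via $\mathbb E(H_g'U_g)^4\lesssim \|H_g\|_2^4 \cdot$(moment bounds), which in turn gives terms controlled by $\zeta_{H,n}\mu_n^2$ with Rosenthal/Hölder factors in $n_G$ and $u_n$. This produces the $u_n^{\frac{q-2}{q-1}}n_G^{\frac{q}{q-1}}$ terms in \Cref{ass:reg}.\ref{item:clt_rate}, using an interpolation between the second-moment bound $u_n$ and the $2q$-th moment bound $n_G$. The quadratic pieces require bounding $\mathbb E\|\sum_{h<g}B_{g,h}U_h\|^4$, which involves $\|B_{g,g'}\|$ and the leverage $\lambda_n$. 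Here we use $\|B\|_{op}\lesssim h_n$, which follows from \Cref{ass:dgp}.\ref{item:ass_lo}, and $\sum_h\|B_{g,h}\|_F^2\lesssim h_n^2\lambda_n$-type row bounds from the representation $B=A-MD$, giving the $\lambda_n\kappa_n$ and $\lambda_n^2\kappa_n$ terms.

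The main obstacle is (b). The conditional variance minus its mean consists of terms such as $\sum_g\bigl(\sum_{h<g}U_h'B_{g,h}'\Omega_{V,g}B_{g,h'}U_{h'}-\text{mean}\bigr)$ and cross terms $\sum_g H_g'\Omega_{U,V,g}\sum_{h<g}B_{g,h}'\ldots$. These are quadratic and linear forms in the errors with matrices of the form $\nabla(B)'\Omega\nabla(B)$ and $\nabla(B)'\Omega H$, where $\nabla(\cdot)$ is the strictly block-lower-triangular truncation induced by the ordering. Their variances are controlled by operator norms of $\nabla(B)$. The plan is to prove the truncation bound $\|\nabla(A)\|_{op}\lesssim\log(r_n)\|A\|_{op}$ announced in \Cref{rem:eta}, by a dyadic (Haar-type) decomposition of the triangular mask applied to the rank-$r_n$ matrix $A$, and to handle the $MD$ part through $\log(n)\lambda_n$. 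That part is where $\eta_n$ enters. Combining these bounds with $\lambda_{\max}(\Omega_g)\le u_n$ shows that the variance of the conditional-variance fluctuations is $O\bigl(u_n^3\eta_n(\mu_n^2+\tilde\mu_n^2)+u_n^4\eta_n\kappa_n\bigr)h_n^4$ plus higher-moment diagonal pieces. By \Cref{ass:reg}.\ref{item:clt_rate} and the lower bound on $\omega_n^2$, this is $o(\omega_n^4)$, which gives (b) and hence the theorem.
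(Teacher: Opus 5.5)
Your outline is the paper's proof in nearly every step. It uses the same split into $H'U+\tilde H'V$ plus the zero-diagonal-block form $V'BU$, and the same martingale differences $T_g'U_g+\tilde T_g'V_g$ with the Hall--Heyde CLT. It gets the lower bound $\omega_n^2\ge c(\mu_n^2+\tilde\mu_n^2+\kappa_n)$ the same way, by applying $\Omega_g\succeq cI$ to the $\mathcal F_{g-1}$-measurable vector $(T_g',\tilde T_g')'$. It checks Lindeberg through fourth moments and \Cref{lem:4mom}, and controls $\mathbb V(\sigma_n^2)$ by $u_n^3\|\Delta(B)\|_{op}^2(\mu_n^2+\tilde\mu_n^2)+u_n^4\|\Delta(B)\|_{op}^2\kappa_n$ plus a diagonal term.

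\textbf{The truncation bound.} This is the one ingredient you prove differently. The paper (\Cref{lem:T,lem:nablaB}) shows $\|T\circ A\|_{S_p}\le Cp\|A\|_{S_p}$ for $p=2^t$ by induction, using $(T\circ A)'(T\circ A)=T\circ((T\circ A)'A)+(\iota_n\iota_n'-T)\circ(A'(T\circ A))$ and H\"older for Schatten norms. The rank enters only through $\|A\|_{S_p}\le r_n^{1/p}\|A\|_{op}$ with $p\approx\log r_n$; the $MD$ part is the same with $n^{1/p}$ and $\|MD\|_{op}\lesssim\lambda_n$.

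A Haar/dyadic decomposition of the mask \emph{by index} cannot see the rank. It has $\log_2 n$ levels, each costing up to $\|A\|_{op}$, so it puts $\log^2 n$ rather than $\log^2 r_n$ into $\eta_n$. \Cref{ass:reg} does not cover that when $r_n\ll n$, for example with finitely many strong instruments.

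Your route does work if you split at cumulative \emph{leverage}. Write $A=U_A\Sigma V_A'$ and weight cluster $g$ by $\|(U_A)_g\|_F^2+\|(V_A)_g\|_F^2$, which has total $2r_n$. Bisect each run of consecutive clusters at the half-mass point, isolating the crossing cluster.
\begin{itemize}
\item Each level adds disjoint rectangles of norm $O(\|A\|_{op})$.
\item The depth is $O(\log r_n)$.
\item Single-cluster leaves vanish under strict block truncation.
\item The remaining light leaves satisfy $\|A_{I,I}\|_F\le\|(U_A)_I\|_F\|A\|_{op}\|(V_A)_I\|_{op}\le\|A\|_{op}$.
\end{itemize}
That argument is elementary and gives explicit constants. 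The Schatten argument is shorter and handles $A$ and $MD$ identically.

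\textbf{Order of conditioning in the Lyapunov step.} For the quadratic piece $V_g'\sum_{h<g}B_{g,h}U_h$, take expectations over the earlier clusters first, given cluster $g$, as in \Cref{lem:E4}. The pairing terms are then bounded by $u_n^2\,\mathbb E(V_g'(BB')_{g,g}V_g)^2\le u_n^2\lambda_n\,\mathbb E[\|V_g\|_2^2V_g'(BB')_{g,g}V_g]$. \Cref{lem:4mom} is used once, giving $u_n^{\frac{3q-4}{q-1}}n_G^{\frac{q}{q-1}}\lambda_n\kappa_n$.

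If you instead bound $\mathbb E\|\sum_{h<g}B_{g,h}U_h\|_2^4$ and then use $\mathbb E(V_g's)^4\lesssim u_n^{\frac{q-2}{q-1}}n_G^{\frac{q}{q-1}}\|s\|_2^4$, you pick up $(\operatorname{tr}(BB')_{g,g})^2\le n_g\lambda_n\operatorname{tr}(BB')_{g,g}$. That is an extra factor $n_G$ which \Cref{ass:reg}.\ref{item:clt_rate} does not absorb in general; for instance, strong dependence with $\lambda_n\asymp 1$ would need $n_G^5=o(n)$ rather than $n_G^4\log^2 n=o(n)$.

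Relatedly, the row bound to use is $\|(BB')_{g,g}\|_{op}\lesssim\lambda_n$ from \Cref{lem:nablaB}. The Frobenius quantity $\sum_h\|B_{g,h}\|_F^2$ carries an extra $n_g$.
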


\section{Variance Estimator}\label{sec:variance_estimation}

To make use of \Cref{thm:clt} for inference, we need a consistent or a
conservative estimator of the asymptotic variance $\omega_{n}$. We now consider
two such estimators. The first estimator we consider, a \acf{L3CO} estimator, is
shown to be consistent, while the second one, a \acf{L2CO} estimator, is shown
to be conservative.

\subsection{Leave-three-clusters-out Variance Estimator}

The L3CO variance estimator we consider generalizes the leave-three-observations
out estimators considered in \textcite{AS23,Yap24} to the case with clustered
data, and is defined as
\begin{equation*}
\hat \omega^2_{n,\rm L3CO}    = \hat \omega^2_{n,\rm L3CO,1} + 2\hat \omega^2_{n,\rm L3CO,2} + \hat \omega^2_{n,\rm L3CO,3} - (\hat \omega^2_{n,\rm L3CO,4} + \hat \omega^2_{n,\rm L3CO,5}),
\end{equation*}
where
\begin{align*}
\hat \omega^2_{n,\rm L3CO,1} &    = \sum_{g, h, k \in [G]^3} \left(X_h ' B_{h,g} Y_g \right) \left(X_k ' B_{k,g} \tilde Y_{g,-hk} \right), \\
\hat \omega^2_{n,\rm L3CO,2} &    = \sum_{g, h, k \in [G]^3} \left(X_h ' B_{h,g} Y_g \right) \left(Y_k ' B_{g,k}' \tilde X_{g,-hk} \right), \\
\hat \omega^2_{n,\rm L3CO,3} &   = \sum_{g, h, k \in [G]^3} \left(Y_h ' B_{g,h}' X_g \right) \left(Y_k ' B_{g,k}' \tilde X_{g,-hk} \right), \\
\hat \omega^2_{n,\rm L3CO,4} &   = \sum_{g, h, k \in [G]^3} \left(\tilde{Y}_{h, -gk}' B_{g,h}' X_g \right) \left(Y_h' B_{g,h}' \tilde M_{g,k,-gh} X_k \right), \\
\hat \omega^2_{n,\rm L3CO,5} &    = \sum_{g, h, k \in [G]^3}
                               \left(\tilde{X}_{h, -gk}' B_{h,g} Y_g \right) \left(Y_h' B_{g,h}' \tilde M_{g,k,-gh} X_k \right).
\end{align*}
Here $\tilde Y_{g,-hk} = Y_g - W_g \hat \gamma_{-ghk}$, and
$\tilde X_{g,-hk} = X_g - W_g \hat \pi_{-ghk}$ denote leave-three-clusters-out
OLS residuals,
\begin{align}\label{eq:gamma-ghk}
  \hat{\gamma}_{-ghk}&=(W'W-\sum_{l \in (g,h,k)} W_{l}'W_{l})^{-1}(W'Y-\sum_{l
                       \in (g,h,k)} W_{l}'Y_{l}) \quad\text{and}\\
  \hat{\pi}_{-ghk}&=(W'W-\sum_{l \in (g,h,k)} W_{l}'W_{l})^{-1}(W'X-\sum_{l \in (g,h,k)} W_{l}'X_{l})\label{eq:pi-ghk}
\end{align}
denote leave-three-clusters-out OLS estimators, and
\begin{equation*}
  \tilde M_{{g,k}, -gh} = \left(M_{g,g} - M_{g,h} M_{h,h}^{-1} M_{h,g} \right)^{-1} \left(M_{g,k} - M_{g,h} M_{h,h}^{-1} M_{h,k} \right).
\end{equation*}

The definition of $\tilde M_{{g,k}, -gh}$ implies that when $g \neq h$,
\begin{equation*}
  \tilde M_{g,k,-gh}   = \begin{cases}
     I_{n_g}, \quad k = g; \\
     0_{n_g \times n_h}, \quad k = h; \\
     - W_{g} (W'W-\sum_{l \in (g,h)} W_{l}'W_{l})^{-1} W_{k}', \quad k\neq g, k\neq h
\end{cases}
\end{equation*}
and thus $\tilde M_{g,k,-gh}$ can partial out $W$ in the sense that
$\sum_{k \in [G]} \tilde M_{g,k,-gh} W_k = 0_{n_g \times d}$. The key property
of this variance estimator is that
$\mathbb E \hat \omega^2_{n,\rm L3CO} = \omega_n^2$, which leads to consistency
even when $d$ is large. To establish this formally, we impose the following
assumption:
\begin{ass}\label{ass:var_L3O}
For $g \neq h \neq k$, define
\begin{align*}
  S_{k,g}& =    M_{k,k} - M_{k,g} M_{g,g}^{-1} M_{g,k}, \\
  \tilde S_{k,gh}& = S_{k,g} - \left(M_{k,h} - M_{k,g} M_{g,g}^{-1} M_{g,h}\right)  S_{h,g} ^{-1} \left( M_{h,k} - M_{h,g} M_{g,g}^{-1} M_{g,k} \right),
\end{align*} and let \(\phi_n = \max_{g \in [G]}\sum_{h \in [G]} ||P_{g,h}||_{op}^2\).
\begin{enumerate}
\item\label{item:l3o_well_defined} There exists a finite constant $C>0$ such that
\begin{equation*}
 \max_{k,g \in [G]^2, k \neq g}  \norm{S_{k,g}^{-1}}_{op} + \max_{g,h,k \in [G]^3, k \neq g, h \neq g, k \neq h}
  \norm{\tilde S_{k,gh}^{-1}}_{op}  \leq C.
\end{equation*}
\item\label{item:l3o_rates} We have
\begin{multline*}
  u_n n_G^3 (\phi_n \lambda_n^2 + \phi_n^2 \lambda_n + \phi_n^3 \lambda_n^2 + \lambda_n^2 )\kappa_n +
  u_n^2  n_G^2 \phi_n \lambda_n  \kappa_n+
  u_n n_G^2 \phi_n \lambda_n^3 (\mu_n^2 + \tilde \mu_n^2)\\
  +u_n^{\frac{2q-3}{q-1}} n_G^{\frac{q}{q-1}} (\zeta_{H,n} + \zeta_{\tilde H,n}) \kappa_{n}
  +u_n^2 n_G \phi_n (\zeta_{H,n} + \zeta_{\tilde H,n}) \kappa_n
  + u_n^4 n_G \lambda_n^2 \kappa_n = o\left( (\mu_n^2 + \tilde \mu_n^2 + \kappa_n)^2\right).
\end{multline*}
\end{enumerate}
\end{ass}

\begin{rem}
  \Cref{ass:var_L3O}.\ref{item:l3o_well_defined} ensures that L3CO least-squares
  estimator is well-defined and numerically stable. It is also sufficient, but
  not necessary, for the L2CO estimator, defined in \Cref{sec:cons-vari-estim}
  below, to be well-defined.
\end{rem}

\begin{rem}\label{rem:phi}
  The quantity \(\phi_n\) is naturally bounded by \(\lambda_n n_G\). If cluster
  sizes are uniformly bounded, or if
  \(\lambda_n \lesssim \kappa_n/n \lesssim G/n\) as discussed in
  \Cref{rem:kappa,rem:lambda}, then \(\phi_n \lesssim 1\). In addition, if the
  projection matrix \(P\) is sparse in the sense that there are only finitely
  many nonzero blocks in the row block matrix \([P_{g,1},\ldots,P_{g,G}]\) for
  each \(g \in [G]\)---which occurs when \(W\) partitions the clusters---then
  \(\phi_n \lesssim 1\). Finally, if the eigenvalues of \(P_{g,h}P_{h,g}\) are
  well balanced for all \(g,h \in [G]^2\) in the sense that
  \begin{equation*}
      \lVert P_{g,h}P_{h,g}\rVert_{\mathrm{op}} \lesssim \frac{\operatorname{tr}(P_{g,h}P_{h,g})}{n_g},
  \end{equation*}
  then \(\phi_n \lesssim \lambda_n \le 1\) provided that cluster sizes are well
  balanced (i.e., \(n_g \asymp n/G\) for all \(g \in [G]\)). These rate
  conditions are directly verifiable, since $\phi_n$ is a known function of the
  data.
\end{rem}

\begin{rem}\label{rem:rate2}
  We now derive the implications of the rate requirements in
  \Cref{ass:var_L3O}.\ref{item:l3o_rates} for the three scenarios in
  \Cref{rem:rate1}, assuming $\phi_n \lesssim 1$.

\textbf{Scenario 1.} Suppose that \(\kappa_n \asymp n\) and $\zeta_{H,n} + \zeta_{\tilde H,n} \lesssim n_G$. If the within-cluster dependence is weak (i.e., \(u_n \lesssim 1\)), then \Cref{ass:var_L3O}.\ref{item:l3o_rates} holds provided \(n_G^{3} = o(n)\), regardless of the order of \(\mu_n^2 + \tilde{\mu}_n^2\). Under strong within-cluster dependence (i.e., \(u_n \lesssim n_G\)), it suffices that \(n_G^{5} = o(n)\), again irrespective of \(\mu_n^2 + \tilde{\mu}_n^2\). The same rate requirements apply under strong identification, in the sense that \(\mu_n^2 + \tilde{\mu}_n^2 \asymp n\), regardless of the order of \(\kappa_n\).

\textbf{Scenario 2.} Following Scenario 2 in \Cref{rem:rate1}, if the
within-cluster dependence is weak, then \Cref{ass:var_L3O}.\ref{item:l3o_rates} holds if
\(n_G^{3} = o(n)\) and \(n_G^{\frac{q}{q-1}} = o(G) \), regardless of the order
of \(\mu_n^2 + \tilde{\mu}_n^2\). Under strong within-cluster dependence, a
sufficient condition is \(n_G^{5} G = o(n^2)\), again regardless of
\(\mu_n^2 + \tilde{\mu}_n^2\).

\textbf{Scenario 3.} If the clusters have a bounded size such that \(n_G\) is bounded and \(G \asymp n\), then \Cref{ass:var_L3O}.\ref{item:l3o_rates} holds provided \(\mu_n^2 + \tilde{\mu}_n^2 + \kappa_n \to \infty\) and $\zeta_{H,n} + \zeta_{\tilde H,n} = o(\mu_n^2 + \tilde{\mu}_n^2 + \kappa_n)$.
\end{rem}

The next theorem establishes the unbiasedness and consistency for the L3CO variance estimator.

\begin{thm}\label{thm:var_l3co}
  Suppose \Cref{ass:dgp,,ass:reg,ass:var_L3O}.\ref{item:l3o_well_defined} hold. Then
  $\mathbb E \hat \omega^2_{n,\rm L3CO} = \omega_n^2$. If, in addition,
  \Cref{ass:var_L3O}.\ref{item:l3o_rates} holds, then
  \begin{equation*}
    \hat \omega_{n,\rm L3CO}^2/\omega_n^2 \convP 1.
  \end{equation*}
\end{thm}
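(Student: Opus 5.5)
The proof has two parts: an exact unbiasedness computation, and a variance bound for each of the five pieces of $\hat\omega^2_{n,\rm L3CO}$.

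\textbf{Unbiasedness.} First I write the L3CO residuals in error form. Because $\hat\gamma_{-ghk}$ is computed without clusters $g,h,k$, the Woodbury identity together with the definition of $\tilde M$ gives
\[
\tilde Y_{g,-hk}=\Gamma_g-W_g\hat\gamma_{-ghk}+U_g=U_g-\sum_{l\notin\{g,h,k\}}\check P_{g,l,-ghk}U_l ,
\]
where $\check P_{g,l,-ghk}=W_g(W'W-\sum_{m\in\{g,h,k\}}W_m'W_m)^{-1}W_l'$. The key point is that $\tilde Y_{g,-hk}$ is linear in $U_g$ and in errors of clusters outside $\{g,h,k\}$, and has no signal. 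The same holds for $\tilde X_{g,-hk}$ and for $\sum_k\tilde M_{g,k,-gh}X_k$, since $\sum_k\tilde M_{g,k,-gh}W_k=0$ and $\tilde M_{g,g,-gh}=I$, $\tilde M_{g,h,-gh}=0$.

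Next I expand each summand of $\hat\omega^2_{n,\rm L3CO,1}$ as $(\Pi_h+V_h)'B_{h,g}(\Gamma_g+U_g)\cdot(\Pi_k+V_k)'B_{k,g}\tilde Y_{g,-hk}$. I take expectations using independence across clusters, $B_{g,g}=0$, and the fact that $\tilde Y_{g,-hk}$ involves no error from clusters $h$ or $k$. Only pairings of $U_g$ with the $U_g$ inside $\tilde Y_{g,-hk}$ survive, together with the $V_h,V_k$ pairings. Summing over $h,k$ then produces the terms $\sum_{g,h}\tr(\Omega_{V,g}B\Omega_UB')$ and $H'\Omega_UH$ in $\omega_n^2$, plus a spurious term in which $h=k$ and the $U$-residual cross product appears. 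Terms 2 and 3 similarly produce the $\Omega_{U,V}$ cross traces, $\tilde H'\Omega_V\tilde H$, and $2H'\Omega_{U,V}\tilde H$.

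The correction terms 4 and 5 are built so that their expectation equals exactly these spurious pieces, namely the products where the same cluster $h$ appears twice and the signal contributions $\Pi_k$, $\Gamma_h$ leak through. Here $\tilde M$ removes signal but keeps the own-cluster error $X_g\to V_g$. Bookkeeping of which index coincidences survive (cases $h=k$, $h\ne k$) then gives $\mathbb E\hat\omega^2_{n,\rm L3CO}=\omega_n^2$. \Cref{ass:var_L3O}.\ref{item:l3o_well_defined} guarantees that every inverse appearing here exists.

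\textbf{Consistency.} Given unbiasedness, it suffices to show $\mathbb V(\hat\omega^2_{n,\rm L3CO})=o(\omega_n^4)$. Under \Cref{ass:dgp}.\ref{item:ass_moments}, $\omega_n^2\gtrsim \mu_n^2+\tilde\mu_n^2+\kappa_n$ after the normalization by $h_n$, using $\lambda_{\min}(\Omega_g)\ge c$ as in the proof of \Cref{thm:clt}. I expand each $\hat\omega^2_{n,\rm L3CO,j}$ into polynomials of degree up to four in the errors, weighted by signals, and sort the pieces by how many signal factors $H,\tilde H$ they carry: quartic pure-error terms, cubic terms carrying one signal, and quadratic terms carrying two.

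Each piece is a multilevel sum over $(g,h,k,l)$ of products of blocks $B_{\cdot,\cdot}$ and the L3CO projection blocks $\check P_{\cdot,\cdot,-ghk}$. I bound its variance by expanding the second moment and using independence, so that only index configurations with repeated clusters survive. Fourth and higher moments are controlled with Hölder's inequality and the $2q$ moment bound, which is the source of the $u_n^{(2q-3)/(q-1)}n_G^{q/(q-1)}$ factors. Within-cluster covariance is controlled by $\|\Omega_g\|_{op}\le u_n$.

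\textbf{Main obstacle.} The hard part is the sums involving two different leave-out projections, $\check P_{g,l,-ghk}$ and $\check P_{g',l,-g'h'k'}$, which do not commute or factor. My plan is to rewrite each L3CO block, via repeated Woodbury, as $P_{g,l}$ plus correction terms that are products of the $P_{\cdot,\cdot}$ blocks with $S^{-1}$ and $\tilde S^{-1}$, whose norms are bounded by \Cref{ass:var_L3O}.\ref{item:l3o_well_defined}. I then bound the corrections by $\lambda_n$ (own leverage), $\phi_n$ (row sums $\sum_h\|P_{g,h}\|^2_{op}$), and $\|B\|_F^2=h_n^2\kappa_n$. This yields exactly the monomials in \Cref{ass:var_L3O}.\ref{item:l3o_rates}, and each piece is then $o((\mu_n^2+\tilde\mu_n^2+\kappa_n)^2)$. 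Chebyshev's inequality finishes the argument.
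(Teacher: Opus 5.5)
Your overall architecture matches the paper's: condition on the errors of clusters $g,h,k$ so that each leave-out residual reduces to the own-cluster error, then prove $\mathbb V(\hat\omega^2_{n,\rm L3CO})=o(\omega_n^4)$ using $\omega_n^2\gtrsim\mu_n^2+\tilde\mu_n^2+\kappa_n$.

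However, your unbiasedness bookkeeping is misdescribed. Term~1 has no spurious piece. Because $\tilde Y_{g,-hk}-U_g$ involves only clusters outside $\{g,h,k\}$ (also when $h=k$), you get exactly $\mathbb E\hat\omega^2_{n,\rm L3CO,1}=H'\Omega_UH+\sum_{g,h}\tr(B_{h,g}\Omega_{U,g}B_{h,g}'\Omega_{V,h})$. Similarly, $\mathbb E\hat\omega^2_{n,\rm L3CO,2}=H'\Omega_{U,V}\tilde H+\sum_{g,h}\tr(B_{h,g}\Omega_{U,V,g}B_{g,h}\Omega_{U,V,h})$ and $\mathbb E\hat\omega^2_{n,\rm L3CO,3}=\tilde H'\Omega_V\tilde H+\sum_{g,h}\tr(B_{g,h}'\Omega_{V,g}B_{g,h}\Omega_{U,h})$.

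So the excess over $\omega_n^2$ is a second copy of the $\Omega_V$--$\Omega_U$ trace, since terms~1 and~3 each contain it. There is also a second copy of the $\Omega_{U,V}$ cross trace, from $2\times$ term~2. Terms~4 and~5 remove exactly one copy of each, and no signal enters them. Conditioning replaces $\tilde Y_{h,-gk}$ by $U_h$ (resp.\ $\tilde X_{h,-gk}$ by $V_h$). Moreover, $\sum_k\tilde M_{g,k,-gh}X_k=V_g+\sum_{k\notin\{g,h\}}\tilde M_{g,k,-gh}V_k$, and the second part is mean zero and independent of the remaining factors. Hence $\mathbb E\hat\omega^2_{n,\rm L3CO,4}=\sum_{g,h}\tr(B_{g,h}'\Omega_{V,g}B_{g,h}\Omega_{U,h})$ and $\mathbb E\hat\omega^2_{n,\rm L3CO,5}=\sum_{g,h}\tr(B_{h,g}\Omega_{U,V,g}B_{g,h}\Omega_{U,V,h})$.

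The genuine gap is in the consistency step. You name the obstacle but not what resolves it, and bounding the Woodbury corrections by $\lambda_n,\phi_n$ does not by itself deliver \Cref{ass:var_L3O}.\ref{item:l3o_rates}. The Woodbury start is right: $P_{l,g,-ghk}=P_{l,g}+P_{l,ghk}(\tilde P_{ghk})_{ghk,g}$, with $\tilde P_{ghk}=M_{ghk,ghk}^{-1}P_{ghk,ghk}=M_{ghk,ghk}^{-1}-I$. But after $\mathbb V(\sum_l a_l'U_l)\le u_n\sum_l\|a_l\|_2^2$ you face sextuple sums weighted by $\sum_l P_{g,l,-ghk}P_{l,g',-g'h'k'}$. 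If you bound correction blocks by operator norms and sum absolute values over the free indices, you replace $\|H_g\|_2$ by $\sum_k\|B_{k,g}'\Pi_k\|_2$. You also replace the global bound $\|P\|_{op}\le 1$ by blockwise bounds. Both lose factors that can grow with $G$. Two further ideas are needed.

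First, evaluate the $l$-sum exactly using $P^2=P$ and $P_{ghk,ghk}\tilde P_{ghk}=\tilde P_{ghk}-P_{ghk,ghk}$. It equals $P_{g,g'}-(\tilde P_{ghk})_{g,ghk}P_{ghk,g'h'k'}(\tilde P_{g'h'k'})_{g'h'k',g'}$ plus terms with $l\in\{g,h,k\}\cap\{g',h',k'\}$. The leading term is then a block of the projection $P$, controlled by $\sum_{g,g'}a_g'P_{g,g'}a_{g'}\le\sum_g\|a_g\|_2^2$.

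Second, split $(\tilde P_{ghk})_{g,g}$, $(\tilde P_{ghk})_{g,h}$ and $(\tilde P_{ghk})_{g,k}$ into pieces depending only on $g$, $(g,h)$, $(g,k)$ and $(g,h,k)$. The latter carry factors such as $\|P_{g,h}\|_{op}^2$ or $\|P_{h,k}\|_{op}$. The $g$-only piece $\Xi_{1,g}=M_{g,g}^{-1}P_{g,g}$ lets the $h,k$ sums pass inside; for example, $\sum_{h,k}\Xi_{1,g}B_{k,g}'\Pi_k\Pi_h'B_{h,g}\Gamma_g=\Xi_{1,g}H_gH_g'\Gamma_g$. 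The remaining pieces are summed via $\phi_n$.

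These are the two devices of \Cref{lem:P_l3o_1}, which the paper flags as its key technical innovations. Without them, your claim that the bounds reproduce exactly the monomials in the assumption is unsupported.
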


\begin{rem}
We establish the consistency of \(\hat{\omega}_{n,\mathrm{L3CO}}^{2}\) by showing \(\mathbb{V}(\hat{\omega}_{n,\mathrm{L3CO}}^{2}) = o(\omega_n^{4})\). For this purpose, we need to evaluate sums of the form
\[
\sum_{l \in[G]} P_{g,l,-ghk}\, P_{l,g',-g'h'k'},
\]
where \(P_{g,l,-ghk}\) is the \((g,l)\) block of the leave-\((g,h,k)\)-clusters-out projection matrix for \(W\). Evaluating the summation is not trivial because the sets of clusters omitted in the two projection matrices (i.e., $(g,h,k)$ and $(g',h',k')$) need not coincide. It is further complicated in the clustered setting because \(P_{g,l,-ghk}\) is a block matrix rather than a scalar, and matrix multiplication is non-commutative. There are two key technical innovations in our proof:
(1) a new representation of $P_{g,l,-ghk}$ that enables the exact calculation of the sum, and
(2) a detailed decomposition of the summand into matrices that depend on $(g,g',h,h',k,k')$ only through
\[
s \in \{g, gh, gk, ghk\} \times \{g', g'h', g'k', g'h'k'\},
\]
which implies that they are invariant with respect to the remaining indices $(g,g',h,h',k,k')\setminus s$.\footnote{For example, if a matrix is indexed by $s = (g,h,g',k')$, then it is invariant to the remaining indices $(g,g',h,h',k,k')/s = (k,h')$.}
This invariance allows the summation over the remaining indices to pass through these matrices.
\end{rem}

\begin{rem}
The main difficulty in the variance estimation stems from the leave-out construction, which ensures that the estimators of the linear coefficients $\gamma$ and $\pi$ are independent of the other observations appearing in the same summand. Although such independence could also be achieved by sample splitting, this approach does not directly apply in our setting.

To illustrate, consider the estimation of $\omega^2_{n,1}$. Suppose that the clusters are split into two subsets, $I_1$ and $I_2$, and $(\gamma,\pi)$ is estimated using clusters in $I_1$. A natural sample-splitting estimator is
\begin{equation*}
\hat{\omega}^2_{n,\mathrm{SS},1}
= \sum_{g,h,k \in I_2^3}
\left( X_h' B_{h,g} Y_g \right)
\left( X_k' B_{k,g} \bigl( Y_g - W_g \hat{\gamma}_{I_1} \bigr) \right).
\end{equation*}
However, the summation over $(g,h,k)\in I_2^3$ excludes cross-split terms (e.g., $g,h\in I_2$ but $k\in I_1$). This issue remains even if the roles of $I_1$ and $I_2$ are reversed, showing that simple sample splitting is inadequate here. \Textcite{KSS2020} propose a more complicated sample-splitting variance estimator for independent data. Extending their construction to clustered settings is nontrivial, both theoretically and computationally.
\end{rem}

\begin{rem}
The L3CO variance estimator $\hat \omega_{n,\rm L3CO}^2$ is not necessarily nonnegative in finite samples. However, it is possible to construct a variant of the L3CO estimator that is guaranteed to be nonnegative and consistent. Specifically, let
\begin{align*}
 & \tilde \omega^2_{n,\rm L3CO,1} = \hat \omega^2_{n,\rm L3CO,1} + 2\hat \omega^2_{n,\rm L3CO,2} + \hat \omega^2_{n,\rm L3CO,3} - 2(\hat \omega^2_{n,\rm L3CO,4} + \hat \omega^2_{n,\rm L3CO,5}),\\
& \tilde \omega^2_{n,\rm L3CO,2} = \hat \omega^2_{n,\rm L3CO,4} + \hat \omega^2_{n,\rm L3CO,5}.
\end{align*}
We note that \(\mathbb{E}\tilde{\omega}^2_{n,\mathrm{L3CO},1}\) corresponds to the variance of the linear component of \(X'BY\), while \(\mathbb{E}\tilde{\omega}^2_{n,\mathrm{L3CO},2}\) corresponds to the variance of its quadratic component. Since
\[
\hat{\omega}^2_{n,\mathrm{L3CO}}
=
\tilde{\omega}^2_{n,\mathrm{L3CO},1}
+
\tilde{\omega}^2_{n,\mathrm{L3CO},2},
\]
and both \(\tilde{\omega}^2_{n,\mathrm{L3CO},1}\) and \(\tilde{\omega}^2_{n,\mathrm{L3CO},2}\) are asymptotically nonnegative, this motivates the following variant of the L3CO estimator:
\[
\tilde{\omega}^2_{n,\mathrm{L3CO}}
=
\bigl|\tilde{\omega}^2_{n,\mathrm{L3CO},1}\bigr|
+
\bigl|\tilde{\omega}^2_{n,\mathrm{L3CO},2}\bigr|.
\]
\end{rem}

\begin{cor}\label{cor:L3CO}
  Suppose \Cref{ass:dgp,,ass:reg,ass:var_L3O} hold. Then
  \begin{equation*}
    \tilde \omega^2_{n,\rm L3CO}/\omega_n^2 \convP 1.
  \end{equation*}
\end{cor}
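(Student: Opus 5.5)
The plan is to reduce \Cref{cor:L3CO} to \Cref{thm:var_l3co} by tracking the two pieces $\tilde\omega^2_{n,\rm L3CO,1}$ and $\tilde\omega^2_{n,\rm L3CO,2}$ separately. Write the variance of \Cref{thm:clt} as $\omega_n^2=\omega_{L}^2+\omega_{Q}^2$. Here
\begin{equation*}
\omega_L^2=\sum_{g} H_g'\Omega_{U,g}H_g+\sum_g \tilde H_g'\Omega_{V,g}\tilde H_g+2\sum_g H_g'\Omega_{U,V,g}\tilde H_g
\end{equation*}
is the linear-component variance, which is nonnegative because it equals $\sum_g \mathbb E (H_g'U_g+\tilde H_g'V_g)^2$. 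The remaining term $\omega_Q^2$ collects the two trace sums and is the variance of the quadratic component $\sum_{g\neq h}V_g'B_{g,h}U_h$, so it is also nonnegative.

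The first step is to show $\mathbb E\tilde\omega^2_{n,\rm L3CO,2}=\omega_Q^2$ and $\mathbb E\tilde\omega^2_{n,\rm L3CO,1}=\omega_L^2$. For this I will reuse the term-by-term expectation computation behind the unbiasedness part of \Cref{thm:var_l3co}. There, each $\hat\omega^2_{n,\rm L3CO,j}$ has its expectation expressed through the leave-out residuals: the L3CO residuals and $\tilde M_{g,k,-gh}$ partial out $W$, and independence across clusters kills the cross terms. In that computation $\hat\omega^2_{n,4}+\hat\omega^2_{n,5}$ should have expectation exactly the trace sums, i.e.\ $\omega_Q^2$. The first three pieces should give $\omega_L^2+2\omega_Q^2$, which is why the subtraction in $\tilde\omega^2_{n,\rm L3CO,1}$ carries a coefficient of $2$. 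The identity $\hat\omega^2_{n,\rm L3CO}=\tilde\omega^2_{n,\rm L3CO,1}+\tilde\omega^2_{n,\rm L3CO,2}$ is immediate from the definitions.

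The second step is the variance bounds $\mathbb V(\tilde\omega^2_{n,\rm L3CO,j})=o(\omega_n^4)$ for $j=1,2$. The proof of \Cref{thm:var_l3co} establishes $\mathbb V(\hat\omega^2_{n,\rm L3CO})=o(\omega_n^4)$ by bounding the variance of each of the five components $\hat\omega^2_{n,\rm L3CO,j}$ separately under \Cref{ass:var_L3O}.\ref{item:l3o_rates}, using the block-matrix decomposition of sums of products of L3CO projections. Any fixed linear combination of these five components therefore also has variance $o(\omega_n^4)$, by Cauchy--Schwarz.

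The third step is to conclude. Chebyshev's inequality gives $\tilde\omega^2_{n,\rm L3CO,1}=\omega_L^2+o_p(\omega_n^2)$ and $\tilde\omega^2_{n,\rm L3CO,2}=\omega_Q^2+o_p(\omega_n^2)$. Since $\omega_L^2,\omega_Q^2\ge0$, we have $\bigl|\,|a|-b\,\bigr|\le|a-b|$ for $b\ge0$, so
\begin{equation*}
\bigl|\tilde\omega^2_{n,\rm L3CO}-\omega_n^2\bigr|\le\bigl|\tilde\omega^2_{n,\rm L3CO,1}-\omega_L^2\bigr|+\bigl|\tilde\omega^2_{n,\rm L3CO,2}-\omega_Q^2\bigr|=o_p(\omega_n^2).
\end{equation*}
Combined with $\omega_n^2\gtrsim \mu_n^2+\tilde\mu_n^2+\kappa_n$ from the proof of \Cref{thm:clt}, this yields the claim.

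The main obstacle is the first step. It requires confirming that the expectation bookkeeping in \Cref{thm:var_l3co} really separates as described, i.e.\ that $\mathbb E(\hat\omega^2_{n,4}+\hat\omega^2_{n,5})$ equals exactly the quadratic trace terms with no linear-signal contributions. I would verify this by expanding $X=\Pi+V$, $Y=\Gamma+U$ in $\hat\omega^2_{n,4}$ and $\hat\omega^2_{n,5}$, and checking that every signal term is annihilated by the leave-out residuals $\tilde Y_{h,-gk}$, $\tilde X_{h,-gk}$ and by $\sum_k\tilde M_{g,k,-gh}W_k=0$.
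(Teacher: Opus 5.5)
Your proof is correct. It rests on the same two inputs as the paper's proof. First, the means of $\tilde\omega^2_{n,\rm L3CO,1}$ and $\tilde\omega^2_{n,\rm L3CO,2}$ are nonnegative and sum to $\omega_n^2$. Second, each piece has variance $o(\omega_n^4)$, because the proof of \Cref{thm:var_l3co} bounds the five components separately.

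You differ from the paper only in the final step. The paper introduces the positive-part estimator $(\tilde\omega^2_{n,\rm L3CO,1})^+ + (\tilde\omega^2_{n,\rm L3CO,2})^+$. It shows this estimator converges in mean square by bounding its mean from above and below, and then shows separately that the negative parts vanish in $L^1$. You instead use the deterministic inequality $\bigl|\,|a|-b\,\bigr|\le|a-b|$ for $b\ge 0$. This yields directly $\mathbb E(\tilde\omega^2_{n,\rm L3CO}/\omega_n^2-1)^2\le 2[\mathbb V(\tilde\omega^2_{n,\rm L3CO,1})+\mathbb V(\tilde\omega^2_{n,\rm L3CO,2})]/\omega_n^4$, so the truncation detour is unnecessary.

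You also supply something the paper only asserts: a reason the two means are nonnegative. You identify them as $\omega_L^2=\sum_g\mathbb E(H_g'U_g+\tilde H_g'V_g)^2$ and $\omega_Q^2=\mathbb V(V'BU)$. The ``main obstacle'' you flag is already settled by Step 1 of the paper's proof of \Cref{thm:var_l3co}. There, $\mathbb E\hat\omega^2_{n,\rm L3CO,4}$ and $\mathbb E\hat\omega^2_{n,\rm L3CO,5}$ come out as pure trace sums with no signal terms, and by cyclicity of the trace they equal the two trace sums in $\omega_n^2$.
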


\subsection{Leave-two-clusters-out Variance Estimator}\label{sec:cons-vari-estim}

The \ac{L3CO} variance estimator may be computationally expensive in large
datasets, since computing it involves looping over three cluster indices. This
motivates an alternative \ac{L2CO} variance estimator, given by
\begin{equation*}
  \hat \omega^2_{n, \rm L2CO} = \sum_{g \in [G]} \left( \sum_{h \in [G]} \left( X_h ' B_{h,g} \tilde Y_{g,-h} + Y_h ' B_{g,h}' \tilde X_{g,-h} \right) \right)^2,
\end{equation*}
where
\begin{align*}
    \tilde Y_{g,-h}& = Y_g - W_g \hat \gamma_{-gh},&
    \tilde X_{g,-h}& = X_g - W_g \hat \pi_{-gh} 
\end{align*}
denote leave-two-clusters-out OLS residuals (if $g=h$, we set
$\tilde Y_{g,-h} = 0$ and $\tilde X_{g,-h} = 0$; the leave-two-clusters out
estimators $\gamma_{-gh}$ and $\pi_{-gh}$ are defined analogously to
\cref{eq:gamma-ghk,eq:pi-ghk}). Since its computation only involves looping over
two cluster indices, it is computationally cheaper than the \ac{L3CO} variance
estimator. The second advantage of this estimator is that it is guaranteed to be
non-negative. Third, as \Cref{thm:var_l2co} shows, valid inference based on
$\hat{\omega}_{n, \rm L2CO}$ can be conducted under less restrictive
assumptions. The inference is exact when $d$ is not very large and cluster sizes
are fixed, but it can be conservative otherwise.

\begin{ass}\label{ass:var_l2co}
For $S_{k,g}$ defined in \Cref{ass:var_L3O}.\ref{item:l3o_well_defined}, the
following holds:
    \begin{enumerate}
    \item\label{item:l2o_well_defined} There exists a finite constant $C > 0$
      such that
      $\max_{k,g \in [G]^2, k \neq g} \norm{S_{k,g}^{-1}}_{op}
      \leq C$.
\item\label{item:l2o_rates} We have
 \begin{multline*}
         u_n^{\frac{2q-3}{q-1}} n_G^{\frac{q}{q-1}}(\zeta_{H,n} +
         \zeta_{\tilde{H}, n}) \kappa_n +u_n^2 n_G \phi_n \lambda_n^3
         (\zeta_{H,n} + \zeta_{\tilde{h}, n}) \kappa_n +
      u_n^{\frac{2q-3}{q-1}} n_G^{\frac{2q-1}{q-1}} \phi_n^2 \lambda_n^2 \kappa_n\\
+u_n^3 n_G(\phi_n^2 \lambda_n + \phi_n \lambda_n^2) \kappa_n+
u_n^4 n_G \lambda_n^2 \kappa_n
= o\left( (\mu_n^2 + \tilde \mu_n^2 + \kappa_n)^2\right),
    \end{multline*}
    where $\phi_n$ is defined in \Cref{ass:var_L3O}.
\end{enumerate}
\end{ass}

\begin{thm}\label{thm:var_l2co}
  Suppose \Cref{ass:dgp,,ass:reg,ass:var_l2co}.\ref{item:l2o_well_defined} hold. Then
  $\mathbb E \hat \omega^2_{n, \rm L2CO} \geq \omega_n^2$. If, in addition,
  \Cref{ass:var_l2co}.\ref{item:l2o_rates} holds, then
  \begin{equation}\label{eq:l2o_conservative}
    \limsup_{n \to \infty } \mathbb P \left( \left  \vert
        \frac{\hat \theta_{\rm LO} - \theta}{\hat \omega_{n,\rm L2CO}} \right \vert \geq {z}_{1-\alpha/2} \right) \leq \alpha.
  \end{equation}
  Suppose further that there exists a finite constant $C > 0$ such that
  $\max_{k,g \in [G]^2, k \neq g} \norm{S_{k,g}^{-1}}_{op} \leq
  C$, and that $n_G = O(1)$, $\lambda_n = o(1)$ and
  $\kappa_n = o (\mu_n^2 + \tilde \mu_n^2)$. Then~\cref{eq:l2o_conservative}
  holds with equality.
\end{thm}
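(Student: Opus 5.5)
The plan is to write $\hat\omega^2_{n,\rm L2CO}=\sum_{g}\hat\xi_g^2$ and compare each $\hat\xi_g$ with the cluster-level influence term of $X'BY-\theta$. Since $B_{g,g}=0$, we have $X'BY-\theta=\sum_g \xi_g + (\text{linear terms})$. To isolate cluster $g$'s contribution, define
\begin{equation*}
  \xi_g = \sum_{h\neq g}\bigl(X_h'B_{h,g}U_g + Y_h'B_{g,h}'V_g\bigr),
\end{equation*}
which collects every summand in which cluster $g$ enters through its own error.

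The first step is to show that $\hat\xi_g=\xi_g+R_g$, where the residual is built from leave-two-cluster-out terms. Because $\hat\gamma_{-gh}$ excludes clusters $g$ and $h$, we have $\tilde Y_{g,-h}=U_g-W_g(\hat\gamma_{-gh}-\gamma)$, and the estimation error is independent of $(U_g,V_g,U_h,V_h)$. Conditioning on all clusters other than $g$ then gives
\begin{equation*}
  \mathbb E[\hat\xi_g^2] = \mathbb E[\xi_g^2] + \mathbb E[R_g^2] + 2\,\mathbb E[\xi_g R_g].
\end{equation*}
The cross term should vanish: $R_g$ does not involve $(U_g,V_g)$ except through the factors $X_h, Y_h$, and $\mathbb E[U_g\mid\text{others}]=0$. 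The last point needs care, because terms pairing $h$ with $\hat\gamma_{-gh}$ use other clusters' errors. I would expand $\hat\gamma_{-gh}-\gamma$ via the leave-two-out Woodbury representation, with $S_{k,g}^{-1}$ bounded by \Cref{ass:var_l2co}.\ref{item:l2o_well_defined}, and verify the cancellation term by term.

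Next I would show that $\sum_g\mathbb E\xi_g^2\ge\omega_n^2$. Expanding $\sum_g \mathbb E\xi_g^2$ recovers the linear terms $H_g'\Omega_{U,g}H_g$, $\tilde H_g'\Omega_{V,g}\tilde H_g$ and the cross term exactly. It also recovers each quadratic pair $(g,h)$ twice, once from $g$'s perspective and once from $h$'s. So the quadratic part equals $\omega_n^2$'s quadratic part plus a nonnegative surplus of the form $\mathbb E(\cdot)^2$, for example $\sum_{g\ne h}\mathbb E(X_h'B_{h,g}U_g)^2$-type terms minus the cross-covariance. The surplus is nonnegative by Cauchy--Schwarz, which is the usual conservativeness of the Efron--Stein-type bound. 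Together with $\mathbb E R_g^2\ge0$, this gives $\mathbb E\hat\omega^2_{n,\rm L2CO}\ge\omega_n^2$.

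For \cref{eq:l2o_conservative}, I would show $\mathbb V(\hat\omega^2_{n,\rm L2CO})=o(\omega_n^4)$, so that $\hat\omega^2/\mathbb E\hat\omega^2\convP1$. Since $\mathbb E\hat\omega^2\ge\omega_n^2$, \Cref{thm:clt} and Slutsky then give $\limsup P\le\alpha$. The variance bound is the main obstacle. I would expand $\hat\xi_g^2$ into fourth-order polynomials in the errors and use the leave-two-out Woodbury expansion of $\hat\gamma_{-gh}$ in terms of $M_{g,h}$ and $S_{h,g}^{-1}$. Each resulting term would be bounded by the moment bounds of \Cref{ass:dgp} through $u_n$, $n_G$, $\phi_n$, $\lambda_n$, $\zeta_{H,n}$ and $\kappa_n$; these bounds are what \Cref{ass:var_l2co}.\ref{item:l2o_rates} is designed to make $o((\mu_n^2+\tilde\mu_n^2+\kappa_n)^2)$. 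Along the way I would use $\omega_n^2\gtrsim \mu_n^2+\tilde\mu_n^2+\kappa_n$, which follows from $\lambda_{\min}(\Omega_g)\ge c$.

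For exactness, under $\kappa_n=o(\mu_n^2+\tilde\mu_n^2)$ the quadratic parts (and thus the surplus) are $o(\omega_n^2)$. The remaining bias is $\sum_g\mathbb E R_g^2$ from the estimation error in $\hat\gamma_{-gh}$. With $n_G=O(1)$ and $\lambda_n=o(1)$, I would bound this by $O(\lambda_n)(\mu_n^2+\tilde\mu_n^2+\kappa_n)$ using $\|W_g(W'W)^{-1}W_g'\|\le\lambda_n$. Hence $\mathbb E\hat\omega^2/\omega_n^2\to1$, and the test is asymptotically exact.
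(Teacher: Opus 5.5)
Your expectation argument matches the paper's. The split $\hat\xi_g=\xi_g+R_g$ is exactly its decomposition $\hat\omega^2_{n,\rm L2CO}=\hat\omega^2_{n,1}+\hat\omega^2_{n,2}+\hat\omega^2_{n,3}$, with $\hat\omega^2_{n,1}=\sum_g\xi_g^2$, $\hat\omega^2_{n,2}=2\sum_g\xi_gR_g$ and $\hat\omega^2_{n,3}=\sum_gR_g^2$. Two of your justifications are more complicated than they need to be:
\begin{itemize}
\item Because $B_{g,g}=0$ and $\hat\gamma_{-gh},\hat\pi_{-gh}$ omit cluster $g$, $R_g$ depends only on clusters other than $g$. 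Meanwhile $\xi_g$ is linear in $(U_g,V_g)$ with coefficients depending only on the other clusters. Hence $\mathbb E[\xi_gR_g]=0$ with no Woodbury bookkeeping.
\item The surplus of $\sum_g\mathbb E\xi_g^2$ over $\omega_n^2$ is exactly the quadratic part $\sum_{g<h}\mathbb E(V_h'B_{h,g}U_g+V_g'B_{g,h}U_h)^2$ of $\omega_n^2$. It is nonnegative without any Cauchy--Schwarz step.
\end{itemize}

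The gap is in your route to \cref{eq:l2o_conservative}. You propose to prove $\mathbb V(\hat\omega^2_{n,\rm L2CO})=o(\omega_n^4)$ under \Cref{ass:var_l2co}.\ref{item:l2o_rates}, which requires controlling $\mathbb V(\sum_gR_g^2)$. Those rates do not control this piece.
\begin{itemize}
\item When $\lambda_n\not\to0$, the mean of $\sum_gR_g^2$ can be of the same order as $\omega_n^2$.
\item Bounding its fluctuations produces terms with extra powers of $n_G$ and $\phi_n$. The paper's own bounds for it include, for instance, $u_n^2n_G^3\phi_n^3\lambda_n\kappa_n$. In Scenario~1 with $\phi_n,\lambda_n$ bounded away from zero, this requires $n_G^3=o(n)$ rather than $n_G^{(2q-1)/(q-1)}=o(n)$.
\end{itemize}
This is why the paper bounds $\mathbb V(\hat\omega^2_{n,3})$ only under the extra exactness hypotheses $n_G=O(1)$, $\lambda_n=o(1)$. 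As written, your argument would give conservativeness only under rate conditions stronger than the theorem's. That forfeits the advantage over L3CO described in \Cref{rem:rate3}.

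The missing idea is that $\sum_gR_g^2\ge 0$ pointwise, so it never has to concentrate. We have $\hat\omega^2_{n,\rm L2CO}\ge\sum_g\xi_g^2+2\sum_g\xi_gR_g$, and the right-hand side has mean $\sum_g\mathbb E\xi_g^2\ge\omega_n^2$. So it suffices to prove $\mathbb V(\sum_g\xi_g^2)=o(\omega_n^4)$ and $\mathbb V(\sum_g\xi_gR_g)=o(\omega_n^4)$, which is what \Cref{ass:var_l2co}.\ref{item:l2o_rates} is tailored to. This gives $\hat\omega^2_{n,\rm L2CO}\ge\omega_n^2(1-o_p(1))$, and \Cref{thm:clt} then yields \cref{eq:l2o_conservative}.

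The same split also simplifies your exactness step:
\begin{itemize}
\item Your bias bound $\sum_g\mathbb E R_g^2=o(\omega_n^2)$, together with nonnegativity and Markov's inequality, already gives $\sum_gR_g^2=o_p(\omega_n^2)$.
\item The surplus is $O(u_n^2\kappa_n)=o(\omega_n^2)$.
\end{itemize}
So no variance bound for $\sum_gR_g^2$ is needed there either.
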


With large cluster sizes and/or high-dimensional covariates, the upward bias
$\mathbb E \hat \omega_{n,\rm L2CO}^2 - \omega_n^2$ could be large. In such
settings, we recommend using $\hat \omega^2_{n,\rm L2CO}$ only when computing
$\hat \omega_{n,\rm L3CO}^2$ is infeasible.

\begin{rem}\label{rem:rate3}
    One can easily check that the rate conditions in \Cref{ass:var_l2co}.\ref{item:l2o_rates} are weaker than those in \Cref{ass:var_L3O}.\ref{item:l3o_rates}. In particular, consider again the three scenarios in \Cref{rem:rate2}.

    \textbf{Scenario 1.} Suppose that \(\kappa_n \asymp n\) and
    $\zeta_{H,n} + \zeta_{\tilde H,n} \lesssim n_G$. If the within-cluster
    dependence is weak, then \Cref{ass:var_l2co}.\ref{item:l2o_rates} holds provided
    \(n_G^{\frac{2q-1}{q-1}} = o(n)\), which is weaker than \(n_G^{3} = o(n)\)
    if $q > 2$. Under strong within-cluster dependence, for \Cref{ass:var_l2co}.\ref{item:l2o_rates}
    to hold, we still require \(n_G^{5} = o(n)\).

    \textbf{Scenario 2.} Following the setting of Scenario~2 in
    Remark~\ref{rem:rate1}, Assumption~\ref{ass:var_l2co}.\ref{item:l2o_rates} holds under weak
    within-cluster dependence if \( n_G^{\frac{q}{q-1}} = o(G)\) and
    \(n_G^{\frac{2q-1}{q-1}} G = o(n^2)\), which is weaker than the requirement
    \(n_G^3 = o(n)\) imposed by \Cref{ass:var_L3O}.\ref{item:l3o_rates} in this scenario,
    provided that $q>2$. Under strong within-cluster dependence, both
    \Cref{ass:var_L3O}.\ref{item:l3o_rates} and \Cref{ass:var_l2co}.\ref{item:l2o_rates} require
    $n_G^{5} G = o(n^2)$.

    \textbf{Scenario 3.} If clusters have bounded size so that \(n_G\) is
    bounded and \(G \asymp n\), then both
    Assumption~\ref{ass:var_L3O}.\ref{item:l3o_rates} and
    Assumption~\ref{ass:var_l2co}.\ref{item:l2o_rates} hold provided
    \(\mu_n^2 + \tilde{\mu}_n^2 + \kappa_n \to \infty\) and
    $\zeta_{H,n} + \zeta_{\tilde H,n} = o(\mu_n^2 + \tilde{\mu}_n^2 +
    \kappa_n)$.
\end{rem}

\begin{rem}
  Note that \(\hat{\omega}^2_{n,\mathrm{L2CO}}\) remains computable even when
  leaving three clusters out is infeasible, in the sense that some of the
  matrices \(\tilde S_{k,gh}\) in \Cref{ass:var_L3O}.\ref{item:l3o_well_defined}
  are not invertible. One may also construct conservative variance estimators
  based on ``HC3''-type residuals \parencite[see, e.g.,][]{CJN18}, which can be
  computed even when leaving two clusters out is infeasible, that is, when some
  of the matrices \(S_{k,g}\) in \Cref{ass:var_L3O}.\ref{item:l3o_well_defined}
  are not invertible. Indeed, the computation of HC3-type residuals only
  requires \Cref{ass:dgp}.\ref{item:ass_lo}. Establishing formal theoretical
  guarantees for such estimators is left for future research.
\end{rem}

\subsection{Practical Guidance}

In practice, it is not necessary to run OLS regression when computing
$\hat \gamma_{-ghk}$ and $\hat \pi_{-ghk}$ for each combination of
$g,h,k \in [G]^3$, which can be computationally demanding for large $d$.
Instead,  the leave-out algebra from \Cref{sec:setup} implies that the
leave-out residuals may be computed directly as
\begin{align*}
 \tilde Y_{g,-hk}& = \left[M_{ghk,ghk}^{-1} (MY)_{ghk} \right]_g,& \text{and}&&
 \tilde X_{g,-hk}& = \left[M_{ghk,ghk}^{-1} (MX)_{ghk} \right]_g,
\end{align*}
where $M_{ghk,ghk}$ is the block diagonal matrix of $M$ corresponding to
clusters $(g,h,k)$. Similarly,
\begin{align*}
 \tilde Y_{g,-h}& = \left[M_{gh,gh}^{-1} (MY)_{gh} \right]_g & \text{and}&& \tilde X_{g,-h}& = \left[M_{gh,gh}^{-1} (MX)_{gh} \right]_g.
\end{align*}

Another potential issue is the numerical instability when solving the system
$M_{ghk,ghk}\tilde Y_{ghk,-hk} = (MY)_{ghk}$ that defines $\tilde Y_{g,-hk}$
above (and similarly for the other leave-out residuals). We propose the
following approach: whenever the minimum eigenvalue of $M_{ghk,ghk}$ is below a
threshold $t_n$, such that $t_n \downarrow 0$ as $n$ goes to infinity, replace
it with a ridge regularizer, and instead solve
$(M_{ghk,ghk}+t_{n}I_{n_{g}+n_{h}+n_{k}})\tilde Y_{ghk,-hk} = (MY)_{ghk}$. We find that
$t_n = 1/\log(n^2)$ performs well in our simulation. Our theory can be adapted
to deal with such shrinking regularizer, but for ease of exposition, we do not
pursue this extension.

\section{Simulation}

In this section, we compare our test statistic with L3CO variance estimator with several existing methods in the literature: the two-stage least squares estimator (TSLS) with cluster-robust variance estimator; the cluster jackknife instrumental variable estimator (CJIVE) proposed by \textcite{FLM23} with cluster-robust variance estimator; and the three test statistics (CSW-JIV, CSW-LIM and CSW-FUL) proposed in \textcite{CNT23}. For all methods, we impose the null when computing their corresponding asymptotic variances to guard against weak identification. All the results below are based on $1,000$ simulations.

\subsection{Design \texorpdfstring{\uppercase\expandafter{\romannumeral1}}{I}: Homogeneous Treatment Effect}

We first consider the following panel IV regression adapted from \textcite{CNT23}, which assumes a homogeneous treatment effect:
\begin{align*}
        \mathcal Y_{i,g} &= X_{i,g}\beta + \mathcal W_{i,g}'\gamma + \alpha_g + U_{i,g}, \\
        X_{i,g} &= Z_{i,g}'\pi + \mathcal W_{i,g}'\delta + \xi_g + V_{i,g},
    \end{align*}
where $\alpha_g$ and $\xi_g$ are cluster-level fixed effects. The fixed effects are generated by $\alpha_g = u_{1g} + g/G, \xi_g = u_{2g} + g/G, g=1, \dots, G$ where $u_{1g}$ and $u_{2g}$ are independent standard normal random variables. These fixed effects are partialled out from the model by demeaning at the cluster level. The instruments in $Z$ are normally distributed with mean $0$ and we allow for cluster-level dependence: for each cluster the covariance matrix is given by
\begin{equation*}
  \Omega_{1g}= \begin{bmatrix}
    1 & \theta_1 & \cdots & \theta_1 \\
    \theta_1 & 1 & \cdots & \theta_1 \\
    \vdots & \vdots & \ddots & \vdots \\
    \theta_1 & \theta_1 & \cdots & 1
  \end{bmatrix}_{n_g \times n_g} \;\; g=1, \dots, G
\end{equation*}
and across clusters these instruments are independent from each other; we set $\theta_1 = 0.5$ in our simulation. 
The controls in $W$ are generated by $\mathcal W_{i,g} = (z_{i,g}, z_{i,g}^2-1, z_{i,g}^3-3z_{i,g}, z_{i,g}^4-6z_{i,g}^2+3, z_{i,g}(D_{i,g}^{(1)}-0.5), \dots, z_{i,g}(D_{i,g}^{(d_w-4)}-0.5))$ where $z_{i,g}$ are independent standard normal random variables and $D_{i,g}^{(k)}, k=1,\dots,d_w-4$ are independent Bernoulli random variables with success probability $0.5$. The error terms are generated by
\begin{align*}
    \tilde{U}_{i,g} &= \rho \epsilon_{i,g} + \sqrt{1-\rho^2} \sigma_{i,g} v_{i,g}, \\
    \tilde{V}_{i,g} &= \rho \eta_{i,g} + \sqrt{1-\rho^2} \sigma_{i,g} v_{i,g},
\end{align*}
with $\sigma_{i,g} = \sqrt{(0.2+z_{i,g}^2)/2.4}$ and $\rho=0.5$, where $\epsilon_{i,g}$, $\eta_{i,g}$ and $v_{i,g}$ are independent standard normal random variables. We further collect $\tilde{U}_{i,g}$ and $\tilde{V}_{i,g}$ for each cluster and left multiply them by
\begin{equation*}
  \Omega_{2g}= \begin{bmatrix}
    1 & 0 & \cdots & 0 \\
    \theta_2 & 1 & \cdots & 0 \\
    \vdots & \vdots & \ddots & \vdots \\
    \theta_2^{n_g-1} & \theta_2^{n_g-2} & \cdots & 1
  \end{bmatrix}_{n_g \times n_g} \;\; g=1, \dots, G,
\end{equation*}
to generate $U_{i,g}$ and $V_{i,g}$; we set $\theta_2 = 0.7$ in our simulation. For the parameters, we set $\beta = 0.3$, $\gamma = \delta = (1/\sqrt{d_w}) \times \iota_{d_w}$ where $\iota_{d_w}$ is a $d_w \times 1$ vector of ones, and $\pi = t_n \times \iota_{d_z}$ where $\iota_{d_z}$ is a $d_z \times 1$ vector of ones and $t_n = \sqrt{30/(\sqrt{d_z} \times n)}$.

Finally, the number of observations is $n = 600$ and the number of clusters is $G = 150$, where all the clusters have equal cluster size. We consider three different designs for the number of instruments and the number of controls: $d_z = d_w = 50$,  $d_z = d_w = 100$ and $d_z = d_w = 150$. We report the empirical size as well as the power curve for each design below.

\begin{table}[tp]
\centering
\begin{tabular}{ccccccc}
 & \multicolumn{3}{c}{5\% significance level} & \multicolumn{3}{c}{10\% significance level} \\
  \cmidrule(rl){2-4}  \cmidrule(rl){5-7}
 & $d_z=d_w=50$ & $100$ & $150$ & $d_z=d_w=50$ & $100$ & $150$ \\
\midrule
\text{TSLS}     & 53.7\% & 92.3\% & 99.7\% & 66.5\% & 96.8\% & 99.9\% \\
\text{CJIVE}    & 7.9\%  & 22.3\% & 53.4\% & 15.5\% & 31.4\% & 63.3\% \\
\text{CSW-JIV}  & 7.8\%  & 13.3\% & 22.7\% & 14.2\% & 20.8\% & 31.0\% \\
\text{CSW-LIM}  & 7.1\%  & 10.5\% & 16.6\% & 14.6\% & 16.8\% & 25.8\% \\
\text{CSW-FUL}  & 7.6\%  & 10.5\% & 16.4\% & 14.7\% & 16.9\% & 25.6\% \\
  \text{L3CO}     & 6.3\%  & 5.0\%  & 5.0\%  & 10.4\% & 9.1\%  & 9.4\%  \\
  \bottomrule
\end{tabular}
\caption{Empirical size at 5\% and 10\% significance levels for the \textcite{CNT23} simulation.}\label{tab:size_CSW_combined}
\end{table}

\begin{figure}[tp]
\centering
\begin{tabular}{ccc}
\multicolumn{3}{c}{\textbf{5\% significance level}} \\[2mm]
\includegraphics[width=0.31\textwidth]{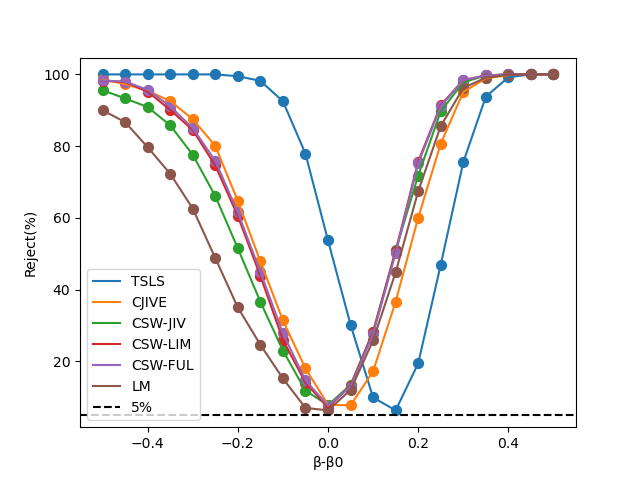} &
\includegraphics[width=0.31\textwidth]{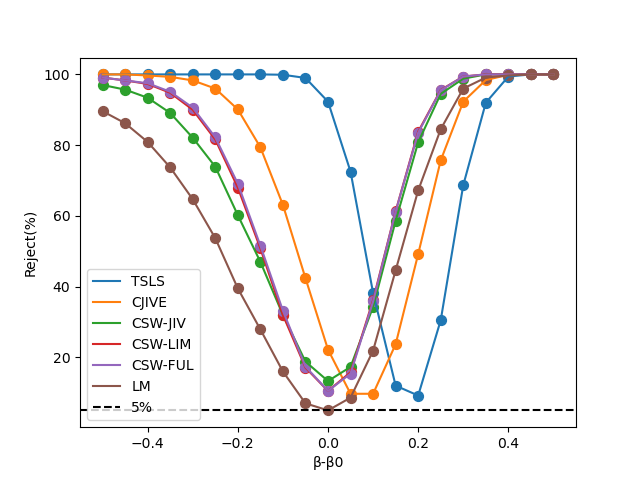} &
\includegraphics[width=0.31\textwidth]{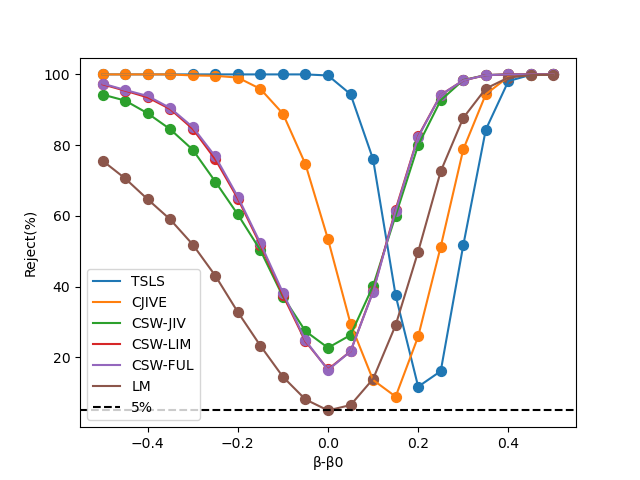} \\
$d_z=50,\ d_w=50$ & $d_z=100,\ d_w=100$ & $d_z=150,\ d_w=150$ \\[4mm]

\multicolumn{3}{c}{\textbf{10\% significance level}} \\[2mm]
\includegraphics[width=0.31\textwidth]{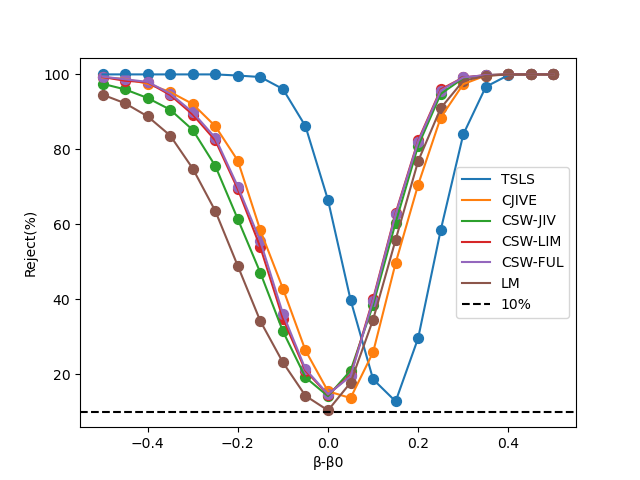} &
\includegraphics[width=0.31\textwidth]{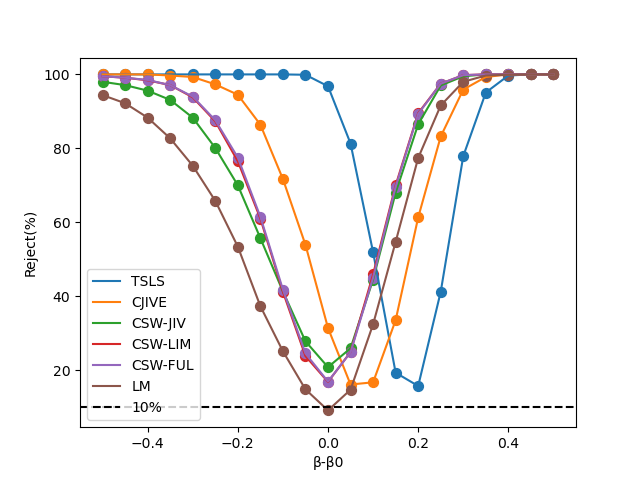} &
\includegraphics[width=0.31\textwidth]{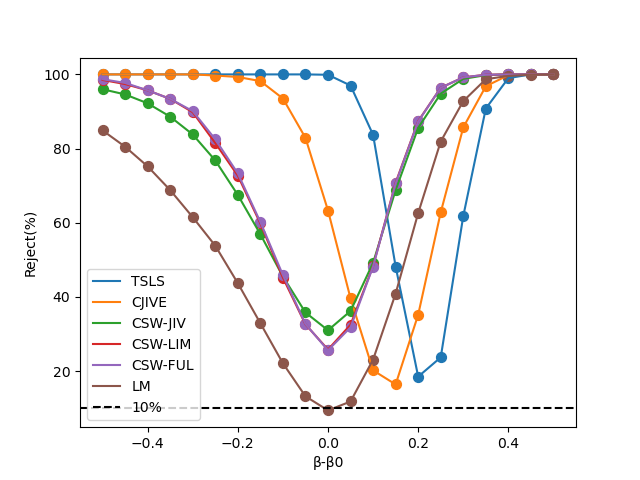} \\
$d_z=50,\ d_w=50$ & $d_z=100,\ d_w=100$ & $d_z=150,\ d_w=150$
\end{tabular}
\caption{Power curves for the \textcite{CNT23} simulation. Top row: 5\% significance level. Bottom row: 10\% significance level.}\label{fig:power_CSW_combined}
\end{figure}

\subsection{Design \texorpdfstring{\uppercase\expandafter{\romannumeral2}}{II}: Heterogeneous Treatment Effect, Saturated}

We consider a simulation setup similar to that in \textcite{Yap24} where we have many fixed effects, the effect of $X$ on $Y$ is heterogeneous, and the linear regressions for $Y$ and $X$ are correctly specified. Let $t = 1, \dots, K$ index the state and suppose that individuals within the same cluster belong to the same state. We have in total $K = 48$ states, where each state contains $4$ clusters and each cluster contains $4$ individuals. There is also another binary exogenous variable $B \in \{0,1\}$, which is equally distributed within each cluster. The structural equations are given by
\begin{align*}
    \mathcal Y_{i,g} &= X_{i,g} (\beta + \xi_{i,g}) + \mathcal W_{i,g}'\gamma + U_{i,g} \\
    X_{i,g} &= \mathbf{1} \left\{ Z_{i,g}' \pi + W_{i,g}'\delta \geq V_{i,g} \right\}
\end{align*}
where the control variables $\mathcal W_{i,g}$ contain indicators for states  with $d_w = 48$ and the instrumental variables $Z_{i,g}$ contain indicators for $k = t \times B$ with $d_z = 48$ (the baseline instrument for $k=0$ is dropped to avoid multicollinearity). For the parameters, we set $\beta = 0.5$, $\pi(k) = 0$ if $k = 0$, $\pi(k) = \sqrt{15\sqrt{K}/n}$ for half of the states and $\pi(k) = -\sqrt{15\sqrt{K}/n}$ for the other half. In addition, we set $\gamma(t) = \delta(t) = 1/\sqrt{K}$ for half of the states and $\gamma(t) = \delta(t) = -1/\sqrt{K}$ for the other half.

The error terms are generated as follows: $V_{i,g}$ is generated as
\begin{align*}
    V_{i,g} = 2 \Phi (\rho u_g + \sqrt{1-\rho^2} v_{i,g}) - 1
\end{align*}
where $\rho = 0.4$, $u_g$ and $v_{i,g}$ are independent standard normal random variables, and $\Phi$ is the CDF of the standard normal distribution so that marginally $V_{i,g}$ is uniformly distributed on $[-1,1]$. Given $V_{i,g}$, we generate
\begin{align*}
    U_{i,g} | V_{i,g} \sim
    \begin{cases}
        \mathcal N (\mu, \sigma_{\eps}^2) & V_{i,g} \geq 0 \\
        \mathcal N (-\mu, \sigma_{\eps}^2) & V_{i,g} < 0
    \end{cases}
\end{align*}
where we set $\mu = 0.4$ and $\sigma_{\eps} = 0.2$. Lastly, given $V_{i,g}$, $\xi_{i,g}$ is generated by
\begin{align*}
    \xi_{i,g} | V_{i,g} \sim
    \begin{cases}
         \mathcal B(\sigma_{\xi}^{(k)}, -\sigma_{\xi}^{(k)}, p) & V_{i,g} \geq 0 \\
         \mathcal B(\sigma_{\xi}^{(k)}, -\sigma_{\xi}^{(k)}, 1-p) & V_{i,g} < 0
    \end{cases}
\end{align*}
where we use $\mathcal B(a,b,p)$ to denote the binary random variable that equals to $a$ with probability $p$ and $b$ with probability $1-p$, and we set $p = 2/3$. Here $\sigma_{\xi}^{(k)} = 0$ if $k = 0$, and for those states with $\pi(k) = \sqrt{15\sqrt{K}/n}$, we set $\sigma_{\xi}^{(k)} = \sqrt{30\sqrt{K}/n}$ for half of them and $\sigma_{\xi}^{(k)} = -\sqrt{30\sqrt{K}/n}$ for the other half; the same procedure is applied to those states with $\pi(k) = -\sqrt{15\sqrt{K}/n}$.

As shown in \textcite{EK2018}, the TSLS estimand admits a valid (conditional) causal interpretation in this context, since the monotonicity condition is satisfied. In addition, following the same steps as in \textcite{Yap24}, it can be shown that this estimand equals to $\beta$ under the design above. We report the empirical size as well as the power curve below.

\begin{table}[tp]
\centering
\begin{tabular}{ccc}
 & 5\% & 10\% \\ \midrule
\text{TSLS}    & 59.6\% & 71.7\% \\
\text{CJIVE}   & 10.0\%  & 15.8\% \\
\text{CSW-JIV} & 13.8\% & 21.5\% \\
\text{CSW-LIM} & 33.0\% & 45.8\% \\
\text{CSW-FUL} & 30.4\% & 43.0\% \\
\text{L3CO}    & 4.8\%  & 9.4\% \\ \bottomrule
\end{tabular}
\caption{Empirical size at 5\% and 10\% significance levels for the \textcite{Yap24} simulation.}\label{tab:size_LY_combined}
\end{table}

\begin{figure}[tp]
\centering
\begin{tabular}{cc}
\includegraphics[width=0.47\textwidth]{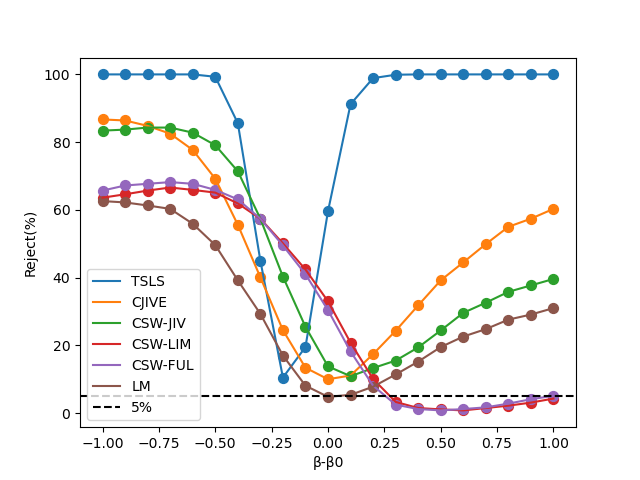} &
\includegraphics[width=0.47\textwidth]{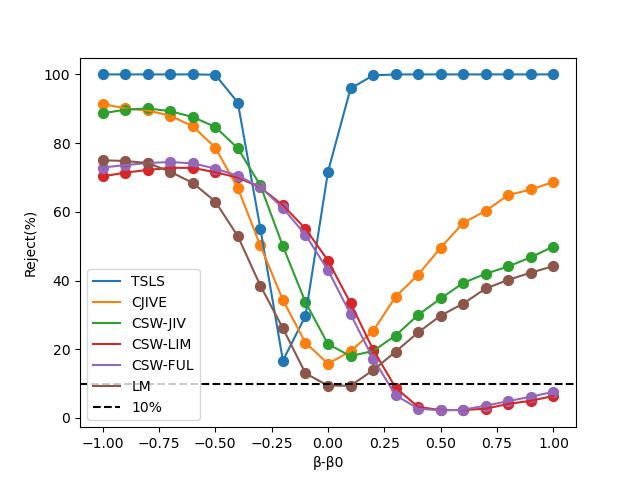} \\
5\% significance level & 10\% significance level
\end{tabular}
\caption{Power curves for the \textcite{Yap24} simulation.}\label{fig:power_LY_combined}
\end{figure}

\subsection{Design \texorpdfstring{\uppercase\expandafter{\romannumeral3}}{III}: Heterogeneous Treatment Effect, Approximated}

We consider a simulation setup similar to the judge design in which the treatment effect of $X$ on $Y$ is heterogeneous and the linear regressions for $X$ and $Y$ are approximately correctly specified due to the binning method. In this setup, individuals within the same cluster are assigned to the same judge. We have in total $G = 200$ clusters, where each cluster contains 4 individuals (so $n = 800$) and the clusters are distributed evenly across $J$ judges.

If cluster $g$ is assigned to judge $j$ for $j = 1, \dots, J$, then the first- and second-stage equations are given by
\begin{align*}
    X_{i,g} &= \mathbf{1} \left\{U_{i,g} \leq \alpha_0 + \left(\alpha_1 + \alpha_2 S_{1,i,g} \right) \times \frac{j}{J} \right\}, \\
   \mathcal Y_{i,g} &= \mathbf{1} \left\{V_{i,g} \leq \beta_0 + \beta_1 X_{i,g} + \beta_2 S_{1,i,g} + \beta_3 S_{2,g} \right\},
\end{align*}
where $S_{1,i,g} \sim \text{Uniform}[0,1]$ is an individual-level exogenous variable, and $S_{2,g} \in \{0, 1, \ldots, K-1\}$ is a cluster-level discrete exogenous variable with $K = 5$ levels. The error terms are specified as
\begin{align*}
    U_{i,g} &= \rho_1 \xi_{g}^{(1)} + \sqrt{1 - \rho_1^2} \, \varepsilon_{i,g}^{(1)}, \\
    V_{i,g} &= \rho_2 U_{i,g} + \sqrt{1 - \rho_2^2} \left( \rho_3 \xi_{g}^{(2)} + \sqrt{1 - \rho_3^2} \, \varepsilon_{i,g}^{(2)} \right),
\end{align*}
where $\{\xi_{g}^{(1)}\}_{g \in [G]}$, $\{\xi_{g}^{(2)}\}_{g \in [G]}$, $\{\{\varepsilon_{i,g}^{(1)}\}_{i \in [n_g]}\}_{g \in [G]}$, and $\{\{\varepsilon_{i,g}^{(2)}\}_{i \in [n_g]}\}_{g \in [G]}$ are independent sequences of standard normal random variables.  We set $\alpha_0 = -0.8$, $\alpha_1 = 1.0$, $\alpha_2 = 0.6$, $\beta_0 = -0.8$, $\beta_1 = 1$, $\beta_2 = 0.6$, $\beta_3 = 1.0$, $\rho_1 = 0.8$, $\rho_2 = 0.8$ and $\rho_3 = 0.3$ in our simulation.

We generate piecewise constant basis functions by first defining cells based on $(j, S_{2,g})$, then within each cell, we distribute observations evenly into $n_{\text{bins}} = 6$ bins based on $S_{1,i,g}$. Specifically, within each $(j, S_{2,g})$ cell, we sort observations by $S_{1,i,g}$ and assign them to bins such that each bin has approximately the same number of observations. The control variables $\mathcal{W}_{i,g}$ are then constructed as all interactions between these $S_{1,i,g}$ bin indicators and the dummy variables of $S_{2,g}$, i.e., $\{\mathbf{1}\{S_{2,g} = k\}\}_{k=0}^{K-1}$, resulting in up to $6 \times 5 = 30$ control variables.

Next, we consider $J=4$ judges so that all clusters are distributed equally among the four judges. We create the instrumental variables (IVs) by interacting the dummy for the first three judges, $\{\mathbf{1}\{j(i,g)=l\}\}_{l=1,2,3}$, with the control variables, i.e., $Z_{i,g} = \{\mathbf{1}\{j(i,g)=l\} \mathcal{W}_{i,g}\}_{l=1,2,3}$. Therefore, the entire set of regressors $W_{i,g}$ is defined as $W_{i,g} = (Z_{i,g}', \mathcal{W}_{i,g}')'$.

Given that $(\alpha_1 + \alpha_2 S_{1,i,g}) > 0$, the IV monotonicity condition is satisfied. The linear reduced form regressions are approximately correctly specified due to the use of piecewise constant basis functions and full interactions. Consequently, the TSLS estimand admits a valid (conditional) causal interpretation, as established by \textcite{EK2018}. The value of this causal estimand in our simulation is computed via numerical integration. We report the empirical size as well as the power curve below.

\begin{table}[tp]
\centering
\begin{tabular}{ccc}
 & 5\% & 10\% \\ \midrule
\text{TSLS}    & 47.5\% & 60.7\% \\
\text{CJIVE}   & 11.3\% & 19.7\% \\
\text{CSW-JIV} & 21.8\% & 31.2\% \\
\text{CSW-LIM} & 16.5\% & 23.3\% \\
\text{CSW-FUL} & 16.7\% & 23.4\% \\
\text{L3CO}    & 4.1\%  & 8.0\%  \\
\bottomrule
\end{tabular}
\caption{Empirical size at 5\% and 10\% significance levels for the judge design simulation.}\label{tab:size_judge_combined}
\end{table}

\begin{figure}[tp]
\centering
\begin{tabular}{cc}
\includegraphics[width=0.48\textwidth]{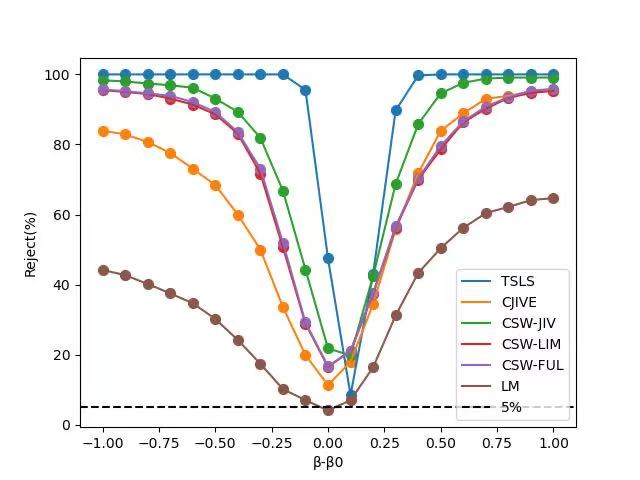} &
\includegraphics[width=0.48\textwidth]{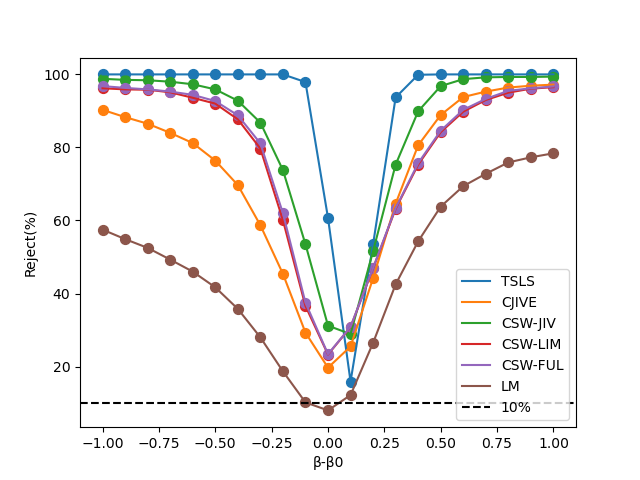} \\
5\% significance level & 10\% significance level
\end{tabular}
\caption{Power curves for the judge design simulation.}\label{fig:power_judge_combined}
\end{figure}

\subsection{Remarks}
Based on the simulation results in Designs
\uppercase\expandafter{\romannumeral1}--\uppercase\expandafter{\romannumeral3},
only our proposed inference method controls asymptotic size in settings with
many instruments and controls, heterogeneous treatment effects, and clustered
data. Existing methods fail for different reasons. TSLS does not correct for the
many-instrument bias, and both TSLS and CJIVE fail to address the many-control
bias. The three CSW-type methods correct for both many-instrument and
many-control biases, but their corrections rely on independence of observations,
even within clusters. Moreover, the variance estimators used by TSLS, CJIVE, and
the CSW-type methods are not asymptotically valid in our designs for several
reasons: (i) they are developed for homogeneous treatment effect models, whereas
Designs \uppercase\expandafter{\romannumeral2} and
\uppercase\expandafter{\romannumeral3} feature heterogeneous treatment effects;
(ii) TSLS and CJIVE do not account for many controls in variance estimation;
(iii) CSW-type methods allow a diverging number of controls only at rates slower
than \(\sqrt n\); and (iv) CSW-type variance estimators ignore within-cluster
dependence. As a result, existing methods exhibit substantial size distortions,
while our method performs well. Since our procedure is the only one that controls size, power comparisons are not particularly meaningful. Nevertheless, the power curves indicate that our method retains substantial power. Developing alternative tests that maintain size control while improving on the power of our test represents an interesting direction for future research.


\printbibliography

\newpage
\appendix
\noindent{\huge\textbf{Appendix}}

\unhidefromtoc
\tableofcontents

\newpage

Through the appendix, we use the following notation. For any matrix $Q$, we
denote by $Q'_{g, h} \in \Re^{n_h \times n_g} $ the transpose of the submatrix
$Q_{g, h}$. This implies $Q'_{g, h}=[Q']_{h, g}$. In addition, for two
deterministic and nonnegative sequences $\{a_n\}_{n \geq 1}$ and
$\{b_n\}_{n \geq 1}$, we write $a_n \lesssim b_n$ if there exists a constant $c$
independent of $n$ such that for all $n \geq 1$, $a_n \leq c b_n$. We further impose the normalization that $h_n = \norm{(W'W)^{-1/2} A_0 (W'W)^{-1/2}}_{op} = 1$.

\section{Proof of Lemma~\ref{lemma:rao}}
  The Lagrangian for the optimization problem may be written
  $\tr(C'C)/2-\tr(\Lambda'\Bdiag(C-A))-
  \tr(\tilde{\Lambda}'CW)-\tr(\ddot{\Lambda}'W'C)$, where
  $\Lambda=\Bdiag(\Lambda_{g,g})$ is the matrix of Lagrange multipliers
  associated with the first constraint and $\tilde{\Lambda}$ and
  $\ddot{\Lambda}$ are the Lagrange multiplier matrices associated with the
  other two constraints. Taking a derivative yields the first-order condition
  \begin{equation*}
    C=\Bdiag(\Lambda_{g,g})+\tilde{\Lambda}W'+W\ddot{\Lambda}.
  \end{equation*}
  Using (ii) then yields
  $\tilde{\Lambda}=-\Bdiag(\Lambda_{g,g})
  W(W'W)^{-1}-W\ddot{\Lambda}W(W'W)^{-1}$, and plugging this back into the
  first-order condition and using (iii) yields
  \begin{equation*}
    C=M\Bdiag(\Lambda_{g,g})M.
  \end{equation*}
  Using (i) then gives $A_{g,g}=\sum_{h}M_{gh}\Lambda_{hh}M_{hg}$, or, in
  vector form,
  $\mkvec(A_{g,g})=\sum_{h}(M_{hg}\otimes M_{gh})\mkvec(\Lambda_{hh})$.
  Stacking these equations yields $\bvec(A)=(M\ast M)\bvec(\Lambda)$,
  which yields the result. If condition (ii) is not imposed, then $\tilde{\Lambda}=0$
  in the first-order condition, which yields
  $\ddot{\Lambda}=-(W'W)^{-1}W'\Bdiag(\Lambda_{g,g})$. Plugging this back into
  the first-order condition yields $C=M\Bdiag(\Lambda_{g,g})$. Hence, by (i),
  $M_{g,g}^{-1}A_{g,g}=\Lambda_{g,g}$ for all $g$, which yields the result.

\section{Proof of Theorem~\ref{thm:clt}}\label{sec:clt_pf}
Decompose,
\begin{align*}
 X' B Y  - \theta
& = (W \pi + V)' B (W \gamma + U) - \theta \\
& = H' U + \tilde H' V  + V' B U \\
& = \sum_{g \in [G]} \left[ H_g' U_g + \tilde H_g' V_g \right]  + \sum_{g \in [G]}\left[ \left(\sum_{h \leq g}  B_{h,g}'V_h \right)' U_g + \left(\sum_{h \leq g} B_{g,h}U_h \right)' V_{g}\right] \\
& = \sum_{g \in [G]} \left[ T_{g}' U_g + \tilde T_{g}' V_g \right],
\end{align*}
where
\begin{equation*}
  T_{g} = H_g + \sum_{h \leq g}  B_{h,g}' V_h , \quad \tilde T_{g} = \tilde H_g + \sum_{h \leq g}  B_{g,h} U_h .
\end{equation*}
Observe that the sequence $\{T_{g}' U_g + \tilde T_{g}' V_g\}_{g\in[G]}$ is a
martingale difference sequence with respect to the filtration
$\{\mathcal F_g: g\geq 0\}$, where $F_{0}=\emptyset$, and for $g\in[G]$,
$\mathcal{F}_{g}$ is the $\sigma$-field generated by $(U_h,V_h)_{h \leq g}$.
Furthermore, $\mathcal{\omega}_{n}^{2}=E[\sigma^{2}_{n}]$, where
\begin{equation*}
\sigma_n^2 = \sum_{g \in [G]} \begin{pmatrix}
T_{g}'  & \tilde T_{g}'
\end{pmatrix} \Omega_{g} \begin{pmatrix}
T_{g}  \\
\tilde T_{g}
\end{pmatrix}
\end{equation*}
sums the conditional variances. Hence, the result follows by a central limit
theorem for martingale difference sequences \textcite[Corollary 3.1]{HH1980},
provided we can show that the sum of the conditional variances converges,
\begin{equation}\label{eq:omega_n}
\sigma_n^2/ \omega_n^2 \convP 1
\end{equation}
and that the Lindeberg condition holds, that is for any $\eps>0$,
\begin{equation}\label{eq:Lindeberg}
\sum_{g \in [G]} \mathbb E \left[\frac{(T_{g}' U_g + \tilde T_{g}' V_g)^2 }{\omega_n^2}1\left\{\frac{\left\vert T_{g}' U_g + \tilde T_{g}' V_g \right\vert }{\omega_n }  \geq \eps \right\}  \mid \mathcal{F}_{g-1}   \right] \convP 0.
\end{equation}
We show these claims in the two steps below.

\paragraph{Step 1: Establishing \cref{eq:omega_n}.} We first establish the lower bound
\begin{equation}\label{eq:omega_lower}
  \omega_n^2 \geq  c(\mu_n^2 + \tilde \mu_n^2 +   \kappa_n).
\end{equation}
To show \cref{eq:omega_lower}, note that
\begin{align*}
\omega_n^2 & =  \sum_{g \in [G]} tr \left[\mathbb E\left( \Omega_{g} \begin{pmatrix}
T_{g}  \\
\tilde T_{g}
\end{pmatrix} \begin{pmatrix}
T_{g}'  & \tilde T_{g}'
\end{pmatrix}  \right)\right] \\
& \geq c \sum_{g \in [G]} \mathbb E \left( \norm{T_{g}}_2^2 + \norm{\tilde T_{g}}_2^2\right) \\
& = c\sum_{g \in [G]} \left( \norm{H_{g}}_2^2 + \norm{\tilde H_{g}}_2^2 \right) +
c\sum_{g \in [G]}
\mathbb E \norm*{\sum_{h \leq g}  B_{h,g}' V_h}_2^2
+c\sum_{g \in [G]}\mathbb{E} \norm*{\sum_{h \leq g}  B_{g,h} U_h}_2^2.
\end{align*}
The second term can be further lower-bounded as
\begin{multline*}
\sum_{g \in [G]}\mathbb E \left\Vert \sum_{h \leq g}  B_{h,g}' V_h \right\Vert_2^2  = \mathbb E \sum_{g \in [G]} \sum_{h \leq g} V_h' B_{h,g} B_{h,g}' V_h
 = \sum_{g \in [G]} \sum_{h \leq g} \tr\left[\left(B_{h,g} B_{h,g}'\right)\Omega_{V,h}\right]\\
 \geq  c \sum_{g \in [G]} \sum_{h \leq g} ||B_{h,g}||_F^2,
\end{multline*}
and by analogous arguments, the third term can be lower-bounded by $c \sum_{g \in [G]} \sum_{h \geq g} ||B_{h,g}||_F^2$.
Since the first term equals $c(||H||_2^2 + ||\tilde H||_2^2)$, it follows that
\begin{equation*}
  \omega_{n}^{2}\geq
  c(||H||_2^2 + ||\tilde H||_2^2)
  +\sum_{g \in [G]}\mathbb E \left\Vert \sum_{h \leq g}  B_{h,g}' V_h
  \right\Vert_2^2  +  \sum_{g \in [G]} \mathbb E \left\Vert \sum_{h \leq g}
  B_{g,h} U_h \right\Vert_2^2 \geq c(||H||_2^2 + ||\tilde H||_2^2+  ||B||_F^2),
\end{equation*}
which establishes \cref{eq:omega_lower}. Therefore, to show \cref{eq:omega_n},
it suffices to show
\begin{align*}
\frac{\mathbb V(\sigma_n^2)}{(\mu_n^2 + \tilde \mu_n^2 +   \kappa_n)^2 } \to 0.
\end{align*}

Note that
\begin{align}\label{eq:Var}
  \mathbb V(\sigma_n^2)
  & = \mathbb V\left( \sum_{g \in [G]} \begin{pmatrix}
    T_{g}'  & \tilde T_{g}'
\end{pmatrix} \Omega_{g} \begin{pmatrix}
T_{g}  \\
\tilde T_{g}
\end{pmatrix} \right) \notag \\
        & \lesssim \mathbb V\left( \sum_{g \in [G]} \begin{pmatrix}
(H_{g}'  & \tilde H_{g}'
\end{pmatrix} \Omega_{g} \begin{pmatrix}
\sum_{h \leq g}  B_{h,g}' V_h   \\
\sum_{h \leq g}  B_{g,h} U_h
\end{pmatrix} \right) \notag \\
& + \mathbb V\left( \sum_{g \in [G]} \begin{pmatrix}
\sum_{h \leq g} V_h' B_{h,g}  & \sum_{h \leq g}  U_h' B_{g,h}'
\end{pmatrix} \Omega_{g} \begin{pmatrix}
\sum_{h \leq g}  B_{h,g}' V_h   \\
\sum_{h \leq g}  B_{g,h} U_h
\end{pmatrix} \right) \notag \\
& \lesssim \sum_{h \in [G]} \mathbb V\left( \sum_{g \geq h}
(H_{g}'\Omega_{U,g} + \tilde H_{g}'\Omega_{V,U,g}) B_{h,g}' V_h\right)  + \sum_{h \in [G]} \mathbb V\left( \sum_{g \geq h} (H_g' \Omega_{U,V,g} + \tilde H_g' \Omega_{V,g} ) B_{g,h} U_h \right) \notag \\
& + \mathbb V\left( \sum_{g \in [G]}
\left(\sum_{h \leq g} V_h' B_{h,g}\right) \Omega_{U,g}  (\sum_{h \leq g}  B_{h,g}' V_h ) \right)+ \mathbb V\left( \sum_{g \in [G]} (
\sum_{h \leq g} V_h' B_{h,g}) \Omega_{U,V,g} (
\sum_{h \leq g}  B_{g,h} U_h )\right)  \notag \\
& + \mathbb V\left( \sum_{g \in [G]} ( \sum_{h \leq g}  U_h' B_{g,h}')  \Omega_{V,g} (\sum_{h \leq g}  B_{g,h} U_h) \right).
\end{align}

We bound each of the five terms on the RHS of the above display. Let
$\Omega_U = \Bdiag(\Omega_{U,g})$, $\Omega_V$,
$\Omega_{U,V}$, and $\Omega_{V,U}$ are similarly defined, $\nabla(B)$ be the
upper-triangular part of $B$ defined as $\nabla(B) = T \circ B$,
$ [T]_{ij} = 1\{i \geq j\}$, $ \Delta(B) = (\iota_n\iota_n' - T)\circ B$ be the
lower-triangular part of $B$, and for any square matrix
$U \in \Re^{n \times n}$,
\begin{equation*}
 \boxbackslash(U) = \begin{cases}
     U_{g,h}  \quad \text{if $g \neq h$} \\
     0_{n_g \times n_g} \quad \text{otherwise}
 \end{cases},
\end{equation*}
where $\circ$ represents the Hadamard product and $\iota_n \in \Re^n$ is a vector of ones.

Then, we have  $\nabla(B') = (\Delta(B))'$ and
\begin{align}\label{eq:Var12}
        \sum_{h \in [G]} \mathbb V\left( \sum_{g \geq h}
(H_{g}'\Omega_{U,g} + \tilde H_{g}'\Omega_{V,U,g}) B_{h,g}' V_h\right) & = Var \left( (H' \Omega_U + \tilde H' \Omega_{V,U}) \Delta(B)' V\right) \notag \\
        & = (H' \Omega_U + \tilde H' \Omega_{V,U}) \Delta(B)' \Omega_{V} \Delta(B) (H' \Omega_U + \tilde H' \Omega_{V,U})' \notag \\
        & \leq u_n^3 ||\Delta(B)||_{op}^2 (\mu_n^2 + \tilde \mu_n^2)  \notag \\
        & \lesssim u_n^3 \eta_n (\mu_n^2  + \tilde \mu_n^2) = o\left((\mu_n^2 + \tilde \mu_n^2+   \kappa_n)^2\right),
\end{align}
where the last inequality is by \Cref{lem:nablaB} and the last equality is by \Cref{ass:reg}.

For the same reason, we have
\begin{align}\label{eq:Var12'}
\sum_{h \in [G]} \mathbb V\left( \sum_{g \geq h} (H_g' \Omega_{U,V,g} + \tilde H_g' \Omega_{V,g} ) B_{g,h} U_h \right)  = o\left((\mu_n^2 + \tilde \mu_n^2+   \kappa_n)^2\right).
\end{align}

Next, we consider $ \mathbb V\left( \sum_{g \in [G]}
\left(\sum_{h \leq g} V_h' B_{h,g}\right) \Omega_{U,g}  (\sum_{h \leq g}  B_{h,g}' V_h ) \right)$. We note that
\begin{align*}
\sum_{g \in [G]}
\left(\sum_{h \leq g} V_h' B_{h,g}\right) \Omega_{U,g}  (\sum_{h \leq g}  B_{h,g}' V_h ) & = V' \Delta(B) \Omega_{U} \Delta(B)' V \\
& = \left( \sum_{g \in [G]} V_{g}' [\Delta(B) \Omega_{U} \Delta(B)']_{g,g}V_{g} \right) + V' \boxbackslash (\Delta(B) \Omega_{U} \Delta(B)') V.
\end{align*}

Therefore, we have
\begin{align}\label{eq:Var22}
        & \mathbb V\left(\sum_{g \in [G]}
\left(\sum_{h \leq g} V_h' B_{h,g}\right) \Omega_{U,g}  (\sum_{h \leq g}  B_{h,g}' V_h )\right) \notag \\
         & \lesssim \mathbb V\left( \sum_{g \in [G]} V_{g}' [\Delta(B) \Omega_{U} \Delta(B)']_{g,g}V_{g}\right) + \mathbb V\left(V' \boxbackslash (\Delta(B) \Omega_{U} \Delta(B)') V\right) \notag \\
         & \lesssim \sum_{g \in [G]} \mathbb V(V_{g}' [\Delta(B) \Omega_{U} \Delta(B)']_{g,g}V_{g}) \notag \\
         & + \sum_{g \in [G]} \sum_{h \neq g} \mathbb E V'_{g}[\Delta(B) \Omega_{U} \Delta(B)']_{g,h} \Omega_{V,h} [\Delta(B) \Omega_{U} \Delta(B)']_{g,h}' V_{g} \notag \\
         & \lesssim \sum_{g \in [G]} \mathbb E \left(V_{g}' [\Delta(B) \Omega_{U} \Delta(B)']_{g,g}V_{g} \right)^2 \notag \\
         &+ \sum_{g \in [G]} \sum_{h \neq g} \tr\left[ [\Delta(B) \Omega_{U} \Delta(B)']_{g,h} \Omega_{V,h} [\Delta(B) \Omega_{U} \Delta(B)']_{g,h}' \Omega_{V,g} \right].
\end{align}
For the first term on the RHS of \cref{eq:Var22}, we have
\begin{align*}
\left\Vert [\Delta(B) \Omega_{U} \Delta(B)']_{g,g} \right\Vert_{F} & = \left\Vert \sum_{h \leq g} B_{g,h} \Omega_{U,h}  B_{g,h}' \right\Vert_{F} \\
& \lesssim u_n \left\Vert\sum_{h \leq g} B_{g,h}  B_{g,h}'\right\Vert_{F} \\
& \lesssim u_n \left\Vert\sum_{h \in [G]} B_{g,h}  B_{g,h}'\right\Vert_{F} \\
& \lesssim u_n \left\Vert (BB')_{g,g}\right\Vert_{F},
\end{align*}
by \Cref{lem:nablaB}, and thus,
\begin{align}\label{eq:Var22_1}
&       \sum_{g \in [G]} \mathbb E \left(V_{g}' [\Delta(B) \Omega_{U} \Delta(B)']_{g,g}V_{g} \right)^2 \notag \\
    & = \sum_{g \in [G]} \mathbb E tr\begin{pmatrix}
             [\Delta(B) \Omega_{U} \Delta(B)']_{g,g}V_{g} V_{g}' [\Delta(B) \Omega_{U} \Delta(B)']_{g,g}V_{g}V_{g}'
    \end{pmatrix} \notag \\
& \lesssim \sum_{g \in [G]} \mathbb E tr\begin{pmatrix}
             [\Delta(B) \Omega_{U} \Delta(B)']_{g,g}V_{g} ||V_{g}||_2^2 V_{g}' [\Delta(B) \Omega_{U} \Delta(B)']_{g,g}
    \end{pmatrix} \notag \\
& \lesssim u_n^{\frac{q-2}{q-1}} n_G^{\frac{q}{q-1}} \sum_{g \in [G]} ||[\Delta(B) \Omega_{U} \Delta(B)']_{g,g}||_F^2 \notag \\
& \lesssim u_n^{\frac{3q-4}{q-1}} n_G^{\frac{q}{q-1}} \sum_{g \in [G]} tr ((BB')_{g,g}^2 ) \notag \\
& \lesssim u_n^{\frac{3q-4}{q-1}} n_G^{\frac{q}{q-1}} \lambda_n \kappa_n = o\left((\mu_n^2 + \tilde \mu_n^2 +  \kappa_n)^2\right),
\end{align}
where the second inequality is by \Cref{lem:4mom} and the last equality is by \Cref{ass:reg}.


For the second term on the RHS of \cref{eq:Var22}, we have
\begin{align}\label{eq:Var22_2}
& \sum_{g \in [G]} \sum_{h \neq g} \tr\left[ [\Delta(B) \Omega_{U} \Delta(B)']_{g,h} \Omega_{V,h} [\Delta(B) \Omega_{U} \Delta(B)']_{g,h}' \Omega_{V,g} \right]  \notag \\
& \leq \tr\left[\Delta(B) \Omega_{U} \Delta(B)' \Omega_V \Delta(B) \Omega_{U} \Delta(B)' \Omega_{V}  \right] \notag \\
& \leq u_n^4 \tr\left[\Delta(B)'\Delta(B)\Delta(B)'\Delta(B)\right] \notag \\
& \leq u_n^4 ||\Delta(B)||_{op}^2 \tr(\Delta(B)'\Delta(B)) \notag \\
& \lesssim u_n^4  \eta_n ||\Delta(B)||_F^2 \notag \\
& \lesssim u_n^4 \eta_n  ||B||_F^2 \notag \\
& \lesssim u_n^4 \eta_n  \kappa_n = o\left((\mu_n^2 +  \tilde \mu_n^2 + \kappa_n)^2\right),
\end{align}
where the third and last inequalities are due to \Cref{lem:nablaB} and the last equality is by \Cref{ass:reg}.
Combining \cref{eq:Var22} with \cref{eq:Var22_1} and \cref{eq:Var22_2}, we have
\begin{align}\label{eq:Var22_3}
        \mathbb V\left(\sum_{g \in [G]}
\left(\sum_{h \leq g} V_h' B_{h,g}\right) \Omega_{U,g}  (\sum_{h \leq g}  B_{h,g}' V_h )\right)=o\left((\mu_n^2 +  \tilde \mu_n^2 + \kappa_n)^2\right).
\end{align}
Following the same argument, we have
\begin{align}\label{eq:Var22_4}
& \mathbb V\left( \sum_{g \in [G]} (
\sum_{h \leq g} V_h' B_{h,g}) \Omega_{U,V,g} (
\sum_{h \leq g}  B_{g,h} U_h )\right)  =o\left((\mu_n^2 +  \tilde \mu_n^2 + \kappa_n)^2\right) \quad \text{and} \notag \\
& \mathbb V\left( \sum_{g \in [G]} ( \sum_{h \leq g}  U_h' B_{g,h}')  \Omega_{V,g} (\sum_{h \leq g}  B_{g,h} U_h) \right) =o\left((\mu_n^2 +  \tilde \mu_n^2 + \kappa_n)^2\right).
\end{align}

Combining \cref{eq:Var}, \cref{eq:Var12}, \cref{eq:Var12'}, \cref{eq:Var22_3}, and \cref{eq:Var22_4}, we have
\begin{align*}
        \mathbb V(\sigma_n^2) = o\left((\mu_n^2 +  \tilde \mu_n^2 + \kappa_n)^2\right),
\end{align*}
which is the desired result.

\paragraph{Step 2: Establishing \cref{eq:Lindeberg}.} We note that
\begin{align}\label{eq:Lindeberg1}
        & \sum_{g \in [G]} \mathbb E \left[\frac{(T_{g}' U_g + \tilde T_{g}' V_g)^2  }{\omega_n^2 }1\left\{\frac{\left\vert T_{g}' U_g + \tilde T_{g}' V_g \right\vert }{\omega_n } \geq \eps  \right\} \right] \notag \\
        & \lesssim \sum_{g \in [G]} \mathbb E \biggl\{ \left[\frac{(T_{g}' U_g)^2}{\omega_n^2} + \frac{(\tilde T_{g}' V_g)^2 }{\omega_n^2 }\right] \left[ 1\left\{\frac{\left\vert T_{g}' U_g \right\vert}{\omega_n} \geq \eps/2 \right\} + 1\left\{ \frac{\left\vert \tilde T_{g}' V_g \right\vert }{\omega_n } \geq \eps/2  \right\}\right] \biggr\} \notag \\
        & \lesssim \sum_{g \in [G]} \mathbb E  \left[\frac{(T_{g}' U_g)^2}{\omega_n^2}1\left\{\frac{\left\vert T_{g}' U_g \right\vert}{\omega_n} \geq \eps/2 \right\} + \frac{( \tilde T_{g}' V_g)^2 }{\omega_n^2 }1\left\{ \frac{\left\vert\tilde T_{g}' V_g \right\vert }{\omega_n } \geq \eps/2  \right\} \right] \notag \\
         & \lesssim  \sum_{g \in [G]} \mathbb E \frac{(T_{g}' U_{g})^{4}}{\omega_n^{4} \eps^2} +  \sum_{g \in [G]} \mathbb E \frac{(\tilde T_{g}' V_{g})^{4}}{\omega_n^{4} \eps^2} ,
\end{align}
where the second inequality is by the Chebyshev's sum inequality that if $a_1 \geq a_2 \geq \cdot \geq a_J \geq 0$ and $b_1 \geq b_2 \geq \cdot \geq b_J \geq 0$, then
\begin{align}\label{eq:chebyshev}
 \left(\frac{1}{J}\sum_{j=1}^J a_j\right)\left(\frac{1}{J}\sum_{j=1}^J
  b_j\right) \leq \frac{1}{J} \sum_{j = 1}^J a_{j} b_j.
\end{align}

For the first term on the RHS of \cref{eq:Lindeberg1}, we have
\begin{align*}
        \sum_{g \in [G]} \mathbb E (T_{g}' U_{g})^{4} & \lesssim \sum_{g \in [G]} \mathbb E (H_{g}' U_{g})^{4} + \sum_{g \in [G]} \mathbb E \left[ (\sum_{h \leq g}  B_{h,g}' V_h)' U_{g}\right]^{4}
\end{align*}
and
\begin{align*}
\sum_{g \in [G]} \mathbb E (H_{g}' U_{g})^{4} & \lesssim \sum_{g \in [G]} ||H_{g}||_2^4  u_n^{\frac{q-2}{q-1}} n_G^{\frac{q}{q-1}} \\
    & \lesssim \zeta_{H,n} \mu_n^2  u_n^{\frac{q-2}{q-1}} n_G^{\frac{q}{q-1}} \\
    & =  o\left( (\mu_n^2 + \tilde \mu_n^2 + \kappa_n)^2\right) = o(\omega_n^4),
\end{align*}
where the second inequality is by \Cref{lem:4mom}, the second last equality is by \Cref{ass:reg}.4, and the last equality is by \cref{eq:omega_lower}. In addition, we have
\begin{align*}
\sum_{g \in [G]} \mathbb E \left[ (\sum_{h \leq g}  B_{h,g}' V_h)'
  U_{g}\right]^{4} & = \sum_{g \in [G]} \sum_{h_1,h_2,h_3,h_4 \geq g}\mathbb E
                     \prod_{l=1}^4(V_{h_l}' B_{h_l, g} U_{g})   \\
        & \lesssim \sum_{g \in [G]} \left[\sum_{h \geq g}\mathbb E (V_h' B_{h,g} U_{g})^4 + \sum_{h_1,h_2 \geq g, h_1 \neq h_2}\mathbb E (V_{h_1}' B_{h_1g} U_{g})^2(V_{h_2}' B_{h_2g} U_{g})^2 \right] \notag \\
    & \lesssim u_n^{\frac{2(q-2)}{q-1}} n_G^{\frac{2q}{q-1}} \lambda_n^2 \kappa_n + u_n^{\frac{3q-4}{q-1}} n_G^{\frac{q}{q-1}}  \lambda_n \kappa_n  \notag \\
    & =  o\left( (\mu_n^2 + \tilde \mu_n^2 + \kappa_n)^2\right) = o(\omega_n^4),
\end{align*}
where the second inequality is by \Cref{lem:E4} and the last equality is by \cref{eq:omega_lower} and \Cref{ass:reg}.4.

Therefore, we have
\begin{align}\label{eq:H1^4}
\sum_{g \in [G]} \mathbb E (T_{g}' U_{g})^{4} = o(\omega_n^4).
\end{align}
Following the same argument, we have
\begin{align}\label{eq:tildeH1^4}
  \sum_{g \in [G]} \mathbb E (\tilde T_{g}' V_{g})^{4} = o(\omega_n^4).
\end{align}
Combining \cref{eq:Lindeberg1} with \cref{eq:H1^4} and \cref{eq:tildeH1^4}, we have
\begin{align*}
\sum_{g \in [G]} \mathbb E \left[\frac{(T_{g}' U_g + \tilde T_{g}' V_g)^2  }{\omega_n^2 }1\left\{\frac{\left\vert T_{g}' U_g + \tilde T_{g}' V_g \right\vert }{\omega_n } \geq \eps  \right\} \right] = o(1).
\end{align*}
This concludes the proof.

\section{Proof of Theorem~\ref{thm:var_l3co}}
\textbf{Step 1.} We compute the expectation of  $ \hat \omega^2_{n,\rm L3CO}$. For any random variable $R_n$, let $\mathbb E_{g} R_n$ be $\mathbb E( R_n| U_g,V_g)$ and define $\mathbb E_{g,h} R_n$ and $\mathbb E_{ghk} R_n$ in the same manner. Then, we have
\begin{align*}
    \mathbb E_{ghk} (X_k' B_{k,g} (Y_g - W_g \hat \gamma_{-ghk})) = X_k' B_{k,g} U_g
\end{align*}
and
\begin{align*}
   & \mathbb E \sum_{g,h,k \in [G]^3} \left(X_h ' B_{h,g} Y_g \right) \left(X_k ' B_{k,g} (Y_g - W_g \hat \gamma_{-ghk}) \right) \\
   & =  \mathbb E \sum_{g,h,k \in [G]^3} \left(X_h ' B_{h,g} Y_g \right) \left(X_k ' B_{k,g} U_g \right) \\
    & = \sum_{g,h,k \in [G]^3, h \neq k} \mathbb E \left(X_h ' B_{h,g} U_g \right) \left(X_k ' B_{k,g} U_g \right)  + \sum_{g,h \in [G]^2} \mathbb E  \left(X_h ' B_{h,g} U_g \right)^2 \\
    & = \sum_{g,h,k \in [G]^3} \Pi_h ' B_{h,g} \Omega_{U,g} B_{k,g}' \Pi_k  + \sum_{g,h \in [G]^2} tr  \left( B_{h,g} \Omega_{U,g} B_{h,g}' \Omega_{V,h} \right)  \\
    & =  H' \Omega_U H + \sum_{g,h \in [G]^2} tr  \left( B_{h,g} \Omega_{U,g} B_{h,g}' \Omega_{V,h} \right).
\end{align*}

Similarly, we have
\begin{align*}
 & \mathbb E   \sum_{g, h, k \in [G]^3} \left(X_h ' B_{h,g} Y_g \right) \left(Y_k ' B_{g,k}' (X_g - W_g \hat \pi_{-ghk}) \right) \\
 &  = \sum_{g, h, k \in [G]^3}  \mathbb E   \left(X_h ' B_{h,g} U_g \right) \left(Y_k ' B_{g,k}' V_g \right) \\
 &  = \sum_{g, h, k \in [G]^3}  \Pi_h ' B_{h,g} \Omega_{U,V,g} B_{g,k} \Gamma_k  + \sum_{g, h \in [G]^2}  \mathbb E   \left(V_h ' B_{h,g} U_g \right) \left(U_h ' B_{g,h}' V_g \right) \\
 & =  H' \Omega_{U,V} \tilde H + \sum_{g, h \in [G]^2}  tr   \left(B_{h,g} \Omega_{U,V,g} B_{g,h} \Omega_{U,V,h}  \right)
\end{align*}
and
\begin{align*}
   & \mathbb E \sum_{g, h, k \in [G]^3} \left(Y_h ' B_{g,h}' X_g \right) \left(Y_k ' B_{g,k}' (X_g - W_g \hat \pi_{-ghk}) \right) \\
   & =   \mathbb E \sum_{g, h, k \in [G]^3} \left(Y_h ' B_{g,h}' V_g \right) \left(Y_k ' B_{g,k}' V_g \right) \\
   & = \tilde H'\Omega_{V} \tilde H +\mathbb E \sum_{g, h \in [G]^2} \left(U_h ' B_{g,h}' V_g \right) \left(U_h ' B_{g,h}' V_g \right) \\
   & = \tilde H'\Omega_{V} \tilde H +\sum_{g, h \in [G]^2} tr \left(B_{g,h}' \Omega_{V,g} B_{g,h} \Omega_{U,h} \right) .
\end{align*}

Next, we have
\begin{align*}
   & \mathbb E  \sum_{g, h, k \in [G]^3} \left((Y_h - W_h \hat \gamma_{-ghk})' B_{g,h}' X_g \right) \left(Y_h' B_{g,h}' \tilde M_{gk,-gh} X_k \right)  \\
   & =  \mathbb E  \sum_{g, h, k \in [G]^3} \left(U_h' B_{g,h}' X_g \right) \left(Y_h' B_{g,h}' \tilde M_{gk,-gh} X_k \right) \\
   & =  \mathbb E  \sum_{g, h, k \in [G]^3} \left(U_h' B_{g,h}' X_g \right) \left(Y_h' B_{g,h}' \tilde M_{gk,-gh} V_k \right) \\
   & =  \mathbb E  \sum_{g, h \in [G]^2} \left(U_h' B_{g,h}' X_g \right) \left(Y_h' B_{g,h}'  V_g \right) \\
   & =  \mathbb E  \sum_{g, h \in [G]^2} \left(U_h' B_{g,h}' V_g \right) \left(U_h' B_{g,h}'  V_g \right) \\
   & = \sum_{g, h \in [G]^2} tr \left(B_{g,h}' \Omega_{V,g} B_{g,h}  \Omega_{U,h} \right),
\end{align*}
where the first equality is by
\begin{align*}
    \mathbb E_{ghk}(Y_h - W_h \hat \gamma_{-ghk}) = U_h,
\end{align*}
the second equality is by
\begin{align*}
\sum_{k \in [G]} \tilde M_{gk,-gh} W_k = 0_{n_g \times n_g},
\end{align*}
the third equality is by
\begin{align*}
\tilde M_{gk,-gh}   = \begin{cases}
     I_{n_g}, \quad k = g; \\
     0_{n_h \times n_g}, \quad k = h;
\end{cases} \quad \text{and} \quad  \mathbb E  \sum_{g, h, k \in [G]^3, k \neq g,h} \left(U_h' B_{g,h}' X_g \right) \left(Y_h' B_{g,h}' \tilde M_{gk,-gh} V_k \right) = 0.
\end{align*}
Similarly, we have
\begin{align*}
& \mathbb E \sum_{g, h, k \in [G]^3} \left((X_h - W_h \hat \pi_{-ghk})' B_{h,g} Y_g \right) \left(Y_h' B_{g,h}' \tilde M_{gk,-gh} X_k \right)   \\
& = \mathbb E \sum_{g, h \in [G]^2} \left(V_h' B_{h,g} U_g \right) \left(U_h' B_{g,h}'  V_g \right)   \\
& =\sum_{g, h \in [G]^2} tr \left( B_{h,g} \Omega_{U,V,g} B_{g,h} \Omega_{U,V,h}  \right).
\end{align*}

Combining the above five expectations, we have
\begin{align*}
    \mathbb E \hat \omega^2_{n,\rm L3CO} &= H' \Omega_U H + \sum_{g,h \in [G]^2} tr  \left( B_{h,g} \Omega_{U,g} B_{h,g}' \Omega_{V,h} \right) \\
    & + 2 H' \Omega_{U,V} \tilde H +2\sum_{g, h \in [G]^2}  tr   \left(B_{h,g} \Omega_{U,V,g} B_{g,h} \Omega_{U,V,h}  \right) \\
    & + \tilde H'\Omega_{V} \tilde H +\sum_{g, h \in [G]^2} tr \left(B_{g,h}' \Omega_{V,g} B_{g,h} \Omega_{U,h} \right)  \\
    & - \sum_{g, h \in [G]^2} tr \left(B_{g,h}' \Omega_{V,g} B_{g,h}  \Omega_{U,h} \right) \\
    & - \sum_{g, h \in [G]^2} tr \left( B_{h,g} \Omega_{U,V,g} B_{g,h} \Omega_{U,V,h}  \right) \\
    & = \sum_{g,h \in [G]^2} tr  \left(  \Omega_{V,h} B_{h,g} \Omega_{U,g} B_{h,g}' \right) + \sum_{g, h \in [G]^2}  tr   \left(\Omega_{U,V,g} B_{g,h} \Omega_{U,V,h} B_{h,g}  \right) \\
    & +  H' \Omega_U H + \tilde H'\Omega_{V} \tilde H + 2 H' \Omega_{U,V} \tilde H \\
    & = \sum_{g,h \in [G]^2} tr  \left(  \Omega_{V,g} B_{g,h} \Omega_{U,h} B_{g,h}' \right) + \sum_{g, h \in [G]^2}  tr   \left(\Omega_{U,V,g} B_{g,h} \Omega_{U,V,h} B_{h,g}  \right) \\
    & +  H' \Omega_U H + \tilde H'\Omega_{V} \tilde H + 2 H' \Omega_{U,V} \tilde H \\
    & = \omega_n^2.
\end{align*}

\textbf{Step 2.} We aim to show $\mathbb V (\hat \omega^2_{n,\rm L3CO}) = o(\omega_n^4)$, which leads to the desired result that
\begin{align*}
   \hat \omega^2_{n,\rm L3CO}/ \omega_n^2 \convP 1.
\end{align*}

Denote
\begin{align*}
 P_{g,l,-ghk} & = \begin{cases}
    & W_g (W'W - W_g W_g' - W_h W_h' - W_k W_k') W_l', \quad g \neq h, g \neq k, h \neq k, \\
    & W_g (W'W - W_g W_g' - W_h W_h') W_l' =  P_{g,l,-gh}, \quad g \neq h, h = k, \\
    & 0_{n_g,n_l}, \quad \text{otherwise}
 \end{cases} \\
 P_{l,g,-ghk} & = P_{g,l,-ghk}'.
 \end{align*}
Here we simply set $ P_{g,l,-ghk} = 0_{n_g,n_l}$ if either $g = h$ or $g = k$ because in all our analysis below, these two cases will not occur due to the fact that $B_{g,g} = 0_{n_g,n_g}$.

Given this definition, we have
\begin{align}\label{eq:var_l3co_1}
  &  \mathbb V\left(\sum_{g, h, k \in [G]^3} \left(X_h ' B_{h,g} Y_g \right) \left(X_k ' B_{k,g} (Y_g - W_g \hat \gamma_{-ghk}) \right)  \right) \notag \\
  & \lesssim \underbrace{ \mathbb V \left(\sum_{g, h, k \in [G]^3} \left(X_h ' B_{h,g} Y_g \right) \left(X_k ' B_{k,g} U_g \right)  \right) }_{T_1}+  \underbrace{ \mathbb V \left(\sum_{g, h, k\in [G]^3, l \in [-ghk]} \left(X_h ' B_{h,g} Y_g \right) \left(X_k ' B_{k,g} P_{g,l,-ghk} U_l \right)  \right) }_{T_2}.
\end{align}
The first term $T_1$ on the RHS of the above display can be bounded by standard calculation. \Cref{lem:V1} shows
\begin{equation*}
T_1 =   o(\omega_n^4).
\end{equation*}

Next, we focus on the second term $T_2$ on the RHS of \cref{eq:var_l3co_1}. We have
\begin{align*}
T_2 & \lesssim  \underbrace{ \mathbb V \left(\sum_{g, h, k \in [G]^3, l \in [-ghk]} \left(\Pi_h' B_{h,g} \Gamma_g \right) \left(\Pi_k ' B_{k,g} P_{g,l,-ghk} U_l \right)  \right) }_{R_{1}} \\
& + \underbrace{  \mathbb V \left(\sum_{g, h, k \in [G]^3, l \in [-ghk]} \left(\Pi_h' B_{h,g} \Gamma_g \right) \left(V_k ' B_{k,g} P_{g,l,-ghk} U_l \right)  \right) }_{R_{2}} \\
& + \underbrace{  \mathbb V \left( \sum_{g, h, k \in [G]^3, l \in [-ghk]} \left(V_h' B_{h,g} \Gamma_g \right) \left(\Pi_k ' B_{k,g} P_{g,l,-ghk} U_l \right)  \right)  }_{R_{3}} \\
& + \underbrace{  \mathbb V \left( \sum_{g, h, k \in [G]^3, l \in [-ghk]} \left(\Pi_h' B_{h,g} U_g \right) \left(\Pi_k ' B_{k,g} P_{g,l,-ghk} U_l \right)  \right) }_{R_{4}} \\
& + \underbrace{  \mathbb V \left( \sum_{g, h, k \in [G]^3, l \in [-ghk]} \left(V_h' B_{h,g} U_g \right) \left(\Pi_k ' B_{k,g} P_{g,l,-ghk} U_l \right)  \right) }_{R_{5}} \\
& + \underbrace{  \mathbb V \left( \sum_{g, h, k \in [G]^3, l \in [-ghk]} \left(V_h' B_{h,g} \Gamma_g \right) \left(V_k ' B_{k,g} P_{g,l,-ghk} U_l \right)  \right) }_{R_{6}} \\
& + \underbrace{  \mathbb V \left( \sum_{g, h, k \in [G]^3, l \in [-ghk]} \left(\Pi_h' B_{h,g} U_g \right) \left(V_k ' B_{k,g} P_{g,l,-ghk} U_l \right)  \right) }_{R_{7}} \\
& + \underbrace{  \mathbb V \left( \sum_{g, h, k \in [G]^3, l \in [-ghk]} \left(V_h' B_{h,g} U_g \right) \left(V_k ' B_{k,g} P_{g,l,-ghk} U_l \right)  \right) }_{R_{8}}.
\end{align*}

In the following sections, we will show that $R_i = o(\omega_n^4)$ for $i
=1, \dotsc, 8$, which combined with \cref{eq:var_l3co_1}, implies that
\begin{align*}
 \mathbb V\left(\sum_{g, h, k \in [G]^3} \left(X_h ' B_{h,g} Y_g \right) \left(X_k ' B_{k,g} (Y_g - W_g \hat \gamma_{-ghk}) \right)  \right) = o(\omega_n^4).
\end{align*}
Following the same argument, we can show
\begin{align*}
   & \mathbb V \left( 2 \sum_{g, h, k \in [G]^3} \left(X_h ' B_{h,g} Y_g \right) \left(Y_k ' B_{g,k}' (X_g - W_g \hat \pi_{-ghk}) \right) \right) = o(\omega_n^4) \quad \text{and} \\
   & \mathbb V \left( \sum_{g, h, k \in [G]^3} \left(Y_h ' B_{g,h}' X_g \right) \left(Y_k ' B_{g,k}' (X_g - W_g \hat \pi_{-ghk}) \right) \right)  = o(\omega_n^4).
\end{align*}

In addition, we have
\begin{align*}
    & \mathbb V\left( \sum_{g, h, k \in [G]^3} \left((Y_h - W_h \hat \gamma_{-ghk})' B_{g,h}' X_g \right) \left(Y_h' B_{g,h}' \tilde M_{gk,-gh} X_k \right) \right) \\
    & \lesssim \mathbb V\left( \sum_{g, h, k \in [G]^3} \left(U_h' B_{g,h}' X_g \right) \left(Y_h' B_{g,h}' \tilde M_{gk,-gh} X_k \right) \right) \\
    & + \mathbb V\left( \sum_{g, h, k \in [G]^3} \sum_{l \in [-ghk]} \left( U_l' P_{l,h,-ghk} B_{g,h}' X_g \right) \left(Y_h' B_{g,h}' \tilde M_{gk,-gh} X_k \right) \right) \\
    & \lesssim \underbrace{ \mathbb V\left( \sum_{g, h \in [G]^2} \left(U_h' B_{g,h}' X_g \right) \left(Y_h' B_{g,h}'  X_g \right) \right) }_{T_3} \\
    & +  \underbrace{ \mathbb V\left( \sum_{g, h \in [G]^2} \sum_{k \in [-gh]} \left(U_h' B_{g,h}' X_g \right) \left(Y_h' B_{g,h}' P_{g,k,-gh} X_k \right) \right) }_{T_4} \\
    & +  \underbrace{ \mathbb V\left( \sum_{g, h \in [G]^2} \sum_{l \in [-gh]} \left( U_l' P_{l,h,-gh} B_{g,h}' X_g \right) \left(Y_h' B_{g,h}'  X_g \right) \right) }_{T_5} \\
    & +  \underbrace{ \mathbb V\left( \sum_{g, h \in [G]^2} \sum_{k \in [-gh]} \sum_{l \in [-ghk]} \left( U_l' P_{l,h,-ghk} B_{g,h}' X_g \right) \left(Y_h' B_{g,h}' P_{g,k,-gh} X_k \right) \right) }_{T_6},
\end{align*}
where \Cref{lem:V1} shows $T_3 = o(\omega_n^4)$ and $T_5 = o(\omega_n^4)$ following the same argument as $T_2$ with $k = h$.

For $T_4$, we have
\begin{align*}
    T_4 & \lesssim  \mathbb V\left( \sum_{g, h \in [G]^2}  \left(U_h' B_{g,h}' X_g \right) \left(Y_h' B_{g,h}' \Pi_k \right) \right) + \mathbb V\left( \sum_{g, h \in [G]^2} \sum_{k \in [-gh]} \left(U_h' B_{g,h}' X_g \right) \left(Y_h' B_{g,h}' P_{g,k,-gh} V_k \right) \right) \\
    & \lesssim \mathbb V\left( \sum_{g, h \in [G]^2} \sum_{k \in [-gh]} \left(U_h' B_{g,h}' X_g \right) \left(Y_h' B_{g,h}' P_{g,k,-gh} V_k \right) \right) + o(\omega_n^4)\\
    & \lesssim \underbrace{ \mathbb V\left( \sum_{g, h \in [G]^2} \sum_{k \in [-gh]} \left(U_h' B_{g,h}' \Pi_g \right) \left(\Gamma_h' B_{g,h}' P_{g,k,-gh} V_k \right) \right) }_{T_{4,1}} \\
    & +  \underbrace{ \mathbb V\left( \sum_{g, h \in [G]^2} \sum_{k \in [-gh]} \left(U_h' B_{g,h}' \Pi_g \right) \left(U_h' B_{g,h}' P_{g,k,-gh} V_k \right) \right) }_{T_{4,2}} \\
    & +  \underbrace{ \mathbb V\left( \sum_{g, h \in [G]^2} \sum_{k \in [-gh]} \left(U_h' B_{g,h}' V_g \right) \left(\Gamma_h' B_{g,h}' P_{g,k,-gh} V_k \right) \right) }_{T_{4,3}} \\
    & +  \underbrace{ \mathbb V\left( \sum_{g, h \in [G]^2} \sum_{k \in [-gh]} \left(U_h' B_{g,h}' V_g \right) \left(U_h' B_{g,h}' P_{g,k,-gh} V_k \right) \right) }_{T_{4,4}} + o(\omega_n^4).
\end{align*}
where the first inequality is by
\begin{align*}
    \sum_{k \in [-gh]}P_{g,k,-gh}\Pi_k =  \sum_{k \in [-gh]}P_{g,k,-gh}W_k\pi = W_g \pi = \Pi_g,
\end{align*}
second inequality follows \Cref{lem:V1}.
In addition, we can show
\begin{align*}
    T_{4,i} = o(\omega_n^4), \quad i = 1, \dotsc, 4
\end{align*}
following the same arguments of  $R_3$, $R_6$, $R_5$, and $R_8$ with $h = k$, respectively. This implies
\begin{align*}
    T_4 = o(\omega_n^4).
\end{align*}

Last, for $T_6$, we have
\begin{align*}
T_6 & \lesssim \underbrace{ \mathbb V\left( \sum_{g, h \in [G]^2} \sum_{k \in [-gh]} \sum_{l \in [-ghk]} \left( U_l' P_{l,h,-ghk} B_{g,h}' \Pi_g \right) \left(\Gamma_h' B_{g,h}' P_{g,k,-gh} \Pi_k \right) \right) }_{R_9}\\
& + \underbrace{  \mathbb V\left( \sum_{g, h \in [G]^2} \sum_{k \in [-gh]} \sum_{l \in [-ghk]} \left( U_l' P_{l,h,-ghk} B_{g,h}' V_g \right) \left(\Gamma_h' B_{g,h}' P_{g,k,-gh} \Pi_k \right) \right) }_{R_{10}} \\
& + \underbrace{  \mathbb V\left( \sum_{g, h \in [G]^2} \sum_{k \in [-gh]} \sum_{l \in [-ghk]} \left( U_l' P_{l,h,-ghk} B_{g,h}' \Pi_g \right) \left(U_h' B_{g,h}' P_{g,k,-gh} \Pi_k \right) \right) }_{R_{11}} \\
& + \underbrace{  \mathbb V\left( \sum_{g, h \in [G]^2} \sum_{k \in [-gh]} \sum_{l \in [-ghk]} \left( U_l' P_{l,h,-ghk} B_{g,h}' \Pi_g \right) \left(\Gamma_h' B_{g,h}' P_{g,k,-gh} V_k \right) \right) }_{R_{12}} \\
& + \underbrace{  \mathbb V\left( \sum_{g, h \in [G]^2} \sum_{k \in [-gh]} \sum_{l \in [-ghk]} \left( U_l' P_{l,h,-ghk} B_{g,h}' V_g \right) \left(U_h' B_{g,h}' P_{g,k,-gh} \Pi_k \right) \right) }_{R_{13}} \\
& + \underbrace{  \mathbb V\left( \sum_{g, h \in [G]^2} \sum_{k \in [-gh]} \sum_{l \in [-ghk]} \left( U_l' P_{l,h,-ghk} B_{g,h}' V_g \right) \left(\Gamma_h' B_{g,h}' P_{g,k,-gh} V_k \right) \right) }_{R_{14}} \\
& + \underbrace{  \mathbb V\left( \sum_{g, h \in [G]^2} \sum_{k \in [-gh]} \sum_{l \in [-ghk]} \left( U_l' P_{l,h,-ghk} B_{g,h}' \Pi_g \right) \left(U_h' B_{g,h}' P_{g,k,-gh} V_k \right) \right) }_{R_{15}} \\
& + \underbrace{  \mathbb V\left( \sum_{g, h \in [G]^2} \sum_{k \in [-gh]} \sum_{l \in [-ghk]} \left( U_l' P_{l,h,-ghk} B_{g,h}' V_g \right) \left(U_h' B_{g,h}' P_{g,k,-gh} V_k \right) \right) }_{R_{16}}.
\end{align*}

In the following, we also show $R_9, \dotsc, R_{16}$ are $ o(\omega_n^4)$, which implies
\begin{equation*}
\mathbb V\left( \sum_{g, h, k \in [G]^3} \left((Y_h - W_h \hat \gamma_{-ghk})' B_{g,h}' X_g \right) \left(Y_h' B_{g,h}' \tilde M_{gk,-gh} X_k \right) \right)  =      o(\omega_n^4).
\end{equation*}
Following the same argument, we can show
\begin{equation*}
\mathbb V\left( \sum_{g, h, k \in [G]^3} \left((X_h - W_h \hat \pi_{-ghk})' B_{h,g} Y_g \right) \left(Y_h' B_{g,h}' \tilde M_{gk,-gh} X_k \right) \right) =o(\omega_n^4),
\end{equation*}
which concludes the proof of \Cref{thm:var_l3co}.

Terms $R_1, \dotsc, R_{16}$ involve summation of $P_{g,l,-ghk}$ across indexes $(g,h,k,l)$, which is calculated based on the following lemma, whose proof is provided in \Cref{sec:P_l3o_1_pf} below.

\begin{lem}\label{lem:P_l3o_1}
Define $M_{ghk,ghk}$ as the submatrix of $M$ whose rows and columns correspond to clusters $(g,h,k)$ in this order, and $P_{ghk,ghk}$, $M_{gh,gh}$, and $P_{gh,gh}$ in the same manner. Let
\begin{align*}
\tilde  P_{ghk} =
\begin{cases}
    & M_{ghk,ghk}^{-1}   P_{ghk,ghk}, \quad g \neq h, g \neq k, h \neq k, \\
    & M_{gh,gh}^{-1}   P_{gh,gh} = \tilde  P_{g,h} , \quad g \neq h, h = k, \\
    & 0, \quad \text{with conformable dimensions otherwise}
 \end{cases},
\end{align*}
and $(\tilde  P_{ghk})_{g,ghk} = (\tilde  P_{g,h})_{g,gh} $and $P_{g,ghk} = P_{g,gh}$ when $h = k$, where $(\tilde  P_{ghk})_{g,ghk}$ denotes the submatrix of $\tilde  P_{ghk}$ whose rows and columns correspond to the $g$-th cluster and the $(g,h,k)$-th cluster (in this order), respectively.

Then, the following statements are true:
\begin{enumerate}
    \item  Suppose $g \neq h$ and $g \neq k$. Then,   $\tilde  P_{ghk} $ is symmetric and
    \begin{align*}
    P_{l,g,-ghk} = P_{l,g} + P_{l,ghk} (\tilde  P_{ghk})_{ghk,g}
    \end{align*}
    \item Suppose $g \neq h$ and $g \neq k$. Then, we have
    \begin{align}\label{eq:p_l30_1}
    P_{ghk,ghk} \tilde  P_{ghk} = \tilde  P_{ghk} - P_{ghk,ghk} = \tilde  P_{ghk}  P_{ghk,ghk}.
\end{align}
    \item When  $g \neq h$, $g \neq k$,  $ h \neq k$, we have
\begin{align*}
& (\tilde P_{ghk})_{g,g} =  \Xi_{1,g} + \Xi_{2,gh} + \Xi_{3,gk} + \Xi_{4,ghk},     \\
& (\tilde P_{ghk})_{g,h} = \Xi_{5,gh} + \Xi_{6,ghk}, \\
& (\tilde P_{ghk})_{g,k} = \Xi_{7,gk} + \Xi_{8,ghk},
\end{align*}
where
\begin{align*}
& ||\Xi_{1,g}||_{op} \lesssim  \lambda_n, \quad ||\Xi_{2,gh}||_{op} \lesssim ||P_{g,h}||_{op}^2, \quad  ||\Xi_{3,gk}||_{op} \lesssim ||P_{g,k}||_{op}^2,  \\
& ||\Xi_{4,ghk}||_{op} \lesssim  (||P_{g,h}||_{op}^2 + ||P_{g,k}||_{op}^2)  ||P_{h,k}||_{op}^2 + \lambda_n ||P_{g,h}||_{op} ||P_{g,k}||_{op}, \\
& ||\Xi_{5,gh}||_{op} \lesssim ||P_{g,h}||_{op}, \\
& || \Xi_{6,ghk}||_{op} \lesssim  \left(||P_{g,h}||_{op} ||P_{g,k}||_{op} + ||P_{h,k}||_{op} \right) \left(||P_{g,h}||_{op} + ||P_{g,k}||_{op}\right), \\
& ||\Xi_{7,gk}||_{op} \lesssim ||P_{g,k}||_{op}, \quad \text{and}\\
& || \Xi_{8,ghk}||_{op} \lesssim \left(||P_{g,h}||_{op} ||P_{g,k}||_{op} + ||P_{h,k}||_{op} \right) \left(||P_{g,h}||_{op} + ||P_{g,k}||_{op}\right).
\end{align*}

\item We have
\begin{align*}
    ||P_{h,k,-gh}||_{op} = ||P_{h,k} + (\tilde P_{gh})_{h,gh} P_{gh,k}||_{op} \lesssim ||P_{h,k}||_{op} + ||P_{g,k}||_{op}.
\end{align*}

\item When $g \neq h$,   $k = h$, we denote $(\tilde P_{ghk})_{g,g}$ and $(\tilde P_{ghk})_{g,h}$ as $(\tilde P_{gh})_{g,g}$ and $(\tilde P_{gh})_{g,h}$, respectively, and have
\begin{align*}
& (\tilde P_{gh})_{g,g} =  \Xi_{1,g} + \Xi_{9,gh}, \quad ||\Xi_{9,gh}||_{op} \lesssim ||P_{g,h}||_{op}^2, \quad \text{and} \quad  || (\tilde P_{gh})_{g,h} ||_{op}\lesssim ||P_{g,h}||_{op}.
\end{align*}
\item When $g \neq h$, $g \neq k$, $g' \neq h'$, $g' \neq k'$, we have
\begin{align*}
& \sum_{l \in [-gg'hh'kk']} P_{g,l,-ghk} P_{l,g',-g'h'k'} \\
& = P_{g,g'} -  (\tilde  P_{ghk})_{g,ghk} P_{ghk,g'h'k'} (\tilde  P_{g'h'k'})_{g'h'k',g'} + \sum_{l \in (g,h,k) \cap (g',h',k')} P_{g,l,-ghk} P_{l,g',-g'h'k'}.
\end{align*}
\end{enumerate}
\end{lem}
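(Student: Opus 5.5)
The plan is to get parts 1, 2, 4 and 6 from the Woodbury identity, and parts 3 and 5 from explicit block inversion of $M_{ghk,ghk}$. Write $S=(g,h,k)$ and stack $W_S$. Put $A_S = W'W - W_S'W_S$. Woodbury gives
\begin{equation*}
A_S^{-1}=(W'W)^{-1}+(W'W)^{-1}W_S' M_{S,S}^{-1} W_S (W'W)^{-1},
\end{equation*}
where $M_{S,S}$ is invertible by \Cref{ass:var_L3O}.\ref{item:l3o_well_defined}, through the Schur complements $S_{k,g}$ and $\tilde S_{k,gh}$ together with \Cref{ass:dgp}.\ref{item:ass_lo}.

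\textbf{Parts 1, 2 and 4.} Pre-multiplying by $W_l$ and post-multiplying by $W_g'$ gives
\begin{equation*}
P_{l,g,-S}=P_{l,g}+P_{l,S}M_{S,S}^{-1}P_{S,g}.
\end{equation*}
Since $g\in S$, the block $P_{S,g}$ is a block column of $P_{S,S}$. Because $M_{S,S}=I-P_{S,S}$ commutes with $P_{S,S}$, the matrix $\tilde P_S=M_{S,S}^{-1}P_{S,S}$ is symmetric, and $M_{S,S}^{-1}P_{S,g}=(\tilde P_S)_{S,g}$. This is part 1. Part 2 is the algebraic identity $P M^{-1}P = M^{-1}P - P$ for $M=I-P$, together with commutativity. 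Part 4 is part 1 for $S=(g,h)$ and $l=h$. It then needs the bound $\|(\tilde P_{gh})_{h,gh}\|_{op}\lesssim 1$, which follows from $\|M_{gh,gh}^{-1}\|_{op}\lesssim1$, together with $\|P_{gh,k}\|_{op}\le \|P_{g,k}\|_{op}+\|P_{h,k}\|_{op}$.

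\textbf{Part 6.} I would avoid summing the part-1 expression over $l$ directly. Instead, write
\begin{equation*}
\sum_{l\notin S\cup S'}P_{g,l,-S}P_{l,g',-S'}=W_gA_S^{-1}\Bigl(\sum_{l\notin S\cup S'}W_l'W_l\Bigr)A_{S'}^{-1}W_{g'}'.
\end{equation*}
By inclusion–exclusion, $\sum_{l\notin S\cup S'}W_l'W_l=A_S+A_{S'}-W'W+\sum_{l\in S\cap S'}W_l'W_l$. The intersection piece reproduces $\sum_{l\in S\cap S'}P_{g,l,-S}P_{l,g',-S'}$. The remaining three pieces give
\begin{equation*}
W_gA_{S'}^{-1}W_{g'}'+W_gA_S^{-1}W_{g'}'-W_gA_S^{-1}W'WA_{S'}^{-1}W_{g'}'.
\end{equation*}
Expanding each $A^{-1}$ with Woodbury, the first-order terms $P_{g,S}M_{S,S}^{-1}P_{S,g'}$ and $P_{g,S'}M_{S',S'}^{-1}P_{S',g'}$ cancel, and $P_{g,g'}$ survives with multiplicity one. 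The only remaining term is
\begin{equation*}
-P_{g,S}M_{S,S}^{-1}P_{S,S'}M_{S',S'}^{-1}P_{S',g'}=-(\tilde P_S)_{g,S}P_{S,S'}(\tilde P_{S'})_{S',g'},
\end{equation*}
which uses part 1 again. The degenerate case $h=k$ is handled identically with $S=(g,h)$.

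\textbf{Parts 3 and 5 (main obstacle).} These need an explicit block-inverse expansion. Write $M_{S,S}=I-P_{S,S}$, with diagonal blocks $M_{g,g}$ and off-diagonal blocks $-P_{g,h}$. For part 5, invert the $2\times2$ block matrix via the Schur complement $S_{h,g}$. The $(g,g)$ block of $M^{-1}_{gh,gh}P_{gh,gh}$ is then $M_{g,g}^{-1}P_{g,g}$, which I take as $\Xi_{1,g}$ with norm $\lesssim\lambda_n$, plus terms that all contain two factors $P_{g,h}$, each bounded by $\|P_{g,h}\|_{op}$. The off-diagonal block carries a single such factor. For part 3, first invert the $(g,h)$ block, then apply the Schur complement $\tilde S_{k,gh}$ for cluster $k$. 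Terms involving only $g$ and $h$ go into $\Xi_2$ and $\Xi_5$, terms involving only $g$ and $k$ into $\Xi_3$ and $\Xi_7$, and everything else into $\Xi_4$, $\Xi_6$ and $\Xi_8$. Every inverse factor ($M_{g,g}^{-1}$, $S^{-1}$, $\tilde S^{-1}$) is $O(1)$, and every diagonal $P$-block is $O(\lambda_n)$.

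The delicate point is attributing each product to the correct $\Xi$ with exactly the stated norm. For example, the $\lambda_n\|P_{g,h}\|_{op}\|P_{g,k}\|_{op}$ term needs careful tracking of where a diagonal block $P_{g,g}$ appears. I would handle this by first writing $(\tilde P_S)_{g,\cdot}=(M_{S,S}^{-1})_{g,\cdot}-I$ via part 2, and then expanding $(M^{-1}_{S,S})_{g,\cdot}$ term by term.
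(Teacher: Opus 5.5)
Parts 1, 2, 4 and 5 are fine and match the paper. In part 4, the relevant instance of part 1 has $h$ as the left-out block and $l=k$, followed by a transpose; that slip is cosmetic.

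Part 6 is correct and takes a cleaner route than the paper. The paper sums the part-1 expressions over all $l\in[G]$, uses $\sum_l P_{g,l}P_{l,g'}=P_{g,g'}$, and then strips off the $l\in S$ and $l\in S'$ terms by hand. Your inclusion--exclusion on $\sum_l W_l'W_l$, sandwiched between $A_S^{-1}$ and $A_{S'}^{-1}$, makes the cancellation of the first-order Woodbury terms immediate.

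The gap is in part 3. The subscripts are part of the claim: $\Xi_{3,gk}$ and $\Xi_{7,gk}$ must not depend on $h$, because later bounds (e.g.\ $R_{1,2,3}$, $R_{1,2,6}$, $R_{2,2,5}$) sum over $h$ with them held fixed. With $k$ eliminated last, every term carrying $P_{g,k}$ passes through $\tilde S_{k,gh}^{-1}$, which depends on $h$. For example, $(\tilde P_{ghk})_{g,k}=(M_{ghk,ghk}^{-1})_{g,k}$ contains $M_{g,g}^{-1}P_{g,k}\tilde S_{k,gh}^{-1}$, and $(\tilde P_{ghk})_{g,g}$ contains $M_{g,g}^{-1}P_{g,k}\tilde S_{k,gh}^{-1}P_{k,g}M_{g,g}^{-1}$. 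Your expansion therefore produces no ``$(g,k)$-only'' term at all. Your attribution rule then sends pieces of size $\|P_{g,k}\|_{op}$ and $\|P_{g,k}\|_{op}^2$ into $\Xi_{8,ghk}$ and $\Xi_{4,ghk}$. But the stated bounds for those vanish at triples with $P_{g,h}=P_{h,k}=0\neq P_{g,k}$, so the claimed bounds fail.

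The missing step is to expand $\tilde S_{k,gh}^{-1}$ around $S_{k,g}^{-1}$. With $F_{hk,g}=M_{h,k}-M_{h,g}M_{g,g}^{-1}M_{g,k}$ one has $\tilde S_{k,gh}=S_{k,g}-F_{hk,g}'S_{h,g}^{-1}F_{hk,g}$, hence
\begin{equation*}
\tilde S_{k,gh}^{-1}-S_{k,g}^{-1}=\tilde S_{k,gh}^{-1}F_{hk,g}'S_{h,g}^{-1}F_{hk,g}S_{k,g}^{-1},\qquad \|F_{hk,g}\|_{op}\lesssim\|P_{h,k}\|_{op}+\|P_{g,h}\|_{op}\|P_{g,k}\|_{op}.
\end{equation*}
This lets you set $\Xi_{7,gk}=(\tilde P_{gk})_{g,k}=M_{g,g}^{-1}P_{g,k}S_{k,g}^{-1}$ and $\Xi_{3,gk}=(\tilde P_{gk})_{g,g}-\Xi_{1,g}=M_{g,g}^{-1}P_{g,k}S_{k,g}^{-1}P_{k,g}M_{g,g}^{-1}$. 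The remainders then meet the $\Xi_4$ and $\Xi_8$ rates once you use $\|P_{a,b}\|_{op}\le(\|P_{a,a}\|_{op}\|P_{b,b}\|_{op})^{1/2}\le\lambda_n\le1$. The paper does the equivalent thing by inverting $S_{hk,g}$ a second time with $k$ eliminated before $h$, i.e.\ through $\tilde S_{h,gk}$. In your ordering $\Xi_5$ and $\Xi_6$ come out directly, so this one identity is all that is missing.
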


\subsection{Bound for \texorpdfstring{$R_1$}{R1}}
By \Cref{lem:P_l3o_1}.6, we have
\begin{align*}
    R_1 & \leq u_n \sum_{l \in [G]} \left\Vert \sum_{g, h, k \in [-l]^3} \left(\Pi_h' B_{h,g} \Gamma_g \right) \Pi_k ' B_{k,g} P_{g,l,-ghk} \right\Vert_2^2 \\
    & \lesssim  u_n \sum_{ \substack{g,g', h,h', k,k' \in [G]^6}    }  \left(\Pi_h' B_{h,g} \Gamma_g \right) \Pi_k ' B_{k,g}  \left( \sum_{l \in [-gg'hh'kk']} P_{g,l,-ghk}P_{l,g',-g'h'k'} \right) B_{k',g'}' \Pi_{k'} \left(\Pi_{h'}' B_{h',g'} \Gamma_{g'} \right) \\
    & \lesssim \underbrace{ \left\vert u_n \sum_{ \substack{g,g', h,h', k,k' \in [G]^6}    }  \left(\Pi_h' B_{h,g} \Gamma_g \right) \Pi_k ' B_{k,g}  \left(P_{g,g'}\right) B_{k',g'}' \Pi_{k'} \left(\Pi_{h'}' B_{h',g'} \Gamma_{g'} \right) \right\vert }_{R_{1,1}} \\
    & + \underbrace{ \left\vert u_n \sum_{ \substack{g,g', h,h', k,k' \in [G]^6 ,\\ s \in (g,h,k),s' \in (g',h',k')  }    }  \left(\Pi_h' B_{h,g} \Gamma_g \right) \Pi_k ' B_{k,g}  \left( (\tilde  P_{ghk})_{g,s} P_{s,s'} (\tilde  P_{g'h'k'})_{s',g'} \right) B_{k',g'}' \Pi_{k'} \left(\Pi_{h'}' B_{h',g'} \Gamma_{g'} \right) \right\vert }_{R_{1,2}} \\
    & + \underbrace{ \left\vert u_n \sum_{ \substack{g,g', h,h', k,k' \in [G]^6 }    }  \left(\Pi_h' B_{h,g} \Gamma_g \right) \Pi_k ' B_{k,g}  \left( \sum_{l \in (g,h,k) \cap (g',h',k')} P_{g,l,-ghk} P_{l,g',-g'h'k'} \right) B_{k',g'}' \Pi_{k'} \left(\Pi_{h'}' B_{h',g'} \Gamma_{g'} \right) \right\vert }_{R_{1,3}},
\end{align*}
where
\begin{align*}
    R_{1,1} & =  \left\vert u_n \sum_{ g,g', h,h' \in [G]^4   }  \left(\Pi_h' B_{h,g} \Gamma_g \right) H_g'  \left(P_{g,g'}\right) H_{g'} \left(\Pi_{h'}' B_{h',g'} \Gamma_{g'} \right) \right\vert \\
    & \leq \sum_{g \in [G]} u_n  \left\Vert \Gamma_g'  H_{g} H_g' \right\Vert_2^2 \\
    & \lesssim u_n n_G \zeta_{H,n} \mu_n^2  = o ((\mu_n^2 + \kappa_n + \tilde \mu_n^2)^2) = o(\omega_n^4).
\end{align*}
To bound $R_{1,2}$, we only need to show
for $s \in  \{g,h,k\}$
\begin{align}
& u_n    \sum_{s \in [G]} \left\Vert \sum_{ (g,h,k)/\{s\} \in [G]^2}\left((\tilde P_{ghk})_{s,g}  B_{k,g}' \Pi_{k}\right)  \left(\Pi_{h}' B_{h,g} \Gamma_{g} \right)   \right\Vert^2_2 = o(\omega_n^4). \label{eq:L3O_10_2}
\end{align}

To see \cref{eq:L3O_10_2}, we first consider the case with $s =g$. By \Cref{lem:P_l3o_1}, we have
\begin{align*}
&  u_n \sum_{g \in [G]} \left\Vert \sum_{ h,k \in [G]^2}\left((\tilde P_{ghk})_{g,g}  B_{k,g}' \Pi_{k}\right)  \left(\Pi_{h}' B_{h,g}  \Gamma_g\right)   \right\Vert_2^2\\
& \lesssim \underbrace{ u_n \sum_{g \in [G]} \left\Vert \sum_{ h,k \in [G]^2}\left(\Xi_{1,g}  B_{k,g}' \Pi_{k}\right)  \left(\Pi_{h}' B_{h,g}  \Gamma_g\right)   \right\Vert_2^2 }_{R_{1,2,1}}\\
 & + \underbrace{  u_n \sum_{g \in [G]} \left\Vert \sum_{ h,k \in [G]^2, h \neq k}\left(\Xi_{2,gh}  B_{k,g}' \Pi_{k}\right)  \left(\Pi_{h}' B_{h,g}  \Gamma_g\right)   \right\Vert_2^2 }_{R_{1,2,2}} \\
 & + \underbrace{  u_n \sum_{g \in [G]} \left\Vert \sum_{ h,k \in [G]^2, h \neq k}\left(\Xi_{3,gk}  B_{k,g}' \Pi_{k}\right)  \left(\Pi_{h}' B_{h,g}  \Gamma_g\right)   \right\Vert_2^2 }_{R_{1,2,3}}\\
 & + \underbrace{  u_n \sum_{g \in [G]} \left\Vert \sum_{ h,k \in [G]^2, h \neq k}\left(\Xi_{4,ghk}  B_{k,g}' \Pi_{k}\right)  \left(\Pi_{h}' B_{h,g}  \Gamma_g\right)   \right\Vert_2^2 }_{R_{1,2,4}}\\
 & + \underbrace{  u_n \sum_{g \in [G]} \left\Vert \sum_{ h\in [G]}\left(\Xi_{9,gh}  B_{h,g}' \Pi_{h}\right)  \left(\Pi_{h}' B_{h,g}  \Gamma_g\right)   \right\Vert_2^2 }_{R_{1,2,5}},
\end{align*}

where
\begin{align*}
R_{1,2,1} & \lesssim u_n  \sum_{g \in [G]} \left\Vert \Xi_{1,g}  H_{g}H_{g}'  \Gamma_g   \right\Vert_2^2 \\
& \lesssim u_n  n_G \lambda_n \zeta_{H,n} \mu_n^2  = o ((\mu_n^2 + \kappa_n + \tilde \mu_n^2)^2) = o(\omega_n^4),
\end{align*}
\begin{align*}
R_{1,2,2} & \lesssim  u_n \sum_{g \in [G]} \left\Vert \sum_{ h \in [G]}\left(\Xi_{2,gh}  H_g\right)  \left(\Pi_{h}' B_{h,g}  \Gamma_g\right)   \right\Vert_2^2  \\
& +  u_n \sum_{g \in [G]} \left\Vert \sum_{ h \in [G]}\left(\Xi_{2,gh}  B_{h,g}' \Pi_{h}\right)  \left(\Pi_{h}' B_{h,g}  \Gamma_g\right)   \right\Vert_2^2 \\
& \lesssim u_n n_G \lambda_n^2 \sum_{g \in [G]} \left( \sum_{ h \in [G]} ||P_{g,h}||_{op} || H_g||_2  ||B_{h,g}  \Gamma_g||_2 \right)^2  \\
& +  u_n n_G^2 \lambda_n^4 \sum_{g \in [G]} \left( \sum_{ h \in [G]} ||P_{g,h}||_{op}  || B_{h,g}  \Gamma_g||_2 \right)^2 \\
& \lesssim u_n n_G  \lambda_n^2 \sum_{g \in [G]} \left( \sum_{ h \in [G]} ||P_{g,h}||_{op}^2 || H_g||_2^2 \right) \left( \sum_{h \in [G]}  ||B_{h,g}  \Gamma_g||_2^2 \right)  \\
& +  u_n n_G^2 \lambda_n^4 \sum_{g \in [G]} \left( \sum_{ h \in [G]} ||P_{g,h}||_{op}^2 \right) \left( \sum_{h \in [G]}  || B_{h,g}  \Gamma_g||_2^2 \right) \\
& \lesssim u_n n_G^2 \phi_n \lambda_n^3 \mu_n^2 + u_n n_G^3 \phi_n \lambda_n^4 \kappa_n  = o(\omega_n^4)
\end{align*}
\begin{align*}
R_{1,2,3}  = o(\omega_n^4),
\end{align*}
following the same manner as $R_{1,2,2}$.

In addition, we have
\begin{align*}
R_{1,2,4}  &  \lesssim u_n \sum_{g \in [G]} \left( \sum_{ h,k \in [G]^2} ||\Xi_{4,ghk}||_{op}  ||\Pi_{k}' B_{k,g}||_{2}  ||\Pi_{h}' B_{h,g}||_2 || \Gamma_g ||_{2}   \right)^2 \\
 & \lesssim u_n \sum_{g \in [G]} n_G \left(
     \sum_{ h,k \in [G]^2}  ||P_{g,h}||_{op}^2 ||P_{h,k}||_{op}^2  ||\Pi_{k}' B_{k,g}||_{2}  ||\Pi_{h}' B_{h,g}||_2  \right)^2 \\
 & +  u_n \sum_{g \in [G]} n_G \left(
       \sum_{ h,k \in [G]^2}  ||P_{g,k}||_{op}^2  ||P_{h,k}||_{op}^2 ||\Pi_{k}' B_{k,g}||_{2}  ||\Pi_{h}' B_{h,g}||_2
 \right) ^2 \\
 & + u_n \sum_{g \in [G]} n_G \lambda_n^2 \left( \sum_{ h,k \in [G]^2} ||P_{g,h}||_{op} ||P_{g,k}||_{op}    ||\Pi_{k}' B_{k,g}||_{2}  ||\Pi_{h}' B_{h,g}||_2     \right)^2 \\
 & \lesssim u_n \sum_{g \in [G]} n_G^2 \phi_n \lambda_n^2 \begin{pmatrix}
     \sum_{ h \in [G]}  ||P_{g,h}||_{op}^2   ||\Pi_{h}' B_{h,g}||_2
 \end{pmatrix} ^2 \\
 & +  u_n \sum_{g \in [G]} n_G^2 \phi_n \lambda_n^2 \begin{pmatrix}
     \sum_{ k \in [G]}  ||P_{g,k}||_{op}^2  ||\Pi_{k}' B_{k,g}||_{2}
 \end{pmatrix} ^2 \\
 & + u_n \sum_{g \in [G]} n_G \lambda_n^2 \left(\sum_{h \in [G]} ||P_{g,h}||_{op}^2 \right)^2 \left(\sum_{h \in [G]} ||\Pi_{h}' B_{h,g}||_{2}^2 \right)^2 \\
 & \lesssim u_n n_G^3 \phi_n^2 \lambda_n^2 \kappa_n + u_n n_G^3 \phi_n^3 \lambda_n^2 \kappa_n = o ((\mu_n^2 + \kappa_n + \tilde \mu_n^2)^2) = o(\omega_n^4),
\end{align*}
where the last inequality is by \Cref{lem:nablaB} and
\begin{equation}\label{eq:phi_B}
\sum_{h \in [G]} ||\Pi_{h}' B_{h,g}||_{2}^2 \leq n_G \sum_{h \in [G]} ||B_{h,g}||_{op}^2 \lesssim n_G (\phi_n + \lambda_n^2).
\end{equation}

Last, we have
\begin{align*}
R_{1,2,5}  & \lesssim u_n  n_G \sum_{g \in [G]}  \left( \sum_{h \in [G]} ||P_{g,h}||_{op}^2 || B_{h,g}' \Pi_{h}||_2^2 \right)^2 \\
& \lesssim  u_n n_G^2 \phi_n \lambda_n^2 \left( \sum_{g,h \in [G]^2} ||P_{g,h}||_{op}^2 || B_{h,g}' \Pi_{h}||_2^2 \right) \\
& \lesssim  u_n n_G^3 \phi_n \lambda_n^4 \kappa_n = o ((\mu_n^2 + \kappa_n + \tilde \mu_n^2)^2) = o(\omega_n^4).
\end{align*}
This verifies \cref{eq:L3O_10_2} with $s = g$.

Next, we consider \cref{eq:L3O_10_2} with $s = k$. By \Cref{lem:P_l3o_1}, we have
\begin{align*}
 & u_n    \sum_{k \in [G]} \left\Vert \sum_{ g,h \in [G]^2}\left((\tilde P_{ghk})_{k,g}  B_{k,g}' \Pi_{k}\right)  \left(\Pi_{h}' B_{h,g}  \Gamma_g\right)   \right\Vert^2_2 \\
 & \lesssim \underbrace{ u_n    \sum_{k \in [G]} \left\Vert \sum_{ g,h \in [G]^2, h \neq k}\left(\Xi_{7,gk}'  B_{k,g}' \Pi_{k}\right)  \left(\Pi_{h}' B_{h,g}  \Gamma_g\right)   \right\Vert^2_2}_{R_{1,2,6}} \\
 & + \underbrace{ u_n   \sum_{k \in [G]} \left\Vert \sum_{ g,h \in [G]^2, h \neq k}\left(\Xi_{8,ghk}'  B_{k,g}' \Pi_{k}\right)  \left(\Pi_{h}' B_{h,g}  \Gamma_g\right)   \right\Vert^2_2}_{R_{1,2,7}}  \\
 & + \underbrace{ u_n    \sum_{k \in [G]} \left\Vert \sum_{ g \in [G]}\left((\tilde P_{gk})_{k,g}  B_{k,g}' \Pi_{k}\right)  \left(\Pi_{k}' B_{k,g}  \Gamma_g\right)   \right\Vert^2_2}_{R_{1,2,8}},
\end{align*}
where
\begin{align*}
R_{1,2,6} & = u_n    \sum_{k \in [G]} \left\Vert \sum_{ g \in [G]}\left(\Xi_{7,gk}'  B_{k,g}' \Pi_{k}\right)  \left(H_{g} - B_{k,g}' \Pi_{k}\right)'  \Gamma_g   \right\Vert^2_2    \\
& \lesssim u_n    \sum_{k \in [G]} \left\Vert \sum_{ g \in [G]}\left(\Xi_{7,gk}'  B_{k,g}' \Pi_{k}\right)  H_{g}'  \Gamma_g   \right\Vert^2_2 \\
& + u_n    \sum_{k \in [G]} \left(  \sum_{ g \in [G]}||P_{g,k}||_{op}  ||B_{k,g}' \Pi_{k}||_{2} ||B_{k,g}' \Pi_{k}||_2 || \Gamma_g ||_2   \right)^2 \\
& \lesssim u_n   n_G \sum_{k \in [G]} \left( \sum_{ g \in [G]} ||P_{g,k}||_{op}^2 || H_{g}||_2\right) \left( \sum_{g \in [G]}  ||B_{k,g}' \Pi_{k}||_2^2     \right) \\
& + u_n n_G^2 \lambda_n^2  \sum_{k \in [G]} \left(  \sum_{ g \in [G]}||P_{g,k}||_{op}^2 \right) \left( \sum_{g \in [G]}  ||B_{k,g}' \Pi_{k}||_{2}^2   \right) \\
& \lesssim u_n   n_G^2 \phi_n \lambda_n \mu_n^2  + u_n n_G^3 \phi_n \lambda_n^2   \kappa_n = o ((\mu_n^2 + \kappa_n + \tilde \mu_n^2)^2) = o(\omega_n^4).
\end{align*}


In addition, for $R_{1,2,7}$, we have
\begin{align*}
R_{1,2,7} & \lesssim u_n \sum_{k \in [G]} \left ( \sum_{ g,h \in [G]^2} ||P_{g,h}||_{op}^2 ||P_{g,k}||_{op} ||B_{k,g}' \Pi_{k}||_2  \left\vert \Pi_{h}' B_{h,g} \Gamma_g \right\vert   \right)^2\\
& + u_n \sum_{k \in [G]} \left ( \sum_{ g,h \in [G]^2} ||P_{h,k}||_{op} ||P_{g,h}||_{op} ||B_{k,g}' \Pi_{k}||_2  \left\vert \Pi_{h}' B_{h,g} \Gamma_g \right\vert    \right)^2\\
& + u_n \sum_{k \in [G]} \left ( \sum_{ g,h \in [G]^2} ||P_{g,h}||_{op} ||P_{g,k}||_{op}  ||P_{g,k}||_{op}  ||B_{k,g}' \Pi_{k}||_2  \left\vert \Pi_{h}' B_{h,g} \Gamma_g \right\vert    \right)^2\\
& + u_n \sum_{k \in [G]} \left ( \sum_{ g,h \in [G]^2} ||P_{h,k}||_{op} ||P_{g,k}||_{op}  ||B_{k,g}' \Pi_{k}||_2  \left\vert \Pi_{h}' B_{h,g} \Gamma_g \right\vert    \right)^2\\
& \lesssim u_n n_G^2 \phi_n^2 \lambda_n^2  \sum_{k \in [G]} \left ( \sum_{ g \in [G]}  ||P_{g,k}||_{op}  ||B_{k,g}' \Pi_{k}||_2   \right)^2\\
& + u_n \sum_{k \in [G]} \left ( \sum_{ g,h \in [G]^2} ||P_{h,k}||_{op}^2 ||P_{g,h}||_{op}^2\right)  \left( \sum_{ g,h \in [G]^2}||B_{k,g}' \Pi_{k}||_2^2  \left( \Pi_{h}' B_{h,g} \Gamma_g \right)  ^2 \right)\\
& + u_n \lambda_n^2 \sum_{k \in [G]} \left ( \sum_{ g,h \in [G]^2} ||P_{g,k}||_{op}^2 ||P_{g,h}||_{op}^2\right)  \left( \sum_{ g,h \in [G]^2}||B_{k,g}' \Pi_{k}||_2^2  \left( \Pi_{h}' B_{h,g} \Gamma_g \right)  ^2 \right)\\
& + u_n \sum_{k \in [G]} \left ( \sum_{ g,h \in [G]^2} ||P_{h,k}||_{op}^2 ||P_{g,k}||_{op}^2\right)  \left( \sum_{ g,h \in [G]^2}||B_{k,g}' \Pi_{k}||_2^2  \left( \Pi_{h}' B_{h,g} \Gamma_g \right)  ^2 \right)\\
& \lesssim  u_n n_G^3  \phi_n^3 \lambda_n^2 \kappa_n + u_n n_G^3  \phi_n^2 \lambda_n \kappa_n = o ((\mu_n^2 + \kappa_n + \tilde \mu_n^2)^2) = o(\omega_n^4),
\end{align*}
where the last two inequalities are by \cref{eq:phi_B}.

Last, we have
\begin{align*}
R_{1,2,8} &  \lesssim u_n  \lambda_n^2 n_G  \sum_{k \in [G]} \left( \sum_{ g \in [G]}||B_{k,g}' \Pi_{k}||_2^2 \right)^2  \\
& \lesssim u_n n_G^3  \lambda_n^3   \kappa_n = o ((\mu_n^2 + \kappa_n + \tilde \mu_n^2)^2) = o(\omega_n^4),
\end{align*}
which, combined with previous bounds for $R_{1,2,6}$ and $R_{1,2,7}$, further implies \cref{eq:L3O_10_2} holds with $s = k$. We can show \cref{eq:L3O_10_2} with $s = h$ in the same manner as above. This implies
\begin{align*}
    R_{1,2} = o(\omega_n^4).
\end{align*}

Next, we turn to $R_{1,3}$. Denote
\begin{align*}
f_{gg'hh'kk'}(l) = u_n
& \left( B_{k',g'}' \Pi_{k'} \Pi_{h'}' B_{h',g'}  \Gamma_{g'} \right)'\left( P_{g',l,-g'h'k'} P_{l,g,-ghk} \right) \left( B_{k,g}' \Pi_{k}\Pi_{h}' B_{h,g}  \Gamma_g\right),
\end{align*}
such that $f_{gg'hh'kk'}(l) = f_{g'gh'hk'k}(l)$.

Then, we have
\begin{align*}
 R_{1,3}   & = \left\vert \sum_{g, g',h, h', k, k' \in [G]^6}   \sum_{l \in (g,h,k) \cap (g',h',k')} f_{gg'hh'kk'}(l) \right\vert \\
    & =   \left\vert \sum_{g, g',h, h', k, k' \in [G]^6}   \sum_{s \in \{g,h,k\}}  \sum_{l \in (g,h,k) \cap (g',h',k')} f_{gg'hh'kk'}(l) 1\{l = s\} \right\vert \\
    & \lesssim  \left\vert \sum_{g, g',h, h', k, k' \in [G]^6}   \sum_{g \in (g',h',k')} f_{gg'hh'kk'}(g) \right\vert+  \left\vert \sum_{g, g',h, h', k, k' \in [G]^6}   \sum_{h \in (g',h',k')} f_{gg'hh'kk'}(h)  \right\vert \\
   & +  \left\vert \sum_{g, g',h, h', k, k' \in [G]^6}   \sum_{k \in (g',h',k')} f_{gg'hh'kk'}(k)  \right\vert \\
   & \lesssim  \left\vert \sum_{g, g',h, h', k, k' \in [G]^6}   \sum_{g \in (g',h',k')} f_{gg'hh'kk'}(g) \right\vert+  \left\vert \sum_{g, g',h, h', k, k' \in [G]^6}   \sum_{h \in [h'k']} f_{gg'hh'kk'}(h)  \right\vert \\
   & +  \left\vert \sum_{g, g',h, h', k, k' \in [G]^6}   \sum_{k =  k'} f_{gg'hh'kk'}(k)  \right\vert \\
   & \lesssim \underbrace{ \left\vert \sum_{g, g',h, h', k, k' \in [G]^6}   1\{g = g'\} f_{gg'hh'kk'}(g) \right\vert }_{R_{1,3,1}} + \underbrace{ \left\vert \sum_{g, g',h, h', k, k' \in [G]^6}   1\{g = h'\} f_{gg'hh'kk'}(g) \right\vert }_{R_{1,3,2}}\\
   & + \underbrace{ \left\vert \sum_{g, g',h, h', k, k' \in [G]^6}   1\{g = k'\} f_{gg'hh'kk'}(g) \right\vert }_{R_{1,3,3}} + \underbrace{  \left\vert \sum_{g, g',h, h', k, k' \in [G]^6}   1\{g = h' = k'\} f_{gg'hh'kk'}(g) \right\vert }_{R_{1,3,4}} \\
   & + \underbrace{  \left\vert \sum_{g, g',h, h', k, k' \in [G]^6}   1\{h = h'\} f_{gg'hh'kk'}(h)  \right\vert }_{R_{1,3,5}} + \underbrace{  \left\vert \sum_{g, g',h, h', k, k' \in [G]^6}   1\{h = k'\} f_{gg'hh'kk'}(h)  \right\vert }_{R_{1,3,6}} \\
   & + \underbrace{  \left\vert \sum_{g, g',h, h', k, k' \in [G]^6}   1\{h = h' = k'\} f_{gg'hh'kk'}(h)  \right\vert}_{R_{1,3,7}} + \underbrace{  \left\vert \sum_{g, g',h, h', k, k' \in [G]^6}   1\{k =  k'\} f_{gg'hh'kk'}(k)  \right\vert }_{R_{1,3,8}},
\end{align*}
where the second inequality is by the triangle inequality and the symmetry of $f_{gg'hh'kk'}(\cdot)$ w.r.t.\ indexes $(g,h,k)$ and $(g',h',k')$ and the last inequality is by the fact that $f_{gg'hh'kk'}(\cdot) = 0 $ if $g' = h'$ or $g' = k'$.

Next, we have
\begin{align*}
R_{1,3,1} & = \left\vert \sum_{g,h, h', k, k' \in [G]^5}    u_n  \left( B_{k',g}' \Pi_{k'} \Pi_{h'}' B_{h',g}  \Gamma_g \right)' \left((\tilde P_{gh'k'})_{g,g} (\tilde P_{ghk})_{g,g} \right)  \left( B_{k,g}' \Pi_{k}\Pi_{h}' B_{h,g}  \Gamma_g\right) \right\vert \\
& = u_n \sum_{g \in [G]} \left\Vert \sum_{h,k \in [G]^2} (\tilde P_{ghk})_{g,g}\left( B_{k,g}' \Pi_{k}\Pi_{h}' B_{h,g}  \Gamma_g\right)  \right\Vert_2^2 \\
& =o\left( \omega_n^4\right),
\end{align*}
where the last equality is due to \cref{eq:L3O_10_2} with $s = g$.

Similarly, by \cref{eq:L3O_10_2} with $s = g$ and $s = h$, we have
\begin{align*}
R_{1,3,2} & = \left\vert \sum_{g,g',h, k, k' \in [G]^5}   u_n  \left( B_{k',g'}' \Pi_{k'} \Pi_{g}' B_{g,g'}  \Gamma_{g'} \right)' \left((\tilde P_{g'gk'})_{g',g} (\tilde P_{ghk})_{g,g}  \right)  \left( B_{k,g}' \Pi_{k}\Pi_{h}' B_{h,g}  \Gamma_g\right) \right\vert \\
& = u_n \sum_{g \in [G]} \left\Vert \sum_{g',k' \in [G]^2} (\tilde P_{g'gk'})_{g,g'}\left( B_{k',g'}' \Pi_{k'}\Pi_{g}' B_{g,g'}  \Gamma_{g'}\right)  \right\Vert_2^2 \\
& + u_n \sum_{g \in [G]} \left\Vert \sum_{h,k \in [G]^2} (\tilde P_{ghk})_{g,g}\left( B_{k,g}' \Pi_{k}\Pi_{h}' B_{h,g}  \Gamma_g\right)  \right\Vert_2^2 \\
& = u_n \sum_{h \in [G]} \left\Vert \sum_{g,k \in [G]^2} (\tilde P_{ghk})_{h,g}\left( B_{k,g}' \Pi_{k}\Pi_{h}' B_{h,g}  \Gamma_g\right)  \right\Vert_2^2 \\
& + u_n \sum_{g \in [G]} \left\Vert \sum_{h,k \in [G]^2} (\tilde P_{ghk})_{g,g}\left( B_{k,g}' \Pi_{k}\Pi_{h}' B_{h,g}  \Gamma_g\right)  \right\Vert_2^2 \\
& = o\left( \omega_n^4\right).
\end{align*}

For the same reason, we can show that
\begin{align*}
R_{1,3,i}  = o\left( \omega_n^4\right), \quad i = 3,5,6,8.
\end{align*}

Next, we have
\begin{align*}
R_{1,3,4} & = \left\vert \sum_{g, g',h, k \in [G]^4}  u_n  \left( B_{g,g'}' \Pi_{g} \Pi_{g}' B_{g,g'}  \Gamma_{g'} \right)'  \left( P_{g',g,-g'g} P_{g,g,-ghk} \right)  \left( B_{k,g}' \Pi_{k}\Pi_{h}' B_{h,g}  \Gamma_g\right) \right\vert \\
& \lesssim u_n \sum_{g \in [G]} \left\Vert \sum_{g' \in [G]}\left( B_{g,g'}' \Pi_{g} \Pi_{g}' B_{g,g'}  \Gamma_{g'} \right)' P_{g',g,-g'g} \right\Vert_2^2 \\
& + u_n \sum_{g \in [G]} \left\Vert \sum_{h,k \in [G]^2} (\tilde P_{ghk})_{g,g}\left( B_{k,g}' \Pi_{k}\Pi_{h}' B_{h,g}  \Gamma_g\right)  \right\Vert_2^2 \\
& \lesssim  u_n n_G \phi_n \lambda _n^2 \sum_{g \in [G]} (\Pi_{g}' [BB']_{g,g}\Pi_{g})^2 + o\left((\mu_n^2 + \kappa_n)^2\right)\\
& \lesssim  u_n n_G^3 \phi_n \lambda _n^3  \kappa_n + o\left(\omega_n^4\right)\\
& = o\left(\omega_n^4\right).
\end{align*}
Similarly, we can show $R_{1,3,7} = o\left(\omega_n^4\right)$, which concludes the proof.  This implies $R_{1,3} = o\left(\omega_n^4\right)$, which, combined with the bounds for $R_{1,1}$ and $R_{1,2}$, leads to the desired result that $R_1 = o\left(\omega_n^4\right)$.

\subsection{Bound for \texorpdfstring{$R_2$}{R2}}
We have
\begin{align*}
    R_2 & =  \mathbb V \left(\sum_{ k \in [G], l \in [-k]}  \sum_{g,h \in [-l]^2}\left(\Pi_h' B_{h,g} \Gamma_g \right) \left(V_k ' B_{k,g} P_{g,l,-ghk} U_l \right)  \right) \\
    & \lesssim u_n^2 \sum_{ k \in [G], l \in [-k]} \sum_{g,g',h,h' \in [-l]^4} \left(\Pi_h' B_{h,g} \Gamma_g \right)\left(\Pi_{h'}' B_{h',g'} \Gamma_{g'} \right) tr\left(  B_{k,g} P_{g,l,-ghk}P_{l,g',-g'h'k} B_{k,g'}'   \right)\\
    & = u_n^2 \sum_{g,g',h,h',k \in [G]^5}   \left(\Pi_h' B_{h,g} \Gamma_g \right)\left(\Pi_{h'}' B_{h',g'} \Gamma_{g'} \right) tr\left(  B_{k,g} \left( \sum_{ l \in [-gg'hh'k]} P_{g,l,-ghk}P_{l,g',-g'h'k} \right) B_{k,g'}'   \right).
\end{align*}
In addition, by \Cref{lem:P_l3o_1}.6, we have
\begin{align*}
\sum_{l \in [-gg'hh'k]} P_{g,l,-ghk} P_{l,g',-g'h'k} & = P_{g,g'} - (\tilde  P_{ghk})_{g,ghk} P_{ghk,g'h'k} (\tilde  P_{g'h'k})_{g'h'k,g'} \\
& + \sum_{l \in (g,h,k) \cap (g',h',k)} P_{g,l,-ghk} P_{l,g',-g'h'k},
\end{align*}
which implies
\begin{align*}
R_2 & \lesssim \underbrace{ \left\vert u_n^2 \sum_{g,g',h,h',k \in [G]^5}   \left(\Pi_h' B_{h,g} \Gamma_g \right)\left(\Pi_{h'}' B_{h',g'} \Gamma_{g'} \right) tr\left(  B_{k,g}  P_{g,g'} B_{k,g'}'   \right) \right\vert }_{R_{2,1}} \\
& + \underbrace{ \left\vert  u_n^2 \sum_{g,g',h,h',k \in [G]^5}   \left(\Pi_h' B_{h,g} \Gamma_g \right)\left(\Pi_{h'}' B_{h',g'} \Gamma_{g'} \right) tr\left(  B_{k,g} \left( (\tilde  P_{ghk})_{g,ghk} P_{ghk,g'h'k} (\tilde  P_{g'h'k})_{g'h'k,g'}  \right) B_{k,g'}'   \right) \right\vert }_{R_{2,2}} \\
& + \underbrace{ \left\vert u_n^2 \sum_{g,g',h,h',k \in [G]^5}   \left(\Pi_h' B_{h,g} \Gamma_g \right)\left(\Pi_{h'}' B_{h',g'} \Gamma_{g'} \right) tr\left(  B_{k,g} \left( \sum_{l \in (g,h,k) \cap (g',h',k)} P_{g,l,-ghk} P_{l,g',-g'h'k}  \right) B_{k,g'}'   \right) \right\vert }_{R_{2,3}}.
\end{align*}


In addition, we have
\begin{align*}
    R_{2,1}  & \lesssim  \left\vert u_n^2 \sum_{g,g',k \in [G]^3}   \left(H_g' \Gamma_g \right)\left(H_{g'}' \Gamma_{g'} \right) tr\left(  B_{k,g}  P_{g,g'} B_{k,g'}'   \right) \right\vert \\
    & \lesssim u_n^2 \sum_{g,k \in [G]^2}  (H_g' \Gamma_g ) ^2 ||  B_{k,g}||_F^2 \\
& \lesssim u_n^2 n_G \zeta_{H,n} \kappa_n = o ((\mu_n^2 + \kappa_n + \tilde \mu_n^2)^2) = o(\omega_n^4).
\end{align*}

In addition, we have
\begin{align}\label{eq:R22}
R_{2,2} = \left\vert  u_n^2 \sum_{g,g',h,h',k \in [G]^5} \sum_{\substack{s \in (g,h,k) \\ s' \in (g',h',k)}} \left(\Pi_h' B_{h,g} \Gamma_g \right)\left(\Pi_{h'}' B_{h',g'} \Gamma_{g'} \right) tr\left(  B_{k,g} \left( (\tilde  P_{ghk})_{g,s} P_{s,s'} (\tilde  P_{g'h'k})_{s',g'}  \right) B_{k,g'}'   \right) \right\vert.
\end{align}
We aim to bound $R_{2,2}$ for cases $(s,s') = (g,g')$, $(s,s') = (g,h')$, $(s,s') = (g,k)$, $(s,s') = (h,k)$, and $(s,s') = (k,k)$. The other cases can be bounded in the same manner.

\textbf{ For the case $(s,s')=(g,g')$}, we denote
\begin{align*}
    \mathcal A_{k,\bullet} = (\mathcal A_{k,1}, \dotsc, \mathcal A_{k,G}) \quad \text{and} \quad \mathcal A_{k,g} = \sum_{h \in [G]} \left(\Pi_h' B_{h,g} \Gamma_g \right)  B_{k,g}  (\tilde  P_{ghk})_{g,g}.
\end{align*}
Then, we have
\begin{align*}
& \left\vert  u_n^2 \sum_{g,g',h,h',k \in [G]^5}  \left(\Pi_h' B_{h,g} \Gamma_g \right)\left(\Pi_{h'}' B_{h',g'} \Gamma_{g'} \right) tr\left(  B_{k,g} \left( (\tilde  P_{ghk})_{g,g} P_{g,g'} (\tilde  P_{g'h'k})_{g',g'}  \right) B_{k,g'}'   \right) \right\vert \\
& = \sum_{k \in [G]} u_n^2 tr(\mathcal A_{k,\bullet}  P \mathcal A_{k,\bullet} ' ) \\
& \lesssim \sum_{k \in [G]} u_n^2 ||\mathcal A_{k,\bullet}||_F^2 \\
& = \sum_{g,k \in [G]^2} u_n^2 \left\Vert  \sum_{h \in [G]} \left(\Pi_h' B_{h,g} \Gamma_g \right)  B_{k,g}  (\tilde  P_{ghk})_{g,g} \right\Vert_F^2\\
& \lesssim  \underbrace{ \sum_{g,k \in [G]^2} u_n^2 \left\Vert  \sum_{h \in [G]} \left(\Pi_h' B_{h,g} \Gamma_g \right)  B_{k,g}  \Xi_{1,g} \right\Vert_F^2 }_{R_{2,2,1} }+ \underbrace{ \sum_{g,k \in [G]^2} u_n^2 \left\Vert  \sum_{h \in [-k]} \left(\Pi_h' B_{h,g} \Gamma_g \right)  B_{k,g}  \Xi_{3,gk} \right\Vert_F^2  }_{R_{2,2,2} }\\
& + \underbrace{ \sum_{g,k \in [G]^2} u_n^2 \left\Vert  \sum_{h \in [-k]} \left(\Pi_h' B_{h,g} \Gamma_g \right)  B_{k,g}  (\Xi_{2,gh} + \Xi_{4,ghk}) \right\Vert_F^2  }_{R_{2,2,3} }+ \underbrace{ \sum_{g,h \in [G]^2} u_n^2 \left\Vert  \left(\Pi_h' B_{h,g} \Gamma_g \right)  B_{h,g}  \Xi_{9,gh} \right\Vert_F^2 }_{R_{2,2,4} },
\end{align*}
where
\begin{align*}
R_{2,2,1} = \sum_{g,k \in [G]^2} u_n^2 (H_g' \Gamma_g)^2 ||  B_{k,g}  \Xi_{1,g} ||_F^2 \lesssim u_n^2 n_G  \lambda_n^2 \kappa_n = o ((\mu_n^2 + \kappa_n + \tilde \mu_n^2)^2) = o(\omega_n^4),
\end{align*}
\begin{align*}
R_{2,2,2} & =    \sum_{g,k \in [G]^2} u_n^2( (H_g - B_{k,g}'\Pi_k)' \Gamma_g)^2 ||  B_{k,g}  \Xi_{3,gk} ||_F^2 \\
& \lesssim o(\omega_n^4) +  \sum_{g,k \in [G]^2} u_n^2 (\Pi_k'B_{k,g} \Gamma_g)^2 ||  B_{k,g}  \Xi_{3,gk} ||_F^2 \\
& \lesssim o(\omega_n^4) +  \sum_{g,k \in [G]^2} u_n^2 (\Pi_k'B_{k,g} \Gamma_g)^2 ||  B_{k,g}  \Xi_{3,gk} ||_F^2 \\
& \lesssim o(\omega_n^4) +  u_n^2 n_G^2 \lambda_n^6 \kappa_n = o(\omega_n^4),
\end{align*}
\begin{align*}
R_{2,2,3} & \lesssim \sum_{g,k \in [G]^2} u_n^2 \left( \sum_{h \in [G]} ||\Pi_h' B_{h,g}||_2 ||\Gamma_g||_2  || B_{k,g} ||_F ||\Xi_{2,gh} + \Xi_{4,ghk}||_{op} \right)^2 \\
& \lesssim u_n^2  n_G^2 \lambda_n^2 \sum_{g,k\in [G]^2} \left(  \sum_{h \in [G]} ||B_{h,g}||_{op}  ||P_{g,h}||_{op} \right)^2  ||B_{k,g}||_F^2 \\
& \lesssim u_n^2  n_G^2 \lambda_n^2 \sum_{g,k\in [G]^2} \left(  \sum_{h \in [G]} ||B_{h,g}||_{op}^2 \right) \left(\sum_{h \in [G]} ||P_{g,h}||_{op}^2 \right)  ||B_{k,g}||_F^2 \\
& \lesssim u_n^2  n_G^2 \lambda_n^2 (\phi_n + \lambda_n^2) \phi_n \kappa_n = o(\omega_n^4),
\end{align*}
and
\begin{align*}
R_{2,2,4} \lesssim u_n^2 n_G^2 \sum_{g,h \in [G]^2} ||B_{h,g}||_F^2 ||\Xi_{9,gh}||_{op}^2 \lesssim     u_n^2 n_G^2  \lambda_n^4 \kappa_n = o(\omega_n^4).
\end{align*}
This concludes the case of $(s,s')=(g,g')$.

\textbf{ For the case $(s,s')=(g,h')$}, we denote
\begin{align*}
    \mathcal B_{k,\bullet} = [\mathcal B_{k,1}, \dotsc, \mathcal B_{k,G}] \in \Re^{n_k \times n} \quad \text{and}\quad \mathcal B_{k,h} = \sum_{g \in [G]} \left(\Pi_h' B_{h,g} \Gamma_g \right)  B_{k,g}  (\tilde  P_{ghk})_{g,h}.
\end{align*}
Then, we have
\begin{align*}
& \left\vert  u_n^2 \sum_{g,g',h,h',k \in [G]^5}  \left(\Pi_h' B_{h,g} \Gamma_g \right)\left(\Pi_{h'}' B_{h',g'} \Gamma_{g'} \right) tr\left(  B_{k,g} \left( (\tilde  P_{ghk})_{g,g} P_{g,h'} (\tilde  P_{g'h'k})_{h',g'}  \right) B_{k,g'}'   \right) \right\vert \\
& = \left\vert u_n^2 \sum_{k \in [G]} tr \left( \mathcal A_{k,\bullet} P \mathcal B_{k,\bullet}' \right)\right\vert \\
& \leq u_n^2 \sum_{k \in [G]} \left( \left\Vert \mathcal A_{k,\bullet}\right\Vert_F^2  + \left\Vert \mathcal B_{k,\bullet} \right\Vert_F^2 \right) \\
& \lesssim u_n^2 \sum_{h,k \in [G]^2} \left\Vert \sum_{g \in [G]} \left(\Pi_h' B_{h,g} \Gamma_g \right)  B_{k,g}  (\tilde  P_{ghk})_{g,h} \right\Vert_F^2 + o(\omega_n^4) \\
& \lesssim u_n^2 \sum_{h,k \in [G]^2} \left\Vert \sum_{g \in [G]} \left(\Pi_h' B_{h,g} \Gamma_g \right)  B_{k,g}  (\tilde  P_{ghk})_{g,h} \right\Vert_F^2 + o(\omega_n^4) \\
& \lesssim u_n^2 \sum_{h,k \in [G]^2}  n_G \left( \sum_{g \in [G]} ||B_{h,g}\Gamma_g ||_{2} ||B_{k,g}||_F ( ||P_{g,h}||_{op} + ||P_{g,k}||_{op}) \right)^2 + o(\omega_n^4) \\
& \lesssim u_n^2 \sum_{h,k \in [G]^2}  n_G  \left(\sum_{g \in [G]}  ||B_{h,g}\Gamma_g ||_{2}^2 ||B_{k,g}||_F^2\right)\left(\sum_{g \in [G]}  ( ||P_{g,h}||_{op} + ||P_{g,k}||_{op}) ^2\right)  + o(\omega_n^4) \\
& \lesssim u_n^2  n_G^2 \phi_n \lambda_n \kappa_n + o(\omega_n^4) \\
& = o(\omega_n^4),
\end{align*}
where the second inequality is by $u_n^2 \sum_{k \in [G]} \left\Vert \mathcal A_{k,\bullet}\right\Vert_F^2 = o(\omega_n^4)$ as shown above and the third inequality is by
\begin{align*}
||(\tilde  P_{ghk})_{g,h}||_{op} \lesssim ||P_{g,h}||_{op} + ||P_{g,k}||_{op}
\end{align*}
as shown in \Cref{lem:P_l3o_1} for both $h \neq k$ and $h = k$.
This concludes the case of $(s,s')=(g,h')$.

\textbf{ For the case $(s,s')=(g,k)$}, we denote
\begin{align*}
    \mathcal C_{k} = \sum_{g,h \in [G]^2} \left(\Pi_h' B_{h,g} \Gamma_g \right)  B_{k,g}  (\tilde  P_{ghk})_{g,k} \in \Re^{n_k \times n_k}.
\end{align*}
Then, we have
\begin{align*}
& \left\vert  u_n^2 \sum_{g,g',h,h',k \in [G]^5}  \left(\Pi_h' B_{h,g} \Gamma_g \right)\left(\Pi_{h'}' B_{h',g'} \Gamma_{g'} \right) tr\left(  B_{k,g} \left( (\tilde  P_{ghk})_{g,g} P_{g,k} (\tilde  P_{g'h'k})_{k,g'}  \right) B_{k,g'}'   \right) \right\vert \\
& = \left\vert u_n^2 \sum_{g,k \in [G]^2} tr \left( \mathcal A_{k,g} P_{g,k} \mathcal C_{k}' \right)\right\vert \\
& \lesssim u_n^2 \sum_{k \in [G]} \left(||\mathcal C_k||_F^2 + || \sum_{g \in [G]}\mathcal A_{k,g} P_{g,k} ||_{F}^2 \right) \\
& \lesssim u_n^2 \sum_{k \in [G]^2} ||\mathcal C_k||_F^2 + o(\omega_n^4) \\
& \lesssim u_n^2 \sum_{k \in [G]} \left\Vert \sum_{g,h \in [G]^2} \left(\Pi_h' B_{h,g} \Gamma_g \right)  B_{k,g}  (\tilde  P_{ghk})_{g,k} \right\Vert_F^2 + o(\omega_n^4)\\
& \lesssim \underbrace{ u_n^2 \sum_{k \in [G]} \left\Vert \sum_{g,h \in [G]^2, h \neq k} \left(\Pi_h' B_{h,g} \Gamma_g \right)  B_{k,g}  \Xi_{7,gk}\right\Vert_F^2 }_{R_{2,2,5}} + \underbrace{  u_n^2 \sum_{k \in [G]} \left\Vert \sum_{g,h \in [G]^2, h \neq k} \left(\Pi_h' B_{h,g} \Gamma_g \right)  B_{k,g}  \Xi_{8,ghk}\right\Vert_F^2 }_{R_{2,2,6}} \\
& + \underbrace{  u_n^2 \sum_{h \in [G]} \left\Vert \sum_{g \in [G]} \left(\Pi_h' B_{h,g} \Gamma_g \right)  B_{h,g} (\tilde P_{gh})_{g,h}  \right\Vert_F^2 }_{R_{2,2,7}} + o(\omega_n^4),
\end{align*}
where the second inequality is by
\begin{align*}
u_n^2 \sum_{k \in [G]} || \sum_{g \in [G]}\mathcal A_{k,g} P_{g,k} ||_{F}^2 &  =  u_n^2 \sum_{k \in [G]} tr(\mathcal A_{k,\bullet} P_{\bullet,k}P_{k,\bullet} \mathcal A_{k,\bullet}') \\
& \lesssim  u_n^2 \sum_{k \in [G]} tr(\mathcal A_{k,\bullet} \mathcal A_{k,\bullet}') = o(\omega_n^4).
\end{align*}

In addition, we have
\begin{align*}
R_{2,2,5} & \lesssim     u_n^2  \sum_{k \in [G]} \left\Vert \sum_{g \in [G]} \left( (H_g - B_{k,g}'\Pi_k)' \Gamma_g \right)  B_{k,g}  \Xi_{7,gk}\right\Vert_F^2 \\
& \lesssim    u_n^2 \sum_{k \in [G]} \left\Vert \sum_{g \in [G]} \left( H_g' \Gamma_g \right)  B_{k,g}  \Xi_{7,gk}\right\Vert_F^2  \\
& + u_n^2 n_G \sum_{k \in [G]} \left( \sum_{g \in [G]} ||B_{k,g}'\Pi_k||_2 || B_{k,g}||_F  ||P_{g,k}||_{op} \right)^2 \\
& \lesssim   u_n^2  \sum_{k \in [G]} \left( \sum_{g \in [G]} \left\vert H_g' \Gamma_g \right\vert  ||B_{k,g}||_F  ||P_{g,k}||_{op}\right)^2  \\
& + u_n^2 n_G \lambda_n^2 \sum_{k \in [G]} \left( \sum_{g \in [G]} ||B_{k,g}'\Pi_k||_2^2 \right) \left( \sum_{g \in [G]}|| B_{k,g}||_F^2 \right) \\
& \lesssim   u_n^2 \sum_{k \in [G]} \left( \sum_{g \in [G]} \left\vert H_g' \Gamma_g \right\vert^2  ||B_{k,g}||_F^2 \right) \left( \sum_{g \in [G]}  ||P_{g,k}||_{op}^2\right) + u_n^2 n_G^2 \lambda_n^3 \kappa_n \\
& \lesssim u_n^2 n_G \phi_n \zeta_{H,n} \kappa_n + u_n^2 n_G^2 \lambda_n^3 \kappa_n =o(\omega_n^4)
\end{align*}

\begin{align*}
R_{2,2,6} & \lesssim  u_n^2  \sum_{k \in [G]} \left( \sum_{g,h \in [G]^2} || B_{h,g}\Gamma_g||_2 ||\Pi_h||_2   ||B_{k,g}||_F  ||\Xi_{8,ghk}||_{op}\right)^2    \\
& \lesssim  u_n^2 n_G   \sum_{k \in [G]} \left( \sum_{g,h \in [G]^2} ||B_{h,g} \Gamma_g||_{2}  ||B_{k,g}||_F \begin{pmatrix}
    & ||P_{g,h}||_{op} ||P_{g,k}||_{op} \\
    &+ ||P_{h,k}||_{op}||P_{g,h}||_{op} \\
    &+||P_{h,k}||_{op}||P_{g,k}||_{op}
\end{pmatrix}  \right)^2 \\
& \lesssim  u_n^2 n_G  \sum_{k \in [G]} \left( \sum_{g,h \in [G]^2} ||B_{h,g} \Gamma_g||_{2}^2  ||B_{k,g}||_F^2 \right) \\
& \times  \left( \sum_{g,h \in [G]^2} (||P_{g,h}||_{op} ||P_{g,k}||_{op} + ||P_{h,k}||_{op}||P_{g,h}||_{op} +||P_{h,k}||_{op}||P_{g,k}||_{op}  )^2 \right) \\
& \lesssim  u_n^2 n_G^2 \phi_n^2 \lambda_n \kappa_n  =  o(\omega_n^4)
\end{align*}
and
\begin{align*}
R_{2,2,7}  & \lesssim    u_n^2 n_G^2 \phi_n \sum_{h \in [G]} \left(  \sum_{g \in [G]} || B_{h,g}||_{op} || B_{h,g}||_F ||P_{g,h}||_{op}  \right)^2   \\
& \lesssim   u_n^2 n_G^2 \phi_n^2 \lambda_n^2 \kappa_n  =  o(\omega_n^4).
\end{align*}
This concludes the case of $(s,s')=(g,k)$.

\textbf{For the case $(s,s') = (h,k)$}, we have
\begin{align*}
& \left\vert  u_n^2 \sum_{g,g',h,h',k \in [G]^5} \left(\Pi_h' B_{h,g} \Gamma_g \right)\left(\Pi_{h'}' B_{h',g'} \Gamma_{g'} \right) tr\left(  B_{k,g} \left( (\tilde  P_{ghk})_{g,h} P_{h,k} (\tilde  P_{g'h'k})_{k,g'}  \right) B_{k,g'}'   \right) \right\vert    \\
& \lesssim \left\vert u_n^2 \sum_{k,h \in [G]^2} tr(\mathcal B_{k,h} P_{h,k} \mathcal C_k'  )  \right\vert \\
& \lesssim  u_n^2 \sum_{k \in [G]} (|| \sum_{h \in [G]}\mathcal B_{k,h} P_{h,k}||_F^2  + || \mathcal C_k||_F^2  )  =  o(\omega_n^4).
\end{align*}

\textbf{For the case $(s,s') = (k,k)$}, we have
\begin{align*}
& \left\vert  u_n^2 \sum_{g,g',h,h',k \in [G]^5} \left(\Pi_h' B_{h,g} \Gamma_g \right)\left(\Pi_{h'}' B_{h',g'} \Gamma_{g'} \right) tr\left(  B_{k,g} \left( (\tilde  P_{ghk})_{g,k} P_{k,k} (\tilde  P_{g'h'k})_{k,g'}  \right) B_{k,g'}'   \right) \right\vert    \\
& \lesssim \left\vert u_n^2 \sum_{k \in [G]} tr(\mathcal C_k P_{k,k} \mathcal C_k'  )  \right\vert \\
& \lesssim  u_n^2 \lambda_n \sum_{k \in [G]} || \mathcal C_k||_F^2 =  o(\omega_n^4).
\end{align*}

Combining the above bounds, we can conclude that
\begin{align*}
    R_{2,2} =  o(\omega_n^4).
\end{align*}

Last, following the same argument for $R_{1,3}$, we can show that
\begin{align*}
    R_{2,3} =  o(\omega_n^4).
\end{align*}
This concludes the proof.

\subsection{Bounds for \texorpdfstring{$R_3 \text{ and }R_4$}{R3 and R4}}
By \Cref{lem:P_l3o_1}.6, we have
\begin{align*}
R_3 & =  \mathbb V \left( \sum_{h, l \in [G]^2,  h \neq l} \sum_{g,k \in [-l]^2} V_h' \left( B_{h,g} \Gamma_g \Pi_k ' B_{k,g} P_{g,l,-ghk}\right) U_l   \right) \\
& \lesssim \sum_{h, l \in [G]^2,  h \neq l}  u_n^2 \left\Vert \sum_{g,k \in [-l]^2} B_{h,g} \Gamma_g \Pi_k ' B_{k,g} P_{g,l,-ghk} \right\Vert_F^2 \\
& = \sum_{h, l \in [G]^2,  h \neq l} \sum_{g,g',k,k' \in [-l]^4}  u_n^2 tr\left(  B_{h,g} \Gamma_g \Pi_k ' B_{k,g} P_{g,l,-ghk}P_{l,g',-g'hk'} B_{k',g'}' \Pi_{k'} \Gamma_{g'} B_{h,g'}' \right) \\
& = \sum_{g,g',h,k,k' \in [G]^5}  u_n^2 tr\left(  B_{h,g} \Gamma_g \Pi_k ' B_{k,g} \left( \sum_{l \in [-gg'hkk']} P_{g,l,-ghk}P_{l,g',-g'hk'} \right) B_{k',g'}' \Pi_{k'} \Gamma_{g'} B_{h,g'}' \right)\\
& \lesssim \underbrace{ \left\vert \sum_{g,g',h,k,k' \in [G]^5}  u_n^2 tr\left(  B_{h,g} \Gamma_g \Pi_k ' B_{k,g} P_{g,g'}  B_{k',g'}' \Pi_{k'} \Gamma_{g'} B_{h,g'}' \right) \right\vert }_{R_{3,1}} \\
& + \underbrace{ \left\vert \sum_{g,g',h,k,k' \in [G]^5}  u_n^2 tr\left(  B_{h,g} \Gamma_g \Pi_k ' B_{k,g} \left(  (\tilde  P_{ghk})_{g,ghk} P_{ghk,g'hk'} (\tilde  P_{g'hk'})_{g'hk',g'} \right) B_{k',g'}' \Pi_{k'} \Gamma_{g'} B_{h,g'}' \right) \right\vert }_{R_{3,2}} \\
& + \underbrace{ \left\vert \sum_{g,g',h,k,k' \in [G]^5}  u_n^2 tr\left(  B_{h,g} \Gamma_g \Pi_k ' B_{k,g} \left( \sum_{l \in (g,h,k) \cap (g',h,k')} P_{g,l,-ghk} P_{l,g',-g'hk'} \right) B_{k',g'}' \Pi_{k'} \Gamma_{g'} B_{h,g'}' \right) \right\vert }_{R_{3,3}},
\end{align*}
where
\begin{align*}
    R_{3,1} & \lesssim u_n^2 \sum_{g,h \in [G]^2} ||B_{h,g} \Gamma_g H_g'||_F^2 \\
    & \lesssim u_n^2 \sum_{g,h \in [G]} tr ( B_{h,g}' B_{h,g} \Gamma_g    \Gamma_g') ||H_g||_2^2 \\
    & \lesssim u_n^2 n_G \zeta_{H,n} \kappa_n = o ((\mu_n^2 + \kappa_n + \tilde \mu_n^2)^2) = o(\omega_n^4).
\end{align*}

Next, by a slight abuse of notation, denote
\begin{align*}
& \mathcal A_{h,g} = \sum_{k \in [G]} B_{h,g} \Gamma_g \Pi_k ' B_{k,g}  (\tilde  P_{ghk})_{g,g}, \quad \mathcal B_{h,k} = \sum_{g \in [G]} B_{h,g} \Gamma_g \Pi_k ' B_{k,g}  (\tilde  P_{ghk})_{g,k}, \quad \text{and} \\
& \mathcal C_h = \sum_{g,k \in [G]^2} B_{h,g} \Gamma_g \Pi_k ' B_{k,g}  (\tilde  P_{ghk})_{g,h}.
\end{align*}

To show $R_{3,2} = o(\omega_n^4) $ following the same argument as $R_{2,2}$, we only need to show
\begin{align}
 &   u_n^2 \sum_{g,h \in [G]^2}||\mathcal A_{h,g}||_F^2 = o(\omega_n^4), \label{eq:R3A}\\
 &   u_n^2 \sum_{h,k \in [G]^2}||\mathcal B_{h,k}||_F^2 = o(\omega_n^4), \label{eq:R3B}\\
 &   u_n^2 \sum_{h \in [G]}||\mathcal C_{h}||_F^2 = o(\omega_n^4). \label{eq:R3C}
\end{align}

For \cref{eq:R3A}, we have
\begin{align*}
u_n^2 \sum_{g,h \in [G]^2}||\mathcal A_{h,g}||_F^2 & = u_n^2 \sum_{g,h \in [G]^2} \left\Vert \sum_{k \in [G]} B_{h,g} \Gamma_g \Pi_k ' B_{k,g}  (\tilde  P_{ghk})_{g,g}\right\Vert_F^2    \\
& \lesssim \underbrace{ u_n^2 \sum_{g,h \in [G]^2} \left\Vert \sum_{k \in [G]} B_{h,g} \Gamma_g \Pi_k ' B_{k,g}  \Xi_{1,g}\right\Vert_F^2  }_{R_{3,2,1}}+   \underbrace{ u_n^2 \sum_{g,h \in [G]^2} \left\Vert \sum_{k \in [-h]} B_{h,g} \Gamma_g \Pi_k ' B_{k,g}  \Xi_{2,gh}\right\Vert_F^2 }_{R_{3,2,2}} \\
& +  \underbrace{ u_n^2 \sum_{g,h \in [G]^2} \left\Vert \sum_{k \in [-h]} B_{h,g} \Gamma_g \Pi_k ' B_{k,g}  (\Xi_{3,gk} + \Xi_{4,ghk})\right\Vert_F^2 }_{R_{3,2,3}} \\
& +  \underbrace{ u_n^2 \sum_{g,h \in [G]^2} \left\Vert B_{h,g} \Gamma_g \Pi_h ' B_{h,g} (\tilde  P_{gh})_{g,h} \right\Vert_F^2  }_{R_{3,2,4}},
\end{align*}
where
\begin{align*}
    R_{3,2,1} & = u_n^2 \sum_{g,h \in [G]^2} \left\Vert B_{h,g} \Gamma_g H_g'  \Xi_{1,g}\right\Vert_F^2 \\
    & \lesssim u_n^2 n_G \zeta_{H,n} \lambda_n^2 \kappa_n = o(\omega_n^4),
\end{align*}
\begin{align*}
   R_{3,2,2} & = u_n^2 \sum_{g,h \in [G]^2} \left\Vert B_{h,g} \Gamma_g (H_g - B_{h,g}'\Pi_h)'  \Xi_{2,gh}\right\Vert_F^2 \\
   & \lesssim u_n^2 \sum_{g,h \in [G]^2} \left\Vert B_{h,g} \Gamma_g  \Pi_h' B_{h,g}  \Xi_{2,gh}\right\Vert_F^2 + o(\omega_n^4) \\
   & \lesssim u_n^2 n_G^2 \lambda_n^4 \kappa_n + o(\omega_n^4)  = o(\omega_n^4),
\end{align*}
\begin{align*}
 R_{3,2,3} & \lesssim u_n^2 \sum_{g,h \in [G]^2} \left( \sum_{k \in [G]} ||B_{h,g} \Gamma_g||_2 ||\Pi_k ' B_{k,g}||_2  ||\Xi_{3,gk} + \Xi_{4,ghk}||_{op} \right)^2 \\
 & \lesssim u_n^2 n_G \lambda_n^2 \sum_{g,h \in [G]^2} ||B_{h,g} \Gamma_g||_2^2 \left( \sum_{k \in [G]} ||B_{k,g}||_{op}^2 \right)\left( \sum_{k \in [G]} (||P_{k,g}||_{op}^2 + ||P_{k,h}||_{op}^2) \right) \\
 & \lesssim u_n^2 n_G^2  \phi_n(\phi_n + \lambda_n^2) \lambda_n^2  \kappa_n = o(\omega_n^4),
\end{align*}
and
\begin{align*}
R_{3,2,4} & \lesssim  u_n^2 \sum_{g,h \in [G]^2} \left\Vert B_{h,g} \Gamma_g \right\Vert_2^2 \left\Vert \Pi_h ' B_{h,g} \right\Vert_2^2 \left\Vert(\tilde  P_{gh})_{g,h} \right\Vert_{op}^2  \\
& \lesssim u_n^2 n_G^2 \lambda_n^4 \kappa_n = o(\omega_n^4).
\end{align*}
This establishes \cref{eq:R3A}.

For \cref{eq:R3B}, following the same argument as $R_2$ with case $(s,s') = (g,h')$, we have
\begin{align*}
u_n^2 \sum_{h,k \in [G]^2}||\mathcal B_{h,k}||_F^2 &  =  u_n^2 \sum_{h,k \in [G]^2} \left\Vert   \sum_{g \in [G]} B_{h,g} \Gamma_g \Pi_k ' B_{k,g}  (\tilde  P_{ghk})_{g,k} \right\Vert_F^2   \\
& \lesssim u_n^2 n_G \sum_{h,k \in [G]^2} \left(   \sum_{g \in [G]} ||B_{h,g}||_{op} || \Pi_k 'B_{k,g}||_{2} (||P_{g,k}||_{op}  + ||P_{g,h}||_{op} ) \right)^2  = o(\omega_n^4).
\end{align*}

For \cref{eq:R3C}, we have
\begin{align*}
& u_n^2 \sum_{h \in [G]}||\mathcal C_{h}||_F^2 \\
& = u_n^2 \sum_{h \in [G]}\left\Vert \sum_{g,k \in [G]^2} B_{h,g} \Gamma_g \Pi_k ' B_{k,g}  (\tilde  P_{ghk})_{g,h} \right\Vert_F^2    \\
& \lesssim \underbrace{ u_n^2 \sum_{h \in [G]}\left\Vert \sum_{g,k \in [G]^2, k \neq h} B_{h,g} \Gamma_g \Pi_k ' B_{k,g}  \Xi_{5,gh} \right\Vert_F^2 }_{R_{3,2,5}}  \\
& +  \underbrace{  u_n^2 \sum_{h \in [G]}\left\Vert \sum_{g,k \in [G]^2, k \neq h} B_{h,g} \Gamma_g \Pi_k ' B_{k,g}  \Xi_{6,ghk} \right\Vert_F^2 }_{R_{3,2,6}}  \\
& +  \underbrace{  u_n^2 \sum_{h \in [G]}\left\Vert \sum_{g \in [G]} B_{h,g} \Gamma_g \Pi_h ' B_{h,g}   (\tilde  P_{gh})_{g,h}  \right\Vert_F^2 }_{R_{3,2,7}},
\end{align*}
where
\begin{align*}
R_{3,2,5} & \lesssim  u_n^2 \sum_{h \in [G]}\left\Vert \sum_{g \in [G]} B_{h,g} \Gamma_g (H_g - B_{h,g}'\Pi_h)'   \Xi_{5,gh} \right\Vert_F^2    \\
& \lesssim u_n^2  (\zeta_{H,n} + n_G\lambda_n^2) \sum_{h \in [G]}\left( \sum_{g \in [G]} ||B_{h,g}\Gamma_g||_{2}  ||P_{g,h}||_{op} \right)^2 \\
& \lesssim u_n^2 \phi_n (\zeta_{H,n} + n_G\lambda_n^2) (\sum_{g,h \in [G]^2}  ||B_{h,g}\Gamma_g ||_2^2) \\
& \lesssim u_n^2 n_G \phi_n (\zeta_{H,n} + n_G\lambda_n^2) \kappa_n = o(\omega_n^4),
\end{align*}
\begin{align*}
R_{3,2,6} & \leq      u_n^2 \sum_{h \in [G]}\left\Vert \sum_{g,k \in [G]^2} ||B_{h,g} \Gamma_g||_2 || \Pi_k ' B_{k,g}||_2  \Xi_{6,ghk} \right\Vert_F^2 \\
& \lesssim u_n^2 \sum_{h \in [G]}\left( \sum_{g,k \in [G]^2} ||B_{h,g} \Gamma_g||_{2}  ||\Pi_k ' B_{k,g}||_{2}    \begin{pmatrix}
    & ||P_{g,h}||_{op} ||P_{g,k}||_{op} \\
    &+ ||P_{h,k}||_{op}||P_{g,h}||_{op} \\
    &+||P_{h,k}||_{op}||P_{g,k}||_{op}
\end{pmatrix} \right)^2 \\
& \lesssim u_n^2 \phi_n^2 \sum_{g,h,k \in [G]^3} ||B_{h,g} \Gamma_g||_{2}^2  || \Pi_k 'B_{k,g}||_{op}^2 \\
& \lesssim u_n^2 n_G \phi_n^2  \lambda_n \sum_{g,k \in [G]^2} ||B_{k,g}' \Pi_k||_{2}^2  \\
& \lesssim u_n^2 n_G^2 \phi_n^2 \lambda_n \kappa_n = o(\omega_n^4),
\end{align*}
and
\begin{align*}
R_{3,2,7} & \lesssim    u_n^2 n_G  \lambda_n^2 \sum_{h \in [G]}\left( \sum_{g \in [G]} ||B_{h,g} \Gamma_g||_2    ||P_{g,h}||_{op}  \right)^2 \\
& \lesssim  u_n^2 n_G^2 \phi_n  \lambda_n^2 \kappa_n =  o(\omega_n^4).
\end{align*}
This leads to \cref{eq:R3C}, which further implies $R_{3,2} = o(\omega_n^4) $.

We can show $R_{3,3} = o(\omega_n^4) $ following the same argument for $R_{1,3}$, which concludes that $R_3 = o(\omega_n^4) $. We can show $R_4 = o(\omega_n^4) $ in the same manner as $R_2$ and $R_3$.

\subsection{Bounds for \texorpdfstring{$R_5 \text{ and }R_7$}{R5 and R7}}
By \Cref{lem:P_l3o_1}.6, we have
\begin{align*}
R_5 & =     \mathbb V \left( \sum_{g, h \in [G]^2, l \in [-gh]}\sum_{k \in [-l]} \left(V_h' B_{h,g} U_g \right) \left( \Pi_k ' B_{k,g} P_{g,l,-ghk} U_l \right)  \right) \\
& \lesssim  \sum_{g, h \in [G]^2, l \in [-gh]} \mathbb V \left(  \left(V_h' B_{h,g} U_g \right) \left( \sum_{k \in [-l]} \Pi_k ' B_{k,g} P_{g,l,-ghk}\right) U_l   \right) \\
& \lesssim  \sum_{g, h \in [G]^2, l \in [-gh]} \sum_{k,k' \in [-l]^2} u_n^3 ||B_{h,g}||_F^2 \left( \Pi_k ' B_{k,g} P_{g,l,-ghk} P_{l,g,-ghk'} B_{k',g}' \Pi_{k'} \right) \\
& =  \sum_{g, h,k,k' \in [G]^4} u_n^3 ||B_{h,g}||_F^2 \left( \Pi_k ' B_{k,g} \left( \sum_{l \in [-ghkk']} P_{g,l,-ghk} P_{l,g,-ghk'} \right) B_{k',g}' \Pi_{k'} \right) \\
& \lesssim \underbrace{ \left\vert  \sum_{g, h,k,k' \in [G]^4} u_n^3 ||B_{h,g}||_F^2 \left( \Pi_k ' B_{k,g} P_{g,g} B_{k',g}' \Pi_{k'} \right)\right\vert }_{R_{5,1}} \\
& + \underbrace{  \left\vert  \sum_{g, h,k,k' \in [G]^4} u_n^3 ||B_{h,g}||_F^2 \left( \Pi_k ' B_{k,g} (\tilde  P_{ghk})_{g,ghk} P_{ghk,ghk'} (\tilde  P_{ghk'})_{ghk',g} B_{k',g}' \Pi_{k'} \right)\right\vert }_{R_{5,2}} \\
& + \underbrace{  \left\vert  \sum_{g, h,k,k' \in [G]^4} u_n^3 ||B_{h,g}||_F^2 \left( \Pi_k ' B_{k,g} \left( \sum_{l \in (g,h,k) \cap (g,h,k')} P_{g,l,-ghk} P_{l,g,-ghk'} \right) B_{k',g}' \Pi_{k'} \right)\right\vert }_{R_{5,3}},
\end{align*}
where
\begin{align*}
R_{5,1} & =     \left\vert  \sum_{g, h \in [G]^4} u_n^3 ||B_{h,g}||_F^2 \left( H_g' P_{g,g} H_{g} \right)\right\vert \\
& \lesssim u_n^3 \zeta_{H,n} \lambda_n \kappa_n = o(\omega_n^4).
\end{align*}

Following the arguments for $R_{2,2}$ and $R_{3,2}$, in order to bound $R_{5,2}$, it suffices to show
\begin{align*}
 \left\vert  \sum_{g, h,k,k' \in [G]^4} u_n^3 ||B_{h,g}||_F^2 \left( \Pi_k ' B_{k,g} (\tilde  P_{ghk})_{g,s} P_{s,s'} (\tilde  P_{ghk'})_{s',g} B_{k',g}' \Pi_{k'} \right)\right\vert = o(\omega_n^4)
\end{align*}
for $(s,s') = (g,g)$, $(s,s') = (g,h)$, $(s,s') = (g,k')$, and $(s,s') = (k,k')$. The other cases can be bounded in the same manner.

\textbf{For case $(s,s') = (g,g)$}, we have
\begin{align*}
& \left\vert  \sum_{g, h,k,k' \in [G]^4} u_n^3 ||B_{h,g}||_F^2 \left( \Pi_k ' B_{k,g} (\tilde  P_{ghk})_{g,g} P_{g,g} (\tilde  P_{ghk'})_{g,g} B_{k',g}' \Pi_{k'} \right)\right\vert \\
& \lesssim \sum_{g,h} u_n^3 ||B_{h,g}||_F^2 \left\Vert \sum_{k \in [G]} \Pi_k ' B_{k,g} (\tilde  P_{ghk})_{g,g} \right\Vert_2^2 \\
& \lesssim \underbrace{ \sum_{g,h} u_n^3 ||B_{h,g}||_F^2 \left\Vert \sum_{k \in [G]} \Pi_k ' B_{k,g} \Xi_{1,g} \right\Vert_2^2 }_{R_{5,2,1}} + \underbrace{  \sum_{g,h} u_n^3 ||B_{h,g}||_F^2 \left\Vert \sum_{k \in [-h]} \Pi_k ' B_{k,g} \Xi_{2,gh} \right\Vert_2^2 }_{R_{5,2,2}} \\
& + \underbrace{  \sum_{g,h} u_n^3 ||B_{h,g}||_F^2 \left\Vert \sum_{k \in [-h]} \Pi_k ' B_{k,g} (\Xi_{3,gk} + \Xi_{4,ghk}) \right\Vert_2^2 }_{R_{5,2,3}} +  \underbrace{  \sum_{g,h} u_n^3 ||B_{h,g}||_F^2 \left\Vert \Pi_h ' B_{h,g} \Xi_{9,gh} \right\Vert_2^2 }_{R_{5,2,4}},
\end{align*}
where
\begin{align*}
    R_{5,2,1} & = \sum_{g,h} u_n^3 ||B_{h,g}||_F^2 \left\Vert  H_g' \Xi_{1,g} \right\Vert_2^2 \lesssim u_n^3 \zeta_{H,n} \lambda_n^2 \kappa_n =  o(\omega_n^4),
\end{align*}
\begin{align*}
   R_{5,2,2} &  =  \sum_{g,h} u_n^3 ||B_{h,g}||_F^2 \left\Vert  (H_g - B_{h,g}' \Pi_h)' \Xi_{2,g} \right\Vert_2^2 \lesssim   u_n^3 n_G \lambda_n^4 \kappa_n   + o(\omega_n^4) = o(\omega_n^4),
\end{align*}
\begin{align*}
R_{5,2,3} &  \lesssim   \sum_{g,h} u_n^3 ||B_{h,g}||_F^2 \left( \sum_{k \in [G]} ||\Pi_k ' B_{k,g}||_2 ||\Xi_{3,gk} + \Xi_{4,ghk}||_{op} \right )^2  \\
& \lesssim   \sum_{g,h} u_n^3 n_G \lambda_n^2 ||B_{h,g}||_F^2 \left( \sum_{k \in [G]} || B_{k,g}||_{op}^2 \right) \left( \sum_{k \in [G]} ||P_{g,k}||_{op}^2 + ||P_{h,k}||_{op}^2 \right )   \\
& \lesssim u_n^3 n_G \phi_n (\phi_n + \lambda_n^2) \lambda_n^2 \kappa_n = o(\omega_n^4),
\end{align*}
and
\begin{align*}
R_{5,2,4} & \lesssim u_n^3 n_G \lambda_n^4 \kappa_n = o(\omega_n^4).
\end{align*}

\textbf{For case $(s,s') = (g,h)$}, we have
\begin{align*}
   & \left\vert  \sum_{g, h,k,k' \in [G]^4} u_n^3 ||B_{h,g}||_F^2 \left( \Pi_k ' B_{k,g} (\tilde  P_{ghk})_{g,g} P_{g,h} (\tilde  P_{ghk'})_{h,g} B_{k',g}' \Pi_{k'} \right)\right\vert \\
   & \lesssim  \sum_{g, h \in [G]^2} u_n^3 ||B_{h,g}||_F^2 \left\Vert \sum_{k' \in [G]} (\tilde  P_{ghk'})_{h,g} B_{k',g}' \Pi_{k'} \right\Vert_2^2 + o(\omega_n^4) \\
   & \lesssim \underbrace{ \sum_{g, h \in [G]^2} u_n^3 ||B_{h,g}||_F^2 \left\Vert \sum_{k' \in [-h]} \Xi_{5,gh}' B_{k',g}' \Pi_{k'} \right\Vert_2^2 }_{R_{5,2,5}}+  \underbrace{  \sum_{g, h \in [G]^2} u_n^3 ||B_{h,g}||_F^2 \left\Vert \sum_{k' \in [-h]} \Xi_{6,ghk'}' B_{k',g}' \Pi_{k'} \right\Vert_2^2 }_{R_{5,2,6}} \\
   & +  \underbrace{  \sum_{g, h \in [G]^2} u_n^3 ||B_{h,g}||_F^2 \left\Vert (\tilde  P_{gh})_{h,g} B_{h,g}' \Pi_{h} \right\Vert_2^2 }_{R_{5,2,7}}  + o(\omega_n^4),
\end{align*}
where
\begin{align*}
R_{5,2,5} & =    \sum_{g, h \in [G]^2} u_n^3 ||B_{h,g}||_F^2 \left\Vert  \Xi_{5,gh}' (H_g - B_{h,g}' \Pi_{h}) \right\Vert_2^2 \lesssim u_n^3 (\zeta_{H,n} + n_G \lambda_n^2) \lambda_n \kappa_n = o(\omega_n^4),
\end{align*}
\begin{align*}
R_{5,2,6} & \lesssim \sum_{g, h \in [G]^2} u_n^3 n_G \lambda_n^2 ||B_{h,g}||_F^2 \left( \sum_{k' \in [G]} (||P_{g,k'}||_{op} +||P_{h,k'}||_{op})  ||B_{k',g}||_{op} \right)^2 \\
& \lesssim  \sum_{g, h \in [G]^2} u_n^3 n_G \lambda_n^2 ||B_{h,g}||_F^2 \left( \sum_{k' \in [G]} (||P_{g,k'}||_{op}^2 +||P_{h,k'}||_{op}^2) \right) \left(  \sum_{k' \in [G]} ||B_{k',g}||_{op}^2 \right) \\
& \lesssim u_n^3 n_G \phi_n (\phi_n + \lambda_n^2) \lambda_n^2 \kappa_n = o(\omega_n^4),
\end{align*}
and
\begin{align*}
R_{5,2,7} & \lesssim     u_n^3 n_G \lambda_n^4 \kappa_n  = o(\omega_n^4).
\end{align*}

\textbf{For case $(s,s') = (g,k')$}, we have
\begin{align*}
   & \left\vert  \sum_{g, h,k,k' \in [G]^4} u_n^3 ||B_{h,g}||_F^2 \left( \Pi_k ' B_{k,g} (\tilde  P_{ghk})_{g,g} P_{g,k'} (\tilde  P_{ghk'})_{k',g} B_{k',g}' \Pi_{k'} \right)\right\vert \\
   & \lesssim  \sum_{g, h \in [G]^2} u_n^3 ||B_{h,g}||_F^2 \left\Vert \sum_{k' \in [G]}  P_{g,k'} (\tilde  P_{ghk'})_{k',g} B_{k',g}' \Pi_{k'} \right\Vert_2^2 + o(\omega_n^4) \\
   & \lesssim  \sum_{g, h, k' \in [G]^3} u_n^3 ||B_{h,g}||_F^2 \left\Vert (\tilde  P_{ghk'})_{k',g} B_{k',g}' \Pi_{k'} \right\Vert_2^2 + o(\omega_n^4) \\
   & \lesssim  \sum_{g, h, k' \in [G]^3} u_n^3 n_G \lambda_n^2 ||B_{h,g}||_F^2 \left\Vert B_{k',g}' \right\Vert_{op}^2 + o(\omega_n^4) \\
   & \lesssim  u_n^3 n_G (\phi_n + \lambda_n^2)\lambda_n^2 \kappa_n +  o(\omega_n^4) =  o(\omega_n^4).
\end{align*}

\textbf{For case $(s,s') = (k,k')$}, we have
\begin{align*}
 & \left\vert  \sum_{g, h,k,k' \in [G]^4} u_n^3 ||B_{h,g}||_F^2 \left( \Pi_k ' B_{k,g} (\tilde  P_{ghk})_{g,k} P_{k,k'} (\tilde  P_{ghk'})_{k',g} B_{k',g}' \Pi_{k'} \right)\right\vert \\
 & \lesssim  \sum_{g, h,k \in [G]^3} u_n^3 ||B_{h,g}||_F^2 \left\Vert \Pi_k ' B_{k,g} (\tilde  P_{ghk})_{g,k} \right\Vert_2^2   = o(\omega_n^4).
\end{align*}
This leads to the desired result that $R_{5,2} = o(\omega_n^4)$.

We can show $R_{5,3} = o(\omega_n^4) $ following the same argument for $R_{1,3}$, which concludes that $R_5 = o(\omega_n^4) $. We can show $R_5 = o(\omega_n^4) $ in the similar manner.

\subsection{Bound for \texorpdfstring{$R_6$}{R6}}
We have
\begin{align*}
R_6 & \lesssim  \mathbb V \left( \sum_{h, k \in [G]^2, h \neq k, l \in [-hk]} \sum_{g \in [-l]} \left(V_h' B_{h,g} \Gamma_g \right) \left(V_k ' B_{k,g} P_{g,l,-ghk} U_l \right)  \right) \\
& +  \mathbb V \left( \sum_{h,l \in [G]^2, h \neq l} \sum_{g \in [-l]}  \left(
  \Gamma_g'B_{h,g}' V_{h} V_h' B_{h,g} P_{g,l,-gh} U_l \right)  \right) \\
& \lesssim \sum_{h, k \in [G]^2, h \neq k, l \in [-hk]}  \mathbb V \left( \sum_{g \in [-l]} \left(V_h' B_{h,g} \Gamma_g \right) \left(V_k ' B_{k,g} P_{g,l,-ghk} U_l \right)  \right) \\
& + \sum_{h,l \in [G]^2, h \neq l} \mathbb V \left(  \sum_{g \in [-l]}  \left(
  \Gamma_g'B_{h,g}' (V_{h} V_h'- \Omega_{V,h}) B_{h,g} P_{g,l,-gh} U_l \right)  \right) \\
& + \sum_{l \in [G]}  \mathbb V \left( \sum_{g,h \in [-l]^2}  \left( \Gamma_g'B_{h,g}' \Omega_{V,h} B_{h,g} P_{g,l,-gh} U_l \right)  \right) \\
& \lesssim \sum_{h, k \in [G]^2, h \neq k, l \in [-hk]} u_n^2 \mathbb E \left( \left\Vert\sum_{g \in [-l]}  B_{h,g} \Gamma_g V_k ' B_{k,g} P_{g,l,-ghk} \right\Vert_F^2   \right) \\
& + \sum_{h,l \in [G]^2, h \neq l} u_n \mathbb E \left\Vert  \sum_{g \in [-l]}
  \left( \Gamma_g'B_{h,g}' (V_{h} V_h'- \Omega_{V,h}) B_{h,g} P_{g,l,-gh} \right)  \right\Vert_2^2  \\
& + \sum_{l \in [G]} u_n \left\Vert \sum_{g,h \in [-l]^2}  \left( \Gamma_g'B_{h,g}' \Omega_{V,h} B_{h,g} P_{g,l,-gh} \right)  \right\Vert_2^2 \\
& \lesssim \sum_{h, k \in [G]^2, h \neq k, l \in [-hk]} u_n^2 \left(\sum_{g,g' \in [-l]^2} \left( \Gamma_{g'}' B_{h,g'}' B_{h,g} \Gamma_g \right) tr\left(   B_{k,g} P_{g,l,-ghk} P_{l,g',-g'hk} B_{k,g'}' \Omega_{V,k} \right)   \right) \\
& + \sum_{h,l \in [G]^2, h \neq l} u_n \mathbb E  \sum_{g,g' \in [-l]^2}  \left(
  \Gamma_g'B_{h,g}' V_{h} V_h' B_{h,g} P_{g,l,-gh}P_{l,g',-g'h}B_{h,g'}' V_{h} V_h'B_{h,g'} \Gamma_{g'}  \right)  \\
& + \sum_{l \in [G]} u_n  \sum_{g,g',h,h' \in [-l]^4}  \left( \Gamma_g'B_{h,g}' \Omega_{V,h} B_{h,g} P_{g,l,-gh}P_{l,g',-g'h'} B_{h',g'}' \Omega_{V,h'}B_{h',g'} \Gamma_{g'} \right)   \\
& \lesssim \underbrace{ \sum_{g,g',h, k \in [G]^4, h \neq k} u_n^2 \left( \Gamma_{g'}' B_{h,g'}' B_{h,g} \Gamma_g \right) tr\left(   B_{k,g} \left( \sum_{l \in [-gg'hk]} P_{g,l,-ghk} P_{l,g',-g'hk} \right)  B_{k,g'}' \Omega_{V,k} \right) }_{R_{6,1}} \\
& + \underbrace{  \sum_{g,g',h \in [G]^3} u_n \mathbb E  \left(
  \Gamma_g'B_{h,g}' V_{h} V_h' B_{h,g} \left( \sum_{l \in [-gg'h]}
  P_{g,l,-gh}P_{l,g',-g'h} \right) B_{h,g'}' V_{h} V_h'B_{h,g'} \Gamma_{g'}  \right) }_{R_{6,2}}  \\
& +  \underbrace{  u_n  \sum_{g,g',h,h' \in [G]^4}  \left( \Gamma_g'B_{h,g}' \Omega_{V,h} B_{h,g} \left( \sum_{l \in [-gg'hh']} P_{g,l,-gh}P_{l,g',-g'h'} \right) B_{h',g'}' \Omega_{V,h'}B_{h',g'} \Gamma_{g'} \right) }_{R_{6,3}}.
\end{align*}

For $R_{6,1}$, we have
\begin{align*}
R_{6,1} & \lesssim \underbrace{ \left\vert \sum_{g,g',h, k \in [G]^4, h \neq k} u_n^2 \left( \Gamma_{g'}' B_{h,g'}' B_{h,g} \Gamma_g \right) tr\left(   B_{k,g} P_{g,g'}  B_{k,g'}' \Omega_{V,k} \right)  \right\vert }_{R_{6,1,1}} \\
& + \underbrace{  \left\vert \sum_{g,g',h, k \in [G]^4, h \neq k} u_n^2 \left( \Gamma_{g'}' B_{h,g'}' B_{h,g} \Gamma_g \right) tr\left(   B_{k,g} \left( (\tilde  P_{ghk})_{g,ghk} P_{ghk,g'hk} (\tilde  P_{g'hk})_{g'hk,g'}  \right)  B_{k,g'}' \Omega_{V,k} \right)  \right\vert }_{R_{6,1,2}} \\
& + \underbrace{  \left\vert \sum_{g,g',h, k \in [G]^4, h \neq k} u_n^2 \left( \Gamma_{g'}' B_{h,g'}' B_{h,g} \Gamma_g \right) tr\left(   B_{k,g} \left( \sum_{l \in (g,h,k) \cap (g',h,k)} P_{g,l,-ghk} P_{l,g',-g'hk} \right)  B_{k,g'}' \Omega_{V,k} \right)  \right\vert }_{R_{6,1,3}}.
\end{align*}

Denote $\mathcal P \in \Re^{n\times n}$ with its $(g,g')$ block defined as $\mathcal P_{g,g'} = \left( \Gamma_{g'}' (B'B)_{g',g} \Gamma_g \right) P_{g,g'}$. Then, we have
\begin{align*}
    ||\mathcal P||_{op} & = \sup_{a \in \Re^n, ||a||_2 = 1}  \left\vert a'\mathcal P a \right\vert \\
    & = \sup_{a \in \Re^n, ||a||_2 = 1}  \left\vert \sum_{g,g' \in [G]^2}  \left( \Gamma_{g'}' (B'B)_{g',g} \Gamma_g \right) a_g' P_{g,g'} a_{g'} \right\vert \\
    & \leq \sup_{a \in \Re^n, ||a||_2 = 1} n_G^{1/2} \left(\sum_{g,g'\in [G]^2} ||(B'B)_{g',g} \Gamma_g||_2^2 ||a_g||_2^2\right)^{1/2}\left(\sum_{g,g'\in [G]^2} ||P_{g,g'} a_{g'}||_2^2\right)^{1/2}    \\
    & \leq \sup_{a \in \Re^n, ||a||_2 = 1} n_G^{1/2} \left(\sum_{g \in [G]} \Gamma_g'(B'B B'B)_{g,g} \Gamma_g ||a_g||_2^2\right)^{1/2}\left(\sum_{g'\in [G]} a_{g'}' P_{g',g'} a_{g'}\right)^{1/2}    \\
    & \leq n_G \lambda_n.
\end{align*}

Therefore, we have
\begin{align*}
R_{6,1,1} & \lesssim \left\vert \sum_{g,g' \in [G]^2} u_n^2 \left( \Gamma_{g'}' (B'B)_{g',g} \Gamma_g \right) tr\left( P_{g,g'} (B'\Omega_V B)_{g',g} \right)  \right\vert \\
& + \left\vert \sum_{g,g', k \in [G]^3} u_n^2 \left( \Gamma_{g'}' B_{k,g'}' B_{k,g} \Gamma_g \right) tr\left(   B_{k,g} P_{g,g'}  B_{k,g'}' \Omega_{V,k} \right)  \right\vert\\
& \lesssim  \left\vert  u_n^2   tr\left( \mathcal P (B'\Omega_V B) \right) \right\vert \\
& + u_n^3 n_G \lambda_n  \sum_{g,g', k \in [G]^3} ||B_{k,g}||_{op}||B_{k,g'}||_{op}  ||B_{k,g'}|_F ||B_{k,g}  ||_F\\
& \lesssim u_n^2 n_G \lambda_n  tr(B'\Omega_V B) +  u_n^3 n_G \lambda_n \left(\sum_{g,g', k \in [G]^3}||B_{k,g}||_{op}^2 ||B_{k,g'}||_F^2 \right) \\
& \lesssim u_n^3 n_G \lambda_n \kappa_n + u_n^3 n_G \lambda_n (\phi_n + \lambda_n^2)\kappa_n \\
& = o ((\mu_n^2 + \kappa_n + \tilde \mu_n^2)^2) = o(\omega_n^4).
\end{align*}


To bound $R_{6,1,2}$, we only need to bound
\begin{align*}
  \left\vert \sum_{g,g',h, k \in [G]^4, h \neq k} u_n^2 \left( \Gamma_{g'}' B_{h,g'}' B_{h,g} \Gamma_g \right) tr\left(   B_{k,g} \left( (\tilde  P_{ghk})_{g,s} P_{s,s'} (\tilde  P_{g'hk})_{s',g'}  \right)  B_{k,g'}' \Omega_{V,k} \right)  \right\vert
\end{align*}
for $(s,s') = (g,g')$, $(s,s') = (g,h)$, $(s,s') = (g,k)$, $(s,s') = (h,h)$, $(s,s') = (h,k)$, and $(s,s') = (k,k)$. The other cases can be bounded in the same manner.

\textbf{For case $(s,s') = (g,g')$}, by \Cref{lem:P_l3o_1}, we have
\begin{align*}
& (\tilde  P_{ghk})_{g,g} P_{g,g'} (\tilde  P_{g'hk})_{g',g'}  \\
& = \underbrace{ \Xi_{1,g}' P_{g,g'} \Xi_{1,g'} }_{T_{1,gg'}}\\
& + \underbrace{ \Xi_{1,g}' P_{g,g'} \Xi_{2,g'h} + \Xi_{2,gh}'P_{g,g'}\Xi_{1,g'} + \Xi_{2,gh}'P_{g,g'}\Xi_{2,g'h}  }_{T_{2,gg'h}} \\
& + \underbrace{ \Xi_{1,g}' P_{g,g'} \Xi_{3,g'k} + \Xi_{3,gk}'P_{g,g'}\Xi_{1,g'} + \Xi_{3,gk}'P_{g,g'} \Xi_{3,g'k}  }_{T_{3,gg'k}} \\
& + \underbrace{  \begin{pmatrix}
    & \Xi_{1,g}' P_{g,g'} \Xi_{4,g'hk} + \Xi_{2,gh}'P_{g,g'}(\Xi_{3,g'k} + \Xi_{4,g'hk}) \\
    & +  \Xi_{3,gk}'P_{g,g'} (\Xi_{2,g'h} + \Xi_{4,g'hk}) + \Xi_{4,ghk}P_{g,g'}  (\tilde  P_{g'hk})_{g',g'}
\end{pmatrix}  }_{T_{4,gg'hk}},
\end{align*}
where
\begin{align*}
    & ||T_{1,gg'}||_F \lesssim ||\Xi_{1,g}||_{op} ||P_{g,g'}||_F ||\Xi_{1,g'}||_{op} \lesssim \lambda_n^2 ||P_{g,g'}||_F, \\
    & ||T_{2,gg'h}||_{op} \lesssim \lambda_n ||P_{g,g'}||_{op} ( ||P_{g,h}||_{op} + ||P_{g',h}||_{op}), \\
    & ||T_{3,gg'k}||_{op} \lesssim \lambda_n ||P_{g,g'}||_{op} ( ||P_{g,k}||_{op} + ||P_{g',k}||_{op}), \\
    & ||T_{4,gg'hk}||_{op} \lesssim \lambda_n^2 ||P_{g,g'}||_{op} (
        ||P_{g,h}||_{op} + ||P_{g,k}||_{op}) (  ||P_{g',h}||_{op}+||P_{g',k}||_{op}
    ).
\end{align*}
Then, we have
\begin{align*}
& \left\vert \sum_{g,g',h, k \in [G]^4, h \neq k} u_n^2 \left( \Gamma_{g'}' B_{h,g'}' B_{h,g} \Gamma_g \right) tr\left(   B_{k,g} \left( (\tilde  P_{ghk})_{g,g} P_{g,g'} (\tilde  P_{g'hk})_{g',g'}  \right)  B_{k,g'}' \Omega_{V,k} \right)  \right\vert    \\
& \lesssim \underbrace{ \left\vert \sum_{g,g',h, k \in [G]^4, h \neq k} u_n^2 \left( \Gamma_{g'}' B_{h,g'}' B_{h,g} \Gamma_g \right) tr\left(   B_{k,g} T_{1,gg'}  B_{k,g'}' \Omega_{V,k} \right)  \right\vert }_{R_{6,1,2,1}}  \\
& + \underbrace{  \left\vert \sum_{g,g',h, k \in [G]^4, h \neq k} u_n^2 \left( \Gamma_{g'}' B_{h,g'}' B_{h,g} \Gamma_g \right) tr\left(   B_{k,g} T_{2,gg'h}  B_{k,g'}' \Omega_{V,k} \right)  \right\vert }_{R_{6,1,2,2}} \\
& + \underbrace{  \left\vert \sum_{g,g',h, k \in [G]^4, h \neq k} u_n^2 \left( \Gamma_{g'}' B_{h,g'}' B_{h,g} \Gamma_g \right) tr\left(   B_{k,g} T_{3,gg'k}  B_{k,g'}' \Omega_{V,k} \right)  \right\vert }_{R_{6,1,2,3}} \\
& + \underbrace{  \left\vert \sum_{g,g',h, k \in [G]^4, h \neq k} u_n^2 \left( \Gamma_{g'}' B_{h,g'}' B_{h,g} \Gamma_g \right) tr\left(   B_{k,g} T_{4,gg'hk}  B_{k,g'}' \Omega_{V,k} \right)  \right\vert }_{R_{6,1,2,4}},
\end{align*}
where
\begin{align*}
R_{6,1,2,1}  = o(\omega_n^4)
\end{align*}
following the same argument as $R_{6,1,1}$,
\begin{align*}
R_{6,1,2,2} & \lesssim  \sum_{g,g',h, k \in [G]^4} u_n^3 n_G \lambda_n  ||B_{h,g'}||_{op} ||B_{h,g}||_{op}  ||  B_{k,g}||_F  ||P_{g,g'}||_{op} ( ||P_{g,h}||_{op} + ||P_{g',h}||_{op}) ||  B_{k,g'}||_F \\
& \lesssim  u_n^3 n_G \lambda_n \left( \sum_{g,g',h, k \in [G]^4}  ||B_{h,g'}||_{op}^2  ||  B_{k,g}||_F^2 ||P_{g,h}||_{op}^2\right)^{1/2}\left( \sum_{g,g',h, k \in [G]^4}  ||B_{h,g}||_{op}^2  ||  B_{k,g'}||_F^2 ||P_{g,g'}||_{op}^2\right)^{1/2} \\
& + u_n^3 n_G \lambda_n \left( \sum_{g,g',h, k \in [G]^4}  ||B_{h,g'}||_{op}^2  ||  B_{k,g}||_F^2 ||P_{g',g}||_{op}^2\right)^{1/2}\left( \sum_{g,g',h, k \in [G]^4}  ||B_{h,g}||_{op}^2  ||  B_{k,g'}||_F^2 ||P_{g',h}||_{op}^2\right)^{1/2}\\
& \lesssim u_n^3 n_G \phi_n(\phi_n + \lambda_n^2)\lambda_n  \kappa_n = o(\omega_n^4),
\end{align*}
and
\begin{align*}
    R_{6,1,2,3} = o(\omega_n^4), \quad \text{and} \quad R_{6,1,2,4} = o(\omega_n^4)
\end{align*}
following the same argument as $R_{6,1,2,2}$. This implies
\begin{align*}
R_{6,1,2} = o(\omega_n^4).
\end{align*}
we can show $R_{6,1,3} = o(\omega_n^4) $ following the same argument for $R_{1,3}$, which concludes that $R_{6,1} = o(\omega_n^4) $. We can show $R_{6,2} = o(\omega_n^4) $ in the same manner as $R_{6,1}$.

Last, we have
\begin{align*}
R_{6,3} & \lesssim  \underbrace{ \left\vert u_n  \sum_{g,g',h,h' \in [G]^4}  \left( \Gamma_g'B_{h,g}' \Omega_{V,h} B_{h,g} P_{g,g'}  B_{h',g'}' \Omega_{V,h'}B_{h',g'} \Gamma_{g'} \right) \right\vert }_{R_{6,3,1}}\\
& +  \underbrace{ \left\vert u_n  \sum_{g,g',h,h' \in [G]^4}  \left( \Gamma_g'B_{h,g}' \Omega_{V,h} B_{h,g} \left( (\tilde  P_{gh})_{g,gh} P_{gh,g'h'} (\tilde  P_{g'h'})_{g'h',g'} \right)  B_{h',g'}' \Omega_{V,h'}B_{h',g'} \Gamma_{g'} \right) \right\vert }_{R_{6,3,2}} \\
& +  \underbrace{ \left\vert u_n  \sum_{g,g',h,h' \in [G]^4}  \left( \Gamma_g'B_{h,g}' \Omega_{V,h} B_{h,g} \left( \sum_{l \in (g,h) \cap (g',h')} P_{g,l,-gh} P_{l,g',-g'h'} \right)  B_{h',g'}' \Omega_{V,h'}B_{h',g'} \Gamma_{g'} \right) \right\vert }_{R_{6,3,3}},
\end{align*}
where
\begin{align*}
    R_{6,3,1} & \lesssim u_n \sum_{g \in [G]} \left\Vert \sum_{h \in [G]}  \Gamma_g'B_{h,g}' \Omega_{V,h} B_{h,g} \right\Vert_2^2 \\
    & = u_n \sum_{g \in [G]} \Gamma_g' (B' \Omega_V B)_{g,g}^2 \Gamma_g \\
    & \lesssim u_n^3 n_G tr ((B' B)_{g,g}^2) \\
    & \lesssim u_n^3 n_G \lambda_n \kappa_n = o(\omega_n^4),
\end{align*}
\begin{align*}
R_{6,3,2}   & \lesssim u_n \sum_{g \in [G]} \left\Vert \sum_{h \in [G]}  \Gamma_g'B_{h,g}' \Omega_{V,h} B_{h,g} (\tilde  P_{gh})_{g,g}\right\Vert_2^2 +  u_n \sum_{h \in [G]} \left\Vert \sum_{g \in [G]}  \Gamma_g'B_{h,g}' \Omega_{V,h} B_{h,g} (\tilde  P_{gh})_{g,h} \right\Vert_2^2  \\
& \lesssim u_n \sum_{g \in [G]} \left\Vert \sum_{h \in [G]}  \Gamma_g'B_{h,g}' \Omega_{V,h} B_{h,g} \Xi_{1,g}\right\Vert_2^2 +  u_n \sum_{g \in [G]} \left\Vert \sum_{h \in [G]}  \Gamma_g'B_{h,g}' \Omega_{V,h} B_{h,g} \Xi_{2,gh}\right\Vert_2^2 \\
& + u_n \sum_{h \in [G]} \left\Vert \sum_{g \in [G]}  \Gamma_g'B_{h,g}' \Omega_{V,h} B_{h,g} (\tilde  P_{gh})_{g,h} \right\Vert_2^2   \\
& \lesssim o(\omega_n^4) +  u_n^3 \sum_{g \in [G]} \left(  \sum_{h \in [G]}  ||B_{h,g} \Gamma_g||_{2} || B_{h,g}||_{op} ||P_{g,h}||_{op}\right)^2 \\
& + u_n^3 \sum_{h \in [G]} \left(  \sum_{g \in [G]}  ||B_{h,g} \Gamma_g||_{2} || B_{h,g}||_{op} ||P_{g,h}||_{op}\right)^2 \\
& \lesssim o(\omega_n^4) +  u_n^3 \sum_{g \in [G]} \left(  \sum_{h \in [G]}  ||B_{h,g} \Gamma_g||_{2}^2 \right) \left(|| B_{h,g}||_{op} ||P_{g,h}||_{op}\right)^2 \\
& + u_n^3 \sum_{h \in [G]} \left(  \sum_{g \in [G]}  ||B_{h,g} \Gamma_g||_{2} || B_{h,g}||_{op} ||P_{g,h}||_{op}\right)^2 \\
& = o(\omega_n^4) + u_n^3 \phi_n \lambda_n^2 \sum_{g,h \in [G]^2}   ||B_{h,g} \Gamma_g||_{2}^2  \\
& = o(\omega_n^4) + u_n^3 n_G \phi_n \lambda_n^2 \kappa_n = o(\omega_n^4),
\end{align*}
and $R_{6,3,3} = o(\omega_n^4)$ following the same argument for $R_{1,3}$. This implies $R_{6,3} = o(\omega_n^4)$, which concludes the proof.


\subsection{Bound for \texorpdfstring{$R_8$}{R8}}
We have
\begin{align*}
    R_8 & \lesssim  \mathbb V \left( \sum_{g, h, k \in [G]^3, h \neq k, l \in [-ghk]} \left(U_g'B_{h,g}' V_h  \right) \left(V_k ' B_{k,g} P_{g,l,-ghk} U_l \right)  \right) \\
    & +  \mathbb V \left( \sum_{g, h \in [G]^2, l \in [-gh]} \left(U_g'B_{h,g}' (V_h  V_h ' - \Omega_{V,h}) B_{h,g} P_{g,l,-gh} U_l \right)  \right) \\
    & + \mathbb V \left( \sum_{g, h \in [G]^2, l \in [-gh]} \left(U_g'B_{h,g}' \Omega_{V,h} B_{h,g} P_{g,l,-gh} U_l \right)  \right) \\
    & \lesssim   \sum_{g, h, k \in [G]^3, h \neq k, l \in [-ghk]} \mathbb V \left( \left(U_g'B_{h,g}' V_h  \right) \left(V_k ' B_{k,g} P_{g,l,-ghk} U_l \right)  \right)  \\
    & +  \sum_{g, h \in [G]^2, l \in [-gh]}  \mathbb V \left(\left(U_g'B_{h,g}' (V_h  V_h ' - \Omega_{V,h}) B_{h,g} P_{g,l,-gh} U_l \right)  \right) \\
    & + \sum_{g,l \in [G]^2, g \neq l} \mathbb V \left( \sum_{h \in [-l]} \left(U_g'B_{h,g}' \Omega_{V,h} B_{h,g} P_{g,l,-gh} U_l \right)  \right) \\
    & \lesssim \underbrace{ \sum_{g, h, k \in [G]^3, h \neq k, l \in [-ghk]}  u_n^4 \left\Vert B_{h,g} \right\Vert_F^2 \left\Vert B_{k,g} P_{g,l,-ghk}\right\Vert_F^2 }_{R_{8,1}} +  \underbrace{ \sum_{g, h \in [G]^2, l \in [-gh]} u_n^2 \mathbb E \left\Vert B_{h,g}' V_h  V_h '  B_{h,g} P_{g,l,-gh}  \right\Vert_F^2 }_{R_{8,2}}  \\
    &  + \underbrace{  \sum_{g,l \in [G]^2, g \neq l} u_n^2 \left\Vert \sum_{h \in [-l]} B_{h,g}' \Omega_{V,h} B_{h,g} P_{g,l,-gh}  \right\Vert_F^2 }_{R_{8,3}},
\end{align*}
where
\begin{align*}
R_{8,1} & \lesssim \sum_{g, h,k \in [G]^3}  u_n^4 \left\Vert B_{h,g} \right\Vert_F^2 tr\begin{pmatrix}
B_{k,g}' B_{k,g} \left( \sum_{l \in [-ghk]} P_{g,l,-ghk} P_{l,g,-ghk} \right)
\end{pmatrix}   \\
& \lesssim \sum_{g, h,k \in [G]^3}  u_n^4 \left\Vert B_{h,g} \right\Vert_F^2 tr\begin{pmatrix}
B_{k,g}' B_{k,g} P_{g,g,-ghk}
\end{pmatrix}   \\
& \lesssim \sum_{g, h,k \in [G]^3}  u_n^4 \lambda_n  \left\Vert B_{h,g} \right\Vert_F^2 ||B_{k,g}||_F^2    \\
& \lesssim  u_n^4 \lambda_n \sum_{g \in [G]} \left( tr([BB']_{[g,g]}) \right)^2 \\
& \lesssim  u_n^4 n_G \lambda_n^2 \kappa_n  = o(\omega_n^4)
\end{align*}
and
\begin{align*}
R_{8,2} & \lesssim  \sum_{g, h \in [G]^2, l \in [-gh]} u_n^2 \lambda_n^2  \mathbb E tr\left(  P_{l,g,-gh}  B_{h,g}' V_{h} ||V_{h}||_2^2 V_{h}' B_{h,g} P_{g,l,-gh}\right) \\
& \lesssim \sum_{g, h \in [G]^2,l \in [-gh]} u_n^{\frac{3q-4}{q-1}} n_G^{\frac{q}{q-1}} \lambda_n^2   tr\left(  B_{h,g}' B_{h,g} P_{g,l,-gh}P_{l,g,-gh}  \right) \\
& \lesssim \sum_{g, h \in [G]^2} u_n^{\frac{3q-4}{q-1}} n_G^{\frac{q}{q-1}}  \lambda_n^2  tr\left(  B_{h,g}' B_{h,g} P_{g,g,-gh}  \right) \\
& \lesssim u_n^{\frac{3q-4}{q-1}} n_G^{\frac{q}{q-1}} \lambda_n^3 \kappa_n = o(\omega_n^4).
\end{align*}
For $R_{8,3}$, we have
\begin{align*}
R_{8,3} & \lesssim \sum_{g,h,h' \in [G]^3} u_n^2 tr\left( B_{h,g}' \Omega_{V,h} B_{h,g} \left( \sum_{l \in [-ghh']} P_{g,l,-gh}P_{l,g,-gh'} \right)  B_{h',g}' \Omega_{V,h'} B_{h',g} \right) \\
& \lesssim  \underbrace{   \sum_{g,h,h' \in [G]^3} u_n^2 tr\left( B_{h,g}' \Omega_{V,h} B_{h,g} P_{g,g} B_{h',g}' \Omega_{V,h'} B_{h',g} \right) }_{R_{8,3,1}}   \\
& + \underbrace{   \sum_{g,h,h' \in [G]^3} u_n^2 tr\left( B_{h,g}' \Omega_{V,h} B_{h,g} \left((\tilde  P_{gh})_{g,gh} P_{gh,gh'} (\tilde  P_{gh'})_{gh',g}  \right)  B_{h',g}' \Omega_{V,h'} B_{h',g} \right) }_{R_{8,3,2}}    \\
& + \underbrace{   \sum_{g,h,h' \in [G]^3} u_n^2 tr\left( B_{h,g}' \Omega_{V,h} B_{h,g} \left(\sum_{l \in (g,h) \cap (g,h')} P_{g,l,-gh} P_{l,g,-gh'}  \right)  B_{h',g}' \Omega_{V,h'} B_{h',g} \right) }_{R_{8,3,3}},
\end{align*}
where
\begin{align*}
R_{8,3,1} & \lesssim u_n^2  \sum_{g \in [G]} tr( (B' \Omega_V B)_{g,g} P_{g,g}(B' \Omega_V B)_{g,g}) \\
& \lesssim  u_n^4 \lambda_n^2 \kappa_n =  o(\omega_n^4),
\end{align*}
\begin{align*}
R_{8,3,2} & \lesssim u_n^2  \sum_{g \in [G]} tr\left(\left( \sum_{h \in [G]} B_{h,g}' \Omega_{V,h} B_{h,g} (\tilde  P_{gh})_{g,g} \right) P_{g,g} \left( \sum_{h \in [G]} B_{h,g}' \Omega_{V,h} B_{h,g} (\tilde  P_{gh})_{g,g} \right)' \right) \\
& +  \sum_{g,h' \in [G]^2} u_n^2 tr\left( \left( \sum_{h \in [G]} B_{h,g}' \Omega_{V,h} B_{h,g} (\tilde  P_{gh})_{g,g} \right) P_{g,h'} \left( (\tilde  P_{gh'})_{h',g}   B_{h',g}' \Omega_{V,h'} B_{h',g}\right)  \right) \\
& +  \sum_{h,h' \in [G]^2} u_n^2 tr\left( \left( \sum_{g \in [G]} B_{h,g}' \Omega_{V,h} B_{h,g} (\tilde  P_{gh})_{g,h} \right) P_{h,h'} \left( (\tilde  P_{gh'})_{h',g}   B_{h',g}' \Omega_{V,h'} B_{h',g}\right)  \right) \\
& \lesssim u_n^2 \sum_{g \in [G]}  \left\Vert \sum_{h \in [G]} B_{h,g}' \Omega_{V,h} B_{h,g} (\tilde  P_{gh})_{g,g} \right\Vert_F^2 + u_n^2 \sum_{h \in [G]} \left\Vert \sum_{g \in [G]} B_{h,g}' \Omega_{V,h} B_{h,g} (\tilde  P_{gh})_{g,h} \right\Vert_F^2 \\
& \lesssim u_n^2 \sum_{g \in [G]}  \left\Vert \sum_{h \in [G]} B_{h,g}' \Omega_{V,h} B_{h,g} \Xi_{1,g} \right\Vert_F^2 + u_n^2 \sum_{g \in [G]}  \left\Vert \sum_{h \in [G]} B_{h,g}' \Omega_{V,h} B_{h,g} \Xi_{9,gh} \right\Vert_F^2 \\
& + u_n^2 \sum_{h \in [G]} \left\Vert \sum_{g \in [G]} B_{h,g}' \Omega_{V,h} B_{h,g} (\tilde  P_{gh})_{g,h} \right\Vert_F^2 \\
& \lesssim   o(\omega_n^4) + u_n^4 \lambda_n^2 \sum_{g \in [G]}  \left( \sum_{h \in [G]} ||B_{h,g}||_F  ||P_{g,h}||_{op} \right)^2 + u_n^4 \lambda_n^2 \sum_{h \in [G]}  \left( \sum_{g \in [G]} ||B_{h,g}||_F  ||P_{g,h}||_{op} \right)^2 \\
& \lesssim   o(\omega_n^4) + u_n^4 \phi_n \lambda_n^2 \kappa_n =  o(\omega_n^4)
\end{align*}
Last, we can show $R_{8,3,3} = o(\omega_n^4) $ following the same argument for $R_{1,3}$, which concludes that $R_8 = o(\omega_n^4)$.

\subsection{Bound for \texorpdfstring{$R_9$}{R9}}
By \Cref{lem:P_l3o_1}.6, we have
\begin{align*}
R_9 & \lesssim \sum_{l \in [G]} u_n \left\Vert \sum_{g,h \in [-l]^2} \sum_{k \in [-ghl]}  \left( P_{l,h,-ghk} B_{g,h}' \Pi_g \right) \left(\Gamma_h' B_{g,h}' P_{g,k,-gh} \Pi_k \right) \right\Vert_2^2 \\
& = \sum_{ \substack{ g,g',h,h' \in [G]^4\\ k\in [-gh], k' \in [-g'h']}} u_n \begin{bmatrix}
& \left(\Gamma_h' B_{g,h}' P_{g,k,-gh} \Pi_k \right) \Pi_g' B_{g,h} \left( \sum_{l \in [-gg'hh'kk']}P_{h,l,-ghk} P_{l,h',-g'h'k'} \right) \\
& \times B_{g',h'}' \Pi_{g'} \left(\Gamma_{h'}' B_{g',h'}' P_{g',k',-g'h'} \Pi_k' \right)
\end{bmatrix}
 \\
& = \sum_{ \substack{ g,g',h,h' \in [G]^4\\ k\in [-gh], k' \in [-g'h']}} u_n\begin{bmatrix}
& \left(\Gamma_g' B_{h,g}' P_{h,k,-hg} \Pi_k \right) \Pi_h' B_{h,g} \left( \sum_{l \in [-gg'hh'kk']}P_{g,l,-ghk} P_{l,g',-g'h'k'} \right) \\
& \times B_{h',g'}' \Pi_{h'}\left(\Gamma_{g'}' B_{h',g'}' P_{h',k',-h'g'} \Pi_k' \right)
\end{bmatrix} \\
& \lesssim \underbrace{ \left\vert \sum_{ \substack{ g,g',h,h' \in [G]^4\\ k\in [-gh], k' \in [-g'h']}} u_n\left(\Gamma_g' B_{h,g}' P_{h,k,-hg} \Pi_k \right) \Pi_h' B_{h,g} P_{g,g'} B_{h',g'}' \Pi_{h'} \left(\Gamma_{g'}' B_{h',g'}' P_{h',k',-h'g'} \Pi_k' \right) \right\vert}_{R_{9,1}} \\
& + \underbrace{ \left\vert  \sum_{ \substack{ g,g',h,h' \in [G]^4\\ k\in [-gh], k' \in [-g'h']}} u_n\begin{bmatrix}
& \left(\Gamma_g' B_{h,g}' P_{h,k,-hg} \Pi_k \right) \Pi_h' B_{h,g} \left( (\tilde  P_{ghk})_{g,ghk} P_{ghk,g'h'k'} (\tilde  P_{g'h'k'})_{g'h'k',g'} \right) \\
& \times B_{h',g'}' \Pi_{h'} \left(\Gamma_{g'}' B_{h',g'}' P_{h',k',-h'g'} \Pi_k' \right)
\end{bmatrix}  \right\vert}_{R_{9,2}} \\
& + \underbrace{ \left\vert  \sum_{ \substack{ g,g',h,h' \in [G]^4\\ k\in [-gh], k' \in [-g'h']}} u_n\begin{bmatrix}
& \left(\Gamma_g' B_{h,g}' P_{h,k,-hg} \Pi_k \right) \Pi_h' B_{h,g} \left(\sum_{l \in (g,h,k) \cap (g',h',k')} P_{g,l,-ghk} P_{l,g',-g'h'k'}  \right) \\
& \times B_{h',g'}' \Pi_{h'} \left(\Gamma_{g'}' B_{h',g'}' P_{h',k',-h'g'} \Pi_k' \right)
\end{bmatrix}  \right\vert}_{R_{9,3}},
\end{align*}
where
\begin{align*}
    R_{9,1} & = \left\vert \sum_{ \substack{ g,g',h,h' \in [G]^4}} u_n\left(\Gamma_g' B_{h,g}'  \Pi_h \right) \Pi_h' B_{h,g} P_{g,g'}   B_{h',g'}' \Pi_{h'}\left(\Gamma_{g'}' B_{h',g'}' \Pi_h' \right) \right\vert \\
    & \lesssim \sum_{g \in [G]} u_n||\sum_{h \in [G]}\left(\Gamma_g' B_{h,g}'  \Pi_h \right) \Pi_h' B_{h,g}||_2^2 \\
    & \lesssim u_n n_G^2 \lambda_n^2 \sum_{g,h \in [G]} ||\Pi_h' B_{h,g}||_2^2 \\
    & \lesssim u_n n_G^3 \lambda_n^2 \kappa_n = o(\omega_n^4).
\end{align*}

Next, following the same argument in $R_{1,2}$, in order to bound $R_{9,2}$, it suffices to show
\begin{align}
&    \sum_{g \in [G]} u_n \left\Vert \sum_{h \in [G], k \in [-gh]} \left(\Gamma_g' B_{h,g}' P_{h,k,-hg} \Pi_k \right) \Pi_h' B_{h,g}  (\tilde  P_{ghk})_{g,g} \right\Vert_2^2  = o(\omega_n^4), \label{eq:R921}\\
&    \sum_{h \in [G]} u_n  \left\Vert \sum_{g \in [G], k \in [-gh]} \left(\Gamma_g' B_{h,g}' P_{h,k,-hg} \Pi_k \right) \Pi_h' B_{h,g}  (\tilde  P_{ghk})_{g,h} \right\Vert_2^2  = o(\omega_n^4), \label{eq:R922} \\
&    \sum_{k \in [G]} u_n  \left\Vert \sum_{g,h \in [-k]^2} \left(\Gamma_g' B_{h,g}' P_{h,k,-hg} \Pi_k \right) \Pi_h' B_{h,g}  (\tilde  P_{ghk})_{g,k} \right\Vert_2^2  = o(\omega_n^4). \label{eq:R923}
\end{align}
For \cref{eq:R921}, by \Cref{lem:P_l3o_1}.5, we have
\begin{align*}
&   \sum_{g \in [G]} u_n  \left\Vert \sum_{h \in [G], k \in [-gh]} \left(\Gamma_g' B_{h,g}' P_{h,k,-hg} \Pi_k \right) \Pi_h' B_{h,g}  (\tilde  P_{ghk})_{g,g} \right\Vert_2^2 \\
& \lesssim \sum_{g \in [G]} u_n  \left\Vert \sum_{h \in [G], k \in [-gh]} \left(\Gamma_g' B_{h,g}' P_{h,k,-hg} \Pi_k \right) \Pi_h' B_{h,g} (\Xi_{1,g}+ \Xi_{2,gh})  \right\Vert_2^2 \\
& + \sum_{g \in [G]} u_n  \left\Vert \sum_{h \in [G], k \in [-gh]} \left(\Gamma_g' B_{h,g}' P_{h,k,-hg} \Pi_k \right) \Pi_h' B_{h,g} (\Xi_{3,gk}+ \Xi_{4,ghk})  \right\Vert_2^2 \\
& \lesssim \sum_{g \in [G]} u_n  \left\Vert \sum_{h \in [G]} \left(\Gamma_g' B_{h,g}' \Pi_h \right) \Pi_h' B_{h,g} (\Xi_{1,g}+ \Xi_{2,gh})  \right\Vert_2^2 \\
& + \sum_{g \in [G]} \left\Vert \sum_{h \in [G], k \in [-gh]} \left(\Gamma_g' B_{h,g}' P_{h,k,-hg} \Pi_k \right) \Pi_h' B_{h,g} (\Xi_{3,gk}+ \Xi_{4,ghk})  \right\Vert_2^2 \\
& \lesssim \sum_{g \in [G]}  u_n n_G \lambda_n^2  \left( \sum_{h \in [G]} ||\Gamma_g'B_{g,h}||_{2} ||\Pi_h' B_{h,g} ||_2  \right)^2 \\
& + \sum_{g \in [G]} u_n n_G \lambda_n^2 \left( \sum_{h,k \in [G]^2} ||\Gamma_g'B_{h,g}||_{op} (||P_{h,k}||_{op} + ||P_{g,k}||_{op})^2  ||\Pi_h' B_{h,g}||_2 \right )^2 \\
& \lesssim \sum_{g \in [G]}  u_n n_G \lambda_n^2  \left( \sum_{h \in [G]} ||\Gamma_g'B_{g,h}||_{2} ||\Pi_h' B_{h,g} ||_2  \right)^2  + u_n n_G \phi_n^2 \lambda_n^2\sum_{g \in [G]}  \left( \sum_{h \in [G]} ||\Gamma_g' B_{h,g}||_{2} ||\Pi_h' B_{h,g}||_2 \right )^2 \\
& \lesssim
u_n n_G^3 (1+\phi_n^2) \lambda_n^3 \kappa_n  = o(\omega_n^4).
\end{align*}
Similarly, we can show \cref{eq:R922} holds. Next, for $\cref{eq:R923}$, we have
\begin{align*}
 & \sum_{k \in [G]} u_n  \left\Vert \sum_{g,h \in [-k]^2} \left(\Gamma_g' B_{h,g}' P_{h,k,-hg} \Pi_k \right) \Pi_h' B_{h,g}  (\tilde  P_{ghk})_{g,k} \right\Vert_2^2 \\
 & \lesssim \sum_{k \in [G]} u_n  \left\Vert \sum_{g,h \in [-k]^2} \left(\Gamma_g' B_{h,g}' P_{h,k,-hg} \Pi_k \right) \Pi_h' B_{h,g}  \Xi_{7,gk} \right\Vert_2^2 \\
 & + \sum_{k \in [G]} u_n  \left\Vert \sum_{g,h \in [-k]^2} \left(\Gamma_g' B_{h,g}' P_{h,k,-hg} \Pi_k \right) \Pi_h' B_{h,g}  \Xi_{8,ghk} \right\Vert_2^2 \\
 & \lesssim \sum_{k \in [G]} u_n  \left\Vert \sum_{g,h \in [-k]^2} \left(\Gamma_g' B_{h,g}' P_{h,k} \Pi_k \right) \Pi_h' B_{h,g}  \Xi_{7,gk} \right\Vert_2^2 \\
 & + \sum_{k \in [G]} u_n  \left\Vert \sum_{g,h \in [-k]^2} \left(\Gamma_g' B_{h,g}' (\tilde P_{gh})_{h,gh} P_{gh,k}   \Pi_k \right) \Pi_h' B_{h,g}  \Xi_{7,gk} \right\Vert_2^2 \\
 & + \sum_{k \in [G]} u_n  \left\Vert \sum_{g,h \in [-k]^2} \left(\Gamma_g' B_{h,g}' P_{h,k,-hg} \Pi_k \right) \Pi_h' B_{h,g}  \Xi_{8,ghk} \right\Vert_2^2 \\
& \lesssim \sum_{k \in [G]} u_n  n_G \left( \sum_{g,h \in [G]^2} || B_{h,g} \Gamma_g||_{2}^2 ||P_{g,k}||_{op}^2  \right) \left(\sum_{g,h \in [G]^2}  || \Pi_h' B_{h,g}||_2^2 ||P_{h,k}||_{op}^2  \right) \\
& + \sum_{k \in [G]} u_n n_G \lambda_n^2 \left( \sum_{g,h \in [G]^2} ||B_{h,g} \Gamma_g||_{2}^2 (||P_{g,k}||_{op}^2+||P_{h,k}||_{op}^2)   \right) \left(\sum_{g,h \in [G]^2}  || \Pi_h' B_{h,g}||_2^2  ||P_{g,k}||_{op}^2 \right) \\
& + \sum_{k \in [G]} u_n n_G \lambda_n^2 \left( \sum_{g,h \in [G]^2} ||B_{h,g} \Gamma_g ||_{2}^2 (||P_{h,k}||_{op}^2 + ||P_{g,k}||_{op}^2) \right) \left(\sum_{g,h \in [G]^2}  || \Pi_h' B_{h,g}||_2^2  (||P_{h,k}||_{op}^2 +||P_{g,k}||_{op}^2) \right) \\
& \lesssim \sum_{k \in [G]} u_n n_G^2 \phi_n \lambda_n \left(\sum_{g,h \in [G]^2}  || \Pi_h' B_{h,g}||_2^2 ||P_{h,k}||_{op}^2  \right) \\
& + \sum_{k \in [G]} u_n n_G^2 \phi_n(\phi_n + \lambda_n^2) \lambda_n^2 \left(\sum_{g,h \in [G]^2}  || \Pi_h' B_{h,g}||_2^2  (||P_{h,k}||_{op}^2 + ||P_{g,k}||_{op}^2) \right) \\
& \lesssim u_n n_G^3 \phi_n^2 \lambda_n \kappa_n  + u_n n_G^3 \phi_n^2(\phi_n + \lambda_n^2) \lambda_n^2 \kappa_n = o(\omega_n^4).
\end{align*}
This leads to the desired result that
\begin{align*}
    R_{9,2} = o(\omega_n^4).
\end{align*}
Last, we can show $R_{9,3} = o(\omega_n^4) $ following the same argument for $R_{1,3}$, which concludes that $R_9 = o(\omega_n^4)$.

\subsection{Bounds for \texorpdfstring{$R_{10}, R_{11} \text{ and }R_{12}$}{R10, R11, and R12}}
We have
\begin{align*}
    R_{10} & \lesssim \sum_{g,l \in [G]^2, l \neq g} u_n^2 \left\Vert \sum_{h \in [-l]} \sum_{ k \in [-ghl]}  P_{l,h,-ghk} B_{g,h}'  \left(\Gamma_h' B_{g,h}' P_{g,k,-gh} \Pi_k \right) \right\Vert_F^2 \\
    &  = \sum_{g,l \in [G]^2, l \neq g} \sum_{h,h' \in [-l]^2} \sum_{ k \in [-ghl],k' \in [-gh'l] }  u_n^2 \begin{bmatrix}
      &  \left(\Gamma_h' B_{g,h}' P_{g,k,-gh} \Pi_k \right) tr\left( B_{g,h}  P_{h,l,-ghk}P_{l,h',-gh'k'} B_{g,h'}'  \right) \\
      & \times \left(\Gamma_{h'}' B_{g,h'}' P_{g,k',-gh'} \Pi_{k'} \right)
    \end{bmatrix} \\
    & = \sum_{g,h,h' \in [G]^3} \sum_{k \in [-gh], k' \in [-gh']}u_n^2 \begin{bmatrix}
      &  \left(\Gamma_h' B_{g,h}' P_{g,k,-gh} \Pi_k \right) \\
      &  tr\left( B_{g,h} \left( \sum_{l \in [-ghh'kk']} P_{h,l,-ghk}P_{l,h',-gh'k'} \right) B_{g,h'}'  \right) \\
      & \times \left(\Gamma_{h'}' B_{g,h'}' P_{g,k',-gh'} \Pi_{k'} \right)
    \end{bmatrix} \\
        & = \sum_{g,g',h \in [G]^3} \sum_{k \in [-g'h], k' \in [-g'h]}u_n^2 \begin{bmatrix}
      &  \left(\Gamma_g' B_{h,g}' P_{h,k,-hg} \Pi_k \right) \\
      &  tr\left( B_{h,g} \left( \sum_{l \in [-gg'hkk']} P_{g,l,-ghk}P_{l,g',-gh'k'} \right) B_{h,g'}'  \right) \\
      & \times \left(\Gamma_{g'}' B_{h,g'}' P_{h,k',-g'h} \Pi_{k'} \right)
    \end{bmatrix} \\
    & \lesssim \underbrace{ \left\vert \sum_{g,g',h \in [G]^3} \sum_{k \in [-gh], k' \in [-g'h]}u_n^2 \begin{bmatrix}
      \left(\Gamma_g' B_{h,g}' P_{h,k,-hg} \Pi_k \right) tr\left( B_{h,g} P_{g,g'} B_{h,g'}'  \right)  \left(\Gamma_{g'}' B_{h,g'}' P_{h,k',-g'h} \Pi_{k'} \right)
    \end{bmatrix} \right\vert }_{R_{10,1}} \\
    & +  \underbrace{ \left\vert \sum_{g,g',h \in [G]^3} \sum_{k \in [-gh], k' \in [-g'h]}u_n^2 \begin{bmatrix}
      &  \left(\Gamma_g' B_{h,g}' P_{h,k,-hg} \Pi_k \right) \\
      &  tr\left( B_{h,g} \left( (\tilde  P_{ghk})_{g,ghk} P_{ghk,g'hk'} (\tilde  P_{g'hk'})_{g'hk',g'} \right) B_{h,g'}'  \right) \\
      & \times \left(\Gamma_{g'}' B_{h,g'}' P_{h,k',-g'h} \Pi_{k'} \right)
    \end{bmatrix} \right\vert  }_{R_{10,2}} \\
    & +  \underbrace{ \left\vert \sum_{g,g',h \in [G]^3} \sum_{k \in [-gh], k' \in [-g'h]}u_n^2 \begin{bmatrix}
      &  \left(\Gamma_g' B_{h,g}' P_{h,k,-hg} \Pi_k \right) \\
      &  tr\left( B_{h,g} \left(  \sum_{l \in (g,h,k) \cap (g',h,k')} P_{g,l,-ghk} P_{l,g',-g'h'k} \right) B_{h,g'}'  \right) \\
      & \times \left(\Gamma_{g'}' B_{h,g'}' P_{h,k',-g'h} \Pi_{k'} \right)
    \end{bmatrix} \right\vert  }_{R_{10,3}},
\end{align*}
where
\begin{align*}
    R_{10,1} & = \left\vert \sum_{g,g',h \in [G]^3} u_n^2 \begin{bmatrix}
      \left(\Gamma_g' B_{h,g}'  \Pi_h \right) tr\left( B_{h,g} P_{g,g'} B_{h,g'}'  \right)  \left(\Gamma_{g'}' B_{h,g'}'  \Pi_{h} \right)
    \end{bmatrix} \right\vert \\
    & \lesssim \sum_{g,h \in [G]^2} u_n^2 ||\left(\Gamma_g' B_{h,g}'  \Pi_h \right)B_{h,g}||_F^2 \\
    & \lesssim u_n^2 n_G^2 \lambda_n^2 \kappa_n = o(\omega_n^4).
\end{align*}

Next, by a slight abuse of notation, denote
\begin{align*}
& \mathcal A_{h,g} = \sum_{k \in [-gh]} \left(\Gamma_g' B_{h,g}' P_{h,k,-hg} \Pi_k \right)  B_{h,g} (\tilde  P_{ghk})_{g,g}, \\
& \mathcal B_{h,k} = \sum_{g \in [-hk]} \left(\Gamma_g' B_{h,g}' P_{h,k,-hg} \Pi_k \right)  B_{h,g} (\tilde  P_{ghk})_{g,k}, \quad \text{and} \\
& \mathcal C_h = \sum_{g \in [G]} \sum_{k \in [-gh]} \left(\Gamma_g' B_{h,g}' P_{h,k,-hg} \Pi_k \right)  B_{h,g} (\tilde  P_{ghk})_{g,h}.
\end{align*}

To show $R_{10,2} = o(\omega_n^4) $ following the same argument as $R_{2,2}$, we only need to show
\begin{align}
 &   u_n^2 \sum_{g,h \in [G]^2}||\mathcal A_{h,g}||_F^2 = o(\omega_n^4), \label{eq:R10A}\\
 &   u_n^2 \sum_{h,k \in [G]^2}||\mathcal B_{h,k}||_F^2 = o(\omega_n^4), \label{eq:R10B}\\
 &   u_n^2 \sum_{h \in [G]}||\mathcal C_{h}||_F^2 = o(\omega_n^4). \label{eq:R10C}
\end{align}
For \cref{eq:R10A}, we have
\begin{align*}
u_n^2 \sum_{g,h \in [G]^2}||\mathcal A_{h,g}||_F^2 & \lesssim \underbrace{ u_n^2 \sum_{g,h \in [G]^2} \left\Vert \sum_{k \in [-gh]} \left(\Gamma_g' B_{h,g}' P_{h,k,-hg} \Pi_k \right)  B_{h,g} (\Xi_{1,g}+\Xi_{2,gh})\right\Vert_F^2}_{R_{10,2,1}} \\
& + \underbrace{ u_n^2 \sum_{g,h \in [G]^2} \left\Vert \sum_{k \in [-gh]} \left(\Gamma_g' B_{h,g}' P_{h,k,-hg} \Pi_k \right)  B_{h,g} (\Xi_{3,gk}+\Xi_{4,ghk})\right\Vert_F^2 }_{R_{10,2,2}},
\end{align*}
where
\begin{align*}
R_{10,2,1} & =    u_n^2 \sum_{g,h \in [G]^2} \left\Vert \left(\Gamma_g' B_{h,g}' \Pi_h \right)  B_{h,g} (\Xi_{1,g}+\Xi_{2,gh})\right\Vert_F^2  \lesssim u_n^2 n_G^2 \lambda_n^4 \kappa_n = o(\omega_n^4)
\end{align*}
and by \Cref{lem:P_l3o_1},
\begin{align*}
R_{10,2,2} & \lesssim      u_n^2 n_G^2 \lambda_n^2 \sum_{g,h \in [G]^2} \left( \sum_{k \in [G]} \left(||P_{h,k}||_{op} + ||P_{g,k}||_{op}\right)^2 \right) ||B_{h,g}||_F^2 \\
& \lesssim   u_n^2 n_G^2 \phi_n \lambda_n^2 \kappa_n = o(\omega_n^4).
\end{align*}

For \cref{eq:R10B}, we have
\begin{align*}
 u_n^2 \sum_{h,k \in [G]^2}||\mathcal B_{h,k}||_F^2 & \lesssim   u_n^2 n_G^2 \lambda_n^2 \sum_{h,k \in [G]^2} \left(  \sum_{g \in [G]} ||B_{h,g}'||_{op}  \left(||P_{h,k}||_{op} + ||P_{g,k}||_{op}\right) ||  B_{h,g}||_F \right)^2 \\
 & \lesssim u_n^2 n_G^2 (\phi_n + \lambda_n^2) \lambda_n^2 \left( \sum_{g,h,k \in [G]^2} \left(||P_{h,k}||_{op} + ||P_{g,k}||_{op}\right)^2 ||  B_{h,g}||_F^2 \right) \\
 & \lesssim u_n^2 n_G^2 \phi_n(\phi_n + \lambda_n^2)  \lambda_n^2  \kappa_n  = o(\omega_n^4).
\end{align*}

For \cref{eq:R10C}, we have
\begin{align*}
 u_n^2 \sum_{h \in [G]}||\mathcal C_{h}||_F^2 & \lesssim  \underbrace{ u_n^2 \sum_{h \in [G]} \left\Vert   \sum_{g \in [G]} \sum_{k \in [-gh]}  \left(\Gamma_g' B_{h,g}' P_{h,k,-hg} \Pi_k \right)  B_{h,g} \Xi_{5,gh} \right\Vert_F^2 }_{R_{10,2,3}}\\
 & + \underbrace{   u_n^2 \sum_{h \in [G]} \left\Vert   \sum_{g \in [G]} \sum_{k \in [-gh]}   \left(\Gamma_g' B_{h,g}' P_{h,k,-hg} \Pi_k \right)  B_{h,g} \Xi_{6,ghk} \right\Vert_F^2 }_{R_{10,2,4}},
\end{align*}
where
\begin{align*}
R_{10,2,3} & = u_n^2 \sum_{h \in [G]} \left\Vert   \sum_{g \in [G]} \left(\Gamma_g' B_{h,g}'  \Pi_h \right)  B_{h,g} \Xi_{5,gh} \right\Vert_F^2 =o(\omega_n^4)
\end{align*}
and
\begin{align*}
R_{10,2,4} & \lesssim      u_n^2 n_G^2 \sum_{h \in [G]} \left(    \sum_{g,k \in [G]^2}   ||B_{h,g}||_{op}  \left(||P_{h,k}||_{op} + ||P_{g,k}||_{op}\right)  || B_{h,g}||_F  \begin{pmatrix}
    & ||P_{g,h}||_{op} ||P_{g,k}||_{op} \\
    &+ ||P_{h,k}||_{op}||P_{g,h}||_{op} \\
    &+||P_{h,k}||_{op}||P_{g,k}||_{op}
\end{pmatrix} \right)^2 \\
& \lesssim     u_n^2 n_G^2 \lambda_n^2  \sum_{h \in [G]} \left[\sum_{g,k \in [G]^2}   ||B_{h,g}||_{op}^2  \left(||P_{h,k}||_{op} + ||P_{g,k}||_{op}\right)^2  \right] \\
& \times \left[\sum_{g,k \in [G]^2} || B_{h,g}||_F^2  \left(||P_{h,k}||_{op} + ||P_{g,k}||_{op}\right)^2 \right] \\
& \lesssim    u_n^2 n_G^2 \phi_n^2 ( \phi_n + \lambda_n^2)  \lambda_n^2 \kappa_n =o(\omega_n^4).
\end{align*}
This concludes that
\begin{align*}
    R_{10,2} = o(\omega_n^4).
\end{align*}
Last, following the same argument for $R_{1,3}$, we can show that
\begin{align*}
    R_{10,3} =  o(\omega_n^4).
\end{align*}
This concludes the proof for $R_{10}$. We can show
\begin{align*}
    R_{11} = o(\omega_n^4) \quad \text{and} \quad    R_{12} = o(\omega_n^4)
\end{align*}
by combining the above arguments with those for $R_3$ and $R_4$.

\subsection{Bound for \texorpdfstring{$R_{13}$}{R13}}
By \Cref{lem:P_l3o_1}, for $k \neq h$, we have
\begin{align*}
P_{l,g,-ghk} & = P_{l,g} + P_{l,ghk} (\tilde  P_{ghk})_{ghk,g} \\
& = \underbrace{ \begin{pmatrix}
P_{l,g} + P_{l,g} \Xi_{1,g} + P_{l,g} \Xi_{2,gh} + P_{l,h} \Xi_{5,gh}'
\end{pmatrix} }_{\Theta_{l,g,gh}} \\
& +  \underbrace{ \begin{pmatrix}
P_{l,g} (\Xi_{3,gk} + \Xi_{4,ghk})     + P_{l,h} \Xi_{6,ghk}' + P_{l,k}(\Xi_{7,gk} + \Xi_{8,ghk})'
\end{pmatrix} }_{\tilde \Theta_{l,g,ghk}}
\end{align*}
such that
\begin{align*}
\left\Vert    \sum_{l \in [-gh]} \Theta_{l,g,gh}' \Theta_{l,g,gh} \right\Vert_{op} \lesssim \lambda_n
\end{align*}
and for $k,k' \in [-gh]$
\begin{align*}
\left( \sum_{l \in [-ghkk']} \tilde \Theta_{l,g,ghk}' \tilde \Theta_{l,g,ghk} \right) \lesssim \lambda_n (||P_{g,k}||_{op} + ||P_{h,k}||_{op})^2.
\end{align*}

We have
\begin{align*}
R_{13} & \lesssim \sum_{g, h \in [G]^2} \sum_{l \in [-gh]}  \mathbb V\left(  \sum_{k \in [-ghl]} \left( U_l' P_{l,h,-ghk} B_{g,h}' V_g \right) \left(U_h' B_{g,h}' P_{g,k,-gh} \Pi_k \right) \right) \\
& = \sum_{g, h \in [G]^2} \sum_{l \in [-gh]}  \mathbb V\left(  \sum_{k \in [-ghl]} \left( U_l' P_{l,g,-ghk} B_{h,g}' V_h \right) \left(U_g' B_{h,g}' P_{h,k,-gh} \Pi_k \right) \right) \\
& \lesssim  \underbrace{ \sum_{g, h \in [G]^2} \sum_{l \in [-gh]}  \mathbb V\left(  \sum_{k \in [-ghl]} \left( U_l' \Theta_{l,g,gh} B_{h,g}' V_h \right) \left(U_g' B_{h,g}' P_{h,k,-gh} \Pi_k \right) \right)}_{R_{13,1}} \\
& +  \underbrace{  \sum_{g, h \in [G]^2} \sum_{l \in [-gh]}  \mathbb V\left(  \sum_{k \in [-ghl]} \left( U_l' \tilde \Theta_{l,g,ghk} B_{h,g}' V_h \right) \left(U_g' B_{h,g}' P_{h,k,-gh} \Pi_k \right) \right)}_{R_{13,2}},
\end{align*}
where
\begin{align*}
R_{13,1} & = \sum_{g, h \in [G]^2} \sum_{l \in [-gh]}  \mathbb V\left( \left( U_l' \Theta_{l,g,gh} B_{h,g}' V_h \right) \left(U_g' B_{h,g}' (\Pi_h - P_{h,l,-gh} \Pi_l) \right) \right) \\
& \lesssim \sum_{g, h \in [G]^2} \sum_{l \in [-gh]} u_n^3 n_G \lambda_n^2 ||\Theta_{l,g,gh} B_{h,g}'||_F^2 \\
& = \sum_{g, h \in [G]^2} u_n^3 n_G \lambda_n^2 tr\left(B_{h,g}  \left( \sum_{l \in [-gh]} \Theta_{l,g,gh}'\Theta_{l,g,gh} \right) B_{h,g}'\right) \\
& \lesssim u_n^3 n_G \lambda_n^3 \kappa_n = o(\omega_n^4)
\end{align*}
and
\begin{align*}
R_{13,2} & = \sum_{g, h \in [G]^2} \sum_{l \in [-gh]}  \sum_{k,k' \in [-ghl]^2} \mathbb E \begin{bmatrix}
&    \left( U_l' \tilde \Theta_{l,g,ghk} B_{h,g}' V_h \right)  \left( U_l' \tilde \Theta_{l,g,ghk'} B_{h,g}' V_h \right)   \\
& \times  \left(U_g' B_{h,g}' P_{h,k,-gh} \Pi_k \right) \left(U_g' B_{h,g}' P_{h,k',-gh} \Pi_{k'} \right)
\end{bmatrix}\\
& \lesssim \sum_{g, h \in [G]^2} \sum_{l \in [-gh]}  \sum_{k,k' \in [-ghl]^2} \mathbb E \left( U_l' \tilde \Theta_{l,g,ghk} B_{g,h}' V_g \right)^2  \left(U_h' B_{g,h}' P_{g,k',-gh} \Pi_{k'} \right)^2\\
& \lesssim \sum_{g, h \in [G]^2} \sum_{l \in [-gh]}  \sum_{k,k' \in [-ghl]^2} u_n^3 \left\Vert \tilde \Theta_{l,g,ghk} B_{g,h}' \right\Vert^2  \left\Vert B_{g,h}' P_{g,k',-gh} \Pi_{k'} \right\Vert_2^2\\
& = \sum_{g, h \in [G]^2}   \sum_{k,k' \in [-gh]^2}  u_n^3 tr\left( B_{h,g}  \left( \sum_{l \in [-ghkk']} \tilde \Theta_{l,g,ghk}' \tilde \Theta_{l,g,ghk} \right) B_{h,g}' \right)  \left\Vert B_{h,g}' P_{h,k',-gh} \Pi_{k'} \right\Vert_2^2\\
& \lesssim \sum_{g, h \in [G]^2}   \sum_{k,k' \in [-gh]^2}  u_n^3 n_G \lambda_n || B_{h,g}||_F^2  (||P_{g,k}||_{op} + ||P_{h,k}||_{op})^2 || B_{h,g}||_{op}^2 ||P_{h,k',-gh}||_{op}^2\\
& \lesssim u_n^3 n_G \phi_n^2 \lambda_n^3 \kappa_n = o(\omega_n^4).
\end{align*}
This leads to the desired result $R_{13} = o(\omega_n^4)$.

\subsection{Bounds for \texorpdfstring{$R_{14} \text{ and }R_{15}$}{R14 and R15}}

We have
\begin{align*}
 R_{14}    & \lesssim  \sum_{g, k \in [G]^2, g\neq k } \sum_{l \in [-gk]}   \mathbb V\left(\sum_{h \in [-gkl]}\left( U_l' P_{l,h,-ghk} B_{g,h}' V_g \right) \left(\Gamma_h' B_{g,h}' P_{g,k,-gh} V_k \right) \right) \\
 & =  \sum_{h, k \in [G]^2, h \neq k } \sum_{l \in [-hk]}   \mathbb V\left(\sum_{g \in [-hkl]}\left( U_l' P_{l,g,-ghk} B_{h,g}' V_h \right) \left(\Gamma_g' B_{h,g}' P_{h,k,-gh} V_k \right) \right) \\
& \lesssim \sum_{h, k \in [G]^2, h \neq k } \sum_{l \in [-hk]}  u_n^2 \mathbb E \left\Vert \sum_{g \in [-hkl]} P_{l,g,-ghk} B_{h,g}' V_h \Gamma_g' B_{h,g}' P_{h,k,-gh}  \right\Vert_F^2 \\
& = \sum_{g,g',h, \in [G]^3 } \sum_{ k \in [-gg'h]} u_n^2 \begin{bmatrix}
& tr\left(\Omega_{V,h} B_{h,g} \left( \sum_{l \in [-gg'hk]}  P_{g,l,-ghk}  P_{l,g',-g'hk} \right) B_{h,g'}'\right) \\
& \times (\Gamma_g' B_{h,g}' P_{h,k,-gh}  P_{k,h,-g'h}  B_{h,g'} \Gamma_{g'})
\end{bmatrix} \\
& \lesssim \underbrace{ \left\vert \sum_{g,g',h, \in [G]^3 } \sum_{ k \in [-gg'h]} u_n^2\begin{bmatrix}
& tr\left(\Omega_{V,h} B_{h,g} P_{g,g'} B_{h,g'}'\right) \\
& \times (\Gamma_g' B_{h,g}' P_{h,k,-gh}  P_{k,h,-g'h}  B_{h,g'} \Gamma_{g'})
\end{bmatrix}
 \right\vert }_{R_{14,1}}\\
& + \underbrace{  \left\vert \sum_{g,g',h, \in [G]^3 } \sum_{ k \in [-gg'h]} u_n^2\begin{bmatrix}
& tr\left(\Omega_{V,h} B_{h,g} (\tilde  P_{ghk})_{g,ghk} P_{ghk,g'hk} (\tilde  P_{g'hk})_{g'hk,g'} B_{h,g'}'\right) \\
& \times (\Gamma_g' B_{h,g}' P_{h,k,-gh}  P_{k,h,-g'h}  B_{h,g'} \Gamma_{g'})
\end{bmatrix}  \right\vert  }_{R_{14,2}}\\
& + \underbrace{  \left\vert \sum_{g,g',h, \in [G]^3 } \sum_{ k \in [-gg'h]} u_n^2\begin{bmatrix}
& tr\left(\Omega_{V,h} B_{h,g} \left( \sum_{l \in (g,h,k) \cap (g',h,k)} P_{g,l,-ghk} P_{l,g',-g'hk}\right) B_{h,g'}'\right) \\
& \times (\Gamma_g' B_{h,g}' P_{h,k,-gh}  P_{k,h,-g'h}  B_{h,g'} \Gamma_{g'})
\end{bmatrix}  \right\vert  }_{R_{14,3}}.
\end{align*}
Note that
\begin{align*}
\left\Vert    \sum_{k \in [-gg'h]} P_{h,k,-gh}  P_{k,h,-g'h} \right\Vert_{op} \lesssim \lambda_n
\end{align*}
and
\begin{align*}
R_{14,1} & \lesssim    u_n^3 n_G \lambda_n^2 \sum_{g,g',h \in [G]^3} ||B_{h,g}||_F||B_{h,g'}||_F ||B_{h,g}||_{op}||B_{h,g'}||_{op} \\
& \lesssim   u_n^3 n_G \lambda_n^2 \sum_{g,g',h \in [G]^3} ||B_{h,g}||_F^2 ||B_{h,g'}||_{op}^2 \\
& \lesssim  u_n^3 n_G (\phi_n + \lambda_n^2)\lambda_n^2 \kappa_n = o(\omega_n^4).
\end{align*}

For $R_{14,2}$, we have
\begin{align*}
R_{14,2} & \lesssim \sum_{g,g',h, \in [G]^3 } \sum_{ k \in [-gg'h]} u_n^3 n_G \lambda_n^3 || B_{h,g}||_F || B_{h,g'}||_F||B_{h,g}||_{op} ||P_{h,k,-gh}||_{op}  ||P_{k,h,-g'h}||_{op} || B_{h,g'}||_{op} \\
& \lesssim \sum_{g,g',h,k \in [G]^4 } \begin{pmatrix}
   & u_n^3 n_G \lambda_n^3 || B_{h,g}||_F || B_{h,g'}||_F||B_{h,g}||_{op} \\
   & \times (||P_{h,k}||_{op}+||P_{g,k}||_{op}) (||P_{h,k}||_{op}+||P_{g',k}||_{op}) || B_{h,g'}||_{op}
\end{pmatrix}  \\
& \lesssim \sum_{g,g',h \in [G]^3 } u_n^3 n_G \phi_n \lambda_n^3 || B_{h,g}||_F || B_{h,g'}||_F||B_{h,g}||_{op} B_{h,g'}||_{op} \\
& \lesssim u_n^3 n_G \phi_n (\phi_n + \lambda_n^2) \lambda_n^3 \kappa_n = o(\omega_n^4).
\end{align*}
Last, following the same argument for $R_{1,3}$, we can show that
\begin{align*}
    R_{14,3} =  o(\omega_n^4).
\end{align*}
This concludes the proof for $R_{14}$.  We can show $R_{15} = o(\omega_n^4)$ in the same manner.

\subsection{Bound for \texorpdfstring{$R_{16}$}{R16}}
We have
\begin{align*}
    R_{16} & \lesssim  \sum_{g, h \in [G]^2} \sum_{k \in [-gh]} \sum_{l \in [-ghk]} u_n^4  \left\Vert P_{l,h,-ghk} B_{g,h}'  \right\Vert_F^2 \left\Vert B_{g,h}' P_{g,k,-gh} \right\Vert_F^2 \\
    & = \sum_{g, h \in [G]^2} \sum_{k \in [-gh]}  u_n^4 tr( B_{g,h} P_{h,h,-ghk}) B_{g,h}')\left\Vert B_{g,h}' P_{g,k,-gh} \right\Vert_F^2 \\
    & \lesssim  \sum_{g, h \in [G]^2} u_n^4 \lambda_n ||B_{g,h} ||_F^2 tr\left( B_{g,h}' \left( \sum_{k \in [-gh]} P_{g,k,-gh} P_{k,g,-gh}\right) B_{g,h}   \right) \\
    & \lesssim u_n^4 \lambda_n^2  \sum_{g, h \in [G]^2} |B_{g,h} ||_F^4 \\
    & \lesssim  u_n^4 n_G \lambda_n^4 \kappa_n = o(\omega_n^4).
\end{align*}

\section{Proof of Corollary~\ref{cor:L3CO}}
Recall that
\begin{align*}
& \hat  \omega^2_{n,\rm L3CO} = \tilde  \omega^2_{n,\rm L3CO,1}  + \tilde  \omega^2_{n,\rm L3CO,2}, \\
& \mathbb E   \left(  \frac{\tilde  \omega^2_{n,\rm L3CO,1}  + \tilde  \omega^2_{n,\rm L3CO,2}}{ \omega_n^2} \right) = 1,\\
& \mathbb E   \left(  \frac{ \tilde  \omega^2_{n,\rm L3CO,1}}{ \omega_n^2} \right) \geq 0, \quad \mathbb V   \left(  \frac{ \tilde  \omega^2_{n,\rm L3CO,1}}{ \omega_n^2} \right) = o(1)\\
& \mathbb E   \left(  \frac{ \tilde  \omega^2_{n,\rm L3CO,2}}{ \omega_n^2} \right) \geq 0, \quad  \mathbb V   \left(  \frac{ \tilde  \omega^2_{n,\rm L3CO,2}}{ \omega_n^2} \right) = o(1).
\end{align*}
Consider a truncated version of the L3CO estimator defined as
\begin{align*}
& \breve \omega^2_{n,\rm L3CO} = (\tilde \omega^2_{n,\rm L3CO,1})^+ + (\tilde \omega^2_{n,\rm L3CO,2})^+.
\end{align*}
Then, we have
\begin{align*}
 & \mathbb V \left( \frac{\breve  \omega^2_{n,\rm L3CO}}{ \omega_n^2}\right) \lesssim   \mathbb V \left( \frac{(\tilde \omega^2_{n,\rm L3CO,1})^+}{ \omega_n^2}\right) + \mathbb V \left( \frac{(\tilde \omega^2_{n,\rm L3CO,2})^+}{ \omega_n^2}\right) \leq  \mathbb V \left( \frac{(\tilde \omega^2_{n,\rm L3CO,1})}{ \omega_n^2}\right) + \mathbb V \left( \frac{(\tilde \omega^2_{n,\rm L3CO,2})}{ \omega_n^2}\right)  = o(1), \\
 & \mathbb E \left( \frac{\breve  \omega^2_{n,\rm L3CO}}{ \omega_n^2}\right) \geq \mathbb E \left( \frac{\hat \omega^2_{n,\rm L3CO}}{ \omega_n^2}\right) = 1,
\end{align*}
and
\begin{align*}
 & \mathbb E \left( \frac{\breve  \omega^2_{n,\rm L3CO}}{ \omega_n^2}\right) \\
 & =  \mathbb E \left( \frac{(\tilde \omega^2_{n,\rm L3CO,1} - \mathbb E \tilde \omega^2_{n,\rm L3CO,1} + \mathbb E \tilde \omega^2_{n,\rm L3CO,1})^+ }{ \omega_n^2}\right) + \mathbb E \left( \frac{(\tilde \omega^2_{n,\rm L3CO,2} - \mathbb E \tilde \omega^2_{n,\rm L3CO,2} + \mathbb E \tilde \omega^2_{n,\rm L3CO,2})^+ }{ \omega_n^2}\right) \\
 & \leq \mathbb E \left( \frac{(\tilde \omega^2_{n,\rm L3CO,1} - \mathbb E \tilde \omega^2_{n,\rm L3CO,1})^+ + \mathbb E \tilde \omega^2_{n,\rm L3CO,1} }{ \omega_n^2}\right) + \mathbb E \left( \frac{(\tilde \omega^2_{n,\rm L3CO,2} - \mathbb E \tilde \omega^2_{n,\rm L3CO,2})^+ + \mathbb E \tilde \omega^2_{n,\rm L3CO,2} }{ \omega_n^2}\right) \\
 & \leq 1 + \left[\mathbb V \left( \frac{\tilde \omega^2_{n,\rm L3CO,1} }{ \omega_n^2}\right)\right]^{1/2} + \left[\mathbb V \left( \frac{\tilde \omega^2_{n,\rm L3CO,2} }{ \omega_n^2}\right)\right]^{1/2} \\
 & \leq 1 + o(1).
\end{align*}
Therefore, we have
\begin{align*}
 \mathbb E \left( \frac{\breve  \omega^2_{n,\rm L3CO}}{ \omega_n^2}-1\right)^2 \leq    \mathbb V \left( \frac{\breve  \omega^2_{n,\rm L3CO}}{ \omega_n^2}\right) + \left[\mathbb E  \left( \frac{\breve  \omega^2_{n,\rm L3CO}}{ \omega_n^2}\right)-1\right]^2 = o(1),
\end{align*}
which implies
\begin{align*}
\frac{\breve  \omega^2_{n,\rm L3CO}}{ \omega_n^2} \convP 1.
\end{align*}

Further note that
\begin{align*}
\frac{\hat \omega^2_{n,\rm L3CO} }{\omega_n^2}  = \frac{  (\tilde \omega^2_{n,\rm L3CO,1})^+ + (\tilde \omega^2_{n,\rm L3CO,2})^+}{\omega_n^2} - \frac{  (\tilde \omega^2_{n,\rm L3CO,1})^- + (\tilde \omega^2_{n,\rm L3CO,2})^-}{\omega_n^2},
\end{align*}
implying that
\begin{align*}
 0 \leq   \mathbb E \left( \frac{  (\tilde \omega^2_{n,\rm L3CO,1})^- + (\tilde \omega^2_{n,\rm L3CO,2})^-}{\omega_n^2} \right) = \mathbb E \left(\frac{  (\tilde \omega^2_{n,\rm L3CO,1})^+ + (\tilde \omega^2_{n,\rm L3CO,2})^+}{\omega_n^2} \right) - 1 = o(1),
\end{align*}
and thus,
\begin{align*}
 \frac{  (\tilde \omega^2_{n,\rm L3CO,1})^- + (\tilde \omega^2_{n,\rm L3CO,2})^-}{\omega_n^2}  \convP 0.
\end{align*}
Therefore, we have
\begin{align*}
\frac{\tilde \omega^2_{n,\rm L3CO}}{\omega_n^2} =  \frac{\breve \omega^2_{n,\rm L3CO}}{\omega_n^2} +  \frac{(\tilde \omega^2_{n,\rm L3CO,1})^- + (\tilde \omega^2_{n,\rm L3CO,2})^-}{\omega_n^2} \convP 1.
\end{align*}

\section{Proof of Theorem~\ref{thm:var_l2co}}

We split the proof into two parts. \Cref{sec:proof-validity} shows the first
part of the claim that \cref{eq:l2o_conservative} holds.
\Cref{sec:proof-consistency} shows the second part of the claim that
\cref{eq:l2o_conservative} holds with equality under additional conditions.

\subsection{Proof of validity}\label{sec:proof-validity}

Observe we can write
\begin{equation*}
  \hat \omega^2_{n, \rm L2CO} =
  \hat \omega^2_{n,\rm L2CO, 1} + 2 \hat \omega^2_{n,\rm L2CO,2} + \hat \omega^2_{n,\rm L2CO, 3} \geq 0,
\end{equation*}
where
\begin{align*}
    \hat \omega^2_{n,\rm L2CO, 1} &= \sum_{g, h, k \in [G]^3} \left(X_h ' B_{h,g} \tilde Y_{g,-h} \right) \left(X_k ' B_{k,g} \tilde Y_{g,-k} \right), \\
    \hat \omega^2_{n,
    \rm L2CO,2} &= \sum_{g, h, k \in [G]^3} \left(X_h ' B_{h,g} \tilde Y_{g,-h} \right) \left(Y_k ' B_{g,k}' \tilde X_{g,-k} \right), \\
    \hat \omega^2_{n,\rm L2CO,3} &= \sum_{g, h, k \in [G]^3} \left(Y_h ' B_{g,h}' \tilde X_{g,-h} \right) \left(Y_k ' B_{g,k}' \tilde X_{g,-k} \right),
\end{align*}
For $g \neq h$, we have
\begin{align*}
    \tilde Y_{g,-h} &= U_g + \tilde U_{g,-h}, \quad \tilde U_{g,-h} = \sum_{l \in [-g]} \tilde M_{g,l,-gh} U_l, \\
    \tilde X_{g,-h} &= V_g + \tilde V_{g,-h}, \quad \tilde V_{g,-h} = \sum_{l \in [-g]} \tilde M_{g,l,-gh} V_l,
\end{align*}
and similarly for $g \neq k$, where we use the fact that $\sum_{l \in [G]} \tilde M_{g,l,-gh} W_l = 0_{n_g \times d}$, and note that
\begin{align*}
    \mathbb E_{hk} U_g \tilde U_{g,l}' = \mathbb E_{hk} V_g \tilde V_{g,l}' = \mathbb E_{hk} V_g \tilde U_{g,l}' = \mathbb E_{hk} U_g \tilde V_{g,l}'  = 0, \quad g \notin [h,k], \quad l \in [h,k].
\end{align*}
We can write
\begin{align*}
    \mathbb E \hat \omega^2_{n,\rm L2CO, 1} & = \sum_{g, h, k \in [G]^3} \mathbb E \left(X_h ' B_{h,g} \tilde Y_{g,-h} \right) \left(X_k ' B_{k,g} \tilde Y_{g,-k} \right) \\
    &= H' \Omega_U H + \sum_{g,h \in [G]^2} tr  \left( B_{h,g} \Omega_{U,g} B_{h,g}' \Omega_{V,h} \right) \\
    &+ \sum_{g, h, k \in [G]^3} \mathbb E \left(X_h ' B_{h,g} \tilde U_{g,-h} \right) \left(X_k ' B_{k,g} \tilde U_{g,-k} \right),
\end{align*}
and
\begin{align*}
    \mathbb E \hat \omega^2_{n,\rm L2CO, 2} & = \sum_{g, h, k \in [G]^3} \mathbb E \left(X_h ' B_{h,g} \tilde Y_{g,-h} \right) \left(Y_k ' B_{g,k}' \tilde X_{g,-k} \right) \\
    &= H' \Omega_{U,V} \tilde H + \sum_{g, h \in [G]^2}  tr   \left(B_{h,g} \Omega_{U,V,g} B_{g,h} \Omega_{U,V,h}  \right) \\
    &+ \sum_{g, h, k \in [G]^3} \mathbb E \left(X_h ' B_{h,g} \tilde U_{g,-h} \right) \left(Y_k ' B_{g,k}' \tilde V_{g,-k} \right),
\end{align*}
and
\begin{align*}
    \mathbb E \hat \omega^2_{n, \rm L2CO, 3} & = \sum_{g, h, k \in [G]^3} \mathbb E \left(Y_h ' B_{g,h}' \tilde X_{g,-h} \right) \left(Y_k ' B_{g,k}' \tilde X_{g,-k} \right) \\
    &=  \tilde H'\Omega_{V} \tilde H +\sum_{g, h \in [G]^2} tr \left(B_{g,h}' \Omega_{V,g} B_{g,h} \Omega_{U,h} \right) \\
    &+ \sum_{g, h, k \in [G]^3} \mathbb E \left(Y_h ' B_{g,h}' \tilde V_{g,-h} \right) \left(Y_k ' B_{g,k}' \tilde V_{g,-k} \right),
\end{align*}
and thus
\begin{align*}
    \mathbb E \hat \omega^2_{n,\rm L2CO} &= \omega^2_n +  \sum_{g, h \in [G]^2}  \mathbb E \left( V_h'B_{h,g} U_g + V_g' B_{g,h} U_h \right)^2  \\
    &+ \sum_{g \in [G]} \mathbb E \left\{ \left(\sum_{h \in [G]} X_h ' B_{h,g} \tilde U_{g,-h} \right) + \left( \sum_{k \in [G]} Y_k ' B_{g,k}' \tilde V_{g,-k} \right) \right\}^2 \\
    & \geq \omega_n^2.
\end{align*}

To proceed, we note that
\begin{align*}
    \hat \omega_{n,\rm L2CO}^2 = \hat \omega_{n,1}^2 + \hat \omega_{n,2}^2 + \hat \omega_{n,3}^2,
\end{align*}
where
\begin{align*}
    \hat \omega_{n,1}^2 &= \sum_{g, h, k \in [G]^3} \left(X_h ' B_{h,g} U_g \right) \left(X_k ' B_{k,g} U_g \right) \\
    &+ 2 \sum_{g, h, k \in [G]^3} \left(X_h ' B_{h,g} U_g \right) \left(Y_k ' B_{g,k}' V_g \right) \\
    &+ \sum_{g, h, k \in [G]^3} \left(Y_h ' B_{g,h}' V_g \right) \left(Y_k ' B_{g,k}' V_g \right),
\end{align*}
and
\begin{align*}
    \hat \omega_{n,2}^2 &= 2 \sum_{g, h, k \in [G]^3} \left(X_h ' B_{h,g} U_g \right) \left(X_k ' B_{k,g} \tilde U_{g,-k} \right) + 2 \sum_{g, h, k \in [G]^3} \left(X_h ' B_{h,g} U_g \right) \left(Y_k ' B_{g,k}' \tilde V_{g,-k} \right) \\
    &+ 2\sum_{g, h, k \in [G]^3} \left(Y_h ' B_{g,h}' V_g \right) \left(Y_k ' B_{g,k}' \tilde V_{g,-k} \right) + 2 \sum_{g, h, k \in [G]^3} \left(X_h ' B_{h,g} \tilde U_{g,-h} \right) \left(Y_k ' B_{g,k}' V_g \right),
\end{align*}
and
\begin{align*}
    \hat \omega_{n,3}^2 &= \sum_{g, h, k \in [G]^3} \left(X_h ' B_{h,g} \tilde U_{g,-h} \right) \left(X_k ' B_{k,g} \tilde U_{g,-k} \right) \\
    &+ 2 \sum_{g, h, k \in [G]^3} \left(X_h ' B_{h,g} \tilde U_{g,-h} \right) \left(Y_k ' B_{g,k}' \tilde V_{g,-k} \right) \\
    &+ \sum_{g, h, k \in [G]^3} \left(Y_h ' B_{g,h}' \tilde V_{g,-h} \right) \left(Y_k ' B_{g,k}' \tilde V_{g,-k} \right).
\end{align*}
We have $\mathbb E \left( \hat \omega_{n,1}^2 + \hat \omega_{n,2}^2 \right) \geq \omega_n^2$ and $\hat \omega_{n,3}^2 \geq 0$, and it suffices to show that $\mathbb V \left(\hat \omega^2_{n,1} \right) = o(\omega_n^4)$ and $\mathbb V \left(\hat \omega^2_{n,2} \right) = o(\omega_n^4)$. Note also that $\mathbb V \left(\hat \omega^2_{n,1} \right) = o(\omega_n^4)$ can be established similarly as in \Cref{lem:V1}, and we focus on establishing $\mathbb V \left(\hat \omega^2_{n,2} \right) = o(\omega_n^4)$ in what follows. Let $g \neq g' \neq g''$ denote the event that $g \neq g', g \neq g'', g' \neq g''$, and recall that
\begin{align*}
    \tilde M_{g,g',-gg''} = - P_{g,g',-gg''}, \quad g \neq g' \neq g''.
\end{align*}
The proof relies heavily on \Cref{lem:P_l3o_1} (with $l = g'$ and $h = k = g''$) to calculate the summation of $P_{g,g',-gg''}$ across indexes.

We shall only compute the variance of the first term in $\hat \omega_{n,2}^2$, as the other terms can be handled similarly. We have
\begin{align*}
    \mathbb V \left( \sum_{g, h, k \in [G]^3} \left(X_h ' B_{h,g} U_g \right) \left(X_k ' B_{k,g} \tilde U_{g,-k} \right) \right) &\lesssim \underbrace{\mathbb V \left( \sum_{g, h, k \in [G]^3} \left(\Pi_h ' B_{h,g} U_g \right) \left(\Pi_k ' B_{k,g} \tilde U_{g,-k} \right) \right)}_{R_{17}} \\
    &+ \underbrace{\mathbb V \left( \sum_{g, h, k \in [G]^3} \left(\Pi_h ' B_{h,g} U_g \right) \left(V_k ' B_{k,g} \tilde U_{g,-k} \right) \right)}_{R_{18}} \\
    &+ \underbrace{\mathbb V \left( \sum_{g, h, k \in [G]^3} \left(V_h ' B_{h,g} U_g \right) \left(\Pi_k ' B_{k,g} \tilde U_{g,-k} \right) \right)}_{R_{19}} \\
    &+ \underbrace{\mathbb V \left( \sum_{g, h, k \in [G]^3} \left(V_h ' B_{h,g} U_g \right) \left(V_k ' B_{k,g} \tilde U_{g,-k} \right) \right)}_{R_{20}}.
\end{align*}

For $R_{17}$, we have
\begin{align*}
    R_{17} &= \mathbb V \left( \sum_{g, h, k \in [G]^3, g \neq h \neq k} \left(H_g' U_g \right) \left(\Pi_k ' B_{k,g} \tilde M_{g,h,-gk} V_h \right) \right) \\
    &= \mathbb V \left( \sum_{g, h \in [G]^2, g \neq h} \left(H_g' U_g \right) \left( \left(\sum_{k \in [-gh]} \Pi_k ' B_{k,g} P_{g,h,-gk} \right) V_h \right) \right) \\
    &\lesssim u_n^2 \zeta_{H,n} \sum_{g, h \in [G]^2, g \neq h} \sum_{k, k' \in [-gh]^2}\Pi_k ' B_{k,g}  P_{g,h,-gk} P_{h,g,-gk'} B_{k',g}' \Pi_{k'} \\
    &\lesssim u_n^2 \zeta_{H,n} \sum_{g \in [G]} \sum_{k, k' \in [-g]^2}\Pi_k ' B_{k,g} \left( \sum_{h \in [-kk']} P_{g,h,-gk} P_{h,g,-gk'} \right) B_{k',g}' \Pi_{k'} \\
    &\lesssim u_n^2 \zeta_{H,n} \left| \sum_{g \in [G]} \sum_{k, k' \in [-g]^2}\Pi_k ' B_{k,g} \left( \sum_{h \in [G]} P_{g,h,-gk} P_{h,g,-gk'} \right) B_{k',g}' \Pi_{k'} \right|  \\
    &+ u_n^2 \zeta_{H,n} \left| \sum_{g \in [G]} \sum_{k, k' \in [-g]^2}\Pi_k ' B_{k,g} P_{g,k,-gk} P_{k,g,-gk'}  B_{k',g}' \Pi_{k'} \right|\\
    &+ u_n^2 \zeta_{H,n} \left| \sum_{g \in [G]} \sum_{k, k' \in [-g]^2}\Pi_k ' B_{k,g} P_{g,k',-gk} P_{k',g,-gk'}  B_{k',g}' \Pi_{k'} \right| \\
    &+ u_n^2 \zeta_{H,n} \left| \sum_{g \in [G]} \sum_{k \in [-g]}\Pi_k ' B_{k,g} P_{g,k,-gk} P_{k,g,-gk}  B_{k,g}' \Pi_{k} \right|,
\end{align*}
where
\begin{align*}
    &u_n^2 \zeta_{H,n} \left| \sum_{g \in [G]} \sum_{k, k' \in [-g]^2}\Pi_k ' B_{k,g} \left( \sum_{h \in [G]} P_{g,h,-gk} P_{h,g,-gk'} \right) B_{k',g}' \Pi_{k'} \right| \\
    &\lesssim u_n^2 \zeta_{H,n} \sum_{g \in [G]} \sum_{k, k' \in [G]^2}\Pi_k ' B_{k,g} P_{g,g} B_{k',g}' \Pi_{k'} \\
    &+ u_n^2 \zeta_{H,n} \sum_{g \in [G]} \sum_{k, k' \in [G]^2}\Pi_k ' B_{k,g} (\tilde P_{gk})_{g,g} P_{g,g} (\tilde P_{gk'})_{g,g} B_{k',g}' \Pi_{k'} \\
    &+ u_n^2 \zeta_{H,n} \sum_{g \in [G]} \sum_{k, k' \in [G]^2}\Pi_k ' B_{k,g} (\tilde P_{gk})_{g,k} P_{k,k'} (\tilde P_{gk'})_{k',g} B_{k',g}' \Pi_{k'} \\
    &\lesssim u_n^2 \zeta_{H,n} \lambda_n \mu_n^2 + u_n^2 n_G \zeta_{H,n} \phi_n \lambda_n^3 \kappa_n + u_n^2 n_G \zeta_{H,n} \lambda_n^2 \kappa_n = o(\omega_n^4),
\end{align*}
and
\begin{align*}
    &u_n^2 \zeta_{H,n} \left| \sum_{g \in [G]} \sum_{k, k' \in [-g]^2}\Pi_k ' B_{k,g} P_{g,k,-gk} P_{k,g,-gk'}  B_{k',g}' \Pi_{k'} \right| \\
    &= u_n^2 \zeta_{H,n} \left| \sum_{g \in [G]} \sum_{k \in [G]}\Pi_k ' B_{k,g} P_{g,k,-gk} \left( \sum_{k' \in [G]} P_{k,g,-gk'}  B_{k',g}' \Pi_{k'} \right) \right| \\
    &\lesssim u_n^2 \zeta_{H,n} \sum_{g \in [G]} \sum_{k \in [G]}\Pi_k ' B_{k,g} P_{g,k,-gk} P_{k,g,-gk} B_{k,g}' \Pi_k \\
    &+ u_n^2 \zeta_{H,n} \sum_{g \in [G]} \sum_{k \in [G]} \left\Vert \sum_{k' \in [G]} P_{k,g,-gk'}  B_{k',g}' \Pi_{k'} \right\Vert_2^2 \\
    &\lesssim u_n^2 n_G \zeta_{H,n} \lambda_n^2 \kappa_n + o(\omega_n^4) = o(\omega_n^4),
\end{align*}
and similarly for
\begin{align*}
    u_n^2 \zeta_{H,n} \left| \sum_{g \in [G]} \sum_{k, k' \in [-g]^2}\Pi_k ' B_{k,g} P_{g,k',-gk} P_{k',g,-gk'}  B_{k',g}' \Pi_{k'} \right|,
\end{align*}
and
\begin{align*}
    u_n^2 \zeta_{H,n} \left| \sum_{g \in [G]} \sum_{k \in [-g]}\Pi_k ' B_{k,g} P_{g,k,-gk} P_{k,g,-gk}  B_{k,g}' \Pi_{k} \right|.
\end{align*}

For $R_{18}$, we have
\begin{align*}
    R_{18} &= \mathbb V \left( \sum_{g, h, k \in [G]^3, g \neq h \neq k} \left(H_g' U_g \right) \left(V_k ' B_{k,g} \tilde M_{g,h,-gk} V_h \right) \right) \\
    &\lesssim u_n^3 \zeta_{H,n} \sum_{g, h, k \in [G]^3, g \neq h \neq k} \tr \left( B_{k,g} P_{g,h,-gk} P_{h,g,-gk} B_{k,g}' \right) \\
    &\lesssim u_n^3 \zeta_{H,n} \sum_{g, h, k \in [G]^3, g \neq h \neq k} \tr \left( B_{k,g} P_{g,h} P_{h,g} B_{k,g}' \right) \\
    &+ u_n^3 \zeta_{H,n} \sum_{g, h, k \in [G]^3, g \neq h \neq k} \tr \left( B_{k,g} (\tilde P_{gk})_{g,g} P_{g,h} P_{h,g} (\tilde P_{gk})_{g,g} B_{k,g}' \right) \\
    &+ u_n^3 \zeta_{H,n} \sum_{g, h, k \in [G]^3, g \neq h \neq k} \tr \left( B_{k,g} (\tilde P_{gk})_{gk} P_{k,h} P_{h,k} (\tilde P_{gk})_{kg} B_{k,g}' \right) \\
    &\lesssim u_n^3 \zeta_{H,n} \lambda_n \kappa_n = o(\omega_n^4).
\end{align*}

For $R_{19}$, we have
\begin{align*}
    R_{19} &= \mathbb V \left( \sum_{g, h, k, l \in [G]^4, l \neq g, l \neq k} \left(V_h ' B_{h,g} U_g \right) \left(\Pi_k ' B_{k,g} \tilde M_{g,l,-gk} U_l \right) \right) \\
    &\lesssim \underbrace{\mathbb V \left( \sum_{g, h, k \in [G]^3, h \neq g, h \neq k} \left(V_h ' B_{h,g} U_g \right) \left(\Pi_k ' B_{k,g} P_{g,h,-gk} U_h \right) \right)}_{R_{19,1}} \\
    &+ \underbrace{\mathbb V \left( \sum_{g, h, k, l \in [G]^4, l \neq g, l \neq h, l \neq k} \left(V_h ' B_{h,g} U_g \right) \left(\Pi_k ' B_{k,g} P_{g,l,-gk} U_l \right) \right)}_{R_{19,2}},
\end{align*}
where
\begin{align*}
    R_{19,1} &\lesssim \mathbb V \left( \sum_{g, h, k \in [G]^3, h \neq g, h \neq k} \Pi_k ' B_{k,g} P_{g,h,-gk} \Omega_{U,V,h} B_{h,g} U_g \right) \\
    &+ \mathbb V \left( \sum_{g, h, k \in [G]^3, h \neq g, h \neq k} \Pi_k ' B_{k,g} P_{g,h,-gk} (U_h V_h' - \Omega_{U,V,h}) B_{h,g} U_g \right) \\
    &\lesssim u_n \sum_{g \in [G]} \left \Vert  \sum_{h, k \in [-g]^2, h \neq k}  B_{h,g}' \Omega_{U,V,h}' P_{h,g,-gk} B_{k,g}' \Pi_k \right \Vert_2^2 \\
    &+ u_n^{\frac{2q-3}{q-1}} n_G^{\frac{q}{q-1}} \sum_{g,h \in [G]^2, g \neq h} || B_{h,g} ||_F^2 \left \Vert \sum_{k \in [-gh]} P_{h,g,-gk} B_{k,g}' \Pi_k  \right \Vert_2^2,
\end{align*}
with
\begin{align*}
    &u_n \sum_{g \in [G]} \left \Vert \sum_{h, k \in [-g]^2, h \neq k}  B_{h,g}' \Omega_{U,V,h}' P_{h,g,-gk} B_{k,g}' \Pi_k \right \Vert_2^2 \\
    &\lesssim u_n \sum_{g \in [G]} \left \Vert \sum_{h, k \in [-g]^2, h \neq k}  B_{h,g}' \Omega_{U,V,h}' P_{h,g} B_{k,g}' \Pi_k \right \Vert_2^2 \\
    &+ u_n \sum_{g \in [G]} \left \Vert \sum_{h, k \in [-g]^2, h \neq k}  B_{h,g}' \Omega_{U,V,h}' P_{h,g} (\tilde P_{gk})_{g,g} B_{k,g}' \Pi_k \right \Vert_2^2 \\
    &+ u_n \sum_{g \in [G]} \left \Vert \sum_{h, k \in [-g]^2, h \neq k}  B_{h,g}' \Omega_{U,V,h}' P_{h,k} (\tilde P_{gk})_{k,g} B_{k,g}' \Pi_k \right \Vert_2^2 \\
    &\lesssim u_n^3 \lambda_n^2 \mu_n^2 + u_n^3 n_G \phi_n \lambda_n^2 \kappa_n = o(\omega_n^4),
\end{align*}
and
\begin{align*}
    &u_n^{\frac{2q-3}{q-1}} n_G^{\frac{q}{q-1}} \sum_{g,h \in [G]^2} || B_{h,g} ||_F^2 \left \Vert \sum_{k \in [-h]} P_{h,g,-gk} B_{k,g}' \Pi_k \right \Vert_2^2 \\
    &\lesssim u_n^{\frac{2q-3}{q-1}} n_G^{\frac{q}{q-1}} \sum_{g,h \in [G]^2} || B_{h,g} ||_F^2 \left \Vert \sum_{k \in [-h]} P_{h,g} B_{k,g}' \Pi_k \right \Vert_2^2 \\
    &+ u_n^{\frac{2q-3}{q-1}} n_G^{\frac{q}{q-1}} \sum_{g,h \in [G]^2} || B_{h,g} ||_F^2 \left \Vert \sum_{k \in [-h]} P_{h,g} (\tilde P_{gk})_{g,g} B_{k,g}' \Pi_k \right \Vert_2^2 \\
    &+ u_n^{\frac{2q-3}{q-1}} n_G^{\frac{q}{q-1}} \sum_{g,h \in [G]^2} || B_{h,g} ||_F^2 \left \Vert \sum_{k \in [-h]}  P_{h,k} (\tilde P_{gk})_{k,g} B_{k,g}' \Pi_k \right \Vert_2^2 \\
    &\lesssim u_n^{\frac{2q-3}{q-1}} n_G^{\frac{q}{q-1}} \zeta_{H,n} \lambda_n^2 \kappa_n + u_n^{\frac{2q-3}{q-1}} n_G^{\frac{2q-1}{q-1}} \lambda_n^4 \kappa_n + u_n^{\frac{2q-3}{q-1}} n_G^{\frac{2q-1}{q-1}} \phi_n^2 \lambda_n^2 \kappa_n = o(\omega_n^4).
\end{align*}
We also have
\begin{align*}
    R_{19,2} &\lesssim u_n^3 \sum_{g,h \in [G]^2, g \neq h} ||B_{h,g}||_F^2 \sum_{l \in [G], l \neq g, l \neq h} \sum_{k, k' \in [-gl]^2} \Pi_k ' B_{k,g} P_{g,l,-gk} P_{l,g,-gk'} B_{k',g}' \Pi_{k'} \\
    &\lesssim u_n^3 \sum_{g,h \in [G]^2} ||B_{h,g}||_F^2 \sum_{k, k' \in [G]^2} \Pi_k ' B_{k,g} \left(\sum_{l \in [-kk']} P_{g,l,-gk} P_{l,g,-gk'} \right) B_{k',g}' \Pi_{k'} \\
    &\lesssim u_n^3 \sum_{g,h \in [G]^2} ||B_{h,g}||_F^2 \sum_{k, k' \in [G]^2} \Pi_k ' B_{k,g} \left(\sum_{l \in [G]} P_{g,l,-gk} P_{l,g,-gk'} \right) B_{k',g}' \Pi_{k'} \\
    &+ u_n^3 \sum_{g,h \in [G]^2} ||B_{h,g}||_F^2 \left| \sum_{k, k' \in [G]^2} \Pi_k ' B_{k,g} P_{g,k,-gk} P_{k,g,-gk'} B_{k',g}' \Pi_{k'} \right| \\
    &+ u_n^3 \sum_{g,h \in [G]^2} ||B_{h,g}||_F^2 \left| \sum_{k, k' \in [G]^2} \Pi_k ' B_{k,g} P_{g,k',-gk} P_{k',g,-gk'} B_{k',g}' \Pi_{k'} \right| \\
    &+ u_n^3 \sum_{g,h \in [G]^2} ||B_{h,g}||_F^2 \sum_{k \in [G]} \Pi_k ' B_{k,g} P_{g,k,-gk} P_{k,g,-gk} B_{k,g}' \Pi_{k},
\end{align*}
where
\begin{align*}
    &u_n^3 \sum_{g,h \in [G]^2} ||B_{h,g}||_F^2 \sum_{k, k' \in [G]^2} \Pi_k ' B_{k,g} \left(\sum_{l \in [G]} P_{g,l,-gk} P_{l,g,-gk'} \right) B_{k',g}' \Pi_{k'} \\
    &\lesssim u_n^3 \sum_{g,h \in [G]^2} ||B_{h,g}||_F^2 \sum_{k, k' \in [G]^2} \Pi_k ' B_{k,g} P_{g,g} B_{k',g}' \Pi_{k'} \\
    &+ u_n^3 \sum_{g,h \in [G]^2} ||B_{h,g}||_F^2 \sum_{k, k' \in [G]^2} \Pi_k ' B_{k,g} (\tilde P_{gk})_{g,g} P_{g,g} (P_{gk'})_{g,g} B_{k',g}' \Pi_{k'} \\
    &+ u_n^3 \sum_{g,h \in [G]^2} ||B_{h,g}||_F^2 \sum_{k, k' \in [G]^2} \Pi_k ' B_{k,g} (\tilde P_{gk})_{g,k} P_{k,k'} (P_{gk'})_{k',g} B_{k',g}' \Pi_{k'} \\
    &\lesssim u_n^3 \zeta_{H,n} \lambda_n \kappa_n + u_n^3 n_G \phi_n (\phi_n + \lambda_n^2) \lambda_n \kappa_n + u_n^3 n_G \phi_n \lambda_n^2 \kappa_n = o(\omega_n^4),
\end{align*}
and
\begin{align*}
    &u_n^3 \sum_{g,h \in [G]^2} ||B_{h,g}||_F^2 \left| \sum_{k, k' \in [G]^2} \Pi_k ' B_{k,g} P_{g,k,-gk} P_{k,g,-gk'} B_{k',g}' \Pi_{k'} \right| \\
    &\lesssim u_n^3 \sum_{g,h \in [G]^2} ||B_{h,g}||_F^2 \sum_{k \in [G]} \Pi_k ' B_{k,g} P_{g,k,-gk} P_{k,g,-gk} B_{k,g}' \Pi_{k} \\
    &+ u_n^3 \sum_{g,h \in [G]^2} ||B_{h,g}||_F^2 \sum_{k \in [G]} \left\Vert \sum_{k' \in [G]} P_{k,g,-gk'} B_{k',g}' \Pi_{k'} \right\Vert_2^2 \\
    &\lesssim u_n^3 n_G (\phi_n + \lambda_n^2) \lambda_n^2 \kappa_n + o(\omega_n^4) = o(\omega_n^4),
\end{align*}
and similarly for
\begin{align*}
    u_n^3 \sum_{g,h \in [G]^2} ||B_{h,g}||_F^2 \left| \sum_{k, k' \in [G]^2} \Pi_k ' B_{k,g} P_{g,k',-gk} P_{k',g,-gk'} B_{k',g}' \Pi_{k'} \right|,
\end{align*}
and
\begin{align*}
    u_n^3 \sum_{g,h \in [G]^2} ||B_{h,g}||_F^2 \sum_{k \in [G]} \Pi_k ' B_{k,g} P_{g,k,-gk} P_{k,g,-gk} B_{k,g}' \Pi_{k}.
\end{align*}

For $R_{20}$, we have
\begin{align*}
    R_{20} &= \mathbb V \left( \sum_{g, h, k, l \in [G]^4, l \neq g, l \neq k} \left(V_h ' B_{h,g} U_g \right) \left(V_k ' B_{k,g} \tilde M_{g,l,-gk} U_l \right) \right) \\
    &\lesssim \underbrace{\mathbb V \left( \sum_{g, h, k \in [G]^3, g \neq h \neq k} \left(V_h ' B_{h,g} U_g \right) \left(V_k ' B_{k,g} P_{g,h,-gk} U_h \right) \right)}_{R_{20,1}} \\
    &+ \underbrace{\mathbb V \left( \sum_{g, h, k \in [G]^3, g \neq h \neq k} \left(V_h ' B_{h,g} U_g \right) \left(V_h ' B_{h,g} P_{g,k,-gh} U_k \right) \right)}_{R_{20,2}} \\
    &+ \underbrace{\mathbb V \left( \sum_{g, h, k, l \in [G]^4, g \neq h \neq k \neq l} \left(V_h ' B_{h,g} U_g \right) \left(V_k ' B_{k,g} P_{g,l,-gk} U_l \right) \right)}_{R_{20,3}},
\end{align*}
where
\begin{align*}
    R_{20,1} &\lesssim \mathbb V \left( \sum_{g, h, k \in [G]^3, g \neq h \neq k} V_k ' B_{k,g} P_{g,h,-gk} \Omega_{U,V,h} B_{h,g} U_g \right) \\
    &+ \mathbb V \left( \sum_{g, h, k \in [G]^3, g \neq h \neq k} V_k ' B_{k,g} P_{g,h,-gk} (U_h V_h' - \Omega_{U,V,h}) B_{h,g} U_g \right) \\
    &\lesssim u_n^2  \sum_{g, k \in [G]^2, g \neq k} \left\Vert \sum_{h \in [-gk]} B_{k,g} P_{g,h,-gk} \Omega_{U,V,h} B_{h,g} \right \Vert_F^2 \\
    &+ u_n^{\frac{3q-4}{q-1}} u_G^{\frac{q}{q-1}} \lambda_n^2 \sum_{g, h, k \in [G]^3, g \neq h \neq k} \tr \left( B_{k,g} P_{g,h,-gk} P_{h,g,-gk} B_{k,g}' \right) \\
    &\lesssim u_n^2  \sum_{g, k \in [G]^2, g \neq k} \left\Vert \sum_{h \in [-gk]} B_{k,g} P_{g,h} \Omega_{U,V,h} B_{h,g} \right \Vert_F^2 \\
    &+ u_n^2 \sum_{g, k \in [G]^2, g \neq k} \left\Vert \sum_{h \in [-gk]} B_{k,g} (\tilde P_{gk})_{g,g} P_{g,h} \Omega_{U,V,h} B_{h,g} \right \Vert_F^2 \\
    &+ u_n^2 \sum_{g, k \in [G]^2, g \neq k} \left\Vert \sum_{h \in [-gk]} B_{k,g} (\tilde P_{gk})_{g,k} P_{k,h} \Omega_{U,V,h} B_{h,g} \right \Vert_F^2 \\
    &+ u_n^{\frac{3q-4}{q-1}} u_G^{\frac{q}{q-1}} \lambda_n^3 \kappa_n \\
    &\lesssim u_n^4 \lambda_n^2 \kappa_n + u_n^{\frac{3q-4}{q-1}} u_G^{\frac{q}{q-1}} \lambda_n^3 \kappa_n = o(\omega_n^4).
\end{align*}
We also have
\begin{align*}
    R_{20,2} &\lesssim \mathbb V \left( \sum_{g, h, k \in [G]^3, g \neq h \neq k} U_g ' B_{h,g}' \Omega_{U,V,h} B_{h,g} \tilde M_{g,k,-gh} U_k \right) \\
    &+ \mathbb V \left( \sum_{g, h, k \in [G]^3, g \neq h \neq k} U_g ' B_{h,g}' (U_h V_h' - \Omega_{U,V,h}) B_{h,g} \tilde M_{g,k,-gh} U_k \right) \\
    &\lesssim u_n^2 \sum_{g,k \in [G]^2, g \neq k} \left \Vert \sum_{h \in [-gk]} B_{h,g}' \Omega_{U,V,h} B_{h,g} P_{g,k,-gh} \right \Vert_F^2 \\
    &+ u_n^{\frac{3q-4}{q-1}} u_G^{\frac{q}{q-1}} \lambda_n^2 \sum_{g, h \in [G]^2, g \neq h} || B_{h,g} ||_F^2 || \sum_{k \in [-gh]} P_{g,k,-gh} P_{k,g,-gh} ||_{op} \\
    &\lesssim u_n^2 \sum_{g,k \in [G]^2, g \neq k} \left \Vert \sum_{h \in [-gk]} B_{h,g}' \Omega_{U,V,h} B_{h,g} P_{g,k} \right \Vert_F^2 \\
    &+ u_n^2 \sum_{g,k \in [G]^2, g \neq k} \left \Vert \sum_{h \in [-gk]} B_{h,g}' \Omega_{U,V,h} B_{h,g} (\tilde P_{gh})_{g,g} P_{g,k} \right \Vert_F^2 \\
    &+ u_n^2 \sum_{g,k \in [G]^2, g \neq k} \left \Vert \sum_{h \in [-gk]} B_{h,g}' \Omega_{U,V,h} B_{h,g} (\tilde P_{gh})_{g,h} P_{h,k} \right \Vert_F^2 \\
    &+u_n^{\frac{3q-4}{q-1}} u_G^{\frac{q}{q-1}} \lambda_n^3 \kappa_n \\
    &\lesssim u_n^4 (\phi_n + \lambda_n^2) \lambda_n \kappa_n +u_n^{\frac{3q-4}{q-1}} u_G^{\frac{q}{q-1}} \lambda_n^3 \kappa_n = o(\omega_n^4),
\end{align*}
since $\phi_n \lesssim \lambda_n n_G$. Finally, we have
\begin{align*}
    R_{20,3} \lesssim u_n^4 \sum_{g, h, k, l \in [G]^4, g \neq h \neq k \neq l} ||B_{h,g}||_F^2 ||B_{k,g} P_{g,l,-gk}||_F^2 \lesssim u_n^4 n_G \lambda_n^2 \kappa_n = o(\omega_n^4).
\end{align*}

To conclude the proof, we note that $\left(\hat \omega_{n, 1}^2 + \hat \omega_{n, 2}^2 - \mathbb E \left( \hat \omega_{n,1}^2 + \hat \omega_{n,2}^2 \right) \right) / \omega_n^2 \convP 0$ and $\mathbb E \left( \hat \omega_{n,1}^2 + \hat \omega_{n,2}^2 \right) \geq \omega_n^2$, so that $\left(\hat \omega_{n, 1}^2 + \hat \omega_{n, 2}^2 \right) > 0$ with probability approaching one. Therefore, using the fact that $\hat \omega_{n,3}^2 \geq 0$, we have
\begin{align*}
    \mathbb P \left( \left(  \frac{\hat \theta_{\rm LO} - \theta}{\hat \omega_{n,\rm L2CO}} \right)^2 \geq {z}_{1-\alpha/2}^2 \right) &\leq \mathbb P \left( \frac{\left(\hat \theta_{\rm LO} - \theta\right)^2}{\left(\hat \omega_{n,1}^2 + \hat \omega_{n,2}^2\right)} \geq {z}_{1-\alpha/2}^2 \right) + \mathbb P \left( \left(\hat \omega_{n, 1}^2 + \hat \omega_{n, 2}^2 \right) \leq 0 \right) \\
    &\leq \mathbb P \left( \frac{\left(\hat \theta_{\rm LO} - \theta\right)^2}{\left(\hat \omega_{n,1}^2 + \hat \omega_{n,2}^2 - \mathbb E \left( \hat \omega_{n,1}^2 + \hat \omega_{n,2}^2 \right)\right) + \omega_n^2} \geq {z}_{1-\alpha/2}^2 \right) \\
    &+\mathbb P \left( \left(\hat \omega_{n, 1}^2 + \hat \omega_{n, 2}^2 - \mathbb E \left( \hat \omega_{n,1}^2 + \hat \omega_{n,2}^2 \right) \right) \leq - \omega_n^2 \right) + \mathbb P \left( \left(\hat \omega_{n, 1}^2 + \hat \omega_{n, 2}^2 \right) \leq 0 \right),
\end{align*}
and the proof is completed upon taking the limit for both sides and using \Cref{thm:clt}.

\subsection{Proof of consistency}\label{sec:proof-consistency}

The proof consists of two parts: in the first part we show that $\mathbb V \left(\hat \omega^2_{n,3} \right) = o(\omega_n^4)$ which, combined with the proof in \Cref{thm:var_l2co}, implies that $\left(\hat \omega_{n, \rm L2CO}^2 - \mathbb E \hat \omega_{n, \rm L2CO}^2 \right) / \omega_n^2 \convP 0$; in the second part we show that $\lim_{n \to \infty} \mathbb E \hat \omega_{n,\rm L2CO}^2/\omega_n^2 = 1$.

For the first part, we shall only compute the variance of the first term in $\hat \omega_{n,3}^2$, as the other terms can be handled similarly. We have
\begin{align*}
    \mathbb V \left( \sum_{g, h, k \in [G]^3} \left(X_h ' B_{h,g} \tilde U_{g,-h} \right) \left(X_k ' B_{k,g} \tilde U_{g,-k} \right) \right) &\lesssim \underbrace{\mathbb V \left( \sum_{g, h, k \in [G]^3} \left(\Pi_h ' B_{h,g} \tilde U_{g,-h} \right) \left(\Pi_k ' B_{k,g} \tilde U_{g,-k} \right) \right)}_{R_{21}} \\
    &+ \underbrace{\mathbb V \left( \sum_{g, h, k \in [G]^3} \left(\Pi_h ' B_{h,g} \tilde U_{g,-h} \right) \left(V_k ' B_{k,g} \tilde U_{g,-k} \right) \right)}_{R_{22}} \\
    &+ \underbrace{\mathbb V \left( \sum_{g, h, k \in [G]^3} \left(V_h ' B_{h,g} \tilde U_{g,-h} \right) \left(V_k ' B_{k,g} \tilde U_{g,-k} \right) \right)}_{R_{23}}.
\end{align*}

For $R_{21}$, we have
\begin{align*}
    R_{21} &=  \mathbb V \left( \sum_{g, h, k, l, j \in [G]^5, l \neq g, l \neq h,  j \neq g, j \neq k} \left(\Pi_h ' B_{h,g} \tilde M_{g,l,-gh} U_l \right) \left(\Pi_k ' B_{k,g} \tilde M_{g,j,-gk} U_j \right) \right) \\
    &\lesssim \underbrace{\mathbb V \left( \sum_{g, h, k \in [G]^3, l \in [-ghk]} \left(\Pi_h ' B_{h,g} P_{g,l,-gh} U_l \right) \left(\Pi_k ' B_{k,g} P_{g,l,-gk} U_l \right) \right)}_{R_{21,1}} \\
    &+ \underbrace{\mathbb V \left( \sum_{g, h, k \in [G]^3, l \in [-gh], j \in [-gk], l \neq j} \left(\Pi_h ' B_{h,g} P_{g,l,-gh} U_l \right) \left(\Pi_k ' B_{k,g} P_{g,j,-gk} U_j \right) \right)}_{R_{21,2}},
\end{align*}
where
\begin{align*}
    R_{21,1} &\lesssim u_n^{\frac{q-2}{q-1}} n_G^{\frac{q}{q-1}} \sum_{l \in [G]} \left \Vert \sum_{g, h, k \in [-l]^3} P_{l,g,-gh} B_{h,g}' \Pi_h \Pi_k ' B_{k,g} P_{g,l,-gk} \right \Vert_F^2 \\
    &\lesssim  u_n^{\frac{q-2}{q-1}} n_G^{\frac{q}{q-1}} \sum_{l \in [G]} \left \Vert \sum_{g, h, k \in [-l]^3} P_{l,g,-gh} B_{h,g}' \Pi_h \Pi_k ' B_{k,g} P_{g,l,-gk} \right \Vert_{op} \sum_{g, h, k \in [-l]^3} \Pi_k ' B_{k,g} P_{g,l,-gk} P_{l,g,-gh} B_{h,g}' \Pi_h\\
    &\lesssim u_n^{\frac{q-2}{q-1}} n_G^{\frac{q}{q-1}} (\phi_n \zeta_{H,n} + n_G \lambda_n^3 + n_G^2 \phi_n^2 \lambda_n) \sum_{l \in [G]}  \sum_{g, h, k \in [-l]^3} \Pi_k ' B_{k,g} P_{g,l,-gk} P_{l,g,-gh} B_{h,g}' \Pi_h \\
    &\lesssim u_n^{\frac{q-2}{q-1}} n_G^{\frac{q}{q-1}} (\phi_n \zeta_{H,n} + n_G \lambda_n^3 + n_G^2 \phi_n^2 \lambda_n) (\lambda_n \mu_n^2 + n_G \lambda_n^2 \kappa_n + n_G \phi_n \lambda_n^3 \kappa_n ) + o(\omega_n^4) = o(\omega_n^4),
\end{align*}
and
\begin{align*}
    R_{21,2} &\lesssim u_n^2 \sum_{l,j \in [G]^2, l \neq j} \left \Vert \sum_{g \in [-lj], h \in [-l], k \in [-j]} P_{l,g,-gh} B_{h,g}' \Pi_h \Pi_k ' B_{k,g} P_{g,j,-gk} \right \Vert_F^2 \\
    &\lesssim u_n^2 \sum_{l,j \in [G]^2, l \neq j} \left \Vert \sum_{g, h, k \in [G]^3} P_{l,g,-gh} B_{h,g}' \Pi_h \Pi_k ' B_{k,g} P_{g,j,-gk} \right \Vert_F^2 \\
    &+ u_n^2 \sum_{l,j \in [G]^2, l \neq j} \left \Vert \sum_{g \in [-lj], k \in [-j]} P_{l,g,-gl} B_{l,g}' \Pi_l \Pi_k ' B_{k,g} P_{g,j,-gk} \right \Vert_F^2 \\
    &+ u_n^2 \sum_{l,j \in [G]^2, l \neq j} \left \Vert \sum_{g \in [-lj], h \in [G]} P_{l,g,-gh} B_{h,g}' \Pi_h \Pi_j ' B_{j,g} P_{g,j,-gj} \right \Vert_F^2 \\
    &+ u_n^2 \sum_{l,j \in [G]^2, l \neq j} \left \Vert \sum_{h,k \in [G]^2} P_{l,l,-lh} B_{h,l}' \Pi_h \Pi_k ' B_{k,l} P_{l,j,-lk} \right \Vert_F^2 \\
    &+ u_n^2 \sum_{l,j \in [G]^2, l \neq j} \left \Vert \sum_{h,k \in [G]^2} P_{l,j,-jh} B_{h,j}' \Pi_h \Pi_k ' B_{k,j} P_{j,j,-jk} \right \Vert_F^2.
\end{align*}
We have
\begin{align*}
    &u_n^2 \sum_{l,j \in [G]^2, l \neq j} \left \Vert \sum_{g, h, k \in [G]^3} P_{l,g,-gh} B_{h,g}' \Pi_h \Pi_k ' B_{k,g} P_{g,j,-gk} \right \Vert_F^2 \\
    &\lesssim u_n^2 \sum_{g,g' \in [G]^2} \left( \sum_{h,k,l \in [G]^3} \Pi_k ' B_{k,g} P_{g,l,-gk} P_{l,g',-g'h} B_{h,g'}' \Pi_h \right)^2 \\
    &\lesssim u_n^2 \sum_{g,g' \in [G]^2} \left( H_g' P_{g,g'} H_{g'} \right)^2 + u_n^2 \sum_{g,g' \in [G]^2} \left(\sum_{h \in [G]} H_g' P_{g,g'} (\tilde P_{g'h})_{g',g'} B_{h,g'}' \Pi_h\right)^2 \\
    &+ u_n^2 \sum_{g,g' \in [G]^2} \left( \sum_{h,k \in [G]^2} \Pi_k ' B_{k,g} (\tilde P_{gk})_{g,g} P_{g,g'} (\tilde P_{g'h})_{g',g'} B_{h,g'}' \Pi_h \right)^2 \\
    &+ u_n^2 \sum_{g,g' \in [G]^2} \left( \sum_{h,k \in [G]^2} \Pi_k ' B_{k,g} (\tilde P_{gk})_{g,g} P_{g,h} (\tilde P_{g'h})_{h,g'} B_{h,g'}' \Pi_h \right)^2 \\
    &+ u_n^2 \sum_{g,g' \in [G]^2} \left( \sum_{h,k \in [G]^2} \Pi_k ' B_{k,g} (\tilde P_{gk})_{g,k} P_{k,h} (\tilde P_{g'h})_{h,g'} B_{h,g'}' \Pi_h \right)^2 \\
    &\lesssim u_n^2 \zeta_{H,n} \lambda_n \mu_n^2 + u_n^2 n_G^2 \zeta_{H_n} \phi_n \lambda_n \kappa_n + u_n^2 n_G^3 \phi_n^3 \lambda_n \kappa_n = o(\omega_n^4).
\end{align*}
We also have
\begin{align*}
    &u_n^2 \sum_{l,j \in [G]^2, l \neq j} \left \Vert \sum_{g \in [-lj], k \in [-j]} P_{l,g,-gl} B_{l,g}' \Pi_l \Pi_k ' B_{k,g} P_{g,j,-gk} \right \Vert_F^2  \\
    &\lesssim u_n^2 \sum_{l,j \in [G]^2}  \left \Vert \sum_{g,k \in [G]^2} P_{l,g,-gl} B_{l,g}' \Pi_l \Pi_k ' B_{k,g} P_{g,j,-gk} \right \Vert_F^2 \\
    &+ u_n^2 \sum_{l,j \in [G]^2, l \neq j} \left \Vert \sum_{g \in [-j]} P_{l,g,-gl} B_{l,g}' \Pi_l \Pi_j ' B_{j,g} P_{g,j,-gj} \right \Vert_F^2 \\
    &+ u_n^2 \sum_{l,j \in [G]^2, l \neq j}  \left \Vert \sum_{k \in [G]} P_{l,j,-jl} B_{l,j}' \Pi_l \Pi_k ' B_{k,j} P_{j,j,-jk} \right \Vert_F^2,
\end{align*}
where
\begin{align*}
    &u_n^2 \sum_{l,j \in [G]^2}  \left \Vert \sum_{g,k \in [G]^2} P_{l,g,-gl} B_{l,g}' \Pi_l \Pi_k ' B_{k,g} P_{g,j,-gk} \right \Vert_F^2 \\
    &\lesssim u_n^2 \sum_{l,j \in [G]^2}  \left \Vert \sum_{g,k \in [G]^2} P_{l,g,-gl} B_{l,g}' \Pi_l \Pi_k ' B_{k,g} P_{g,j} \right \Vert_F^2 \\
    &+ u_n^2 \sum_{l,j \in [G]^2}  \left \Vert \sum_{g,k \in [G]^2} P_{l,g,-gl} B_{l,g}' \Pi_l \Pi_k ' B_{k,g} (\tilde P_{gk})_{g,g} P_{g,j} \right \Vert_F^2 \\
    &+ u_n^2 \sum_{l,j \in [G]^2}  \left \Vert \sum_{g,k \in [G]^2} P_{l,g,-gl} B_{l,g}' \Pi_l \Pi_k ' B_{k,g} (\tilde P_{gk})_{g,k} P_{k,j} \right \Vert_F^2  \\
    &\lesssim u_n^2 n_G \zeta_{H,n} \phi_n \lambda_n \kappa_n + u_n^2 n_G^3 \phi_n^3 \lambda_n \kappa_n = o(\omega_n^4),
\end{align*}
and
\begin{align*}
    &u_n^2 \sum_{l,j \in [G]^2, l \neq j} \left \Vert \sum_{g \in [-j]} P_{l,g,-gl} B_{l,g}' \Pi_l \Pi_j ' B_{j,g} P_{g,j,-gj} \right \Vert_F^2 \\
    &\lesssim u_n^2 \sum_{l,j \in [G]^2} \left \Vert \sum_{g \in [G]} P_{l,g,-gl} B_{l,g}' \Pi_l \Pi_j ' B_{j,g} P_{g,j,-gj} \right \Vert_F^2 \\
    &\lesssim u_n^2 n_G^3 \phi_n \lambda_n^3 \kappa_n = o(\omega_n^4),
\end{align*}
and
\begin{align*}
    &u_n^2 \sum_{l,j \in [G]^2, l \neq j}  \left \Vert \sum_{k \in [G]} P_{l,j,-jl} B_{l,j}' \Pi_l \Pi_k ' B_{k,j} P_{j,j,-jk} \right \Vert_F^2 \\
    &\lesssim u_n^2 \sum_{l,j \in [G]^2}  \left \Vert \sum_{k \in [G]} P_{l,j,-jl} B_{l,j}' \Pi_l \Pi_k ' B_{k,j} P_{j,j} \right \Vert_F^2 \\
    &+ u_n^2 \sum_{l,j \in [G]^2}  \left \Vert \sum_{k \in [G]} P_{l,j,-jl} B_{l,j}' \Pi_l \Pi_k ' B_{k,j} (\tilde P_{jk})_{j,j}P_{j,j} \right \Vert_F^2 \\
    &+ u_n^2 \sum_{l,j \in [G]^2}  \left \Vert \sum_{k \in [G]} P_{l,j,-jl} B_{l,j}' \Pi_l \Pi_k ' B_{k,j} (\tilde P_{jk})_{j,k}P_{k,j} \right \Vert_F^2 \\
    &\lesssim u_n^2 n_G \zeta_{H,n} \lambda_n^4 \kappa_n + u_n^2 n_G^3 \phi_n \lambda_n^4 \kappa_n = o(\omega_n^4),
\end{align*}
and similarly
\begin{align*}
    &u_n^2 \sum_{l,j \in [G]^2, l \neq j} \left \Vert \sum_{g \in [-lj], h \in [G]} P_{l,g,-gh} B_{h,g}' \Pi_h \Pi_j ' B_{j,g} P_{g,j,-gj} \right \Vert_F^2 \\
    &\lesssim u_n^2 \sum_{l,j \in [G]^2} \left \Vert \sum_{g, h \in [G]^2} P_{l,g,-gh} B_{h,g}' \Pi_h \Pi_j ' B_{j,g} P_{g,j,-gj} \right \Vert_F^2 \\
    &+ u_n^2 \sum_{l,j \in [G]^2} \left \Vert \sum_{h \in [G]} P_{l,l,-lh} B_{h,l}' \Pi_h \Pi_j ' B_{j,l} P_{l,j,-lj} \right \Vert_F^2 = o(\omega_n^4).
\end{align*}
Lastly, we have
\begin{align*}
    &u_n^2 \sum_{l,j \in [G]^2, l \neq j} \left \Vert \sum_{h,k \in [G]^2} P_{l,l,-lh} B_{h,l}' \Pi_h \Pi_k ' B_{k,l} P_{l,j,-lk} \right \Vert_F^2 \\
    &\lesssim u_n^2 \sum_{l,j \in [G]^2} \left \Vert \sum_{h,k \in [G]^2} P_{l,l,-lh} B_{h,l}' \Pi_h \Pi_k ' B_{k,l} P_{l,j} \right \Vert_F^2 \\
    &+ u_n^2 \sum_{l,j \in [G]^2} \left \Vert \sum_{h,k \in [G]^2} P_{l,l,-lh} B_{h,l}' \Pi_h \Pi_k ' B_{k,l} (\tilde P_{lk})_{l,l} P_{l,j} \right \Vert_F^2 \\
    &+ u_n^2 \sum_{l,j \in [G]^2} \left \Vert \sum_{h,k \in [G]^2} P_{l,l,-lh} B_{h,l}' \Pi_h \Pi_k ' B_{k,l} (\tilde P_{lk})_{l,k} P_{k,j} \right \Vert_F^2 \\
    &\lesssim u_n^2 \zeta_{H,n} \lambda_n^3 \mu_n^2 + u_n^2 n_G \zeta_{H,n} \phi_n \lambda_n^3 \kappa_n + u_n^2 n_G \zeta_{H,n} \phi_n^2 \lambda_n^2 \kappa_n+ u_n^2 n_G^3 \phi_n^3 \lambda_n^3 \kappa_n = o(\omega_n^4),
\end{align*}
and similarly for
\begin{align*}
    u_n^2 \sum_{l,j \in [G]^2, l \neq j} \left \Vert \sum_{h,k \in [G]^2} P_{l,j,-jh} B_{h,j}' \Pi_h \Pi_k ' B_{k,j} P_{j,j,-jk} \right \Vert_F^2.
\end{align*}

For $R_{22}$, we have
\begin{align*}
    R_{22} &=  \mathbb V \left( \sum_{g, h, k, l, j \in [G]^5, l \neq g, l \neq h, j \neq g, j \neq k} \left(\Pi_h ' B_{h,g} \tilde M_{g,l,-gh} U_l \right) \left(V_k ' B_{k,g} \tilde M_{g,j,-gk} U_j \right) \right) \\
    &\lesssim \underbrace{\mathbb V \left( \sum_{g, h \in [G]^2, j \in [-g], k \in [-ghj]} \left(\Pi_h ' B_{h,g} P_{g,k,-gh} U_k \right) \left(V_k ' B_{k,g} P_{g,j,-gk} U_j \right) \right)}_{R_{22,1}} \\
    &+ \underbrace{\mathbb V \left( \sum_{g, h, k \in [G]^3, l \in [-ghk]} \left(\Pi_h ' B_{h,g} P_{g,l,-gh} U_l \right) \left(V_k ' B_{k,g} P_{g,l,-gk} U_l \right) \right)}_{R_{22,2}} \\
    &+ \underbrace{\mathbb V \left( \sum_{g, h, k \in [G]^3, l \in[-ghk], j \in [-gkl]} \left(\Pi_h ' B_{h,g} P_{g,l,-gh} U_l \right) \left(V_k ' B_{k,g} P_{g,j,-gk} U_j \right) \right)}_{R_{22,3}}.
\end{align*}
Consider first
\begin{align*}
    R_{22,1} &\lesssim \mathbb V \left( \sum_{g, h \in [G]^2, j \in [-g], k \in [-hj]} \Pi_h ' B_{h,g} P_{g,k,-gh} \Omega_{U,V,k} B_{k,g} P_{g,j,-gk} U_j \right) \\
    &+ \mathbb V \left( \sum_{g, h \in [G]^2, j \in [-g], k \in [-hj]} \Pi_h ' B_{h,g} P_{g,k,-gh} (U_k V_k' - \Omega_{U,V,k}) B_{k,g} P_{g,j,-gk} U_j \right) \\
    &\lesssim u_n \sum_{j \in [G]} \left \Vert \sum_{g, h, k \in [G]^3, g \neq j, k \neq j, k \neq h} P_{j,g,-gk} B_{k,g}' \Omega_{U,V,k}' P_{k,g,-gh} B_{h,g}' \Pi_h \right \Vert_2^2 \\
    &+ \sum_{k,j \in [G]^2, k \neq j} \mathbb E \left(\sum_{g, h \in [G]^2, g \neq j, h \neq k}\Pi_h ' B_{h,g} P_{g,k,-gh} U_k V_k' B_{k,g} P_{g,j,-gk} U_j \right)^2.
\end{align*}
We have
\begin{align*}
    &u_n \sum_{j \in [G]} \left \Vert \sum_{g, h, k \in [G]^3, g \neq j, k \neq j, k \neq h} P_{j,g,-gk} B_{k,g}' \Omega_{U,V,k}' P_{k,g,-gh} B_{h,g}' \Pi_h \right \Vert_2^2 \\
    &\lesssim u_n \sum_{j \in [G]} \left \Vert \sum_{g, h, k \in [G]^3} P_{j,g,-gk} B_{k,g}' \Omega_{U,V,k}' P_{k,g,-gh} B_{h,g}' \Pi_h \right \Vert_2^2 \\
    &+ u_n \sum_{j \in [G]} \left \Vert \sum_{h, k \in [G]^2, k \neq j, k \neq h} P_{j,j,-jk} B_{k,j}' \Omega_{U,V,k}' P_{k,j,-jh} B_{h,j}' \Pi_h \right \Vert_2^2 \\
    &+ u_n \sum_{j \in [G]} \left \Vert \sum_{g, h \in [G]^2} P_{j,g,-gj} B_{j,g}' \Omega_{U,V,j}' P_{j,g,-gh} B_{h,g}' \Pi_h \right \Vert_2^2 \\
    &+ u_n \sum_{j \in [G]} \left \Vert \sum_{g, h \in [G]^2} P_{j,g,-gh} B_{h,g}' \Omega_{U,V,h}' P_{h,g,-gh} B_{h,g}' \Pi_h \right \Vert_2^2 \\
    &+ u_n \sum_{j \in [G]} \left \Vert \sum_{g \in [G]} P_{j,g,-gj} B_{j,g}' \Omega_{U,V,j}' P_{j,g,-gj} B_{j,g}' \Pi_j \right \Vert_2^2,
\end{align*}
where
\begin{align*}
    & u_n \sum_{j \in [G]} \left \Vert \sum_{g, h, k \in [G]^3} P_{j,g,-gk} B_{k,g}' \Omega_{U,V,k}' P_{k,g,-gh} B_{h,g}' \Pi_h \right \Vert_2^2 \\
    &\lesssim u_n \sum_{j \in [G]}\sum_{g, h, k \in [G]^3} \sum_{g',h',k' \in [G]^3} \Pi_h ' B_{h,g} P_{g,k,-gh} \Omega_{U,V,k} B_{k,g} P_{g,j,-gk} P_{j,g',-g'k'} B_{k',g'}' \Omega_{U,V,k'}' P_{k',g',-g'h'} B_{h'g'}' \Pi_{h'} \\
    &\lesssim u_n \sum_{g, h, k \in [G]^3} \sum_{g',h',k' \in [G]^3} \Pi_h ' B_{h,g} P_{g,k,-gh} \Omega_{U,V,k} B_{k,g} \left( \sum_{j \in [G]} P_{g,j} P_{j,g'} \right)  B_{k',g'}' \Omega_{U,V,k'}' P_{k',g',-g'h'} B_{h'g'}' \Pi_{h'} \\
    &+ u_n \sum_{g, h, k \in [G]^3} \sum_{g',h',k' \in [G]^3} \Pi_h ' B_{h,g} P_{g,k,-gh} \Omega_{U,V,k} B_{k,g} (\tilde P_{gk})_{g,g} \\
    &\left( \sum_{j \in [G]} P_{g,j} P_{j,g'} \right) (\tilde P_{g'k'})_{g',g'}  B_{k',g'}' \Omega_{U,V,k'}' P_{k',g',-g'h'} B_{h'g'}' \Pi_{h'} \\
    &+ u_n \sum_{g, h, k \in [G]^3} \sum_{g',h',k' \in [G]^3} \Pi_h ' B_{h,g} P_{g,k,-gh} \Omega_{U,V,k} B_{k,g} (\tilde P_{gk})_{g,k} \\
    &\left( \sum_{j \in [G]} P_{k,j} P_{j,k'} \right) (\tilde P_{g'k'})_{k',g'}  B_{k',g'}' \Omega_{U,V,k'}' P_{k',g',-g'h'} B_{h'g'}' \Pi_{h'} \\
    &\lesssim u_n \sum_{g \in [G]} \left \Vert \sum_{h,k \in [G]^2} B_{k,g}' \Omega_{U,V,k}' P_{k,g,-gh} B_{h,g}' \Pi_{h} \right \Vert_2^2 + u_n \sum_{g \in [G]} \left \Vert \sum_{h,k \in [G]^2} (\tilde P_{gk})_{g,g} B_{k,g}' \Omega_{U,V,k}' P_{k,g,-gh} B_{h,g}' \Pi_{h} \right \Vert_2^2 \\
    &+ u_n \sum_{k \in [G]}  \left \Vert \sum_{g,h \in [G]^2} (\tilde P_{gk})_{k,g} B_{k,g}' \Omega_{U,V,k}' P_{k,g,-gh} B_{h,g}' \Pi_{h} \right \Vert_2^2 \\
    &\lesssim u_n^3 \lambda_n^2 \mu_n^2 + u_n^3 n_G \phi_n \lambda_n^2 \kappa_n + u_n^3 \zeta_{H,n} \phi_n \lambda_n^2 \kappa_n + u_n^3 n_G^2 \phi_n^2 \lambda_n^3 \kappa_n = o(\omega_n^4),
\end{align*}
and
\begin{align*}
    &u_n \sum_{j \in [G]} \left \Vert \sum_{h, k \in [G]^2, k \neq j, k \neq h} P_{j,j,-jk} B_{k,j}' \Omega_{U,V,k}' P_{k,j,-jh} B_{h,j}' \Pi_h \right \Vert_2^2 \\
    &\lesssim u_n \sum_{j \in [G]} \left \Vert \sum_{h, k \in [G]^2} P_{j,j,-jk} B_{k,j}' \Omega_{U,V,k}' P_{k,j,-jh} B_{h,j}' \Pi_h \right \Vert_2^2 \\
    &+ u_n \sum_{j \in [G]} \left \Vert \sum_{h \in [G]} P_{j,j,-jh} B_{h,j}' \Omega_{U,V,h}' P_{h,j,-jh} B_{h,j}' \Pi_h \right \Vert_2^2 \\
    &\lesssim u_n^3 (\phi_n + \lambda_n^2)\lambda_n^4 \mu_n^2 + u_n^3 n_G (\phi_n + \lambda_n^2) \phi_n \lambda_n^4 \kappa_n + u_n^3 n_G (\phi_n + \lambda_n^2) \lambda_n^4 \kappa_n = o(\omega_n^4),
\end{align*}
and
\begin{align*}
    &u_n \sum_{j \in [G]} \left \Vert \sum_{g, h \in [G]^2} P_{j,g,-gj} B_{j,g}' \Omega_{U,V,j}' P_{j,g,-gh} B_{h,g}' \Pi_h \right \Vert_2^2 \\
    &\lesssim u_n^3 (\phi_n + \lambda_n^2) \lambda_n^2 \sum_{g,j \in [G]^2} \left \Vert \sum_{h \in [G]} P_{j,g,-gh} B_{h,g}' \Pi_h \right \Vert_2^2 \\
    &\lesssim u_n^3 (\phi_n + \lambda_n^2) \lambda_n^3 \mu_n^2 + u_n^3 n_G (\phi_n + \lambda_n^2) \phi_n \lambda_n^5 \kappa_n + u_n^3 n_G (\phi_n + \lambda_n^2) \lambda_n^4 \kappa_n = o(\omega_n^4),
\end{align*}
and
\begin{align*}
    &u_n \sum_{j \in [G]} \left \Vert \sum_{g, h \in [G]^2} P_{j,g,-gh} B_{h,g}' \Omega_{U,V,h}' P_{h,g,-gh} B_{h,g}' \Pi_h \right \Vert_2^2 \\
    &\lesssim u_n \sum_{g \in [G]} \left \Vert \sum_{h \in [G]} B_{h,g}' \Omega_{U,V,h}' P_{h,g,-gh} B_{h,g}' \Pi_h \right \Vert_2^2 + u_n \sum_{g \in [G]} \left \Vert \sum_{h \in [G]} (\tilde P_{gh})_{g,g} B_{h,g}' \Omega_{U,V,h}' P_{h,g,-gh} B_{h,g}' \Pi_h \right \Vert_2^2 \\
    &+ u_n \sum_{h \in [G]} \left \Vert \sum_{g \in [G]} (\tilde P_{gh})_{h,g} B_{h,g}' \Omega_{U,V,h}' P_{h,g,-gh} B_{h,g}' \Pi_h \right \Vert_F^2 \\
    &\lesssim u_n^3 n_G (\phi_n + \lambda_n^2) \lambda_n^2 \kappa_n  = o(\omega_n^4),
\end{align*}
and
\begin{align*}
    &u_n \sum_{j \in [G]} \left \Vert \sum_{g \in [G]} P_{j,g,-gj} B_{j,g}' \Omega_{U,V,j}' P_{j,g,-gj} B_{j,g}' \Pi_j \right \Vert_2^2 \\
    &\lesssim u_n^3 n_G (\phi_n + \lambda_n^2) \lambda_n^4 \kappa_n = o(\omega_n^4).
\end{align*}
We also have
\begin{align*}
    & \sum_{k,j \in [G]^2, k \neq j} \mathbb E \left(\sum_{g, h \in [G]^2, g \neq j, h \neq k}\Pi_h ' B_{h,g} P_{g,k,-gh} U_k V_k' B_{k,g} P_{g,j,-gk} U_j \right)^2 \\
    &\lesssim \sum_{k,j \in [G]^2, k \neq j} \mathbb E \left(\sum_{g, h \in [G]^2}\Pi_h ' B_{h,g} P_{g,k,-gh} U_k V_k' B_{k,g} P_{g,j,-gk} U_j \right)^2 \\
    &+ \sum_{k,j \in [G]^2, k \neq j} \mathbb E \left(\sum_{g \in [-j]}\Pi_k ' B_{k,g} P_{g,k,-gk} U_k V_k' B_{k,g} P_{g,j,-gk} U_j \right)^2 \\
    &+ \sum_{k,j \in [G]^2, k \neq j} \mathbb E \left(\sum_{h \in [G]}\Pi_h ' B_{h,j} P_{j,k,-jh} U_k V_k' B_{k,j} P_{j,j,-jk} U_j \right)^2,
\end{align*}
where
\begin{align*}
    &\sum_{k,j \in [G]^2, k \neq j} \mathbb E \left(\sum_{g, h \in [G]^2}\Pi_h ' B_{h,g} P_{g,k,-gh} U_k V_k' B_{k,g} P_{g,j,-gk} U_j \right)^2 \\
    &\lesssim u_n \mathbb E\sum_{k \in [G]} \sum_{g, h \in [G]^2} \sum_{g', h' \in [G]^2}  \Pi_h ' B_{h,g} P_{g,k,-gh} U_k V_k' B_{k,g} \left( \sum_{j \in [G]} P_{g,j,-gk} P_{j,g',-g'k} \right) B_{k,g'}' V_k U_k' P_{k,g',-g'h'} B_{h',g'}' \Pi_{h'} \\
    &\lesssim u_n \sum_{k \in [G]} \mathbb E \left\Vert \sum_{g, h \in [G]^2} B_{k,g}' V_k U_k' P_{k,g,-gh} B_{h,g}' \Pi_{h} \right \Vert_2^2 + u_n \sum_{k \in [G]} \mathbb E \left\Vert \sum_{g, h \in [G]^2} (\tilde P_{gk})_{g,g} B_{k,g}' V_k U_k' P_{k,g,-gh} B_{h,g}' \Pi_{h} \right \Vert_2^2 \\
    &+ u_n \lambda_n \sum_{k \in [G]} \mathbb E \left\Vert \sum_{g, h \in [G]^2} (\tilde P_{gk})_{k,g} B_{k,g}' V_k U_k' P_{k,g,-gh} B_{h,g}' \Pi_{h} \right \Vert_2^2 \\
    &\lesssim u_n^{\frac{2q-3}{q-1}} n_G^{\frac{2q-1}{q-1}} \zeta_{H,n} \lambda_n \kappa_n + u_n^{\frac{2q-3}{q-1}} n_G^{\frac{3q-2}{q-1}} \phi_n^2 \lambda_n \kappa_n = o(\omega_n^4),
\end{align*}
and
\begin{align*}
    &\sum_{k,j \in [G]^2, k \neq j} \mathbb E \left(\sum_{g \in [-j]}\Pi_k ' B_{k,g} P_{g,k,-gk} U_k V_k' B_{k,g} P_{g,j,-gk} U_j \right)^2 \\
    &\lesssim \sum_{k,j \in [G]^2, k \neq j} \mathbb E \left(\sum_{g \in [G]}\Pi_k ' B_{k,g} P_{g,k,-gk} U_k V_k' B_{k,g} P_{g,j,-gk} U_j \right)^2 \\
    &+ \sum_{k,j \in [G]^2, k \neq j} \mathbb E \left(\Pi_k ' B_{k,j} P_{j,k,-jk} U_k V_k' B_{k,j} P_{j,j,-jk} U_j \right)^2 \\
    &\lesssim u_n^{\frac{2q-3}{q-1}} n_G^{\frac{2q-1}{q-1}} \lambda_n^4 \kappa_n + u_n^{\frac{2q-3}{q-1}} n_G^{\frac{2q-1}{q-1}} \phi_n \lambda_n^5 \kappa_n + u_n^{\frac{2q-3}{q-1}} n_G^{\frac{2q-1}{q-1}} \lambda_n^6 \kappa_n = o(\omega_n^4),
\end{align*}
and
\begin{align*}
    &\sum_{k,j \in [G]^2, k \neq j} \mathbb E \left(\sum_{h \in [G]}\Pi_h ' B_{h,j} P_{j,k,-jh} U_k V_k' B_{k,j} P_{j,j,-jk} U_j \right)^2 \\
    &\lesssim u_n^{\frac{2q-3}{q-1}} n_G^{\frac{q}{q-1}} \lambda_n^5 \mu_n^2 + u_n^{\frac{2q-3}{q-1}} n_G^{\frac{2q-1}{q-1}} \phi_n \lambda_n^7 \kappa_n + u_n^{\frac{2q-3}{q-1}} n_G^{\frac{2q-1}{q-1}} \lambda_n^6 \kappa_n = o(\omega_n^4).
\end{align*}
Consider next
\begin{align*}
    R_{22,2} &\lesssim \mathbb V \left( \sum_{g, h, k \in [G]^3, l \in [-ghk]} \Pi_h ' B_{h,g} P_{g,l,-gh} \Omega_{U,l} P_{l,g,-gk} B_{k,g}' V_k   \right) \\
    &+ \mathbb V \left( \sum_{g, h, k \in [G]^3, l \in [-ghk]} \Pi_h ' B_{h,g} P_{g,l,-gh} (U_l U_l' - \Omega_{U,l}) P_{l,g,-gk} B_{k,g}' V_k   \right) \\
    &\lesssim u_n \sum_{k \in [G]} \left \Vert \sum_{g, h \in [G]^2, l \in [-ghk]} B_{k,g} P_{g,l,-gk} \Omega_{U,l} P_{l,g,-gh} B_{h,g}' \Pi_h \right \Vert_2^2 \\
    &+ \sum_{k,l \in [G]^2, k \neq l} \mathbb E \left(\sum_{g,h \in [-l]^2} \Pi_h ' B_{h,g} P_{g,l,-gh} U_l U_l' P_{l,g,-gk} B_{k,g}' V_k   \right)^2,
\end{align*}
where
\begin{align*}
    &u_n \sum_{k \in [G]} \left \Vert \sum_{g, h \in [G]^2, l \in [-ghk]} B_{k,g} P_{g,l,-gk} \Omega_{U,l} P_{l,g,-gh} B_{h,g}' \Pi_h \right \Vert_2^2 \\
    &\lesssim u_n \sum_{k \in [G]} \left \Vert \sum_{g, h, l \in [G]^3} B_{k,g} P_{g,l,-gk} \Omega_{U,l} P_{l,g,-gh} B_{h,g}' \Pi_h \right \Vert_2^2 \\
    &+ u_n \sum_{k \in [G]} \left \Vert \sum_{g, h \in [G]^2} B_{k,g} P_{g,g,-gk} \Omega_{U,g} P_{g,g,-gh} B_{h,g}' \Pi_h \right \Vert_2^2 \\
    &+ u_n \sum_{k \in [G]} \left \Vert \sum_{g, h \in [G]^2} B_{k,g} P_{g,h,-gk} \Omega_{U,h} P_{h,g,-gh} B_{h,g}' \Pi_h \right \Vert_2^2 \\
    &+ u_n \sum_{k \in [G]} \left \Vert \sum_{g, h \in [G]^2} B_{k,g} P_{g,k,-gk} \Omega_{U,k} P_{k,g,-gh} B_{h,g}' \Pi_h \right \Vert_2^2 \\
    &+ u_n \sum_{k \in [G]} \left \Vert \sum_{g \in [G]} B_{k,g} P_{g,k,-gk} \Omega_{U,k} P_{k,g,-gk} B_{k,g}' \Pi_k \right \Vert_2^2.
\end{align*}
We have
\begin{multline*}
  u_n \sum_{k \in [G]} \left \Vert \sum_{g, h, l \in [G]^3} B_{k,g} P_{g,l,-gk} \Omega_{U,l} P_{l,g,-gh} B_{h,g}' \Pi_h \right \Vert_2^2 \\
  \lesssim u_n \sum_{k \in [G]} \left \Vert \sum_{g, h, l \in [G]^3} B_{k,g} P_{g,l} \Omega_{U,l} P_{l,g,-gh} B_{h,g}' \Pi_h \right \Vert_2^2\\
  + u_n \sum_{k \in [G]} \left \Vert \sum_{g, h, l \in [G]^3} B_{k,g} (\tilde P_{gk})_{g,g} P_{g,l} \Omega_{U,l} P_{l,g,-gh} B_{h,g}' \Pi_h \right \Vert_2^2 \\
  + u_n \sum_{k \in [G]} \left \Vert \sum_{g, h, l \in [G]^3} B_{k,g} (\tilde P_{gk})_{g,k} P_{k,l} \Omega_{U,l} P_{l,g,-gh} B_{h,g}' \Pi_h \right \Vert_2^2,
\end{multline*}
with
\begin{multline*}
    u_n \sum_{k \in [G]} \left \Vert \sum_{g, h, l \in [G]^3} B_{k,g} P_{g,l,-gk} \Omega_{U,l} P_{l,g,-gh} B_{h,g}' \Pi_h \right \Vert_2^2 \\
    \lesssim u_n \sum_{k \in [G]} \left \Vert \sum_{g \in [G]} B_{k,g} \left( \sum_{l \in [G]}P_{g,l} \Omega_{U,l} P_{l,g}\right) H_g \right \Vert_2^2\\ + u_n \sum_{k \in [G]} \left \Vert \sum_{g, h \in [G]^2} B_{k,g} \left( \sum_{l \in [G]} P_{g,l} \Omega_{U,l} P_{l,g}\right) (\tilde P_{gh})_{g,g} B_{h,g}' \Pi_h \right \Vert_2^2 \\
    + u_n \sum_{k \in [G]} \left \Vert \sum_{g, h \in [G]^2} B_{k,g} \left( \sum_{l \in [G]} P_{g,l} \Omega_{U,l} P_{l,h}\right) (\tilde P_{gh})_{h,g} B_{h,g}' \Pi_h \right \Vert_2^2 \\
    \lesssim u_n^3 \lambda_n^2 \mu_n^2 + u_n^3 n_G \phi_n \lambda_n^2 \kappa_n = o(\omega_n^4),
\end{multline*}
where we use the fact that $\lambda_{\max}(B'B) = O(1)$, and
\begin{multline*}
  u_n \sum_{k \in [G]} \left \Vert \sum_{g, h, l \in [G]^3} B_{k,g} (\tilde P_{gk})_{g,g} P_{g,l} \Omega_{U,l} P_{l,g,-gh} B_{h,g}' \Pi_h \right \Vert_2^2 \\
  \lesssim u_n \sum_{k \in [G]} \left \Vert \sum_{g, h, l \in [G]^3} B_{k,g} (\tilde P_{gk})_{g,g} P_{g,l} \Omega_{U,l} P_{l,g} B_{h,g}' \Pi_h \right \Vert_2^2\\
  + u_n \sum_{k \in [G]} \left \Vert \sum_{g, h, l \in [G]^3} B_{k,g} (\tilde P_{gk})_{g,g} P_{g,l} \Omega_{U,l} P_{l,g} (\tilde P_{gh})_{g,g} B_{h,g}' \Pi_h \right \Vert_2^2 \\
  + u_n \sum_{k \in [G]} \left \Vert \sum_{g, h, l \in [G]^3} B_{k,g} (\tilde P_{gk})_{g,g} P_{g,l} \Omega_{U,l} P_{l,h} (\tilde P_{gh})_{h,g} B_{h,g}' \Pi_h \right \Vert_2^2 \\
  \lesssim u_n^3 \zeta_{H,n} \phi_n \lambda_n^2 \kappa_n + u_n^3 n_G^2 \phi_n^2 \lambda_n^3 \kappa_n = o(\omega_n^4),
\end{multline*}
and
\begin{multline*}
  u_n \sum_{k \in [G]} \left \Vert \sum_{g, h, l \in [G]^3} B_{k,g} (\tilde P_{gk})_{g,k} P_{k,l} \Omega_{U,l} P_{l,g,-gh} B_{h,g}' \Pi_h \right \Vert_2^2 \\
  \lesssim
  u_n \sum_{k \in [G]} \left \Vert \sum_{g, h, l \in [G]^3} B_{k,g} (\tilde P_{gk})_{g,k} P_{k,l} \Omega_{U,l} P_{l,g} B_{h,g}' \Pi_h \right \Vert_2^2 \\
  + u_n \sum_{k \in [G]} \left \Vert \sum_{g, h, l \in [G]^3} B_{k,g} (\tilde P_{gk})_{g,k} P_{k,l} \Omega_{U,l} P_{l,g} (\tilde P_{gh})_{g,g} B_{h,g}' \Pi_h \right \Vert_2^2 \\
  +u_n \sum_{k \in [G]} \left \Vert \sum_{g, h, l \in [G]^3} B_{k,g} (\tilde P_{gk})_{g,k} P_{k,l} \Omega_{U,l} P_{l,h} (\tilde P_{gh})_{h,g} B_{h,g}' \Pi_h \right \Vert_2^2 \\
  \lesssim u_n^3 \zeta_{H,n} \phi_n \lambda_n^2 \kappa_n + u_n^3 n_G^2 \phi_n^2 \lambda_n^3 \kappa_n = o(\omega_n^4).
\end{multline*}
We also have
\begin{multline*}
  u_n \sum_{k \in [G]} \left \Vert \sum_{g, h \in [G]^2} B_{k,g} P_{g,g,-gk} \Omega_{U,g} P_{g,g,-gh} B_{h,g}' \Pi_h \right \Vert_2^2 \\
  \lesssim u_n \sum_{k \in [G]} \left \Vert \sum_{g, h \in [G]^2} B_{k,g} P_{g,g} \Omega_{U,g} P_{g,g,-gh} B_{h,g}' \Pi_h \right \Vert_2^2 \\
  + u_n \sum_{k \in [G]} \left \Vert \sum_{g, h \in [G]^2} B_{k,g} (\tilde P_{gk})_{g,g} P_{g,g} \Omega_{U,g} P_{g,g,-gh} B_{h,g}' \Pi_h \right \Vert_2^2 \\
  +u_n \sum_{k \in [G]} \left \Vert \sum_{g, h \in [G]^2} B_{k,g} (\tilde P_{gk})_{g,k} P_{k,g} \Omega_{U,g} P_{g,g,-gh} B_{h,g}' \Pi_h \right \Vert_2^2 \\
  \lesssim u_n^3 \lambda_n^4 \mu_n^2 + u_n^3 n_G \phi_n \lambda_n^4 \kappa_n + u_n^3 (\phi_n + \lambda_n^2) \phi_n \lambda_n^4 \mu_n^2 + u_n^3 n_G (\phi_n + \lambda_n^2) \phi_n^2 \lambda_n^4 \kappa_n = o(\omega_n^4),
\end{multline*}
and
\begin{multline*}
  u_n \sum_{k \in [G]} \left \Vert \sum_{g, h \in [G]^2} B_{k,g} P_{g,h,-gk} \Omega_{U,h} P_{h,g,-gh} B_{h,g}' \Pi_h \right \Vert_2^2 \\
  \lesssim u_n \sum_{k \in [G]} \left \Vert \sum_{g, h \in [G]^2} B_{k,g} P_{g,h} \Omega_{U,h} P_{h,g,-gh} B_{h,g}' \Pi_h \right \Vert_2^2 \\
  + u_n \sum_{k \in [G]} \left \Vert \sum_{g, h \in [G]^2} B_{k,g} (\tilde P_{gk})_{g,g} P_{g,h} \Omega_{U,h} P_{h,g,-gh} B_{h,g}' \Pi_h \right \Vert_2^2 \\
  + u_n \sum_{k \in [G]} \left \Vert \sum_{g, h \in [G]^2} B_{k,g} (\tilde P_{gk})_{g,k} P_{k,h} \Omega_{U,h} P_{h,g,-gh} B_{h,g}' \Pi_h \right \Vert_2^2 \\
  \lesssim u_n^3 n_G \phi_n \lambda_n^2 \kappa_n + u_n^3 n_G (\phi_n + \lambda_n^2) \phi_n^2 \lambda_n^2 \kappa_n = o(\omega_n^4),
\end{multline*}
and
\begin{align*}
    &u_n \sum_{k \in [G]} \left \Vert \sum_{g, h \in [G]^2} B_{k,g} P_{g,k,-gk} \Omega_{U,k} P_{k,g,-gh} B_{h,g}' \Pi_h \right \Vert_2^2 \\
    &\lesssim u_n \sum_{k \in [G]} \left \Vert \sum_{g, h \in [G]^2} B_{k,g} P_{g,k,-gk} \Omega_{U,k} P_{k,g} B_{h,g}' \Pi_h \right \Vert_2^2 + u_n \sum_{k \in [G]} \left \Vert \sum_{g, h \in [G]^2} B_{k,g} P_{g,k,-gk} \Omega_{U,k} P_{k,g} (\tilde P_{gh})_{g,g} B_{h,g}' \Pi_h \right \Vert_2^2 \\
    &+ u_n \sum_{k \in [G]} \left \Vert \sum_{g, h \in [G]^2} B_{k,g} P_{g,k,-gk} \Omega_{U,k} P_{k,h} (\tilde P_{gh})_{h,g} B_{h,g}' \Pi_h \right \Vert_2^2 \\
    & \lesssim u_n^3 (\phi_n + \lambda_n^2) \lambda_n^3 \mu_n^2 + u_n^3 n_G (\phi_n + \lambda_n^2) \phi_n^2 \lambda_n^2 \kappa_n = o(\omega_n^4),
\end{align*}
and
\begin{align*}
    &u_n \sum_{k \in [G]} \left \Vert \sum_{g \in [G]} B_{k,g} P_{g,k,-gk} \Omega_{U,k} P_{k,g,-gk} B_{k,g}' \Pi_k \right \Vert_2^2 \\
    &\lesssim u_n^3 n_G \phi_n \lambda_n^4 \kappa_n = o(\omega_n^4).
\end{align*}
In addition,
\begin{align*}
    &\sum_{k,l \in [G]^2, k \neq l} \mathbb E \left(\sum_{g,h \in [-l]^2} \Pi_h ' B_{h,g} P_{g,l,-gh} U_l U_l' P_{l,g,-gk} B_{k,g}' V_k   \right)^2 \\
    & \lesssim \sum_{k,l \in [G]^2, k \neq l} \mathbb E \left(\sum_{g,h \in [G]^2} \Pi_h ' B_{h,g} P_{g,l,-gh} U_l U_l' P_{l,g,-gk} B_{k,g}' V_k   \right)^2 \\
    &+ \sum_{k,l \in [G]^2, k \neq l} \mathbb E \left(\sum_{g \in [G]} \Pi_l ' B_{l,g} P_{g,l,-gl} U_l U_l' P_{l,g,-gk} B_{k,g}' V_k   \right)^2 \\
    &+ \sum_{k,l \in [G]^2, k \neq l} \mathbb E \left(\sum_{h \in [G]} \Pi_h ' B_{h,l} P_{l,l,-lh} U_l U_l' P_{l,l,-lk} B_{k,l}' V_k   \right)^2.
\end{align*}
We have
\begin{align*}
    &\sum_{k,l \in [G]^2, k \neq l} \mathbb E \left(\sum_{g,h \in [G]^2} \Pi_h ' B_{h,g} P_{g,l,-gh} U_l U_l' P_{l,g,-gk} B_{k,g}' V_k   \right)^2 \\
    &\lesssim  u_n \mathbb E \sum_{k,l \in [G]^2} \left \Vert \sum_{g,h \in [G]^2} B_{k,g} P_{g,l} U_l U_l' P_{l,g,-gh} B_{h,g}' \Pi_h  \right \Vert_2^2 \\
    &+ u_n \mathbb E \sum_{k,l \in [G]^2} \left \Vert \sum_{g,h \in [G]^2} B_{k,g} (\tilde P_{gk})_{g,g} P_{g,l} U_l U_l' P_{l,g,-gh} B_{h,g}' \Pi_h  \right \Vert_2^2 \\
    &+ u_n \mathbb E \sum_{k,l \in [G]^2} \left \Vert \sum_{g,h \in [G]^2} B_{k,g} (\tilde P_{gk})_{gk} P_{k,l} U_l U_l' P_{l,g,-gh} B_{h,g}' \Pi_h  \right \Vert_2^2,
\end{align*}
where
\begin{align*}
    &u_n \mathbb E \sum_{k,l \in [G]^2} \left \Vert \sum_{g,h \in [G]^2} B_{k,g} P_{g,l} U_l U_l' P_{l,g,-gh} B_{h,g}' \Pi_h  \right \Vert_2^2 \\
    &\lesssim u_n \mathbb E \sum_{g, l \in [G]^2} \left \Vert \sum_{h \in [G]} P_{g,l} U_l U_l' P_{l,g,-gh} B_{h,g}' \Pi_h  \right \Vert_2^2 \\
    &\lesssim u_n \mathbb E \sum_{g, l \in [G]^2} \left \Vert \sum_{h \in [G]} P_{g,l} U_l U_l' P_{l,g} B_{h,g}' \Pi_h  \right \Vert_2^2 + u_n \mathbb E \sum_{g, l \in [G]^2} \left \Vert \sum_{h \in [G]} P_{g,l} U_l U_l' P_{l,g} (\tilde P_{gh})_{g,g} B_{h,g}' \Pi_h  \right \Vert_2^2 \\
    &+  u_n \mathbb E \sum_{g, l \in [G]^2} \left \Vert \sum_{h \in [G]} P_{g,l} U_l U_l' P_{l,h} (\tilde P_{gh})_{h,g} B_{h,g}' \Pi_h  \right \Vert_2^2 \\
    &\lesssim u_n^{\frac{2q-3}{q-1}} n_G^{\frac{q}{q-1}} \lambda_n^3 \mu_n^2 + u_n^{\frac{2q-3}{q-1}} n_G^{\frac{2q-1}{q-1}} \phi_n^2 \lambda_n^2 \kappa_n = o(\omega_n^4),
\end{align*}
and
\begin{align*}
    &u_n \mathbb E \sum_{k,l \in [G]^2} \left \Vert \sum_{g,h \in [G]^2} B_{k,g} (\tilde P_{gk})_{g,g} P_{g,l} U_l U_l' P_{l,g,-gh} B_{h,g}' \Pi_h  \right \Vert_2^2 \\
    &\lesssim u_n \mathbb E \sum_{k,l \in [G]^2} \left \Vert \sum_{g,h \in [G]^2} B_{k,g} (\tilde P_{gk})_{g,g} P_{g,l} U_l U_l' P_{l,g} B_{h,g}' \Pi_h  \right \Vert_2^2 \\
    &+ u_n \mathbb E \sum_{k,l \in [G]^2} \left \Vert \sum_{g,h \in [G]^2} B_{k,g} (\tilde P_{gk})_{g,g} P_{g,l} U_l U_l' P_{l,g} (\tilde P_{gh})_{g,g} B_{h,g}' \Pi_h  \right \Vert_2^2 \\
    &+ u_n \mathbb E \sum_{k,l \in [G]^2} \left \Vert \sum_{g,h \in [G]^2} B_{k,g} (\tilde P_{gk})_{g,g} P_{g,l} U_l U_l' P_{l,h} (\tilde P_{gh})_{h,g} B_{h,g}' \Pi_h  \right \Vert_2^2\\
    &\lesssim u_n^{\frac{2q-3}{q-1}} n_G^{\frac{q}{q-1}} \zeta_{H,n} \phi_n^2 \lambda_n^2 \kappa_n + u_n^{\frac{2q-3}{q-1}} n_G^{\frac{3q-2}{q-1}} \phi_n^3 \lambda_n^3 \kappa_n = o(\omega_n^4),
\end{align*}
and
\begin{align*}
    &u_n \mathbb E \sum_{k,l \in [G]^2} \left \Vert \sum_{g,h \in [G]^2} B_{k,g} (\tilde P_{gk})_{gk} P_{k,l} U_l U_l' P_{l,g,-gh} B_{h,g}' \Pi_h  \right \Vert_2^2 \\
    &\lesssim u_n \mathbb E \sum_{k,l \in [G]^2} \left \Vert \sum_{g,h \in [G]^2} B_{k,g} (\tilde P_{gk})_{gk} P_{k,l} U_l U_l' P_{l,g} B_{h,g}' \Pi_h  \right \Vert_2^2 \\
    &+ u_n \mathbb E \sum_{k,l \in [G]^2} \left \Vert \sum_{g,h \in [G]^2} B_{k,g} (\tilde P_{gk})_{gk} P_{k,l} U_l U_l' P_{l,g} (\tilde P_{gh})_{g,g} B_{h,g}' \Pi_h  \right \Vert_2^2 \\
    &+ u_n \mathbb E \sum_{k,l \in [G]^2} \left \Vert \sum_{g,h \in [G]^2} B_{k,g} (\tilde P_{gk})_{gk} P_{k,l} U_l U_l' P_{l,h} (\tilde P_{gh})_{h,g} B_{h,g}' \Pi_h  \right \Vert_2^2 \\
    &\lesssim u_n^{\frac{2q-3}{q-1}} n_G^{\frac{q}{q-1}} \zeta_{H,n} \phi_n^2 \lambda_n^2 \kappa_n + u_n^{\frac{2q-3}{q-1}} n_G^{\frac{3q-2}{q-1}} \phi_n^3 \lambda_n^3 \kappa_n = o(\omega_n^4).
\end{align*}
We also have
\begin{align*}
    &\sum_{k,l \in [G]^2, k \neq l} \mathbb E \left(\sum_{g \in [G]} \Pi_l ' B_{l,g} P_{g,l,-gl} U_l U_l' P_{l,g,-gk} B_{k,g}' V_k   \right)^2 \\
    &\lesssim u_n \mathbb E \sum_{k,l \in [G]^2} \left\Vert \sum_{g \in [G]} B_{k,g} P_{g,l,-gk} U_l U_l' P_{l,g,-gl} B_{l,g}' \Pi_l \right\Vert_2^2 \\
    &\lesssim u_n^{\frac{2q-3}{q-1}} n_G^{\frac{2q-1}{q-1}} (\phi_n + \lambda_n^2) \phi_n \lambda_n^2 \kappa_n = o(\omega_n^4),
\end{align*}
and
\begin{align*}
    &\sum_{k,l \in [G]^2, k \neq l} \mathbb E \left(\sum_{h \in [G]} \Pi_h ' B_{h,l} P_{l,l,-lh} U_l U_l' P_{l,l,-lk} B_{k,l}' V_k   \right)^2 \\
    &\lesssim u_n \mathbb E \sum_{k,l \in [G]^2} \left\Vert \sum_{h \in [G]} B_{k,l} P_{l,l,-lk} U_l U_l' P_{l,l,-lh} B_{h,l}' \Pi_h \right\Vert_2^2 \\
    &\lesssim u_n^{\frac{2q-3}{q-1}} n_G^{\frac{q}{q-1}} \lambda_n^5 \mu_n^2 + u_n^{\frac{2q-3}{q-1}} n_G^{\frac{2q-1}{q-1}} \phi_n \lambda_n^5 \kappa_n + u_n^{\frac{2q-3}{q-1}} n_G^{\frac{q}{q-1}} \phi_n \lambda_n^8 \mu_n^2 + u_n^{\frac{2q-3}{q-1}} n_G^{\frac{2q-1}{q-1}} \phi_n^2 \lambda_n^8 \kappa_n = o(\omega_n^4).
\end{align*}
Lastly, consider
\begin{align*}
    R_{22,3} &\lesssim \sum_{k, l, j \in [G]^3, k \neq l \neq j} \mathbb E \left( \sum_{h \in [-l], g \in [-lj]}  \Pi_h ' B_{h,g} P_{g,l,-gh} U_l V_k ' B_{k,g} P_{g,j,-gk} U_j  \right)^2 \\
    &\lesssim \sum_{k, l, j \in [G]^3, k \neq l \neq j} \mathbb E \left( \sum_{g,h \in [G]^2}  \Pi_h ' B_{h,g} P_{g,l,-gh} U_l V_k ' B_{k,g} P_{g,j,-gk} U_j  \right)^2 \\
    &+ \sum_{k, l, j \in [G]^3, k \neq l \neq j} \mathbb E \left( \sum_{g \in [-lj]}  \Pi_l ' B_{l,g} P_{g,l,-gl} U_l V_k ' B_{k,g} P_{g,j,-gk} U_j  \right)^2 \\
    &+ \sum_{k, l, j \in [G]^3, k \neq l \neq j} \mathbb E \left( \sum_{h \in [G]}  \Pi_h ' B_{h,l} P_{l,l,-lh} U_l V_k ' B_{k,l} P_{l,j,-lk} U_j  \right)^2 \\
    &+ \sum_{k, l, j \in [G]^3, k \neq l \neq j} \mathbb E \left( \sum_{h \in [G]}  \Pi_h ' B_{h,j} P_{j,l,-jh} U_l V_k ' B_{k,j} P_{j,j,-jk} U_j  \right)^2.
\end{align*}
We have
\begin{align*}
    &\sum_{k, l, j \in [G]^3, k \neq l \neq j} \mathbb E \left( \sum_{g,h \in [G]^2}  \Pi_h ' B_{h,g} P_{g,l,-gh} U_l V_k ' B_{k,g} P_{g,j,-gk} U_j  \right)^2 \\
    &\lesssim u_n \mathbb E \sum_{k, l \in [G]^2, k \neq l} \sum_{g,h \in [G]^2} \sum_{g',h' \in [G]^2} \Pi_h ' B_{h,g} P_{g,l,-gh} U_l V_k ' B_{k,g} \\
    &\left( \sum_{j \in [-kl]}P_{g,j,-gk} P_{j,g',-gk} \right) B_{k,g'}' V_k U_l' P_{l,g',-g'h'} B_{h',g'}' \Pi_{h'} \\
    &\lesssim u_n \mathbb E \sum_{g, k, l \in [G]^3, k \neq l} \left \Vert \sum_{h \in [G]} B_{k,g}' V_k U_l' P_{l,g,-gh} B_{h,g}' \Pi_h \right \Vert_2^2 \\
    &+ u_n \mathbb E \sum_{g, k, l \in [G]^3, k \neq l} \left \Vert \sum_{h \in [G]} (\tilde P_{gk})_{g,g} B_{k,g}' V_k U_l' P_{l,g,-gh} B_{h,g}' \Pi_h \right \Vert_2^2 \\
    &+ u_n \lambda_n \mathbb E \sum_{k, l \in [G]^2, k \neq l} \left \Vert \sum_{g, h \in [G]^2} (\tilde P_{gk})_{k,g} B_{k,g}' V_k U_l' P_{l,g,-gh} B_{h,g}' \Pi_h \right \Vert_2^2 + o(\omega_n^4),
\end{align*}
where
\begin{align*}
    &u_n \mathbb E \sum_{g, k, l \in [G]^3k \neq l} \left \Vert \sum_{h \in [G]} B_{k,g}' V_k U_l' P_{l,g,-gh} B_{h,g}' \Pi_h \right \Vert_2^2 \\
    &\lesssim u_n \mathbb E \sum_{g, k, l \in [G]^3k \neq l} \left \Vert \sum_{h \in [G]} B_{k,g}' V_k U_l' P_{l,g} B_{h,g}' \Pi_h \right \Vert_2^2 + u_n \mathbb E \sum_{g, k, l \in [G]^3k \neq l} \left \Vert \sum_{h \in [G]} B_{k,g}' V_k U_l' P_{l,g} (\tilde P_{gh})_{g,g} B_{h,g}' \Pi_h \right \Vert_2^2 \\
    &+ u_n \mathbb E \sum_{g, k, l \in [G]^3k \neq l} \left \Vert \sum_{h \in [G]} B_{k,g}' V_k U_l' P_{l,h} (\tilde P_{gh})_{h,g} B_{h,g}' \Pi_h \right \Vert_2^2 \\
    &\lesssim u_n^3 \zeta_{H,n} \lambda_n \kappa_n + u_n^3 n_G^2 \phi_n \lambda_n^4 \kappa_n + u_n^3 n_G^2 \lambda_n^3 \kappa_n  = o(\omega_n^4),
\end{align*}
and
\begin{align*}
    &u_n \mathbb E \sum_{g, k, l \in [G]^3, k \neq l} \left \Vert \sum_{h \in [G]} (\tilde P_{gk})_{g,g} B_{k,g}' V_k U_l' P_{l,g,-gh} B_{h,g}' \Pi_h \right \Vert_2^2 \\
    &\lesssim u_n \mathbb E \sum_{g, k, l \in [G]^3, k \neq l} \left \Vert \sum_{h \in [G]} (\tilde P_{gk})_{g,g} B_{k,g}' V_k U_l' P_{l,g} B_{h,g}' \Pi_h \right \Vert_2^2 \\
    &+ u_n \mathbb E \sum_{g, k, l \in [G]^3, k \neq l} \left \Vert \sum_{h \in [G]} (\tilde P_{gk})_{g,g} B_{k,g}' V_k U_l' P_{l,g} (\tilde P_{gh})_{g,g} B_{h,g}' \Pi_h \right \Vert_2^2 \\
    &+ u_n \mathbb E \sum_{g, k, l \in [G]^3, k \neq l} \left \Vert \sum_{h \in [G]} (\tilde P_{gk})_{g,g} B_{k,g}' V_k U_l' P_{l,h} (\tilde P_{gh})_{h,g} B_{h,g}' \Pi_h \right \Vert_2^2 \\
    &\lesssim u_n^3 \zeta_{H,n} \lambda_n^3 \kappa_n + u_n^3 n_G^2 \phi_n \lambda_n^6 \kappa_n + u_n^3 n_G^2 \lambda_n^5 \kappa_n  = o(\omega_n^4),
\end{align*}
and
\begin{align*}
    &u_n \lambda_n \mathbb E \sum_{k, l \in [G]^2, k \neq l} \left \Vert \sum_{g, h \in [G]^2} (\tilde P_{gk})_{k,g} B_{k,g}' V_k U_l' P_{l,g,-gh} B_{h,g}' \Pi_h \right \Vert_2^2 \\
    &\lesssim u_n \lambda_n \mathbb E \sum_{k, l \in [G]^2, k \neq l} \left \Vert \sum_{g, h \in [G]^2} (\tilde P_{gk})_{k,g} B_{k,g}' V_k U_l' P_{l,g} B_{h,g}' \Pi_h \right \Vert_2^2 \\
    &+ u_n \lambda_n \mathbb E \sum_{k, l \in [G]^2, k \neq l} \left \Vert \sum_{g, h \in [G]^2} (\tilde P_{gk})_{k,g} B_{k,g}' V_k U_l' P_{l,g} (\tilde P_{gh})_{g,g} B_{h,g}' \Pi_h \right \Vert_2^2 \\
    &+ u_n \lambda_n \mathbb E \sum_{k, l \in [G]^2, k \neq l} \left \Vert \sum_{g, h \in [G]^2} (\tilde P_{gk})_{k,g} B_{k,g}' V_k U_l' P_{l,h} (\tilde P_{gh})_{h,g} B_{h,g}' \Pi_h \right \Vert_2^2 \\
    &\lesssim u_n^3 \zeta_{H,n} \phi_n \lambda_n^2 \kappa_n + u_n^3 n_G^2 \phi_n^2 \lambda_n^5 \kappa_n + u_n^3 n_G^2 \phi_n \lambda_n^4 \kappa_n = o(\omega_n^4).
\end{align*}
We also have
\begin{align*}
    &\sum_{k, l, j \in [G]^3, k \neq l \neq j} \mathbb E \left( \sum_{g \in [-lj]}  \Pi_l ' B_{l,g} P_{g,l,-gl} U_l V_k ' B_{k,g} P_{g,j,-gk} U_j  \right)^2 \\
    &\lesssim \sum_{k, l, j \in [G]^3, k \neq l \neq j} \mathbb E \left( \sum_{g \in [G]}  \Pi_l ' B_{l,g} P_{g,l,-gl} U_l V_k ' B_{k,g} P_{g,j,-gk} U_j  \right)^2 \\
    &+ \sum_{k, l, j \in [G]^3, k \neq l \neq j} \mathbb E \left(\Pi_l ' B_{l,j} P_{j,l,-jl} U_l V_k ' B_{k,j} P_{j,j,-jk} U_j  \right)^2 \\
    &\lesssim u_n^3 n_G^2 \lambda_n^3 \kappa_n + u_n^3 n_G^2 \phi_n \lambda_n^4 \kappa_n + o(\omega_n^4) = o(\omega_n^4),
\end{align*}
and
\begin{align*}
    &\sum_{k, l, j \in [G]^3, k \neq l \neq j} \mathbb E \left( \sum_{h \in [G]}  \Pi_h ' B_{h,l} P_{l,l,-lh} U_l V_k ' B_{k,l} P_{l,j,-lk} U_j  \right)^2 \\
    &\lesssim u_n^3 n_G \lambda_n^4 \mu_n^2 + u_n^3 n_G^2 \phi_n \lambda_n^4 \kappa_n = o(\omega_n^4),
\end{align*}
and
\begin{align*}
    &\sum_{k, l, j \in [G]^3, k \neq l \neq j} \mathbb E \left( \sum_{h \in [G]}  \Pi_h ' B_{h,j} P_{j,l,-jh} U_l V_k ' B_{k,j} P_{j,j,-jk} U_j  \right)^2 \\
    &\lesssim u_n^3 n_G \lambda_n^4 \mu_n^2 + u_n^3 n_G^2 \phi_n \lambda_n^6 \kappa_n + u_n^3 n_G^2 \lambda_n^5 \kappa_n = o(\omega_n^4).
\end{align*}

For $R_{23}$, we have
\begin{align*}
    R_{23} &=  \mathbb V \left( \sum_{g, h, k \in [G]^3, l \in [-gh], j \in [-gk]} \left(V_h ' B_{h,g} \tilde M_{g,l,-gh} U_l \right) \left(V_k ' B_{k,g} \tilde M_{g,j,-gk} U_j \right) \right) \\
    &\lesssim \underbrace{\mathbb V \left( \sum_{g, h \in [G]^2, l \in [-gh]} \left(V_h ' B_{h,g} P_{g,l,-gh} U_l \right) \left(V_h ' B_{h,g} P_{g,l,-gh} U_l \right) \right)}_{R_{23,1}} \\
    &+ \underbrace{\mathbb V \left( \sum_{g, h, k\in [G]^3, h \neq k} \left(V_h ' B_{h,g} P_{g,k,-gh} U_k \right) \left(V_k ' B_{k,g} P_{g,h,-gk} U_h \right) \right)}_{R_{23,2}} \\
    &+ \underbrace{\mathbb V \left( \sum_{g, h \in [G]^2, l,j \in [-gh]^2, l \neq j} \left(V_h ' B_{h,g} P_{g,l,-gh} U_l \right) \left(V_h ' B_{h,g} P_{g,j,-gh} U_j \right) \right)}_{R_{23,3}} \\
    &+ \underbrace{\mathbb V \left( \sum_{g, h, k \in [G]^3, h \neq k, l \in [-ghk]} \left(V_h ' B_{h,g} P_{g,l,-gh} U_l \right) \left(V_k ' B_{k,g} P_{g,l,-gk} U_l \right) \right)}_{R_{23,4}} \\
    &+ \underbrace{\mathbb V \left( \sum_{g, h, k \in [G]^3, h \neq k, j \in [-ghk]} \left(V_h ' B_{h,g} P_{g,k,-gh} U_k \right) \left(V_k ' B_{k,g} P_{g,j,-gk} U_j \right) \right)}_{R_{23,5}} \\
    &+ \underbrace{\mathbb V \left( \sum_{g, h, k \in [G]^3, h \neq k, l \in [-ghk]} \left(V_h ' B_{h,g} P_{g,l,-gh} U_l \right) \left(V_k ' B_{k,g} P_{g,h,-gk} U_h \right) \right)}_{R_{23,6}} \\
    &+ \underbrace{\mathbb V \left( \sum_{g, h, k \in [G]^3, h \neq k, l,j \in [-ghk]^2, l \neq j} \left(V_h ' B_{h,g} P_{g,l,-gh} U_l \right) \left(V_k ' B_{k,g} P_{g,j,-gk} U_j \right) \right)}_{R_{23,7}}.
\end{align*}
Consider first
\begin{align*}
    R_{23,1} &\lesssim \underbrace{\mathbb V \left( \sum_{g, h \in [G]^2, l \in [-gh]}V_h ' B_{h,g} P_{g,l,-gh} \Omega_{U,l} P_{l,g,-gh} B_{h,g}' V_h \right)}_{R_{23,1,1}} \\
    &+ \underbrace{\mathbb V \left( \sum_{g, h \in [G]^2, l \in [-gh]} U_l' P_{l,g,-gh} B_{h,g}' \Omega_{V,h} B_{h,g} P_{g,l,-gh} U_l \right)}_{R_{23,1,2}} \\
    &+ \underbrace{\mathbb V \left( \sum_{g, h \in [G]^2, l \in [-gh]} \tr
      \left( B_{h,g}' (V_h V_h' - \Omega_{V,h}) B_{h,g} P_{g,l,-gh} (U_{l} U_l'- \Omega_{U,l}) P_{l,g,-gh} \right) \right)}_{R_{23,1,3}},
\end{align*}
where
\begin{align*}
    R_{23,1,1} &\lesssim u_n^{\frac{q-2}{q-1}} n_G^{\frac{q}{q-1}} \sum_{h \in [G]} \left \Vert \sum_{l,g \in [G]^2, l \neq g, l \neq h}B_{h,g} P_{g,l,-gh} \Omega_{U,l} P_{l,g,-gh} B_{h,g}' \right \Vert_F^2 \\
    &\lesssim u_n^{\frac{q-2}{q-1}} n_G^{\frac{q}{q-1}} \sum_{h \in [G]}  \left \Vert \sum_{l,g \in [G]^2, l \neq g, l \neq h}B_{h,g} P_{g,l,-gh} \Omega_{U,l} P_{l,g,-gh} B_{h,g}' \right \Vert_{op} \\
    &\tr \left( \sum_{l,g \in [G]^2, l \neq g, l \neq h}B_{h,g} P_{g,l,-gh} \Omega_{U,l} P_{l,g,-gh} B_{h,g}' \right) \\
    &\lesssim u_n^{\frac{2q-3}{q-1}} n_G^{\frac{q}{q-1}} \lambda_n^2 \sum_{h \in [G]} \tr \left( \sum_{l,g \in [G]^2, l \neq g, l \neq h}B_{h,g} P_{g,l,-gh} \Omega_{U,l} P_{l,g,-gh} B_{h,g}' \right) \\
    &\lesssim u_n^{\frac{3q-4}{q-1}} n_G^{\frac{q}{q-1}} \lambda_n^3 \kappa_n = o(\omega_n^4),
\end{align*}
and similarly
\begin{align*}
    R_{23,1,2} &\lesssim u_n^{\frac{q-2}{q-1}} n_G^{\frac{q}{q-1}} \sum_{l \in [G]} \left \Vert \sum_{h,g \in [G]^2, g \neq l, h \neq l} P_{l,g,-gh} B_{h,g}' \Omega_{V,h} B_{h,g} P_{g,l,-gh} \right \Vert_F^2 \\
    &\lesssim u_n^{\frac{2q-3}{q-1}} n_G^{\frac{q}{q-1}} (\lambda_n^2 + \lambda_n^2 \phi_n^2) \sum_{l \in [G]} \tr \left( \sum_{h,g \in [G]^2, g \neq l, h \neq l} P_{l,g,-gh} B_{h,g}' \Omega_{V,h} B_{h,g} P_{g,l,-gh} \right) \\
    &\lesssim u_n^{\frac{3q-4}{q-1}} n_G^{\frac{q}{q-1}} (\lambda_n^2 + \lambda_n^2 \phi_n^2) \lambda_n \kappa_n = o(\omega_n^4).
\end{align*}
We also have
\begin{align*}
    R_{23,1,3} &\lesssim \sum_{h,l \in [G]^2, h \neq l} \mathbb E\left( \sum_{g \in [-hl]} \left(V_h ' B_{h,g} P_{g,l,-gh} U_l \right)^2 \right)^2 \\
    &\lesssim u_n^{\frac{q-2}{q-1}} n_G^{\frac{q}{q-1}} \sum_{h,l \in [G]^2, h \neq l} \mathbb E \left \Vert \sum_{g \in [-hl]} B_{h,g} P_{g,l,-gh} U_l U_l' P_{l,g,-gh} B_{h,g}' \right \Vert_F^2 \\
    &\lesssim u_n^{\frac{q-2}{q-1}} n_G^{\frac{q}{q-1}} \lambda_n^2 \sum_{h,l \in [G]^2, h \neq l}  \tr \left( \sum_{g \in [-hl]} B_{h,g} P_{g,l,-gh} \mathbb E\left(U_l U_l' ||U_l||_2^2\right)  P_{l,g,-gh} B_{h,g}' \right) \\
    &\lesssim u_n^{\frac{2q-4}{q-1}} n_G^{\frac{2q}{q-1}} \lambda_n^3 \kappa_n = o(\omega_n^4).
\end{align*}
Consider next
\begin{align*}
    R_{23,2} &\lesssim \underbrace{\mathbb V \left( \sum_{g, h, k\in [G]^3, h \neq k} V_h ' B_{h,g} P_{g,k,-gh} \Omega_{U,V,k} B_{k,g} P_{g,h,-gk} U_h \right)}_{R_{23,2,1 }}\\
    &+ \underbrace{\mathbb V \left( \sum_{g, h, k\in [G]^3, h \neq k} V_k' B_{k,g} P_{g,h,-gk} \Omega_{U,V,h} B_{h,g} P_{g,k,-gh} U_k \right)}_{R_{23,2,2}} \\
    &+ \underbrace{\mathbb V \left( \sum_{g, h, k\in [G]^3, h \neq k} \tr \left( (U_k V_k' - \Omega_{U,V,k}) B_{k,g} P_{g,h,-gk} (U_h V_h' - \Omega_{U,V,h}) B_{h,g} P_{g,k,-gh}  \right) \right)}_{R_{23,2,3}},
\end{align*}
where
\begin{align*}
    R_{23,2,1} &\lesssim u_n^{\frac{q-2}{q-1}} n_G^{\frac{q}{q-1}} \sum_{h \in [G]} \left \Vert \sum_{g,k \in [G]^2, k \neq h} B_{h,g} P_{g,k,-gh} \Omega_{U,V,k} B_{k,g} P_{g,h,-gk} \right \Vert_F^2 \\
    &\lesssim u_n^{\frac{3q-4}{q-1}} n_G^{\frac{q}{q-1}} \sum_{h \in [G]} \left(\sum_{g,k \in [G]^2, k \neq h} ||B_{k,g}||_{op} ||P_{g,k,-gh}||_{op} ||P_{g,h,-gk}||_{op} ||B_{h,g}||_{F} \right)^2 \\
    &\lesssim u_n^{\frac{3q-4}{q-1}} n_G^{\frac{q}{q-1}} \phi_n^2 (\phi_n + \lambda_n^2) \kappa_n = o(\omega_n^4),
\end{align*}
and similarly
\begin{align*}
    R_{23,2,2} &\lesssim u_n^{\frac{q-2}{q-1}} n_G^{\frac{q}{q-1}} \sum_{k \in [G]} \left \Vert \sum_{g,h \in [G]^2, h \neq k} B_{k,g} P_{g,h,-gk} \Omega_{U,V,h} B_{h,g} P_{g,k,-gh} \right \Vert_F^2 \\
    &\lesssim u_n^{\frac{3q-4}{q-1}} n_G^{\frac{q}{q-1}} \phi_n^2 (\phi_n + \lambda_n^2) \kappa_n = o(\omega_n^4).
\end{align*}
We also have
\begin{align*}
    R_{23,2,3} &\lesssim \sum_{h,k \in [G]^2, h \neq k} \mathbb E \left( \sum_{g \in [G]} V_h ' B_{h,g} P_{g,k,-gh} U_k V_k ' B_{k,g} P_{g,h,-gk} U_h \right)^2 \\
    &\lesssim u_n^{\frac{q-2}{q-1}} n_G^{\frac{q}{q-1}} \sum_{h,k \in [G]^2, h \neq k} \mathbb E \left \Vert \sum_{g \in [G]} B_{h,g} P_{g,k,-gh} U_k V_k ' B_{k,g} P_{g,h,-gk} \right \Vert_F^2 \\
    &\lesssim u_n^{\frac{q-2}{q-1}} n_G^{\frac{q}{q-1}} \sum_{h,k \in [G]^2, h \neq k} \mathbb E \sum_{g,g' \in [G]^2} V_k ' B_{k,g} P_{g,h,-gk} P_{h,g',-g'k} B_{k,g'}' V_k  U_k' P_{k,g',-g'h} B_{h,g'}' B_{h,g} P_{g,k,-gh} U_k \\
    &\lesssim u_n^{\frac{q-2}{q-1}} n_G^{\frac{q}{q-1}} \sum_{h,k \in [G]^2, h \neq k} \mathbb E \sum_{g,g' \in [G]^2} \left( V_k ' B_{k,g} P_{g,h,-gk} P_{h,g',-g'k} B_{k,g'}' V_k \right)^2 \\
    &+ u_n^{\frac{q-2}{q-1}} n_G^{\frac{q}{q-1}} \sum_{h,k \in [G]^2, h \neq k} \mathbb E \sum_{g,g' \in [G]^2} \left( U_k' P_{k,g',-g'h} B_{h,g'}' B_{h,g} P_{g,k,-gh} U_k \right)^2 \\
    &\lesssim u_n^{\frac{2q-4}{q-1}} n_G^{\frac{2q}{q-1}} \phi_n^2 \lambda_n^2 \kappa_n = o(\omega_n^4).
\end{align*}
Consider next
\begin{align*}
    R_{23,3} &\lesssim \underbrace{\mathbb V \left( \sum_{g, h \in [G]^2, l,j \in [-gh]^2, l \neq j} U_l' P_{l,g,-gh} B_{h,g}' \Omega_{V,h} B_{h,g} P_{g,j,-gh} U_j \right)}_{R_{23,3,1}} \\
    &+ \underbrace{\mathbb V \left( \sum_{g, h \in [G]^2, l,j \in [-gh]^2, l \neq j} U_l' P_{l,g,-gh} B_{h,g}' (V_h V_h' - \Omega_{V,h}) B_{h,g} P_{g,j,-gh} U_j \right)}_{R_{23,3,2}}.
\end{align*}
We have, letting $\ddot{P}_{gg'hh'}= \sum_{j \in [-gg'hh']} P_{g,j,-gh} P_{j,g',-g'h'}$
\begin{align*}
    R_{23,3,1} &\lesssim u_n^2 \sum_{l,j \in [G]^2, l \neq j} \left \Vert \sum_{g,h \in [-lj]^2} P_{l,g,-gh} B_{h,g}' \Omega_{V,h} B_{h,g} P_{g,j,-gh} \right \Vert_F^2 \\
    &\lesssim u_n^2 \sum_{l \in [G]} \sum_{g,h \in [-l]^2} \sum_{g',h' \in [-l]^2} \tr \left( P_{l,g,-gh} B_{h,g}' \Omega_{V,h} B_{h,g}\ddot{P}_{gg'hh'} B_{h',g'}' \Omega_{V,h'} B_{h',g'} P_{g',l,-g'h'} \right) \\
    &\lesssim u_n^2 \sum_{g,l \in [G]^2, g \neq l} \left \Vert \sum_{h \in [-l]} P_{l,g,-gh} B_{h,g}' \Omega_{V,h} B_{h,g} \right \Vert_F^2 + u_n^2 \sum_{g,l \in [G]^2, g \neq l} \left \Vert \sum_{h \in [-l]} P_{l,g,-gh} B_{h,g}' \Omega_{V,h} B_{h,g} (\tilde P_{gh})_{g,g} \right \Vert_F^2 \\
    &+ u_n^2 \sum_{h,l \in [G]^2, h \neq l} \left \Vert \sum_{g \in [-l]} P_{l,g,-gh} B_{h,g}' \Omega_{V,h} B_{h,g} (\tilde P_{gh})_{g,h} \right \Vert_F^2 + o(\omega_n^4) \\
    &\lesssim u_n^4 (\phi_n+\lambda_n^2) \lambda_n \kappa_n + o(\omega_n^4) = o(\omega_n^4),
\end{align*}
and
\begin{align*}
  R_{23,3,2} &\lesssim \sum_{h,l,j \in [G]^3, h \neq l \neq j} \mathbb E \left( \sum_{g \in [-lj]} U_l' P_{l,g,-gh} B_{h,g}' V_h V_h'  B_{h,g} P_{g,j,-gh} U_j \right)^2 \\
             &\lesssim u_n^2 \sum_{h,l,j \in [G]^3, h \neq l \neq j} \mathbb E \left \Vert \sum_{g \in [-lj]} P_{l,g,-gh} B_{h,g}' V_h V_h'  B_{h,g} P_{g,j,-gh} \right \Vert_F^2 \\
             &\lesssim u_n^2 \sum_{g, h,l \in [G]^3, g \neq l, h \neq l} \mathbb E \left \Vert P_{l,g,-gh} B_{h,g}' V_h V_h'  B_{h,g} \right \Vert_F^2\\
             &\quad + u_n^2 \sum_{g, h,l \in [G]^3, g \neq l, h \neq l} \mathbb E \left \Vert P_{l,g,-gh} B_{h,g}' V_h V_h'  B_{h,g} (\tilde P_{gh})_{g,g} \right \Vert_F^2 \\
             &\quad + u_n^2 \lambda_n \sum_{h,l \in [G]^2, h \neq l} \mathbb E \left \Vert \sum_{g \in [G], g \neq l} P_{l,g,-gh} B_{h,g}' V_h V_h'  B_{h,g} (\tilde P_{gh})_{g,h} \right \Vert_F^2 + o(\omega_n^4) \\
             &\lesssim u_n^{\frac{3q-4}{q-1}} n_G^{\frac{q}{q-1}} \lambda_n^3 \kappa_n + u_n^{\frac{3q-4}{q-1}} n_G^{\frac{q}{q-1}} \phi_n \lambda_n^6 \kappa_n + o(\omega_n^4) = o(\omega_n^4).
\end{align*}
Consider next
\begin{align*}
    R_{23,4} &\lesssim \underbrace{\mathbb V \left( \sum_{g, h, k \in [G]^3, h \neq k, l \in [-ghk]} V_h ' B_{h,g} P_{g,l,-gh} \Omega_{U,l} P_{l,g,-gk} B_{k,g}' V_k  \right)}_{R_{23,4,1}} \\
    &+ \underbrace{\mathbb V \left( \sum_{g, h, k \in [G]^3, h \neq k, l \in [-ghk]} V_h ' B_{h,g} P_{g,l,-gh} (U_l U_l' - \Omega_{U,l}) P_{l,g,-gk} B_{k,g}' V_k  \right)}_{R_{23,4,2}}.
\end{align*}
We have
\begin{align*}
    R_{23,4,1} &\lesssim u_n^2 \sum_{h,k \in [G]^2, h \neq k} \left \Vert \sum_{g \in [G]} B_{h,g} \left(\sum_{l \in [-ghk]} P_{g,l,-gh} \Omega_{U,l} P_{l,g,-gk}\right) B_{k,g}' \right \Vert_F^2 \\
    &\lesssim u_n^2 \sum_{h,k \in [G]^2} \left \Vert \sum_{g \in [G]} B_{h,g} \left(\sum_{l \in [G]} P_{g,l,-gh} \Omega_{U,l} P_{l,g} \right) B_{k,g}' \right \Vert_F^2 \\
    &+ u_n^2 \sum_{h,k \in [G]^2} \left \Vert \sum_{g \in [G]} B_{h,g} \left(\sum_{l \in [G]} P_{g,l,-gh} \Omega_{U,l} P_{l,g} \right)  ( \Xi_{1,g} + \Xi_{9,gk}) B_{k,g}' \right \Vert_F^2 \\
    &+ u_n^2 \sum_{h,k \in [G]^2} \left \Vert \sum_{g \in [G]} B_{h,g} \left(\sum_{l \in [G]} P_{g,l,-gh} \Omega_{U,l} P_{l,k} \right)  (\tilde P_{gk})_{k,g} B_{k,g}' \right \Vert_F^2 + o(\omega_n^4),
\end{align*}
where
\begin{align*}
    &u_n^2 \sum_{h,k \in [G]^2} \left \Vert \sum_{g \in [G]} B_{h,g} \left(\sum_{l \in [G]} P_{g,l,-gh} \Omega_{U,l} P_{l,g} \right) B_{k,g}' \right \Vert_F^2 \\
    &= u_n^2 \sum_{h \in [G]} \tr \left( \sum_{g,g' \in [G]^2} B_{h,g} \left(\sum_{l \in [G]} P_{g,l,-gh} \Omega_{U,l} P_{l,g} \right) \left(\sum_{k \in [G]} B_{k,g}' B_{k,g'}' \right) \left(\sum_{l' \in [G]} P_{g',l'} \Omega_{U,l'} P_{l',g',-gh} \right) B_{h,g'}'  \right) \\
    &\lesssim u_n^2 \sum_{g,h \in [G]^2} \left \Vert B_{h,g} \left(\sum_{l \in [G]} P_{g,l,-gh} \Omega_{U,l} P_{l,g} \right) \right \Vert_F^2 \\
    &\lesssim u_n^2 \sum_{g,h \in [G]^2} ||B_{h,g}||_F^2 \left \Vert  \left(\sum_{l \in [G]} P_{g,l,-gh} \Omega_{U,l} P_{l,g} \right) \right \Vert_{op}^2 \\
    &\lesssim u_n^4 \lambda_n^2 \kappa_n = o(\omega_n^4),
\end{align*}
and similarly for
\begin{align*}
    u_n^2 \sum_{h,k \in [G]^2} \left \Vert \sum_{g \in [G]} B_{h,g} \left(\sum_{l \in [G]} P_{g,l,-gh} \Omega_{U,l} P_{l,g} \right) \Xi_{1,g} B_{k,g}' \right \Vert_F^2,
\end{align*}
and
\begin{align*}
    &u_n^2 \sum_{h,k \in [G]^2} \left \Vert \sum_{g \in [G]} B_{h,g} \left(\sum_{l \in [G]} P_{g,l,-gh} \Omega_{U,l} P_{l,k} \right)  (\tilde P_{gk})_{k,g} B_{k,g}' \right \Vert_F^2 \\
    &\lesssim u_n^2 \phi_n \sum_{g,h,k \in [G]^3} ||B_{h,g}||_F^2 ||B_{k,g}||_{op}^2 \left \Vert \sum_{l \in [G]} P_{g,l,-gh} \Omega_{U,l} P_{l,k} \right \Vert_{op}^2 \\
    &\lesssim u_n^4 \phi_n^2 \lambda_n^2 \kappa_n = o(\omega_n^4),
\end{align*}
and similarly for
\begin{align*}
    u_n^2 \sum_{h,k \in [G]^2} \left \Vert \sum_{g \in [G]} B_{h,g} \left(\sum_{l \in [G]} P_{g,l,-gh} \Omega_{U,l} P_{l,g} \right)  \Xi_{9,gk} B_{k,g}' \right \Vert_F^2.
\end{align*}
We also have
\begin{align*}
    R_{23,4,2} &\lesssim \sum_{h,k,l \in [G]^3, h \neq k \neq l} \mathbb E \left( \sum_{g \in [-l]} V_h'B_{h,g} P_{g,l,-gh} U_l U_l' P_{l,g,-gk} B_{k,g}' V_k \right)^2 \\
    &\lesssim u_n^2 \sum_{h,k,l \in [G]^3} \mathbb E \left \Vert \sum_{g \in [-l]} B_{h,g} P_{g,l,-gh} U_l U_l' P_{l,g,-gk} B_{k,g}' \right \Vert_F^2 \\
    &\lesssim u_n^2 \sum_{l \in [G]} \sum_{g,g' \in [-l]^2} \mathbb E \left(\sum_{h \in [G]} U_l' P_{l,g,-gh} B_{h,g}' B_{h,g'} P_{g',l,-g'h} U_l \right)^2 \\
    &\lesssim u_n^{\frac{3q-4}{q-1}} n_G^{\frac{q}{q-1}} \sum_{l \in [G]} \sum_{g,g' \in [-l]^2} \left \Vert \sum_{h \in [G]} P_{l,g,-gh} B_{h,g}' B_{h,g'} P_{g',l,-g'h} \right \Vert_F^2 \\
    &\lesssim u_n^{\frac{3q-4}{q-1}} n_G^{\frac{q}{q-1}} \lambda_n^3 \kappa_n + u_n^{\frac{3q-4}{q-1}} n_G^{\frac{q}{q-1}} \phi_n \lambda_n^4 \kappa_n = o(\omega_n^4).
\end{align*}
Consider next
\begin{align*}
    R_{23,5} &\lesssim \underbrace{\mathbb V \left( \sum_{g, h, k \in [G]^3, h \neq k, j \in [-ghk]} V_h ' B_{h,g} P_{g,k,-gh} \Omega_{U,V,k} B_{k,g} P_{g,j,-gk} U_j \right)}_{R_{23,5,1}} \\
    &+ \underbrace{\mathbb V \left( \sum_{g, h, k \in [G]^3, h \neq k, j \in [-ghk]} V_h ' B_{h,g} P_{g,k,-gh} (U_k V_k ' - \Omega_{U,V,k}) B_{k,g} P_{g,j,-gk} U_j \right)}_{R_{23,5,2}},
\end{align*}
where
\begin{align*}
    R_{23,5,1} &\lesssim u_n^2 \sum_{h,j \in [G]^2, h \neq j} \left \Vert \sum_{g,k \in [-hj]^2, g \neq k} B_{h,g} P_{g,k,-gh} \Omega_{U,V,k} B_{k,g} P_{g,j,-gk} \right \Vert_F^2 \\
    &\lesssim u_n^2 \sum_{g,h \in [G]^2} \left \Vert \sum_{k \in [-h]} B_{h,g} P_{g,k,-gh} \Omega_{U,V,k} B_{k,g} \right \Vert_F^2 + u_n^2 \sum_{g,h \in [G]^2} \left \Vert \sum_{k \in [-h]} B_{h,g} P_{g,k,-gh} \Omega_{U,V,k} B_{k,g} (\tilde P_{gk})_{g,g} \right \Vert_F^2 \\
    &+ u_n^2 \lambda_n \sum_{h,k \in [G]^2, h \neq k} \left \Vert \sum_{g \in [G]} B_{h,g} P_{g,k,-gh} \Omega_{U,V,k} B_{k,g} (\tilde P_{gk})_{g,k} \right \Vert_F^2 + o(\omega_n^4) \\
    &\lesssim u_n^4 \lambda_n^2 \kappa_n + u_n^4 n_G \phi_n \lambda_n^3 \kappa_n + o(\omega_n^4) = o(\omega_n^4),
\end{align*}
and
\begin{align*}
    R_{23,5,2} &\lesssim \sum_{h,k,j \in [G]^3, h \neq k \neq j} \mathbb E \left( \sum_{g \in [-j]} V_h ' B_{h,g} P_{g,k,-gh} U_k V_k ' B_{k,g} P_{g,j,-gk} U_j \right)^2 \\
    &\lesssim u_n^2 \sum_{h,k,j \in [G]^3} \mathbb E \left \Vert \sum_{g \in [-j]} B_{h,g} P_{g,k,-gh} U_k V_k ' B_{k,g} P_{g,j,-gk} \right \Vert_F^2 \\
    &\lesssim u_n^2 \sum_{g,h,k \in [G]^3} \mathbb E || B_{h,g} P_{g,k,-gh} U_k V_k ' B_{k,g} ||_F^2 + u_n^2 \sum_{g,h,k \in [G]^3} \mathbb E || B_{h,g} P_{g,k,-gh} U_k V_k ' B_{k,g} (\tilde P_{gk})_{g,g} ||_F^2 \\
    &+u_n^2 \lambda_n \sum_{h,k \in [G]^2} \mathbb E ||\sum_{g \in [G]} B_{h,g} P_{g,k,-gh} U_k V_k ' B_{k,g} (\tilde P_{gk})_{g,k} ||_F^2  + o(\omega_n^4) \\
    &\lesssim u_n^{\frac{3q-4}{q-1}} n_G^{\frac{q}{q-1}} \lambda_n^3 \kappa_n + u_n^{\frac{3q-4}{q-1}} n_G^{\frac{q}{q-1}} \phi_n \lambda_n^4 \kappa_n + o(\omega_n^4) = o(\omega_n^4).
\end{align*}
Similarly, we have $R_{23,6} = o(\omega_n^4)$. Lastly, for $R_{23,7}$ we have
\begin{align*}
    R_{23,7} &\lesssim \sum_{h,k,l,j \in [G]^4, h \neq k \neq l \neq j} \mathbb E \left( \sum_{g \in [-lj]} V_h ' B_{h,g} P_{g,l,-gh} U_l V_k ' B_{k,g} P_{g,j,-gk} U_j \right)^2 \\
    &\lesssim u_n^2 \sum_{h,k,l,j \in [G]^4, h \neq k \neq l \neq j} \mathbb E \left \Vert \sum_{g \in [-lj]} B_{h,g} P_{g,l,-gh} U_l V_k ' B_{k,g} P_{g,j,-gk} \right \Vert_F^2 \\
    &\lesssim u_n^2 \sum_{g, h,k,l \in [G]^4, g \neq h \neq k \neq l } \mathbb E ||  B_{h,g} P_{g,l,-gh} U_l V_k ' B_{k,g} ||_F^2 \\
    &+ u_n^2 \sum_{g, h,k,l \in [G]^4, g \neq h \neq k \neq l } \mathbb E ||  B_{h,g} P_{g,l,-gh} U_l V_k ' B_{k,g} (\tilde P_{gk})_{g,g} ||_F^2 \\
    &+ u_n^2 \lambda_n \sum_{h,k,l \in [G]^3, h \neq k \neq l}\mathbb E ||\sum_{g \in [-l]} B_{h,g} P_{g,l,-gh} U_l V_k ' B_{k,g} (\tilde P_{gk})_{g,k} ||_F^2 + o(\omega_n^4) \\
    &\lesssim u_n^4 n_G \lambda_n^2 \kappa_n + u_n^4 n_G \phi_n \lambda_n^3 \kappa_n + o(\omega_n^4) = o(\omega_n^4).
\end{align*}

For the second part, it suffices to show that
\begin{gather*}
    \sum_{g, h \in [G]^2} \mathbb E \left( V_h'B_{h,g} U_g + V_g' B_{g,h} U_h \right)^2 = o(\omega_n^2), \\
    \sum_{g \in [G]} \mathbb E \left\{ \left(\sum_{h \in [G]} X_h ' B_{h,g} \tilde U_{g,-h}\right)  + \left( \sum_{k \in [G]} Y_k ' B_{g,k}' \tilde V_{g,-k} \right)  \right\}^2 = o(\omega_n^2).
\end{gather*}
For the first term, we have
\begin{align*}
 \sum_{g, h \in [G]^2} \mathbb E \left( V_h'B_{h,g} U_g + V_g' B_{g,h} U_h \right)^2 \lesssim \sum_{g, h \in [G]^2} \mathbb E \left( V_h'B_{h,g} U_g \right)^2 \lesssim u_n^2 \kappa_n = o(\omega_n^2),
\end{align*}
since $\kappa_n = o(\mu_n^2 + \tilde \mu_n^2)$. For the second term, we have
\begin{align*}
    &\sum_{g \in [G]} \mathbb E \left\{ \left(\sum_{h \in [G]} X_h ' B_{h,g} \tilde U_{g,-h} \right)  + \left( \sum_{k \in [G]} Y_k ' B_{g,k}' \tilde V_{g,-k} \right)  \right\}^2 \\
    &\lesssim \sum_{g \in [G]} \mathbb E  \left(\sum_{h \in [G]} X_h ' B_{h,g} \tilde U_{g,-h}\right)^2  + \sum_{g \in [G]} \mathbb E \left( \sum_{k \in [G]} Y_k ' B_{g,k}' \tilde V_{g,-k} \right)^2,
\end{align*}
and we only compute the first part as the second part can be handled similarly. We have
\begin{align*}
    \sum_{g \in [G]} \mathbb E  \left(\sum_{h \in [G]} X_h ' B_{h,g} \tilde U_{g,-h}\right)^2  \lesssim \underbrace{\sum_{g \in [G]} \mathbb E  \left(\sum_{h \in [G]} \Pi_h ' B_{h,g} \tilde U_{g,-h}\right)^2}_{R_{24}} + \underbrace{\sum_{g \in [G]} \mathbb E  \left(\sum_{h \in [G]} V_h ' B_{h,g} \tilde U_{g,-h}\right)^2}_{R_{25}},
\end{align*}
where
\begin{align*}
    R_{24} &= \sum_{g \in [G]} \mathbb E  \left(\sum_{h, k \in [G]^2, k \neq g, k \neq h} \Pi_h ' B_{h,g} P_{g,k,-gh} U_{k} \right)^2 \\
    &\lesssim \sum_{g \in [G]} \mathbb E  \left(\sum_{h, k \in [G]^2, k \neq g, k \neq h} \Pi_h ' B_{h,g} P_{g,k} U_{k} \right)^2 \\
    &+ \sum_{g \in [G]} \mathbb E  \left(\sum_{h, k \in [G]^2, k \neq g, k \neq h} \Pi_h ' B_{h,g} (\tilde P_{gh})_{g,g} P_{g,k} U_{k} \right)^2 \\
    &+ \sum_{g \in [G]} \mathbb E  \left(\sum_{h, k \in [G]^2, k \neq g, k \neq h} \Pi_h ' B_{h,g} (\tilde P_{gh})_{g,h} P_{h,k} U_{k} \right)^2 \\
    &\lesssim u_n \lambda_n \mu_n^2 + u_n n_G \phi_n \lambda_n^3 \kappa_n + u_n n_G \lambda_n^2 \kappa_n  = o(\omega_n^2),
\end{align*}
since $u_n \lesssim n_G = O(1)$ and $\lambda_n = o(1)$. We also have
\begin{align*}
    R_{25} &= \sum_{g \in [G]} \mathbb E  \left(\sum_{h, k \in [G]^2, k \neq g, k \neq h} V_h ' B_{h,g} P_{g,k,-gh} U_{k} \right)^2 \\
    &\lesssim \sum_{g \in [G]} \mathbb E  \left(\sum_{h, k \in [G]^2, k \neq g, k \neq h} V_h ' B_{h,g} P_{g,k} U_{k} \right)^2 \\
    &+ \sum_{g \in [G]} \mathbb E  \left(\sum_{h, k \in [G]^2, k \neq g, k \neq h} V_h ' B_{h,g} (\tilde P_{gh})_{g,g} P_{g,k} U_{k} \right)^2 \\
    &+ \sum_{g \in [G]} \mathbb E  \left(\sum_{h, k \in [G]^2, k \neq g, k \neq h} V_h ' B_{h,g} (\tilde P_{gh})_{g,h} P_{h,k} U_{k} \right)^2 \\
    &\lesssim u_n^2 \lambda_n \kappa_n = o(\omega_n^2).
\end{align*}
This concludes the proof.

\section{Proof of Lemma~\ref{lem:P_l3o_1}}\label{sec:P_l3o_1_pf}
\textbf{For the first and second statement}, we have
\begin{align*}
 P_{l,g,-ghk} & = W_{l}'\left[ S_W^{-1} + S_W^{-1} W_{ghk}' M_{ghk,ghk}^{-1}     W_{ghk} S_W^{-1}  \right] W_{g} \\
 & = P_{l,g} + P_{l,ghk} M_{ghk,ghk}^{-1} P_{ghk,g}
\end{align*}
where $S_W = \sum_{g \in [G]} W_g' W_g$. Then, we have
\begin{align*}
\tilde  P_{ghk} & = M_{ghk,ghk}^{-1}   P_{ghk,ghk} = M_{ghk,ghk}^{-1} - I_{ghk,ghk},
\end{align*}
which is symmetric. This also implies \cref{eq:p_l30_1} as
\begin{align*}
    P_{ghk,ghk} \tilde  P_{ghk} = P_{ghk,ghk}  M_{ghk,ghk}^{-1} - P_{ghk,ghk} =  \tilde  P_{ghk} - P_{ghk,ghk}.
\end{align*}

\textbf{For the third statement}, we first consider the case that $g \neq h$, $g \neq k$, and $h \neq k$. Then, we have $M_{g,h} = - P_{g,h}$, $M_{g,k} = - P_{g,k}$, $M_{h,k} = - P_{h,k}$. In addition, we will use $||P_{g,h}||_{op} \lesssim \lambda_n \leq 1$ frequently, which is shown in \Cref{lem:nablaB}.

Note that
\begin{align*}
\tilde P_{ghk} = & M_{ghk,ghk}^{-1} P_{ghk,ghk} \\
& = \begin{bmatrix}
    M_{g,g} & M_{g,hk} \\
    M_{hk,g} & M_{hk,hk}
\end{bmatrix}^{-1}
\begin{bmatrix}
    P_{g,g} & P_{g,hk} \\
    P_{hk,g} & P_{hk,hk}
\end{bmatrix} \\
& = \begin{pmatrix}
  M_{g,g}^{-1} + M_{g,g}^{-1} M_{g,hk} S_{hk,g}^{-1} M_{hk,g} M_{g,g}^{-1}     & - M_{g,g}^{-1}  M_{g,hk} S_{hk,g}^{-1} \\
  - S_{hk,g}^{-1} M_{hk,g} M_{g,g}^{-1} & S_{hk,g}^{-1}
\end{pmatrix} \begin{pmatrix}
    P_{g,g} & P_{g,hk} \\
    P_{hk,g} & P_{hk,hk}
\end{pmatrix},
\end{align*}
where
\begin{align*}
 S_{hk,g} =    M_{hk,hk} - M_{hk,g} M_{g,g}^{-1} M_{g,hk}.
\end{align*}

Further denote
\begin{align*}
 S_{h,g} & =    M_{h,h} - M_{h,g} M_{g,g}^{-1} M_{g,h}, \\
 S_{k,g} & =    M_{k,k} - M_{k,g} M_{g,g}^{-1} M_{g,k}, \\
 F_{hk,g} & = M_{h,k} - M_{h,g} M_{g,g}^{-1} M_{g,k} \\
 \tilde S_{k,gh} & = S_{k,g} - \left(M_{k,h} - M_{k,g} M_{g,g}^{-1} M_{g,h}\right)  S_{h,g} ^{-1} \left( M_{h,k} - M_{h,g} M_{g,g}^{-1} M_{g,k} \right).
 \end{align*}

Then, we have
\begin{align}\label{eq:Hinverse1}
S_{hk,g}^{-1} & = \begin{pmatrix}
 S_{h,g} &  F_{hk,g} \\
F_{hk,g}' & S_{k,g}
\end{pmatrix}^{-1} \notag   \\
= & \begin{pmatrix}
S_{h,g}^{-1} +  S_{h,g}^{-1} F_{hk,g}\tilde S_{k,gh}^{-1} F_{hk,g}' S_{h,g}^{-1} & - S_{h,g}^{-1} F_{hk,g}   \tilde S_{k,gh}^{-1} \\
-    \tilde S_{k,gh}^{-1}F_{hk,g} ' S_{h,g}^{-1} & \tilde S_{k,gh}^{-1}
\end{pmatrix}.
\end{align}

This means
\begin{align*}
(\tilde P_{ghk})_{g,g} & =  M_{g,g}^{-1}P_{g,g} + M_{g,g}^{-1} M_{g,hk} S_{hk,g}^{-1} M_{hk,g} M_{g,g}^{-1} P_{g,g} \\
& - M_{g,g}^{-1}  M_{g,hk} S_{hk,g}^{-1}  P_{hk,g} \\
& =  \underbrace{M_{g,g}^{-1}P_{g,g}}_{\Xi_{1,g}} + \underbrace{ M_{g,g}^{-1} M_{g,h} (S_{hk,g}^{-1})_{h,h} M_{h,g} M_{g,g}^{-1} P_{g,g}}_{T_{1,ghk}} \\
& + \underbrace{ M_{g,g}^{-1} M_{g,h} (S_{hk,g}^{-1})_{h,k} M_{k,g} M_{g,g}^{-1} P_{g,g}}_{T_{2,ghk}} \\
& + \underbrace{ M_{g,g}^{-1} M_{g,k} (S_{hk,g}^{-1})_{k,h} M_{h,g} M_{g,g}^{-1} P_{g,g}}_{T_{3,ghk}} \\
& + \underbrace{ M_{g,g}^{-1} M_{g,k} (S_{hk,g}^{-1})_{k,k} M_{k,g} M_{g,g}^{-1} P_{g,g}}_{T_{4,ghk}} \\
& - \underbrace{ M_{g,g}^{-1}  M_{g,h} (S_{hk,g}^{-1})_{h,h}  P_{h,g}}_{T_{5,ghk}} \\
& - \underbrace{ M_{g,g}^{-1}  M_{g,h} (S_{hk,g}^{-1})_{h,k}  P_{k,g}}_{T_{6,ghk}} \\
& - \underbrace{ M_{g,g}^{-1}  M_{g,k} (S_{hk,g}^{-1})_{k,h}  P_{h,g}}_{T_{7,ghk}} \\
& - \underbrace{ M_{g,g}^{-1}  M_{g,k} (S_{hk,g}^{-1})_{k,k}  P_{k,g}}_{T_{8,ghk}},
\end{align*}
where we have
\begin{align*}
||(S_{hk,g}^{-1})_{h,k}||_{op} & \leq ||S_{h,g}^{-1}||_{op} ||F_{hk,g}||_{op}   ||\tilde S_{k,gh}^{-1} ||_{op} \lesssim \lambda_n, \\
||T_{2,ghk}||_{op} &  \lesssim  \lambda_n^2 ||M_{g,h}||_{op} ||M_{g,k}||_{op} = \lambda_n^2 ||P_{g,h}||_{op} ||P_{g,k}||_{op}, \\
||T_{3,ghk}||_{op} &  \lesssim  \lambda_n^2 ||M_{g,h}||_{op} ||M_{g,k}||_{op} = \lambda_n^2 ||P_{g,h}||_{op} ||P_{g,k}||_{op}, \\
||T_{6,ghk}||_{op} &  \lesssim \lambda_n ||M_{g,h}||_{op} ||P_{g,k}||_{op} = \lambda_n  ||P_{g,h}||_{op} ||P_{g,k}||_{op}, \\
||T_{7,ghk}||_{op} &  \lesssim \lambda_n  ||M_{g,h}||_{op} ||P_{g,k}||_{op} = \lambda_n ||P_{g,h}||_{op} ||P_{g,k}||_{op}.
\end{align*}
By \cref{eq:Hinverse1}, we have
\begin{align*}
T_{1,ghk} & = \underbrace{M_{g,g}^{-1} M_{g,h} S_{h,g}^{-1} M_{h,g} M_{g,g}^{-1} P_{g,g}}_{T_{1,1,gh}} \\
& + \underbrace{ M_{g,g}^{-1} M_{g,h}  S_{h,g}^{-1} F_{hk,g}\tilde S_{k,gh}^{-1} F_{hk,g}' S_{h,g}^{-1}  M_{h,g} M_{g,g}^{-1} P_{g,g} }_{T_{1,2,ghk}},
\end{align*}
where
\begin{align*}
||T_{1,1,gh}||_{op} \lesssim ||M_{g,h}||_{op}^2 \lambda_n = \lambda_n ||P_{g,h}||_{op}^2
\end{align*}
and
\begin{align*}
||T_{1,2,ghk}||_{op} & \lesssim ||M_{g,h}||_{op}^2  ||F_{hk,g}||_{op}^2  \lambda_n \\
& \lesssim ||M_{g,h}||_{op}^2 \left( ||M_{h,k}||_{op} + || M_{h,g} M_{g,g}^{-1} M_{g,k} ||_{op}  \right)^2 \lambda_n \\
& \lesssim ||P_{g,h}||_{op}^2 \left( ||M_{h,k}||_{op}^2 + || M_{g,k} ||_{op}^2  \right) \lambda_n \\
& =  \lambda_n ||P_{g,h}||_{op}^2 \left( ||P_{h,k}||_{op}^2 + || P_{g,k} ||_{op}^2  \right).
\end{align*}
As indexes $h$ and $k$ are symmetric, we also have
\begin{align*}
T_{4,ghk}     = T_{4,1,gk} + T_{4,2,ghk},
\end{align*}
where
\begin{align*}
& ||T_{4,1,gk}||_{op} \lesssim \lambda_n ||P_{g,k}||_{op}^2, \\
& ||T_{4,2,ghk}||_{op} \lesssim \lambda_n ||P_{g,k}||_{op}^2 \left( ||P_{k,h}||_{op}^2 + || P_{g,h} ||_{op}^2  \right).
\end{align*}
Following the same argument, we can show that
\begin{align*}
T_{5,ghk} & = \underbrace{ M_{g,g}^{-1}  M_{g,h} S_{h,g}^{-1}  P_{h,g} }_{T_{5,1,gh}} \\
& + \underbrace{ M_{g,g}^{-1}  M_{g,h}  S_{h,g}^{-1} F_{hk,g}\tilde S_{k,gh}^{-1} F_{hk,g}' S_{h,g}^{-1}   P_{h,g} }_{T_{5,2,ghk}},
\end{align*}
where
\begin{align*}
& ||T_{5,1,gh}||_{op} \lesssim ||P_{g,h}||_{op}^2 \\
& ||T_{5,2,ghk}||_{op} \lesssim ||P_{g,h}||_{op}^2  \left( ||P_{h,k}||_{op}^2 + || P_{g,k} ||_{op}^2  \right).
\end{align*}
Following the same argument as $T_{5,ghk}$, we can show that
\begin{align*}
T_{8,ghk} = T_{8,1,gk} + T_{8,1,ghk},
\end{align*}
where
\begin{align*}
& ||T_{8,1,gk}||_{op} \lesssim ||P_{g,k}||_{op}^2,\\
& ||T_{8,2,ghk}||_{op} \lesssim ||P_{g,k}||_{op}^2  \left( ||P_{h,k}||_{op}^2 + || P_{g,h} ||_{op}^2  \right).
\end{align*}
Combining these results, we have
\begin{align*}
& \Xi_{2,gh} = T_{1,1,gh} + T_{5,1,gh}, \quad \Xi_{3,gk} = T_{4,1,gk} + T_{8,1,gk}, \\
& \Xi_{4,ghk} =    T_{1,2,ghk} + T_{2,ghk} + T_{3,ghk} + T_{4,2,ghk} + T_{5,2,ghk} + T_{6,ghk} + T_{7,ghk} + T_{8,2,ghk},
\end{align*}
where
\begin{align*}
||\Xi_{2,gh}||_{op} & \lesssim ||P_{g,h}||_{op}^2, \quad  ||\Xi_{3,gk}||_{op} \lesssim ||P_{g,k}||_{op}^2, \quad \text{and} \\
||\Xi_{4,ghk}||_{op} & \lesssim  (||P_{g,h}||_{op}^2 + ||P_{g,k}||_{op}^2)  ||P_{h,k}||_{op}^2 + \lambda_n ||P_{g,h}||_{op} ||P_{g,k}||_{op}.
\end{align*}

Next, we have

\begin{align*}
(\tilde P_{ghk})_{g,k} & = M_{g,g}^{-1} P_{g,k} + M_{g,g}^{-1} M_{g,hk} S_{hk,g}^{-1} M_{hk,g} M_{g,g}^{-1} P_{g,k} \\
& - M_{g,g}^{-1}  M_{g,hk} S_{hk,g}^{-1}   P_{hk,k} \\
& = \underbrace{ M_{g,g}^{-1} P_{g,k} }_{T_{9,gk}}+ \underbrace{ M_{g,g}^{-1} M_{g,h} (S_{hk,g}^{-1})_{h,hk} M_{hk,g} M_{g,g}^{-1} P_{g,k} }_{T_{10,ghk}} \\
& + \underbrace{ M_{g,g}^{-1} M_{g,k} (S_{hk,g}^{-1})_{k,h} M_{h,g} M_{g,g}^{-1} P_{g,k} }_{T_{11,ghk}} \\
& + \underbrace{ M_{g,g}^{-1} M_{g,k} (S_{hk,g}^{-1})_{k,k} M_{k,g} M_{g,g}^{-1} P_{g,k} }_{T_{12,ghk}} \\
& - \underbrace{M_{g,g}^{-1}  M_{g,h} (S_{hk,g}^{-1})_{h,h}   P_{h,k} }_{T_{13,ghk}} - \underbrace{ M_{g,g}^{-1}  M_{g,h} (S_{hk,g}^{-1})_{h,k}   P_{k,k} }_{T_{14,ghk}}\\
& - \underbrace{ M_{g,g}^{-1}  M_{g,k} (S_{hk,g}^{-1})_{k,h}   P_{h,k} }_{T_{15,ghk}}- \underbrace{ M_{g,g}^{-1}  M_{g,k} (S_{hk,g}^{-1})_{k,k}   P_{k,k} }_{T_{16,ghk}},
\end{align*}
where
\begin{align*}
& || T_{9,gk} ||_{op} \lesssim ||P_{g,k}||_{op} , \\
& || T_{10,ghk } ||_{op} \lesssim ||P_{g,h}||_{op} ||P_{g,k}||_{op} (||P_{g,h}||_{op} + ||P_{g,k}||_{op}), \\
& || T_{11,ghk } ||_{op} \lesssim \lambda_n ||P_{g,h}||_{op} ||P_{g,k}||_{op}^2, \\
& || T_{13,ghk } ||_{op} \lesssim ||P_{g,h}||_{op} ||P_{h,k}||_{op}, \\
& || T_{15,ghk } ||_{op} \lesssim ||P_{g,k}||_{op} ||P_{h,k}||_{op}.
\end{align*}
In addition, note that
\begin{align*}
S_{hk,g}^{-1} & = \begin{pmatrix}
 S_{h,g} &  F_{hk,g} \\
F_{hk,g}' & S_{k,g}
\end{pmatrix}^{-1} = \begin{pmatrix}
 \tilde S_{h,gk}^{-1} & - \tilde S_{h,gk}^{-1}    F_{hk,g} S_{k,g} ^{-1} \\
 -  S_{k,g} ^{-1}  F_{hk,g}'  \tilde S_{h,gk}^{-1}  &  S_{k,g} ^{-1} + S_{k,g} ^{-1}  F_{hk,g}'   \tilde S_{h,gk}^{-1} F_{hk,g} S_{k,g} ^{-1}
\end{pmatrix},
\end{align*}
where
\begin{align*}
 S_{h,g} & =    M_{h,h} - M_{h,g} M_{g,g}^{-1} M_{g,h}, \\
 S_{k,g} & =    M_{k,k} - M_{k,g} M_{g,g}^{-1} M_{g,k}, \\
 F_{hk,g} & = M_{h,k} - M_{h,g} M_{g,g}^{-1} M_{g,k} \\
 \tilde S_{h,gk} & = S_{h,g} - \left(M_{h,k} - M_{h,g} M_{g,g}^{-1} M_{g,k}\right)  S_{k,g} ^{-1} \left( M_{k,h} - M_{k,g} M_{g,g}^{-1} M_{W,[g,hs]} \right).
 \end{align*}
Then, we have
\begin{align*}
    T_{12,ghk} & = \underbrace{ M_{g,g}^{-1} M_{g,k} S_{k,g} ^{-1} M_{k,g} M_{g,g}^{-1} P_{g,k} }_{T_{12,1,gk}}\\
    & + \underbrace{ M_{g,g}^{-1} M_{g,k} S_{k,g} ^{-1}  F_{hk,g}'   \tilde S_{h,gk}^{-1} F_{hk,g} S_{k,g} ^{-1} M_{k,g} M_{g,g}^{-1} P_{g,k} }_{T_{12,2,ghk}},
\end{align*}
where
\begin{align*}
& ||T_{12,1,gk}||_{op}    \lesssim ||P_{g,k}||_{op}^3 \quad \text{and} \quad ||T_{12,2,ghk}||_{op} \lesssim ||P_{g,k}||_{op}^3 ||F_{hk,g}||_{op}^2   \lesssim ||P_{g,k}||_{op}^3 (||P_{h,k}||_{op}^2 + ||P_{h,g}||_{op}^2).
\end{align*}
Similarly, we have
\begin{align*}
T_{16,ghk}  & = \underbrace{ M_{g,g}^{-1} M_{g,k} S_{k,g} ^{-1}  P_{k,k} }_{T_{16,1,gk}}\\
    & + \underbrace{ M_{g,g}^{-1} M_{g,k} S_{k,g} ^{-1}  F_{hk,g}'   \tilde S_{h,gk}^{-1} F_{hk,g} S_{k,g} ^{-1} P_{k,k} }_{T_{16,2,ghk}},
\end{align*}
where
\begin{align*}
& ||T_{16,1,gk}||_{op}    \lesssim \lambda_n ||P_{g,k}||_{op}, \quad \text{and} \quad ||T_{16,2,ghk}||_{op}  \lesssim \lambda_n ||P_{g,k}||_{op} (||P_{h,k}||_{op}^2 + ||P_{h,g}||_{op}^2).
\end{align*}
Last, we have
\begin{align*}
||T_{14,ghk}||_{op} & =  || M_{g,g}^{-1}  M_{g,h} \tilde S_{h,gk}^{-1}    F_{hk,g} S_{k,g} ^{-1}    P_{k,k}||_{op} \\
& \lesssim  ||  M_{g,h} ||_{op} ||   F_{hk,g}||_{op}  \lambda_n \\
& \lesssim  \lambda_n  ||  P_{g,h} ||_{op} \left( ||P_{h,k}||_{op} + ||P_{h,g}||_{op}||P_{g,k}||_{op} \right) .
\end{align*}

Therefore, we have
\begin{align*}
& \Xi_{7,gk} = T_{9,gk} + T_{12,1,gk} + T_{16,1,gk} \\
& \Xi_{8,ghk} = T_{10,ghk} + T_{11,ghk}+ T_{12,2,ghk} + T_{13,ghk} + T_{14,ghk} + T_{15,ghk} + T_{16,2,ghk},
\end{align*}
such that
\begin{align*}
& || \Xi_{7,gk}||_{op} \lesssim ||P_{g,k}||_{op} \\
& || \Xi_{8,ghk}||_{op} \lesssim \left(||P_{g,h}||_{op} ||P_{g,k}||_{op} + ||P_{h,k}||_{op} \right) \left(||P_{g,h}||_{op} + ||P_{g,k}||_{op}\right).
\end{align*}

Due to the symmetry between the indexes $h$ and $k$, the bounds for $\Xi_{5,gh}$ and $\Xi_{6,ghk}$ can by obtained by exchanging $h$ with $k$ in the bounds for $\Xi_{7,gk}$ and $\Xi_{8,ghk}$, respectively.

\textbf{For the fourth claim}, we have
\begin{align*}
\tilde P_{gh} = & M_{W,[gh,gh]}^{-1} P_{W,[gh,gh]} \\
& = \begin{pmatrix}
  M_{g,g}^{-1} + M_{g,g}^{-1} M_{g,h} S_{h,g}^{-1} M_{h,g} M_{g,g}^{-1}     & - M_{g,g}^{-1}  M_{g,h} S_{h,g}^{-1} \\
  - S_{h,g}^{-1} M_{h,g} M_{g,g}^{-1} & S_{h,g}^{-1}
\end{pmatrix} \\
& \times \begin{pmatrix}
    P_{g,g} & P_{g,h} \\
    P_{h,g} & P_{h,h}
\end{pmatrix},
\end{align*}
where
\begin{align*}
 S_{h,g} =    M_{h,h} - M_{h,g} M_{g,g}^{-1} M_{g,h}.
\end{align*}
This implies
\begin{align*}
(\tilde P_{gh})_{g,g} =  \Xi_{1,g} +  \underbrace{  M_{g,g}^{-1} M_{g,h} S_{h,g}^{-1} M_{h,g} M_{g,g}^{-1} P_{g,g} -  M_{g,g}^{-1}  M_{g,h} S_{h,g}^{-1}  P_{h,g} }_{\Xi_{9,gh}}
\end{align*}
such that
\begin{align*}
||\Xi_{9,gh}||_{op} \lesssim ||P_{g,h}||_{op}^2.
\end{align*}
Next, we have
\begin{align*}
(\tilde P_{gh})_{g,h} & = \left( M_{g,g}^{-1} + M_{g,g}^{-1} M_{g,h} S_{h,g}^{-1} M_{h,g} M_{g,g}^{-1} \right) P_{g,h} \\
& - M_{g,g}^{-1}  M_{g,h} S_{h,g}^{-1} P_{h,h}
\end{align*}
such that
\begin{align*}
||(\tilde P_{gh})_{g,h}||_{op} \lesssim ||P_{g,h}||_{op}.
\end{align*}

\textbf{For the fifth claim}, by \Cref{lem:P_l3o_1}, we have
\begin{align*}
|| P_{h,k, -hg}||_{op} & \lesssim ||P_{h,k}||_{op} + ||P_{k,gh} (\tilde P_{gh})_{gh,h}||_{op} \notag \\
& \lesssim ||P_{h,k}||_{op}+||P_{W, [g,k]}||_{op} ||(\tilde P_{gh})_{g,h}||_{op} \notag  \\
& \lesssim ||P_{h,k}||_{op}+||P_{g,k}||_{op}||P_{g,h}||_{op}.
\end{align*}

\textbf{For the last claim}, we first consider the case that $h \neq k$, $h' \neq k'$. We have
\begin{align*}
& \sum_{l \in [-gg'hh'kk']} P_{g,l,-ghk} P_{l,g',-g'h'k'} \\
& = \sum_{l \in [G]} P_{g,l,-ghk} P_{l,g',-g'h'k'}  -\sum_{l \in (g,h,k)} P_{g,l,-ghk} P_{l,g',-g'h'k'} \\
& -\sum_{l \in (g',h',k')} P_{g,l,-ghk} P_{l,g',-g'h'k'} + \sum_{l \in (g,h,k) \cap (g',h',k')} P_{g,l,-ghk} P_{l,g',-g'h'k'}  \\
& = \sum_{l \in [G]}  \left[P_{g,l} + (\tilde  P_{ghk})_{g,ghk} P_{ghk,l} \right] \left[  P_{l,g'} + P_{l,g'h'k'} (\tilde  P_{g'h'k'})_{g'h'k',g'}  \right] \\
& -\sum_{l \in (g,h,k)} P_{g,l,-ghk} P_{l,g',-g'h'k'} -\sum_{l \in (g',h',k')} P_{g,l,-ghk} P_{l,g',-g'h'k'} \\
& + \sum_{l \in (g,h,k) \cap (g',h',k')} P_{g,l,-ghk} P_{l,g',-g'h'k'}  \\
& = P_{g,g'} +  P_{g,g'h'k'} (\tilde P_{g'h'k'})_{g'h'k',g'} + (\tilde P_{ghk})_{g,ghk} P_{ghk,g'} \\
& + (\tilde P_{ghk})_{g,ghk} P_{ghk,g'h'k'} (\tilde P_{g'h'k'})_{g'h'k',g'} \\
& - \sum_{l \in (g,h,k)}  (\tilde  P_{ghk})_{g,l} \left[  P_{l,g'} + P_{l,g'h'k'} (\tilde  P_{g'h'k'})_{g'h'k',g'}  \right] \\
& - \sum_{l \in (g',h',k')}  \left[P_{g,l} + (\tilde  P_{ghk})_{g,ghk} P_{ghk,l} \right](\tilde  P_{g'h'k'})_{l,g'} + \sum_{l \in (g,h,k) \cap (g',h',k')} P_{g,l,-ghk} P_{l,g',-g'h'k'}  \\
& = P_{g,g'} +  P_{g,g'h'k'} (\tilde P_{g'h'k'})_{g'h'k',g'} + (\tilde P_{ghk})_{g,ghk} P_{ghk,g'}\\
& + (\tilde P_{ghk})_{g,ghk} P_{ghk,g'h'k'} (\tilde P_{g'h'k'})_{g'h'k',g'} \\
& - (\tilde  P_{ghk})_{g,ghk} \left[ P_{ghk,g'} + P_{ghk,g'h'k'} (\tilde  P_{g'h'k'})_{g'h'k',g'}  \right] \\
& - \left[P_{g,g'h'k'} + (\tilde  P_{ghk})_{g,ghk} P_{ghk,g'h'k'} \right](\tilde  P_{g'h'k'})_{g'h'k',g'} + \sum_{l \in (g,h,k) \cap (g',h',k')} P_{g,l,-ghk} P_{l,g',-g'h'k'}  \\
& = P_{g,g'} -  (\tilde  P_{ghk})_{g,ghk} P_{ghk,g'h'k'} (\tilde  P_{g'h'k'})_{g'h'k',g'} + \sum_{l \in (g,h,k) \cap (g',h',k')} P_{g,l,-ghk} P_{l,g',-g'h'k'}.
\end{align*}

Next, we consider the case that $h = k$ but $h' \neq k'$. We have
\begin{align*}
& \sum_{l \in [-gg'hh'k']} P_{g,l,-gh} P_{l,g',-g'h'k'} \\
& = \sum_{l \in [-gg'hh'k']}  \left[P_{g,l} + [\tilde  P_{gh}]_{[g,gh]} P_{W,[gh,l]} \right] \left[  P_{l,g'} + P_{l,g'h'k'} (\tilde  P_{g'h'k'})_{g'h'k',g'} \right] \\
& = \sum_{l \in [G]}  \left[P_{g,l} + [\tilde  P_{gh}]_{[g,gh]} P_{W,[gh,l]} \right] \left[  P_{l,g'} + P_{l,g'h'k'} (\tilde  P_{g'h'k'})_{g'h'k',g'}  \right] \\
& -  [\tilde  P_{gh}]_{[g,gh]}  \left[ P_{W,[gh,g']} + P_{W,[gh,g'h'k']} (\tilde  P_{g'h'k'})_{g'h'k',g'}  \right] \\
& -   \left[P_{g,g'h'k'} + [\tilde  P_{gh}]_{[g,gh]} P_{W,[gh,g'h'k']} \right] (\tilde  P_{g'h'k'})_{g'h'k',g'} \\
& + \sum_{l \in [gh] \cap (g',h',k')} P_{g,l,-gh} P_{l,g',-g'h'k'} \\
& = P_{g,g'} -   [\tilde  P_{gh}]_{[g,gh]} P_{W,[gh,g'h'k']}  (\tilde  P_{g'h'k'})_{g'h'k',g'} \\
& = P_{g,g'} -   [\tilde  P_{ghh}]_{[g,ghh]} P_{W,[ghh,g'h'k']}  (\tilde  P_{g'h'k'})_{g'h'k',g'} \\
& + \sum_{l \in [ghh] \cap (g',h',k')} P_{W,[g,l],-ghh} P_{l,g',-g'h'k'}.
\end{align*}
where the last equality is by the definition of $(\tilde P_{ghk})_{g,ghk}$ and $P_{W,[g,ghk]}$ when $h = k$.

For the same reason, the third statement still holds for cases when $(h' = k', h \neq k)$  and $(h' = k', h = k)$.

\section{Technical Lemmas}
\subsection{Lemma~\ref{lem:T}}
\begin{lem}\label{lem:T}
Let $p = 2^t$ for an arbitrary positive integer $t$, $T$ is the truncation matrix defined in \Cref{sec:clt_pf} and $A \in \Re^{n \times n}$. Then, we have
\begin{align*}
    ||T \circ A||_{op} \leq C p ||A||_p,
\end{align*}
where $C$ is a universal constant and $||A||_p$ is the $p$-th Schatten norm defined as
\begin{equation*}
  \norm{A}_p = \left[tr( (AA')^{p/2})\right]^{1/p} = \left[tr( (A'A)^{p/2})\right]^{1/p} = \left(\sum_{r = 1}^{r_n} \sigma_r^p(A)\right)^{1/p},
\end{equation*}
where $\{\sigma_r(A)\}_{r = 1}^{r_n}$ are the singular values of $A$ in ascending order.
\end{lem}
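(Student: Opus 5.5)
I will prove the bound by induction on $t$, using a dyadic block decomposition of the triangular mask. This is the standard route for norms of triangular truncations: split the index set in half repeatedly, and collect the off-diagonal blocks at each dyadic level.

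\textbf{Dyadic decomposition.} Pad $A$ with zeros so its dimension is a power of two, say $2^m$. Then $T$ splits into two ingredients:
\begin{itemize}
\item a block-diagonal part, made of the triangular masks on the two halves;
\item a full off-diagonal block (lower-left, for $[T]_{ij}=1\{i\ge j\}$), on which $T\circ A$ simply equals the corresponding block of $A$.
\end{itemize}
Iterating this split gives
\[
T\circ A=\sum_{s=1}^{m}\sum_{j}\Pi_{s,j}A\,\Pi'_{s,j}+D\circ A,
\]
where the $\Pi_{s,j}A\Pi'_{s,j}$ are the disjoint off-diagonal dyadic blocks at level $s$ and $D$ is the diagonal. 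At each level the blocks occupy disjoint rows and disjoint columns. Hence the level-$s$ piece $L_s$ satisfies
\[
\|L_s\|_{op}=\max_j\|\Pi_{s,j}A\Pi'_{s,j}\|_{op}\le\|A\|_{op}\le\|A\|_p .
\]
Summing over levels gives only $m\|A\|_p\approx\log n\,\|A\|_p$, which is too crude. The point of the lemma is a dimension-free constant, linear in $p$.

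\textbf{Improving the level sum.} I will use the Schatten norm instead of the operator norm at each level, and pinch the levels together. Since blocks within a level have disjoint supports, we have
\[
\|L_s\|_p^p=\sum_j\|\Pi_{s,j}A\Pi'_{s,j}\|_p^p\le\|A\|_p^p
\]
by the pinching inequality. The key claim I aim for is the Macaev / Gohberg--Krein bound $\|T\circ A\|_p\le Cp\|A\|_p$ for $p=2^t$. The bound on $\|T\circ A\|_{op}$ then follows at once from $\|\cdot\|_{op}\le\|\cdot\|_p$.

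\textbf{Induction on $t$.} For $t=1$ ($p=2$), triangular truncation is a contraction in Frobenius norm, so $C\ge1$ suffices. For the step from $p$ to $2p$, I will use the Hilbert-transform style identity
\[
(T\circ A)(T\circ A)^{*}=T\circ\bigl(A(T\circ A)^{*}\bigr)+\bigl(T\circ(A(T\circ A)^{*})\bigr)^{*}-\text{diag-type term}+\ldots
\]
This is the Cotlar-type identity for triangular projections, $\mathcal T(X)\mathcal T(X)^*=\mathcal T(X\mathcal T(X)^*)+\mathcal T(X\mathcal T(X)^*)^*-\mathcal T(XX^*)$ up to diagonal corrections. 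Taking $\|\cdot\|_{p}$ of both sides and applying the inductive hypothesis together with Hölder gives
\[
\|T\circ A\|_{2p}^2\le 2C_p\|A\|_{2p}\|T\circ A\|_{2p}+C_p\|A\|_{2p}^2 .
\]
Solving this quadratic yields $C_{2p}\le 2C_p+O(1)$, so $C_p\lesssim p$.

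\textbf{Main obstacle.} The hard step is the exact form of the Cotlar identity with the non-strict mask $1\{i\ge j\}$. The diagonal terms must be tracked, using that $\mathrm{diag}(X)$ is contractive in every Schatten norm, and the identity must hold with non-self-adjoint $A$. I expect to verify it entrywise: compare $\sum_{k\le\min(i,j)}$ against the three truncated sums, then absorb the boundary $k=i$ or $k=j$ contributions into diagonal-pinching terms bounded by $\|A\|_{2p}^2$. If that becomes messy, I would instead cite the classical result of Gohberg--Krein / Davidson that triangular truncation on $S_p$ has norm $O(p)$ for $p\ge2$. Combining it with $\|\cdot\|_{op}\le\|\cdot\|_p$ finishes the proof.
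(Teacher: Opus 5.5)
Your strategy is the paper's: a doubling induction from $p$ to $2p$, driven by a product identity for the truncation together with H\"older's inequality for Schatten norms. The dyadic decomposition and the level-wise pinching are a detour that only ever gives a $\log n$ bound, and you can drop them.

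The gap is that the identity carrying the inductive step is never established. Worse, the explicit version you write, $\mathcal T(X)\mathcal T(X)^*=\mathcal T(X\mathcal T(X)^*)+\mathcal T(X\mathcal T(X)^*)^*-\mathcal T(XX^*)$ ``up to diagonal corrections'', is false off the diagonal. For the $2\times 2$ all-ones matrix, the $(2,1)$ entry of the left side is $1$ and that of the right side is $-1$. So your recursion $\|T\circ A\|_{2p}^2\le 2C_p\|A\|_{2p}\|T\circ A\|_{2p}+C_p\|A\|_{2p}^2$ is not justified as written; its last term is what bounding $\mathcal T(AA^*)$ would produce. Also, the ``boundary $k=i$ or $k=j$'' corrections you plan to absorb do not exist.

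Your first display is the right one, provided the ``diag-type term'' is $I_n\circ Y$ and the ``$\ldots$'' is empty, where $Y=A(T\circ A)'$. To see this, compare entries:
\begin{itemize}
\item the $(i,j)$ entry of $(T\circ A)(T\circ A)'$ is $\sum_{k\le\min(i,j)}A_{ik}A_{jk}$;
\item the $(i,j)$ entry of $T\circ Y$ is $1\{i\ge j\}\sum_{k\le j}A_{ik}A_{jk}$;
\item the $(i,j)$ entry of $(T\circ Y)'$ is $1\{j\ge i\}\sum_{k\le i}A_{ik}A_{jk}$.
\end{itemize}
Because the mask is non-strict, these match exactly off the diagonal and double count on it, so
\[
(T\circ A)(T\circ A)'=T\circ Y+(T\circ Y)'-I_n\circ Y .
\]
Now use $\|(T\circ Y)'\|_p=\|T\circ Y\|_p$, $\|I_n\circ Y\|_p\le\|Y\|_p$ and $\|Y\|_p\le\|A\|_{2p}\|T\circ A\|_{2p}$. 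This gives $\|T\circ A\|_{2p}\le(2C_p+1)\|A\|_{2p}$. Starting from $C_2=1$, you get $C_{2^t}\le 2^t-1$ with no quadratic to solve, so the lemma holds with $C=1$.

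The paper instead uses the companion identity $(T\circ A)'(T\circ A)=T\circ((T\circ A)'A)+(\iota_n\iota_n'-T)\circ(A'(T\circ A))$. It has no diagonal term, but it invokes the inductive bound for the complementary mask $\iota_n\iota_n'-T$ as well, so that bound must also be carried through the induction. By transposition it differs from the $T$ bound only by a diagonal term. Your corrected identity closes the induction on $T$ alone.

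Your fallback of citing Gohberg--Krein would also be valid, but it reduces the lemma to a citation.
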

\begin{proof}
The proof is due to a post on \href{https://mathoverflow.net/questions/177198/norm-of-triangular-truncation-operator-on-rank-deficient-matrices}{mathoverflow.net}. We reproduce the proof here for completeness only.

Note that
\begin{align*}
||T \circ A||_{op} \leq ||T \circ A||_{p}.
\end{align*}
We aim to show $||T \circ A||_{p} \leq C p ||A||_p$ for all $p = 2^t$ by induction. For $t=1$ (i.e., $p = 2$), we have
\begin{align*}
||T \circ A||_{2} = |T \circ A||_{F} \leq ||A||_F = ||A||_2 \quad \text{and} \quad    ||(\iota_n\iota_n' - T) \circ A||_{2} \leq ||A||_2.
\end{align*}
Suppose $||T \circ A||_{p} \leq C p ||A||_p$ and $||(\iota_n\iota_n' - T) \circ A||_{p} \leq C p ||A||_p$. In addition, we have
\begin{align*}
    [(T \circ A)'(T \circ A)]_{jk} & = \sum_{i = 1}^n A_{ij} 1\{i \geq j\} A_{ik}1\{i \geq k\} \\
    & =  \sum_{i = 1}^n A_{ij} 1\{i \geq j\} A_{ik}1\{i \geq k\} 1\{j \geq k\} + \sum_{i = 1}^n A_{ij} 1\{i \geq j\} A_{ik}1\{i \geq k\} 1\{j < k\} \\
    & =  \sum_{i = 1}^n A_{ij} 1\{i \geq j\} A_{ik} 1\{j \geq k\} + \sum_{i = 1}^n A_{ij}  A_{ik}1\{i \geq k\} 1\{j < k\} \\
    & = [T\circ ((T \circ A)' A)]_{jk} + [(\iota_n\iota_n' - T)\circ (A' (T\circ A))]_{jk},
\end{align*}
which implies
\begin{align*}
(T \circ A)'(T \circ A) = T\circ ((T \circ A)' A) + (\iota_n\iota_n' - T)\circ (A' (T\circ A)).
\end{align*}
Therefore, we have
\begin{align*}
||T \circ A||_{2p}^2 & =    ||(T \circ A)'(T \circ A)||_p \\
& \leq  ||T\circ ((T \circ A)' A) ||_p + ||(\iota_n\iota_n' - T)\circ (A' (T\circ A))||_p \\
& \leq C p ||(T \circ A)' A||_p + C p ||A' (T\circ A)||_p \\
& \leq C p ||(T \circ A)||_{2p} ||A||_{2p} + C p ||A||_{2p} ||T\circ A||_{2p},
\end{align*}
where the last inequality is due to the H\"{o}lder's inequality for Schatten norm, i.e., \textcite[EX. IV.2.7]{B13}. This implies
\begin{align*}
||T \circ A||_{2p} \leq C (2p) ||A||_{2p},
\end{align*}
which concludes the proof.
\end{proof}

\subsection{Lemma~\ref{lem:nablaB}}
\begin{lem}\label{lem:nablaB}
Recall $\nabla(B)$ and $\Delta(B)$ defined in \Cref{sec:clt_pf}. Then, we have
\begin{align*}
& ||(BB')_{g,g}||_{op} = O( \lambda_n), \quad ||(B' B)_{g,g}||_{op} = O( \lambda_n), \quad      ||B||_{op} = O(1),  \\
& ||\nabla(B)||_{op} = O \left(\eta_n^{1/2}\right), \quad ||\Delta(B)||_{op} = O \left(\eta_n^{1/2}\right), \quad  ||B_{g,h}||_{op} = O(\lambda_n),\\
& \sum_{h \in [G]}||B_{g,h}||_{op}^2 \lesssim \phi_n + \lambda_n^2 \quad \text{and} \quad \sum_{g \in [G]}||B_{g,h}||_{op}^2 \lesssim \phi_n + \lambda_n^2.
\end{align*}
\end{lem}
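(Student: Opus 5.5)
Throughout I use the normalization $h_n=\norm{(W'W)^{-1/2}A_0(W'W)^{-1/2}}_{op}=1$. Write $\tilde A_0=(W'W)^{-1/2}A_0(W'W)^{-1/2}$ and $Q=W(W'W)^{-1/2}$, so that $Q'Q=I_d$, $QQ'=P$, and $A=Q\tilde A_0Q'$.

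\textbf{Step 1: bounds on $A$.} Since $\norm{Q}_{op}=1$, we get $\norm{A}_{op}\le 1$. For the diagonal blocks,
\begin{equation*}
(AA')_{g,g}=Q_g\tilde A_0Q'Q\tilde A_0'Q_g'=Q_g\tilde A_0\tilde A_0'Q_g'\preceq Q_gQ_g'=P_{g,g},
\end{equation*}
so $\norm{(AA')_{g,g}}_{op}\le\lambda_n$, and likewise for $(A'A)_{g,g}$. Writing $A_{g,h}=Q_g\tilde A_0Q_h'$ gives
\begin{equation*}
\norm{A_{g,h}}_{op}\le\norm{Q_g}_{op}\norm{Q_h}_{op}\le\lambda_n .
\end{equation*}
The same factorization shows that $\norm{P_{g,h}}_{op}\le\lambda_n$ and $\norm{A_{g,g}}_{op}\le\lambda_n$.

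\textbf{Step 2: transfer to $B=A-MD$.} Here $D=\Bdiag(M_{g,g}^{-1}A_{g,g})$. By \Cref{ass:dgp}.\ref{item:ass_lo}, $\norm{D}_{op}\lesssim\max_g\norm{A_{g,g}}_{op}\lesssim\lambda_n\le 1$, so $\norm{B}_{op}\le 1+\norm{M}_{op}\norm{D}_{op}=O(1)$.

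For the blocks with $g\ne h$, $M_{g,h}=-P_{g,h}$, so $B_{g,h}=A_{g,h}+P_{g,h}D_h$. Hence
\begin{equation*}
\norm{B_{g,h}}_{op}\le\lambda_n+\norm{P_{g,h}}_{op}\lambda_n\lesssim\lambda_n,
\end{equation*}
and $B_{g,g}=0$.

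\textbf{Step 3: diagonal blocks of $BB'$ and $B'B$.} Expand
\begin{equation*}
(BB')_{g,g}=(AA')_{g,g}-(AD'M)_{g,g}-(MDA')_{g,g}+(MDD'M)_{g,g}.
\end{equation*}
Each term is bounded using $(MM)_{g,g}=M_{g,g}\preceq I$ together with the row-block structure. For instance,
\begin{equation*}
\norm{(MDD'M)_{g,g}}_{op}\le\max_h\norm{D_h}_{op}^2\,\norm{(MM)_{g,g}}_{op}\le\lambda_n^2 .
\end{equation*}
The cross terms are handled by Cauchy--Schwarz for block rows,
\begin{equation*}
\norm{(XY')_{g,g}}_{op}\le\norm{(XX')_{g,g}}_{op}^{1/2}\norm{(YY')_{g,g}}_{op}^{1/2},
\end{equation*}
which gives $O(\lambda_n^{1/2}\lambda_n)$. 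The $(B'B)_{g,g}$ case is symmetric: there the factor $D'MMD$ has diagonal blocks $D_g'M_{g,g}D_g$, bounded by $\lambda_n^2$.

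For the row sums, use $B_{g,h}=A_{g,h}+P_{g,h}D_h$ for $h\neq g$:
\begin{equation*}
\sum_h\norm{B_{g,h}}_{op}^2\lesssim\sum_h\norm{A_{g,h}}_{op}^2+\lambda_n^2\sum_h\norm{P_{g,h}}_{op}^2 .
\end{equation*}
The second sum is at most $\lambda_n^2\phi_n$. The first sum is the delicate one. I would bound $\norm{A_{g,h}}_{op}\le\norm{Q_g\tilde A_0Q_h'}_{op}$ and then compare it to $\norm{P_{g,h}}_{op}$, which is possible only if $\tilde A_0$ is aligned with $Q$. That alignment fails in general, so I would instead use
\begin{equation*}
\sum_h\norm{A_{g,h}}_{op}^2\le\sum_h\norm{A_{g,h}}_F^2\le\operatorname{tr}(AA')_{g,g}\le n_G\lambda_n .
\end{equation*}
This bound yields the stated $\phi_n+\lambda_n^2$ only after the paper's conventions absorb it, so I expect this row-sum claim to be the step needing the most care. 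If the direct route fails, I would try reinterpreting $\phi_n$ as dominating it.

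\textbf{Step 4: triangular parts.} I would apply \Cref{lem:T} with $p=2^t\approx\log r_n$. This gives
\begin{equation*}
\norm{\nabla(B)}_{op}\lesssim p\norm{B}_p .
\end{equation*}
Write $B=A-MD$. The rank of $A$ is $r_n$, so $\norm{A}_p\le r_n^{1/p}\norm{A}_{op}\lesssim 1$, which contributes $\log r_n$.

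The term $MD$ has full rank, so instead I split it as $MD=D-PD$. The block-diagonal part $D$ has zero triangular part off the diagonal blocks, and within those blocks is controlled by $\lambda_n$. For $PD$, take $p\approx\log n$: then $\norm{PD}_p\le n^{1/p}\norm{D}_{op}\lesssim\lambda_n$, which contributes $\log(n)\lambda_n$.

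Squaring these contributions yields $\norm{\nabla(B)}_{op}^2\lesssim\eta_n$. The same argument applies to $\Delta(B)$ via its transpose.
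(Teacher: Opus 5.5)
Your argument for the first six claims is essentially the paper's. The paper handles the cross terms through $(X-Y)(X-Y)'\preceq 2XX'+2YY'$ rather than block Cauchy--Schwarz, and it applies \Cref{lem:T} to $MD$ directly with $p\approx\log n$, so your split $MD=D-PD$ is not needed. Also, $T\circ D$ truncates \emph{inside} the diagonal blocks, so via \Cref{lem:T} it is bounded by $\log(n)\lambda_n$, not by $\lambda_n$; that is still enough for $\eta_n^{1/2}$.

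The genuine gap is the last two claims, which you do not prove. Your bound $\sum_h\norm{A_{g,h}}_{op}^2\le\tr\bigl((AA')_{g,g}\bigr)\le n_G\lambda_n$ is correct. However, one only has $\lambda_n\le\phi_n\le n_G\lambda_n$ (the lower bound follows from $P_{g,g}=\sum_hP_{g,h}P_{h,g}$), so this does not give $\phi_n+\lambda_n^2$, and ``reinterpreting $\phi_n$'' is not an argument. The paper gets past this step only by writing $\sum_h\norm{A_{g,h}}_{op}^2\lesssim\sum_h\norm{P_{g,h}}_{op}^2$ without justification. That amounts to the termwise comparison $\norm{Q_g\tilde A_0Q_h'}_{op}\lesssim\norm{Q_gQ_h'}_{op}$, which you rightly suspected fails.

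In fact the claim fails in general, even though \Cref{ass:dgp}.\ref{item:ass_lo} and $h_n=1$ (all the lemma's proof uses) hold, so no argument can close this step. Take $N+1$ clusters of size $m$ in which unit $i$ has regressor $e_{j_i}$. Add $mN$ singleton clusters, $N$ of them with regressor $e_{k_i}$ for each $i\in[m]$. Let $\tilde A_0$ swap $e_{j_i}\leftrightarrow e_{k_i}$, so that $h_n=1$.

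In this design every diagonal block $A_{g,g}$ vanishes, hence $D=0$ and $B=A$. Moreover $\lambda_n=\phi_n=1/N$ and $\min_g\lambda_{\min}(M_{g,g})\ge 1/2$. For a size-$m$ cluster $g$, each of the $mN$ singletons $v$ has $P_{g,v}=0$, while $B_{g,v}$ has a single nonzero entry $1/\sqrt{N(N+1)}$. Hence $\sum_v\norm{B_{g,v}}_{op}^2=m/(N+1)$, which exceeds $\phi_n+\lambda_n^2$ by a factor $n_G=m$. The column sums fail the same way, since $B$ is symmetric here.

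The termwise comparison also fails in the IV example, where $A=P_{\tilde Z}=P-P_{\mathcal W}$, whenever $P_{g,h}=0\neq(P_{\mathcal W})_{g,h}$. What your argument does establish is
$\sum_h\norm{B_{g,h}}_{op}^2\lesssim\max_g\tr(P_{g,g})+\lambda_n^2\phi_n\le n_G\lambda_n+\lambda_n^2\phi_n$. Getting $\phi_n+\lambda_n^2$ requires an extra design condition such as $\norm{A_{g,h}}_{op}\lesssim\norm{P_{g,h}}_{op}$. Otherwise, the downstream bounds that invoke this claim (e.g.\ \cref{eq:phi_B}) must carry the larger factor.
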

\begin{proof}
Note that $B = A - MD$. We have
\begin{align*}
        ||(BB')_{g,g}||_{op} & = ||(A_{\bullet,g} - (D M)_{\bullet,g})'(A_{\bullet,g} - (D M)_{\bullet,g}) ||_{op} \\
  & \leq 2 \norm{A_{\bullet,g}' A_{\bullet,g}}_{op} + 2 ||(D M)_{\bullet,g}' (D M)_{\bullet,g} ||_{op} \\
  & \leq 2 ||(A'A)_{g,g}||_{op} + 2 ||(M D' D M)_{g,g} ||_{op} \\
  & \leq 2 \lambda_n + ||D||_{op}^2 = O(\lambda_n),
\end{align*}
$A_{\bullet,g} $ denote the columns of $A$ corresponding to the $g$-th cluster and
the last inequality is by
\begin{align*}
    ||(AA')_{g,g}||_{op} = ||W_{g,\bullet} (W'W)^{-1/2} A_0 A_0' (W'W)^{-1/2}W_{\bullet,g}||_{op} \leq ||P_{g,g}||_{op} \leq \lambda_n.
\end{align*}

For the second result of \Cref{lem:nablaB}, we have
\begin{align}\label{eq:BB^T}
        ||(B'B)_{g,g}||_{op} = || (A_{g,\bullet} - D_{g,\bullet}M)(A_{g,\bullet} - D_{g,\bullet}M)' ||_{op} \leq  2 ||(AA')_{g,g} ||_{op}+ || (D M D)_{g,g} ||_{op}^2 \lesssim \lambda_n,
\end{align}
where $A_{g,\bullet} $ denote the rows of $A$ corresponding to the $g$-th cluster and the second inequality is by
\begin{align*}
    ||A_{g,g}||_{op} = ||W_{g,\bullet} (W'W)^{-1/2} A_0  (W'W)^{-1/2}W_{\bullet,g}||_{op} \leq ||P_{g,g}||_{op} \leq \lambda_n,
\end{align*}
and
\begin{align*}
||(D M D)_{g,g}||_{op} & \leq ||D||_{op}^2 = \max_{g \in [G]} \left\Vert M_{g,g}^{-1}  A_{g,g} \right\Vert_{op}^2 \lesssim  \max_{g \in [G]} \left\Vert A_{g,g} \right\Vert_{op}^2  \lesssim \lambda_n^2 \leq \lambda_n.
\end{align*}

For the third result of \Cref{lem:nablaB}, we have
\begin{align*}
        ||B||_{op} & \leq ||A||_{op} + ||M||_{op} ||D||_{op} \lesssim 1 + \lambda_n \lesssim 1.
\end{align*}



For the fourth result, we have
\begin{align*}
||\nabla(B)||_{op}  & = ||\nabla(A) + \nabla(M D )||_{op} \leq ||\nabla(A)||_{op} + ||\nabla(M D)||_{op}.
\end{align*}
In addition, for any $p = 2^q$, we have
\begin{align*}
||T \circ A ||_{op} & \leq C p \left(\sum_{r = 1}^{r_n} \sigma_r^p(A)\right)^{1/p} \leq  C p r_n^{1/p},
\end{align*}
where the first inequality is by \Cref{lem:T}, and the second inequality is by $\sigma_1(A) \lesssim 1$.

Choosing $q = \lfloor \log_2 (\log(r_n)) \rfloor + 1$, we have
\begin{align*}
||T \circ A ||_{op} \leq C \log (r_n).
\end{align*}

Similarly, for $p = 2^q$, we have
\begin{align*}
||\nabla(M D)||_{op}    & \leq C p ||MD||_p  \leq C p \lambda_n n^{1/p},
\end{align*}
where the second inequality is by the fact that $||MD||_{op} \leq \lambda_n$ and $\operatorname{rank}(MD) \leq n$.  Choosing $q = \lfloor \log_2 (\log(n)) \rfloor + 1$, we have
\begin{align*}
||T \circ (M D) ||_{op} \leq C \log (n) \lambda_n.
\end{align*}
Combining the two bounds, we have
\begin{align*}
||\nabla(B)||_{op} \lesssim    \log (r_n)  +   \log (n) \lambda_n  \lesssim \eta_n^{1/2}.
\end{align*}

We can show $||\Delta(B)||_{op} = O \left(\eta_n^{1/2}\right)$ in the same manner.

For the sixth result, we note that
\begin{align*}
    ||B_{g,h}||_{op} & \lesssim ||A_{g,h}||_{op} +||M_{g,h}||_{op} ||M_{h,h}^{-1}||  ||A_{h,h}||_{op}  \\
    & \lesssim ||W_{g,\bullet}  (W'W)^{-1} A_0  (W'W)^{-1}W_{\bullet,h}||_{op} + \lambda_n \\
    & \lesssim \lambda_n.
\end{align*}

For the seventh result, we note that
\begin{align*}
    \sum_{h \in [G]} ||B_{g,h}||_{op}^2 & =     \sum_{h \in [G]} ||A_{g,h}||_{op}^2 + \sum_{h \in [G]} ||M_{g,h} M_{h,h}^{-1} A_{h,h}||_{op}^2 \\
    & \lesssim \sum_{h \in [G]} ||P_{g,h}||_{op}^2 + \lambda_n^2 (1 + \sum_{h \in [G]} ||P_{g,h}||_{op}^2) \\
    & \lesssim \phi_n + \lambda_n^2.
\end{align*}
The last result can be established in the same manner.
\end{proof}

\subsection{Lemma~\ref{lem:4mom}}
\begin{lem}\label{lem:4mom}
Suppose \Cref{ass:dgp} holds. Then, we have
\begin{align*}
\left\Vert \mathbb E \left(U_{g} U_{g}' + V_{g}V_{g}'\right)\left(||U_{g}||_2^2 + ||V_{g}||_2^2\right) \right\Vert_{op} \lesssim u_n^{\frac{q-2}{q-1}} n_G^{\frac{q}{q-1}}.
\end{align*}
\end{lem}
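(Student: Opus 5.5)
The plan is to bound the quadratic form of the matrix $\mathbb E[(U_gU_g'+V_gV_g')(\norm{U_g}_2^2+\norm{V_g}_2^2)]$ uniformly over unit vectors, and to get the exponent by interpolating between the second-moment bound $u_n$ and the $2q$-th moment bound. Fix $a\in\Re^{n_g}$ with $\norm{a}_2=1$. It suffices to bound the four terms $\mathbb E[(a'U_g)^2\norm{U_g}_2^2]$, $\mathbb E[(a'U_g)^2\norm{V_g}_2^2]$, and their $V$ analogues. Write $\norm{U_g}_2^2=\sum_{i\in[n_g]}U_{i,g}^2$. Since the matrix is positive semidefinite, its operator norm equals the supremum of $a'(\cdot)a$, so this reduction is enough.

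The key step is Hölder's inequality with conjugate exponents $(q/(q-1),\,q)$, applied to each term $\mathbb E[(a'U_g)^2U_{i,g}^2]$:
\begin{equation*}
\mathbb E[(a'U_g)^2U_{i,g}^2]\le \left(\mathbb E|a'U_g|^{\frac{2q}{q-1}}\right)^{\frac{q-1}{q}}\left(\mathbb E U_{i,g}^{2q}\right)^{\frac1q}\lesssim \left(\mathbb E|a'U_g|^{\frac{2q}{q-1}}\right)^{\frac{q-1}{q}},
\end{equation*}
where the last step uses the moment bound in \Cref{ass:dgp}.\ref{item:ass_moments}. Summing over $i$ then gives a factor $n_g\le n_G$. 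The remaining issue is $\mathbb E|a'U_g|^{2q/(q-1)}$, whose exponent lies between $2$ and $2q$.

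I would handle this by log-convexity of $L^p$ norms (Lyapunov interpolation) between $p=2$ and $p=2q$. Set $s=2q/(q-1)$ and choose $\theta$ so that $1/s=\theta/2+(1-\theta)/(2q)$; this gives $\theta=(q-2)/(q-1)$. Then
\begin{equation*}
\norm{a'U_g}_{s}\le \norm{a'U_g}_2^{\theta}\,\norm{a'U_g}_{2q}^{1-\theta}.
\end{equation*}
Here $\norm{a'U_g}_2^2=a'\Omega_{U,g}a\le u_n$. By Minkowski and Cauchy--Schwarz,
\begin{equation*}
\norm{a'U_g}_{2q}\le \sum_i |a_i|\norm{U_{i,g}}_{2q}\lesssim \norm{a}_1\le n_G^{1/2}.
\end{equation*}
Hence
\begin{equation*}
\left(\mathbb E|a'U_g|^s\right)^{2/s}\lesssim u_n^{\theta}\, n_G^{1-\theta}=u_n^{\frac{q-2}{q-1}}n_G^{\frac{1}{q-1}},
\end{equation*}
and the $(q-1)/q$ power of $\mathbb E|a'U_g|^s$ is exactly this quantity. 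Multiplying by the factor $n_G$ from the sum over $i$ yields $u_n^{\frac{q-2}{q-1}}n_G^{\frac{q}{q-1}}$.

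The cross terms are handled identically. For $\mathbb E[(a'U_g)^2V_{i,g}^2]$ the same Hölder step works, since $V_{i,g}$ also has $2q$ moments and $a'\Omega_{V,g}a\le u_n$. Collecting the four terms and taking the supremum over $a$ finishes the proof.

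The main obstacle is getting the exponent exactly right. A crude bound $\norm{a'U_g}_{2q}\lesssim n_G^{1/2}$ applied alone would give only $n_G^2$, so the interpolation must be arranged so that the Hölder exponent and the Lyapunov weight combine to the stated $u_n^{(q-2)/(q-1)}n_G^{q/(q-1)}$. I would check the arithmetic $\theta=(q-2)/(q-1)$ carefully, including the boundary case $q=2$, where $\theta=0$ and the bound reduces to $n_G^2$, which is consistent with the formula.
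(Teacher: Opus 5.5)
Your proof is correct, but it takes a different route from the paper.

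The paper's argument works by truncation:
\begin{itemize}
\item It fixes a unit vector $x$ and truncates at a level $M_g$.
\item On $\{\|U_g\|_2\le M_g\}$ it bounds $\mathbb E[(x'U_g)^2\|U_g\|_2^2]$ by $M_g^2\,x'\Omega_{U,g}x\le u_nM_g^2$.
\item On the complement it uses $(x'U_g)^2\le\|U_g\|_2^2$ together with $\mathbb E\|U_g\|_2^{2q}\lesssim n_g^q$, which gives $n_G^q/M_g^{2q-4}$.
\item Choosing $M_g^2=(n_G^q/u_n)^{1/(q-1)}$ balances the two pieces and yields $u_n^{\frac{q-2}{q-1}}n_G^{\frac{q}{q-1}}$.
\end{itemize}

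You instead split off one coordinate $U_{i,g}^2$ by H\"older. You then interpolate $\|a'U_g\|_{2q/(q-1)}$ between $L^2$, controlled by $u_n$, and $L^{2q}$, controlled by $n_G^{1/2}$ via Minkowski. I checked the exponent $\theta=(q-2)/(q-1)$, the identity $(\mathbb E|a'U_g|^s)^{(q-1)/q}=\|a'U_g\|_s^2$, and the final power $n_G^{1+1/(q-1)}$. All are right, including the boundary case $q=2$.

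In effect, log-convexity of $L^p$ norms does the work of the paper's truncate-and-optimize step, so you never need to choose $M_g$. Your argument also uses only coordinatewise $2q$-th moments of the scalar factor. As a result, the cross terms $\mathbb E[(a'U_g)^2\|V_g\|_2^2]$ go through verbatim, whereas the paper needs an extra H\"older step in the tail for them.

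The paper's version is slightly more elementary: it works directly with $\|U_g\|_2$ and needs no interpolation inequality. The crude step that produces the extra $n_G$ factor is essentially the same in both: $\|a\|_1\le n_G^{1/2}$ for you, and $(x'U_g)^2\le\|U_g\|_2^2$ for the paper.
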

\begin{proof}
    We have
    \begin{align*}
        &\left\Vert \mathbb E \left(U_{g} U_{g}' + V_{g}V_{g}'\right)\left(||U_{g}||_2^2 + ||V_{g}||_2^2\right) \right\Vert_{op} \\
        &\leqslant \left\Vert \mathbb E \left(U_{g} U_{g}' ||U_{g}||_2^2 \right)\right\Vert_{op} + \left\Vert \mathbb E \left(V_{g} V_{g}' ||V_{g}||_2^2 \right)\right\Vert_{op} \\ &+ \left\Vert \mathbb E \left(U_{g} U_{g}' ||V_{g}||_2^2 \right)\right\Vert_{op} + \left\Vert \mathbb E \left(V_{g} V_{g}' ||U_{g}||_2^2 \right)\right\Vert_{op}.
    \end{align*}
    For the first term, fix an $x \in \mathbf R^{n_g}$ with $||x||_2 = 1$, then
    \begin{align*}
        &\quad x' \mathbb E \left(U_{g} U_{g}' ||U_{g}||_2^2 \right) x \\
        &= \mathbb E \left( (x' U_{g})^2 ||U_{g}||_2^2 \right) \\
        &= \mathbb E \left( (x' U_{g})^2 ||U_{g}||_2^2 1_{\{||U_{g}||_2 \leqslant M_g \}} \right) + \mathbb E \left( (x' U_{g})^2 ||U_{g}||_2^2 1_{\{||U_{g}||_2 > M_g \}} \right) \\
        &\lesssim u_n M_g^2 + \mathbb E \left(||U_{g}||_2^4 1_{\{||U_{g}||_2 > M_g \}} \right) \\
        &\lesssim u_n M_g^2 + \frac{n_G^q}{M_g^{2q-4}},
    \end{align*}
    where we use the fact that
    \begin{align*}
        \mathbb E \left(||U_{g}||_2^{2q}\right) = \mathbb E \left( \sum_{i \in [n_{g}]} e_{i,g}^2\right)^q \lesssim n_g^{q-1} \sum_{i \in [n_{g}]} \mathbb E e_{i,g}^{2q} \lesssim n_g^q,
    \end{align*}
    and note that the result holds uniformly in $x$; optimizing $M_g$, we obtain
    \begin{align*}
        \left\Vert \mathbb E \left(U_{g} U_{g}' ||U_{g}||_2^2 \right)\right\Vert_{op} \lesssim u_n^{\frac{q-2}{q-1}} n_G^{\frac{q}{q-1}}.
    \end{align*}
    The second term can be handled similarly. For the last two terms, we can use similar argument, but with H\"older's inequality for the second part of the bound.
\end{proof}

\subsection{Lemma~\ref{lem:E4}}
\begin{lem}\label{lem:E4}
Suppose \Cref{ass:reg} holds. Then, we have
\begin{align*}
    & \sum_{g \in [G]} \sum_{h \geq g} \mathbb E (V_{h}' B_{h,g} U_{g})^4 \lesssim u_n^{\frac{2(q-2)}{q-1}} n_G^{\frac{2q}{q-1}} \lambda_n^2 \kappa_n, \\
    & \sum_{g \in [G]} \sum_{h_1,h_2 \geq g, h_1 \neq h_2}\mathbb E (V_{h_1}' B_{h_1g} U_{g})^2(V_{h_2}' B_{h_2g} U_{g})^2 \lesssim  u_n^{\frac{3q-4}{q-1}} n_G^{\frac{q}{q-1}}  \lambda_n \kappa_n .
\end{align*}
\end{lem}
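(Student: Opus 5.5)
Both bounds follow the same template: first condition on one cluster so that the moment bound of \Cref{lem:4mom} applies to the other, then use the operator-norm and Frobenius-norm controls on the blocks of $B$ from \Cref{lem:nablaB}. Throughout we use that $\sum_{g,h}\norm{B_{h,g}}_F^2=\norm{B}_F^2\lesssim\kappa_n$ under the normalization $h_n=1$.

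For the first inequality, fix $g\neq h$; since $B_{g,g}=0$ only off-diagonal pairs contribute, so $U_g$ and $V_h$ are independent. Conditioning on $U_g$ and writing $a=B_{h,g}U_g$, we need to control $\mathbb E(V_h'a)^4=\mathbb E\,(a'V_hV_h'a)^2$. Rather than a crude bound, we will use the truncation argument from the proof of \Cref{lem:4mom}: split on whether $\norm{V_h}_2\le M$. On the truncated part, $(V_h'a)^4\le M^2\norm{a}_2^2\,(V_h'a)^2$, whose expectation is at most $u_nM^2\norm{a}_2^4$. On the tail we use $\norm{V_h}_2^4\norm{a}_2^4$ together with $\mathbb E\norm{V_h}_2^{2q}\lesssim n_G^q$. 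Optimizing over $M$ gives
\begin{equation*}
  \mathbb E\bigl[(V_h'a)^4\mid U_g\bigr]\lesssim u_n^{\frac{q-2}{q-1}}n_G^{\frac{q}{q-1}}\norm{a}_2^4 .
\end{equation*}
Next we bound $\mathbb E\norm{B_{h,g}U_g}_2^4$ in the same way, which contributes a second factor $u_n^{\frac{q-2}{q-1}}n_G^{\frac{q}{q-1}}$. This yields a summand of order $\norm{B_{h,g}}_{op}^2\norm{B_{h,g}}_F^2$, with the prefactor $u_n^{\frac{2(q-2)}{q-1}}n_G^{\frac{2q}{q-1}}$. Finally $\norm{B_{h,g}}_{op}^2\le\lambda_n^2$ by \Cref{lem:nablaB}, and summing $\norm{B_{h,g}}_F^2$ over $g,h$ gives $\kappa_n$.

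For the second inequality, fix $g$ and $h_1\ne h_2$, both different from $g$, so that $V_{h_1}$, $V_{h_2}$ and $U_g$ are mutually independent. Conditioning on $U_g$ and using this independence,
\begin{equation*}
  \mathbb E\bigl[(V_{h_1}'B_{h_1,g}U_g)^2(V_{h_2}'B_{h_2,g}U_g)^2\mid U_g\bigr]\le u_n^2\,U_g'C_1U_g\cdot U_g'C_2U_g,
\end{equation*}
where $C_j=B_{h_j,g}'B_{h_j,g}$. Summing over $h_1,h_2\ge g$ (after bounding by the full sum) leaves $u_n^2\,\mathbb E(U_g'\mathcal C_gU_g)^2$ with $\mathcal C_g=\sum_hB_{h,g}'B_{h,g}=(B'B)_{g,g}$. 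We then follow the argument in \cref{eq:Var22_1}:
\begin{equation*}
  \mathbb E(U_g'\mathcal C_gU_g)^2\le\mathbb E\,\mathrm{tr}\bigl(\mathcal C_gU_g\norm{U_g}_2^2U_g'\mathcal C_g\bigr)\cdot\norm{\mathcal C_g}_{op},
\end{equation*}
and \Cref{lem:4mom} bounds this by $u_n^{\frac{q-2}{q-1}}n_G^{\frac{q}{q-1}}\lambda_n\,\mathrm{tr}(\mathcal C_g)$, using $\norm{\mathcal C_g}_{op}\lesssim\lambda_n$ and $\norm{\mathcal C_g}_F^2\le\lambda_n\mathrm{tr}(\mathcal C_g)$. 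Summing $\mathrm{tr}(\mathcal C_g)$ over $g$ gives $\kappa_n$, so the total is $u_n^{2+\frac{q-2}{q-1}}n_G^{\frac{q}{q-1}}\lambda_n\kappa_n$. Since $2+\frac{q-2}{q-1}=\frac{3q-4}{q-1}$, this matches the claimed bound.

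The main obstacle is the first inequality: getting the exact exponent requires the optimized truncation to produce $\norm{a}_2^4$, and we must avoid a naive application of Cauchy--Schwarz that loses a factor of $n_G$. If the conditional step fails to give this, the fallback is to apply \Cref{lem:4mom} directly to the block-diagonal matrix $\mathbb E[V_hV_h'\norm{V_h}_2^2]$ after rewriting $(V_h'a)^4\le\norm{V_h}_2^2\norm{a}_2^2(V_h'a)^2$.
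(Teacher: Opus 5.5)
Your proof is correct and is essentially the paper's. In the first bound you integrate out $V_h$ first (re-deriving \Cref{lem:4mom} by truncation) and then $U_g$, whereas the paper applies \Cref{lem:4mom} to $U_g$ first and then to $V_h$; your second bound is identical to the paper's.

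One slip to fix: your display in the second part carries a spurious factor $\norm{\mathcal C_g}_{op}$, and with it the inequality fails whenever $\norm{\mathcal C_g}_{op}<1$. Use instead either $(U_g'\mathcal C_gU_g)^2\le\norm{U_g}_2^2\,U_g'\mathcal C_g^2U_g$ followed by $\norm{\mathcal C_g}_F^2\lesssim\lambda_n\mathrm{tr}(\mathcal C_g)$, or the paper's $(U_g'\mathcal C_gU_g)^2\le\norm{\mathcal C_g}_{op}\norm{U_g}_2^2\,U_g'\mathcal C_gU_g$. Either one gives your stated bound.
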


\begin{proof}
We have
\begin{align*}
    \sum_{g \in [G]} \sum_{h \geq g} \mathbb E (V_{h}' B_{h,g} U_{g})^4 & =  \sum_{g \in [G]} \sum_{h \geq g}  \mathbb E tr( U_{g} U_{g}' B_{h,g}' V_{h} V_{h}' B_{h,g}U_{g} U_{g}' B_{h,g}' V_{h} V_{h}' B_{h,g}) \\
    & \lesssim \sum_{g \in [G]} \sum_{h \geq g} \mathbb E tr( B_{h,g}' V_{h} V_{h}' B_{h,g}U_{g} U_{g}' ||U_{g}||_2^2 B_{h,g}' V_{h} V_{h}' B_{h,g}) \\
    & \lesssim \sum_{g \in [G]} \sum_{h \geq g} u_n^{\frac{q-2}{q-1}} n_G^{\frac{q}{q-1}} \mathbb E tr( B_{h,g}' V_{h} V_{h}' B_{h,g}  B_{h,g}' V_{h} V_{h}' B_{h,g}) \\
    & \lesssim \sum_{g \in [G]} \sum_{h \geq g} u_n^{\frac{q-2}{q-1}} n_G^{\frac{q}{q-1}} \left\Vert B_{h,g}  B_{h,g}'\right\Vert_{op} \mathbb E tr( B_{h,g}' V_{h} V_{h}' V_{h} V_{h}' B_{h,g}) \\
    & \lesssim \sum_{g \in [G]} \sum_{h \geq g} u_n^{\frac{2(q-2)}{q-1}} n_G^{\frac{2q}{q-1}} \lambda_n^2 ||B_{h,g}||_F^2 \\
    & \lesssim u_n^{\frac{2(q-2)}{q-1}} n_G^{\frac{2q}{q-1}} \lambda_n^2 \kappa_n,
\end{align*}
where the second inequality is by \Cref{lem:4mom} and the second last inequality is by $\left\Vert B_{h,g}\right\Vert_{op} \lesssim \lambda_n$ as established in \Cref{lem:nablaB}.

For the second result, we have
\begin{align*}
& \sum_{g \in [G]} \sum_{h_1,h_2 \geq g, h_1 \neq h_2}\mathbb E (V_{h_1}' B_{h_1g} U_{g})^2(V_{h_2}' B_{h_2g} U_{g})^2 \\
& =   \sum_{g \in [G]} \sum_{h_1,h_2 \geq g, h_1 \neq h_2} \mathbb E \left(U_{g}' B_{h_1g}' \Omega_{V,h_1} B_{h_1g}U_{g}\right)\left(U_{g}' B_{h_2g}' \Omega_{V,h_2} B_{h_2g}U_{g}\right) \\
& \lesssim   \sum_{g, h_1,h_2 \in [G]^3} u_n^2 \mathbb E \left(U_{g}' B_{h_1g}' B_{h_1g}U_{g}\right)\left(U_{g}' B_{h_2g}' B_{h_2g}U_{g}\right) \\
& \lesssim u_n^2 \sum_{g \in [G]} \mathbb E \left(U_{g}' (B' B)_{g,g}U_{g}\right)^2 \\
& \lesssim u_n^2 \sum_{g \in [G]} \lambda_n \mathbb E U_{g}' (B' B)_{g,g}U_{g} ||U_{g}||_2^2 \\
& \lesssim u_n^{\frac{3q-4}{q-1}} n_G^{\frac{q}{q-1}}  \lambda_n \kappa_n.
    \end{align*}


\end{proof}

\subsection{Lemma~\ref{lem:V1}}
\begin{lem}\label{lem:V1}
    Suppose Assumptions~\ref{ass:dgp}--\ref{ass:var_L3O} hold. Then, we have
\begin{align*}
\mathbb V \left(\sum_{g, h, k \in [G]^3} \left(X_h ' B_{h,g} Y_g \right) \left(X_k ' B_{k,g} U_g \right)  \right) =   o(\omega_n^4).
\end{align*}
\end{lem}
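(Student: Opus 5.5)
We would expand $X_h = \Pi_h + V_h$ and $Y_g = \Gamma_g + U_g$ inside the summand, so that the statistic becomes a finite sum of polynomial forms in the independent cluster vectors $(U_g,V_g)$. Using $\sum_{h} B_{h,g}'\Pi_h = H_g$, every term involving only $\Pi$ collapses. Thus $\sum_{g,h,k}(\Pi_h'B_{h,g}Y_g)(\Pi_k'B_{k,g}U_g) = \sum_g (H_g'Y_g)(H_g'U_g)$. The cross terms become $\sum_{g,k}(H_g'Y_g)(V_k'B_{k,g}U_g)$ and $\sum_{g,h}(V_h'B_{h,g}Y_g)(H_g'U_g)$. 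The fully stochastic piece is $\sum_g (\sum_h V_h'B_{h,g}Y_g)(\sum_k V_k'B_{k,g}U_g)$, and we further split $Y_g=\Gamma_g+U_g$. Since $\mathbb V(\sum_{j} Z_j)\lesssim\sum_j\mathbb V(Z_j)$ for a bounded number of pieces, it suffices to show that each piece has variance $o((\mu_n^2+\tilde\mu_n^2+\kappa_n)^2)$. By \cref{eq:omega_lower} this is the same as $o(\omega_n^4)$.

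The pieces are of three types. Sums of independent cluster terms such as $\sum_g (H_g'\Gamma_g)(H_g'U_g)$ or $\sum_g (H_g'U_g)^2$ we bound directly. We use $\|\Gamma_g\|_2^2\lesssim n_G$ and \Cref{lem:4mom} for the fourth moments, which gives bounds such as $u_n n_G\zeta_{H,n}\mu_n^2$ and $u_n^{\frac{q-2}{q-1}}n_G^{\frac{q}{q-1}}\zeta_{H,n}\mu_n^2$. Both are controlled by \Cref{ass:reg} and \Cref{ass:var_L3O}. Terms linear in one error and bilinear in another, such as $\sum_{g\ne k}(H_g'\Gamma_g)V_k'B_{k,g}U_g$, are degenerate second-order U-statistics. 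Their variance is bounded by $u_n^2\sum_{g,k}(H_g'\Gamma_g)^2\|B_{k,g}\|_F^2\lesssim u_n^2 n_G\zeta_{H,n}\kappa_n$. The diagonal part $k=g$ vanishes because $B_{g,g}=0$. Terms such as $\sum_g(H_g'U_g)(V_k'B_{k,g}U_g)$ are handled as in Step~1 of the proof of \Cref{thm:clt}: we condition on $U_g$ and apply \Cref{lem:4mom}, which produces factors $u_n^{\frac{2q-3}{q-1}}n_G^{\frac{q}{q-1}}\zeta_{H,n}\kappa_n$.

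The main obstacle is the quartic piece $\sum_g \sum_{h,k}(V_h'B_{h,g}U_g)(V_k'B_{k,g}U_g)$, together with its $\Gamma_g$ counterpart. We would decompose it by index coincidences. For $h\ne k$, it is a degenerate form in three distinct clusters, with variance $\lesssim u_n^4\sum_g(\sum_h\|B_{h,g}\|_F^2)\cdot\max\|(BB')\|$-type bounds, leading to $u_n^4 n_G\lambda_n^2\kappa_n$ via \Cref{lem:nablaB}. For $h=k$, we center: write $V_hV_h'=\Omega_{V,h}+(V_hV_h'-\Omega_{V,h})$ and likewise for $U_gU_g'$. The centered parts give fourth-moment bounds of the form appearing in \Cref{lem:E4}, namely $u_n^{\frac{3q-4}{q-1}}n_G^{\frac{q}{q-1}}\lambda_n\kappa_n$ and $u_n^{\frac{2(q-2)}{q-1}}n_G^{\frac{2q}{q-1}}\lambda_n^2\kappa_n$. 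The remaining quadratic forms $\sum_g U_g'(B'\Omega_V B)_{g,g}U_g$ are bounded using $\|(B'B)_{g,g}\|_{op}\lesssim\lambda_n$ from \Cref{lem:nablaB}. The delicate point is checking that every such product of moment factors and $\phi_n,\lambda_n$ powers appears among the terms listed in \Cref{ass:reg} and \Cref{ass:var_L3O}. We expect the quartic term with $u_n^4$ and the $q$-dependent fourth-moment factors to be tightest, so we would carry out that bookkeeping most carefully.
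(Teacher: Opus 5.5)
Your expansion (the collapse via $H_g=\sum_h B_{h,g}'\Pi_h$, then splitting $Y_g=\Gamma_g+U_g$) is the paper's eight-term decomposition. Most of your individual bounds match the paper's. There is, however, a genuine error wherever $U_g$ enters twice: you treat $U_gU_g'$ as if it contributed only centred fourth-moment terms, and you drop its mean $\Omega_{U,g}$.

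The clearest instance is your claim that the $h\neq k$ part of $\sum_g\sum_{h,k}(V_h'B_{h,g}U_g)(V_k'B_{k,g}U_g)$ is a degenerate form. It is not. Conditional on $(V_h,V_k)$, the summand has mean $V_h'B_{h,g}\Omega_{U,g}B_{k,g}'V_k$. So this piece contains $\sum_{h\neq k}V_h'(B\Omega_UB')_{h,k}V_k$, whose variance is of order $\|\boxbackslash(B\Omega_UB')\|_F^2$. Take $\Omega_{U,g}=\Omega_{V,g}=I$ and $B$ close to a rank-$r_n$ projection, as in the many-IV case. Then this variance is $\asymp\kappa_n$, which is not bounded by $u_n^4 n_G\lambda_n^2\kappa_n$ when $n_G\lambda_n^2\to 0$. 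The bound you claim for this piece is therefore false.

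The same omission occurs in the terms $\sum_{g,k}(H_g'U_g)(V_k'B_{k,g}U_g)$. Their $\Omega_U$-part is the linear form $V'B\Omega_UH$, with variance up to order $u_n^3\mu_n^2$. That is not dominated by the factor $u_n^{\frac{2q-3}{q-1}}n_G^{\frac{q}{q-1}}\zeta_{H,n}\kappa_n$ you list, for example under strong identification with $\kappa_n$ bounded and $\mu_n^2\asymp n$.

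The repair is what the paper does. In the $h\neq k$ quartic part and in the $H_g$ terms, also write $U_gU_g'=\Omega_{U,g}+(U_gU_g'-\Omega_{U,g})$.

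The centred parts are then genuinely degenerate. Using \Cref{lem:4mom} and $\|(B'B)_{g,g}\|_{op}\lesssim\lambda_n$, they contribute $u_n^{\frac{3q-4}{q-1}}n_G^{\frac{q}{q-1}}\lambda_n\kappa_n$ and $u_n^{\frac{2q-3}{q-1}}n_G^{\frac{q}{q-1}}\zeta_{H,n}\kappa_n$, respectively.

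By \Cref{lem:nablaB}, the mean parts contribute
$u_n^2\|B\Omega_UB'\|_F^2\le u_n^4\|B\|_{op}^2\kappa_n\lesssim u_n^4\kappa_n$ and $u_n\|B\Omega_UH\|_2^2\lesssim u_n^3\mu_n^2$.
These are negligible only through the terms $u_n^4\eta_n\kappa_n$ and $u_n^3\eta_n(\mu_n^2+\tilde\mu_n^2)$ of \Cref{ass:reg}.\ref{item:clt_rate}, using $\eta_n\geq 1$. They are not covered by the $u_n^4 n_G\lambda_n^2\kappa_n$ term of \Cref{ass:var_L3O} that you were aiming for. With this correction your argument goes through and coincides with the paper's.
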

\begin{proof}
For the first term, we have
\begin{align*}
& \mathbb V\left(\sum_{g, h, k \in [G]^3} \left(X_h ' B_{h,g} Y_g \right) \left(X_k ' B_{k,g} U_g \right)  \right) \\
& \lesssim \underbrace{ \mathbb V\left(\sum_{g, h, k \in [G]^3} \left(\Pi_h' B_{h,g} \Gamma_g \right) \left(\Pi_k ' B_{k,g} U_g \right)  \right) }_{R_{\ref{lem:V1},1}} + \underbrace{ \mathbb V\left(\sum_{g, h, k \in [G]^3} \left(\Pi_h' B_{h,g} \Gamma_g \right) \left(V_k ' B_{k,g} U_g \right)  \right) }_{R_{\ref{lem:V1},2}}   \\
& + \underbrace{  \mathbb V\left(\sum_{g, h, k \in [G]^3} \left(V_h' B_{h,g} \Gamma_g \right) \left(\Pi_k ' B_{k,g} U_g \right)  \right) }_{R_{\ref{lem:V1},3}} +   \underbrace{ \mathbb V\left(\sum_{g, h, k \in [G]^3} \left(V_h' B_{h,g} \Gamma_g \right) \left(V_k ' B_{k,g} U_g \right)  \right)  }_{R_{\ref{lem:V1},4}}  \\
& + \underbrace{  \mathbb V\left(\sum_{g, h, k \in [G]^3} \left(\Pi_h' B_{h,g} U_g \right) \left(\Pi_k ' B_{k,g} U_g \right)  \right) }_{R_{\ref{lem:V1},5}}  +  \underbrace{ \mathbb V\left(\sum_{g, h, k \in [G]^3} \left(\Pi_h' B_{h,g} U_g \right) \left(V_k ' B_{k,g} U_g \right)  \right)  }_{R_{\ref{lem:V1},6}}  \\
& + \underbrace{  \mathbb V\left(\sum_{g, h, k \in [G]^3} \left(V_h' B_{h,g} U_g \right) \left(\Pi_k ' B_{k,g} U_g \right)  \right) }_{R_{\ref{lem:V1},7}}  +  \underbrace{ \mathbb V\left(\sum_{g, h, k \in [G]^3} \left(V_h' B_{h,g} U_g \right) \left(V_k ' B_{k,g} U_g \right)  \right)  }_{R_{\ref{lem:V1},8}},
\end{align*}
where
\begin{align*}
 R_{\ref{lem:V1},1} & =  \sum_{g \in [G]}  \mathbb V\left(\left(H_g' \Gamma_g \right) \left(H_g' U_g \right)  \right) \\
 & \lesssim \sum_{g \in [G]} u_n (H_g' \Gamma_g)^2 ||H_g||_2^2 \\
 & \lesssim u_n n_G \zeta_{H,n} \mu_n^2 = o ((\mu_n^2 + \kappa_n + \tilde \mu_n^2)^2) = o(\omega_n^4),
\end{align*}
\begin{align*}
R_{\ref{lem:V1},2} &  = \mathbb V\left(\sum_{g, k \in [G]^2} \left(H_g' \Gamma_g \right) \left(V_k ' B_{k,g} U_g \right)  \right) \\
& \lesssim \sum_{g, k \in [G]^2} \mathbb V\left( \left(H_g' \Gamma_g \right) \left(V_k ' B_{k,g} U_g \right)  \right) \\
& \lesssim \sum_{g, k \in [G]^2} u_n^2 \left(H_g' \Gamma_g \right)^2 ||B_{k,g}||_F^2 \\
& \lesssim u_n^2 n_G \zeta_{H,n} \kappa_n = o ((\mu_n^2 + \kappa_n + \tilde \mu_n^2)^2) = o(\omega_n^4),
\end{align*}
\begin{align*}
 R_{\ref{lem:V1},3} & =    \mathbb V\left(\sum_{g, h\in [G]^2} \left(V_h' B_{h,g} \Gamma_g \right) \left(H_g' U_g \right)  \right) \\
 & \lesssim \sum_{g, h\in [G]^2} u_n^2 ||B_{h,g} \Gamma_g  H_g'||_F^2 \\
 & = \sum_{g \in [G]} u_n^2 \Gamma_g' (B'B)_{g,g} \Gamma_g ||H_g||_2^2 \\
 & \lesssim u_n^2 n_G \zeta_{H,n} \kappa_n = o ((\mu_n^2 + \kappa_n + \tilde \mu_n^2)^2) = o(\omega_n^4),
\end{align*}
\begin{align*}
 R_{\ref{lem:V1},4} & \lesssim \sum_{g, h, k \in [G]^3, h \neq k}   \mathbb V\left(\left(V_h' B_{h,g} \Gamma_g \right) \left(V_k ' B_{k,g} U_g \right)  \right)  +   \mathbb V\left( \sum_{g, h \in [G]^2}  \left(V_h' B_{h,g} \Gamma_g \right) \left(V_h ' B_{h,g} U_g \right)  \right)  \\
 & \lesssim \sum_{g, h, k \in [G]^3, h \neq k}   \mathbb V\left(\left(V_h'
   B_{h,g} \Gamma_g \right) \left(V_k ' B_{k,g} U_g \right)  \right)  +
   \sum_{g, h \in [G]^2} \mathbb V\left(   \left(\Gamma_g' B_{h,g}' (V_{h} V_h' - \Omega_{V,h}) B_{h,g} U_g \right)  \right) \\
 & + \mathbb V\left(  \sum_{g, h \in [G]^2}  \left(\Gamma_g' B_{h,g}' \Omega_{V,h} B_{h,g} U_g \right)  \right) \\
 & \lesssim \sum_{g, h, k \in [G]^3, h \neq k}  u_n^3 \Gamma_g B_{h,g}' B_{h,g}
   \Gamma_g ||B_{k,g}||_F^2 +  \sum_{g, h \in [G]^2} \mathbb E   \left(\Gamma_g'
   B_{h,g}' V_{h} V_h' B_{h,g} B_{h,g}'V_{h} V_h' B_{h,g} \Gamma_g  \right) \\
 & + \sum_{g \in [G]} u_n \left\Vert \Gamma_g' (\sum_{h \in [G]} B_{h,g}' \Omega_{V,h} B_{h,g}) \right\Vert_2^2\\
 & \lesssim u_n^3 n_G \lambda_n \kappa_n + u_n^{\frac{2q-3}{q-1}} n_G^{\frac{2q-1}{q-1}}\lambda_n^2 \kappa_n +  \sum_{g \in [G]} u_n n_G tr \left((B' \Omega_V B)_{g,g}^2\right)  \\
 & \lesssim u_n^3 n_G \lambda_n \kappa_n + u_n^{\frac{2q-3}{q-1}} n_G^{\frac{2q-1}{q-1}}\lambda_n^2 \kappa_n + u_n^3 n_G \lambda_n \kappa_n = o ((\mu_n^2 + \kappa_n + \tilde \mu_n^2)^2) = o(\omega_n^4),
\end{align*}
\begin{align*}
R_{\ref{lem:V1},5} & = \sum_{g \in [G]} \mathbb V\left( \left(H_g' U_g \right)^2  \right) \\
& \lesssim \sum_{g \in [G]} \mathbb E H_{g}' U_{g} U_{g}' ||U_{g}||_2^2 ||H_{g}||_2^2 \\
& \lesssim u_n^{\frac{q-2}{q-1}} n_G^{\frac{q}{q-1}} \zeta_{H,n} \mu_n^2  = o ((\mu_n^2 + \kappa_n + \tilde \mu_n^2)^2) = o(\omega_n^4),
\end{align*}
\begin{align*}
R_{\ref{lem:V1},6}     & = \mathbb V\left(\sum_{g, k \in [G]^2} \left(V_k ' B_{k,g} U_g U_g' H_g\right)  \right) \\
& \lesssim \sum_{g, k \in [G]^2}  \mathbb V\left(\left(V_k ' B_{k,g} (U_g U_g' - \Omega_{U,g}) H_g\right)  \right) + \sum_{k \in [G]} \mathbb V\left(V_k ' (\sum_{g \in [G]}  B_{k,g}  \Omega_{U,g} H_g )\right)  \\
& \lesssim \sum_{g, k \in [G]^2} u_n \mathbb E \left(H_g'U_g U_g'  B_{k,g}' B_{k,g} U_g U_g' H_g \right) +  u_n ||B \Omega_U H||_2^2 \\
& \lesssim u_n^{\frac{2q-3}{q-1}} n_G^{\frac{q}{q-1}} \zeta_{H,n} \kappa_n + u_n^3 \mu_n^2  = o ((\mu_n^2 + \kappa_n + \tilde \mu_n^2)^2) = o(\omega_n^4),
\end{align*}
$R_{\ref{lem:V1},7}   = o(\omega_n^4)$ following the same argument as $ R_{\ref{lem:V1},6} $, and
\begin{align*}
R_{\ref{lem:V1},8} & \lesssim     \mathbb V\left(\sum_{g, h, k \in [G]^3, h \neq k} \left(V_h' B_{h,g} (U_g U_g'-\Omega_{U,g}) B_{k,g}' V_k\right)  \right) +    \mathbb V\left(\sum_{g, h, k \in [G]^3, h \neq k} \left(V_h' B_{h,g} \Omega_{U,g} B_{k,g}' V_k\right)  \right) \\
& + \mathbb V\left(\sum_{g, h \in [G]^2} \left(V_h' B_{h,g} U_g \right)^2\right) \\
& \lesssim u_n^2 \sum_{g, h, k \in [G]^3, h \neq k} \mathbb E tr\left(B_{h,g} (U_g U_g'-\Omega_{U,g}) B_{k,g}'B_{k,g} (U_g U_g'-\Omega_{U,g}) B_{h,g}'  \right) \\
& + u_n^2 \sum_{h,k \in [G]^2} ||(B \Omega_U B')_{h,k}||_F^2  + \mathbb
  V\left(\sum_{g, h \in [G]^2} tr\left((V_{h} V_h' - \Omega_{V,h}) B_{h,g}
  (U_{g} U_g' - \Omega_{U,g})B_{h,g}' \right)\right) \\
&  + \mathbb V\left(\sum_{g, h \in [G]^2} tr\left(\Omega_{V,h}B_{h,g} (U_{g}
  U_g' - \Omega_{U,g})B_{h,g}' \right)\right) +  \mathbb V\left(\sum_{g, h \in
  [G]^2} tr\left((V_{h} V_h' - \Omega_{V,h}) B_{h,g} \Omega_{U,g}B_{h,g}' \right)\right) \\
& \lesssim u_n^2 \sum_{g \in [G]} \mathbb E tr\left( U_{g} U_g' (B'B)_{g,g} U_g U_g' (B'B)_{g,g} \right) + u_n^4 \kappa_n \\
& + \sum_{g, h \in [G]^2} \mathbb E\left( tr\left((V_{h} V_h' - \Omega_{V,h})
  B_{h,g} (U_{g} U_g' - \Omega_{U,g})B_{h,g}' \right)\right)^2 \\
&  + \sum_{g \in [G]} \mathbb E\left( U_g' (B'\Omega_{V}B)_{g,g} U_g \right)^2 +  \sum_{h \in [G]} \mathbb E\left(V_h'(B\Omega_U B')_{h,h}V_h\right)^2 \\
& \lesssim u_n^{\frac{3q-4}{q-1}} n_G^{\frac{q}{q-1}} \lambda_n \kappa_n +u_n^4
  \kappa_n + \sum_{g, h \in [G]^2} \mathbb E\left( tr\left((V_{h} V_h' -
  \Omega_{V,h}) B_{h,g} B_{h,g}'(V_{h} V_h' - \Omega_{V,h}) \right) \right) \\
& + \sum_{g, h \in [G]^2} \mathbb E\left( tr\left((U_{g} U_g' -
  \Omega_{U,g})B_{h,g}' B_{h,g} (U_{g} U_g' - \Omega_{U,g})  \right) \right)  + u_n^{\frac{q-2}{q-1}} n_G^{\frac{q}{q-1}} \sum_{g \in [G]} tr\left((B'\Omega_{V}B)_{g,g}^2 \right) \\
& \lesssim u_n^{\frac{3q-4}{q-1}} n_G^{\frac{q}{q-1}} \lambda_n \kappa_n +u_n^4 \kappa_n + u_n^{\frac{q-2}{q-1}} n_G^{\frac{q}{q-1}} \kappa_n \\
& = o ((\mu_n^2 + \kappa_n + \tilde \mu_n^2)^2) = o(\omega_n^4).
\end{align*}
\end{proof}

\subsection{Lemma~\ref{lem:deco}}
\begin{lem}\label{lem:deco} Let $\{\eps_i\}_{i=1}^n$ be a sequence of random
  variables with values in a measurable space $(S,\mathcal{S})$. Suppose that
  this sequence can be partitioned into $G$ independent clusters, denoted as
  $\{\eps_{[g]}\}_{g=1}^G$, and let $n_g$ denote the cluster size of the $g$-th
  cluster. Let $\{\eps_{[g]}^{(k)}\}_{g=1}^G, k=1, \dotsc, m$ be $m$ independent
  copies of this sequence for some $m \leq G$. Define
  \begin{equation*}
    I_G^m = \{(g_1, \dotsc, g_m): g_j \in \mathbb{N}, 1 \leq g_j \leq G, j \neq
    k \implies g_j \neq g_k \}
  \end{equation*}
  and let
  $h_{g_1 \cdots g_m}\colon S^{n_{g_1}} \times \dotsm \times S^{n_{g_m}} \mapsto
  \mathbb R$ be measurable functions such that
  $\mathbb E |h_{g_1 \cdots g_m}(\eps_{[g_1]}, \dotsc, \eps_{[g_m]})| < \infty$
  for all $(g_1, \dotsc, g_m) \in I_G^m$. Let
  $\Phi\colon [0,\infty) \to [0,\infty)$ be a convex nondecreasing function
  such that
  $\mathbb E \Phi (|h_{g_1 \cdots g_m}(\eps_{[g_1]}, \dotsc, \eps_{[g_m]})|) <
  \infty$ for all $(g_1, \dotsc, g_m) \in I_G^m$, then \begin{align*} \mathbb E
    \Phi (|\sum_{I_G^m} h_{g_1 \cdots g_m}(\eps_{[g_1]}, \dotsc, \eps_{[g_m]})|)
    \leq \mathbb E \Phi (C_m |\sum_{I_G^m} h_{g_1 \cdots
      g_m}(\eps_{[g_1]}^{(1)}, \dotsc, \eps_{[g_m]}^{(m)})|)
    \end{align*}
    where $C_m = 2^m \times (m^m-1) \times ((m-1)^{(m-1)}-1) \times \cdots \times 3$ does not depend on $G$ and $n$.
\end{lem}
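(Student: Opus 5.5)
The plan is to reduce the statement to the classical decoupling inequality of de la Pe\~na and Montgomery-Smith for $U$-statistics in independent, not necessarily identically distributed, random variables (Theorem~3.1.1 in de la Pe\~na and Gin\'e). That theorem carries exactly the constant $C_m = 2^m (m^m-1)((m-1)^{m-1}-1)\cdots 3$.

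The key observation is that within-cluster dependence is irrelevant once each cluster is treated as a single random element. Define $\xi_g := \eps_{[g]} \in S^{n_g}$. By assumption $\xi_1,\dotsc,\xi_G$ are independent. Likewise the copies $\xi_g^{(k)} := \eps_{[g]}^{(k)}$ give $m$ independent copies of the sequence $(\xi_g)_{g\le G}$, and the target sum is an off-diagonal $U$-statistic $\sum_{I_G^m} h_{g_1\cdots g_m}(\xi_{g_1},\dotsc,\xi_{g_m})$ of order $m$ in $G$ independent variables.

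The only technical wrinkle is that the state spaces $S^{n_g}$ vary with $g$, whereas the classical theorem is usually stated for variables in a common measurable space. I would handle this by embedding everything into the disjoint union $\mathcal{X} := \bigsqcup_{j \ge 1} S^{j}$, equipped with the $\sigma$-field generated by the product $\sigma$-fields $\mathcal{S}^{\otimes j}$ on each piece. Each $\xi_g$ is then a measurable $\mathcal{X}$-valued variable supported on the piece $S^{n_g}$. I would extend each kernel to $\tilde h_{g_1\cdots g_m}\colon \mathcal{X}^m \to \mathbb{R}$ by setting it equal to $h_{g_1\cdots g_m}$ on $S^{n_{g_1}}\times\cdots\times S^{n_{g_m}}$ and to $0$ elsewhere. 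This extension is measurable, and it agrees almost surely with $h$ evaluated at both the original and the decoupled arguments, because $\xi_g^{(k)}$ has the same law as $\xi_g$ and so lies in $S^{n_g}$. The integrability hypotheses $\mathbb{E}|h| < \infty$ and $\mathbb{E}\Phi(|h|)<\infty$ transfer verbatim, since the distributions of the relevant arguments are unchanged. Applying the theorem to $\tilde h$ with the convex nondecreasing $\Phi$ then yields the display with the stated $C_m$. Since the constant in that theorem depends only on $m$, $C_m$ does not depend on $G$, on the cluster sizes, or on $n$.

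The step I expect to need the most care is confirming that the cited theorem genuinely applies in this generality. It requires independence but not identical distribution, it allows index-dependent and non-symmetric kernels, and it is stated for the $\Phi$-moment rather than only for tail probabilities. If a referee prefers a self-contained argument, I would fall back on reproducing the induction on $m$ from de la Pe\~na and Montgomery-Smith. That induction uses Jensen's inequality for the convex $\Phi$ together with a conditional symmetrization, replacing one coordinate at a time by an independent copy while conditioning on the others. Its only input is independence of the blocks $\xi_g$ across $g$, which Assumption~\ref{ass:dgp}.\ref{item:ass_clust}-type clustering supplies. The product of factors $(j^j-1)$ accumulated over the steps $j=m,m-1,\dotsc,2$, together with the factor $2^m$ from symmetrization, reproduces $C_m$.
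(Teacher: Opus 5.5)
Your proposal is correct and takes essentially the same approach as the paper. The paper omits the proof, saying only that it follows the same lines as Theorem~3.1.1 of de la Pe\~na and Gin\'e, and treating each cluster $\eps_{[g]}$ as a single independent random element is exactly the intended reduction. Your disjoint-union embedding, which places the differently sized blocks $S^{n_g}$ in a common measurable space, is a valid way to make the black-box application rigorous, a detail the paper leaves implicit.
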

The proof of this lemma follows exactly the same lines as in Theorem 3.1.1 of \textcite{DE2012} and is thus omitted. We shall use this result extensively with $\Phi(x) = x^2$ in our proof.

\end{document}